\documentclass[%
 aip,jmp,
 amsmath,amssymb,
 reprint,%
]{revtex4-1}
\usepackage{stmaryrd}
\usepackage{dsfont}
\usepackage{graphicx}
\usepackage{tikz}
\usetikzlibrary{decorations,shapes,arrows,matrix,positioning}

\usepackage{dcolumn}
\usepackage{bbm}
\usepackage{amsthm}
\usepackage[utf8]{inputenc}
\usepackage[T1]{fontenc}
\usepackage{mathptmx}
\usepackage{tocloft}
\usepackage{bbold}
\usetikzlibrary{backgrounds}
\usetikzlibrary{patterns}
\usetikzlibrary{arrows.meta}
\usetikzlibrary{trees,snakes}
\tikzstyle{bag} = [align=center]
\usetikzlibrary{decorations.pathmorphing}
\usepackage{xcolor}
\usepackage{fancyhdr}
\usepackage{amssymb}
\usepackage{enumitem}
\usepackage{xcolor}
\usepackage{slashed}
\usepackage{amsmath}
\usepackage[export]{adjustbox}
\usepackage{hyperref}
\usepackage{subcaption}

\newtheorem{theorem}{Theorem}
\newtheorem{proposition}{Proposition}
\newtheorem{corollary}{ Corollary}
\newtheorem{lemma}{Lemma}
\newtheorem{remark}{Remark}

\theoremstyle{definition}
\newtheorem{definition}{Definition}

\begin{document}

\preprint{AIP/123-QED}

\title[Position-Space Renormalization and Half-Space Truncations in $\phi^4_4$]
{Position-Space Renormalization and Half-Space Truncations in $\phi^4_4$}

\author{Majdouline Borji}
\email{majdouline.borji@kfupm.edu.sa}
\affiliation{King Fahd Univeristy of Petroleum and Minerals, Department of Mathematics, Saudi Arabia}%

\date{\today}

\begin{abstract}
In this paper, we study half-space observables in the massive Euclidean
$\phi^4_4$ theory. We prove that the renormalized correlators can be multiplied by half-space truncations without requiring
any additional renormalization. More precisely, products with smooth
approximations of the half-space indicator converge to well-defined
distributional limits, uniformly in the ultraviolet cutoff.
The proof relies on a position-space renormalization framework for the Wilson--Polchinski flow equation based on a hierarchy of power-counting spaces adapted to the singularity structure of the correlators. This yields
uniform power-counting bounds and Besov--Hölder regularity estimates for
the renormalized correlators. As a consequence, the
correlators converge as the ultraviolet cutoff tends to infinity, and
the convergence takes place in explicitly identified Besov spaces. 
\end{abstract}
\maketitle
\section{Introduction}

Renormalised correlators in perturbative Euclidean quantum field theory are canonically constructed as distributions in position space with controlled singularities along partial diagonals. For the Euclidean $\phi^4_4$ model, this structure is studied within several rigorous frameworks, including Wilsonian renormalisation group methods and flow equations. These approaches provide a perturbative analysis of the theory together with ultraviolet bounds and scaling estimates.

However, the functional-analytic nature of renormalised correlators remains  poorly understood. Existing formulations based on flow equations control correlators only through families of scale-dependent estimates, typically obtained after smearing against heat kernels or other regularising test functions \cite{KopperMuller2007,BorjiKopper2}. While sufficient for establishing perturbative renormalisability, such estimates do not identify the correlators themselves as elements of explicit spaces of distributions, nor do they provide a natural topology in which the ultraviolet renormalisation limit is taken.

The first objective of the present work is to develop an intrinsic position-space formulation of perturbative renormalisation. More precisely, we construct a functional-analytic framework stable under the Wilson--Polchinski flow equation hierarchy and formulate renormalisation as a statement of stability and convergence in suitable spaces of distributions. Within this framework, the connected amputated correlators of the Euclidean $\phi^4_4$ theory are elements of Besov--Hölder spaces of negative regularity.

A principal consequence of this framework is the construction of renormalised correlators together with convergence in the ultraviolet limit. In contrast to existing position-space approaches, where correlators are controlled only after smearing against scale-dependent test functions, our framework identifies both the regularity class of the correlators and the topology in which the renormalisation limit is taken. Renormalised correlators therefore emerge as limits in spaces of distributions rather than merely as objects defined through families of scale-dependent estimates.

More precisely, we establish uniform Besov--Hölder bounds on the connected amputated correlators and prove convergence as the ultraviolet cutoff $\Lambda_0$ tends to infinity. To the best of our knowledge, this provides the first position-space formulation of perturbative renormalisation by flow equations in which both the regularity and the convergence of the correlators are established intrinsically at the level of distributions.

A second fundamental aspect of the paper concerns the behaviour of renormalised correlators under \emph{singular localisation in configuration space}, namely multiplication by discontinuous functions. We study products of the form
$$
\prod_{i\in\sigma}\mathds{1}^+(z_i)
\mathcal L_{l,n}^{\Lambda,\Lambda_0},
$$
where $\sigma\subseteq\{2,\dots,n\}$ is arbitrary and $\mathcal L_{l,n}^{\Lambda,\Lambda_0}$ denotes the connected amputated correlators of the Euclidean $\phi^4_4$ theory. Since the correlators are distributions of negative regularity, these products are a priori ill-defined and their existence cannot be deduced from classical multiplication theorems.

Our second principal result establishes that these products admit a canonical distributional interpretation without requiring any additional renormalisation. Furthermore, we show that singular localisation induces a sharp and universal loss of Besov regularity. More precisely, localisation by a half-space indicator lowers the regularity. This phenomenon does not correspond to the appearance of new ultraviolet divergences or additional counterterms.

The existence of these truncated correlators reveals a previously unexplored interaction between renormalisation, geometric localisation, and distributional regularity. It shows that singular localisation may be performed directly on renormalised correlators while preserving the renormalised structure of the underlying quantum field theory.

The present work is motivated by our study of the Euclidean $\phi^4_4$ theory on the half-space $\mathbb{R}_+\times\mathbb{R}^3$. A central difficulty in the analysis of such models is understanding how renormalised bulk correlators behave under localisation to a half-space. In \cite{BorjiKopper2}, we established perturbative renormalisability of the model on $\mathbb R_+\times\mathbb R^3$ in the sense that only finitely many counterterms are required to make the theory finite. Subsequently, in \cite{Borji3}, we initiated the study of multiplicative renormalisation for the half-space model. The latter requires proving that the effective action retains the same structural form as the tree-level interaction, implying in particular that the counterterms are position independent and correspond precisely to mass, coupling, wavefunction, and boundary renormalisation terms.

A central ingredient in this programme is understanding the origin of boundary singularities. The analysis of \cite{Borji3} identifies boundary counterterms with the reflective contribution of the half-space propagator, namely the part of the following form
$$
e^{-\sqrt{k^2+m^2}|z+z'|},
$$
where $k\in\mathbb R^3$ denotes the Fourier variable in the translationally invariant directions and $z,z'\in\mathbb R_+$ are the half-space coordinates.

By contrast, the contribution
$$
e^{-\sqrt{k^2+m^2}|z-z'|}
$$
is naturally associated with the bulk part of the theory. At first sight, one might therefore expect that it generates only the usual bulk counterterms. However, this intuition turns out to be incomplete. Indeed, the corresponding model describes the Euclidean $\phi^4_4$ theory with interaction supported on a half-space, and in separate work we show that boundary counterterms also arise in this setting despite the translation invariance of the propagator. These singularities originate from the discontinuous nature of the interaction at the boundary rather than from the propagator itself.

The truncated correlators studied in the present work provide the natural reference objects for analysing this phenomenon. They appear as comparison distributions in the study of interactions supported on a half-space and therefore constitute an essential ingredient for a rigorous treatment of boundary renormalisation.

It is important to emphasise that, despite the presence of a distinguished hyperplane, the objects constructed in this paper do not define a boundary or interface quantum field theory. The localisation is performed \emph{after} renormalisation and does not modify the action, the propagator, or the renormalisation conditions of the bulk theory. The resulting distributions therefore retain the renormalised structure of the underlying Euclidean $\phi^4_4$ model.

Beyond their role in boundary renormalisation, the truncated correlators admit a natural conceptual interpretation. Although the construction is carried out entirely in Euclidean space, localisation by half-space indicators introduces a geometric ordering structure absent from the full correlators.

In Lorentzian quantum field theory, ordering information is encoded through causality, time-ordering, and causal factorisation. Euclidean theories possess no intrinsic notion of time. Nevertheless, localisation to a half-space selects contributions supported on one side of a distinguished hyperplane while suppressing the complementary region. In this sense, the truncated correlators encode a form of geometric ordering analogous to the ordering structures that underlie causal perturbation theory.

The resulting construction is entirely intrinsic to Euclidean distribution theory and does not rely on analytic continuation or Wick rotation. Rather, it reveals a direct connection between localisation, regularity, and renormalisation within the Euclidean framework itself.

Our work lies at the intersection of several directions in mathematical quantum field theory while addressing, to the best of our knowledge, a problem that has not previously been studied in this form.

\paragraph{Renormalisation group and flow equations.}
Renormalisation group methods provide one of the principal approaches to perturbative quantum field theory. In particular, the Wilson--Polchinski flow equation \cite{Polchinski1984,KellerKopperSalmhofer1992,Kopper2007Flow} yields an inductive construction of perturbative renormalised Euclidean correlators together with precise quantitative control of their scaling behaviour. Mathematical implementations of this programme have led to rigorous renormalisability proofs and sharp momentum- and position-space bounds for Euclidean field theories \cite{KellerKopperSalmhofer1992,KopperMuller2007,KopperMeunier2012,KellerKopper1991QED,KellerKopper1992Composite,EfremovGuidaKopper2017YM,KopperMuller2000SBYM}. The present work builds upon this framework. Our objective is not to establish renormalisability itself, but rather to formulate the resulting correlators intrinsically as distributions of prescribed Besov regularity and to analyse their behaviour under singular localisation.

\paragraph{Causal perturbation theory.}
Ordering and localisation play a fundamental role in Epstein--Glaser renormalisation and its extensions \cite{EpsteinGlaser1973,BrunettiFredenhagen2000,HollandsWald2001}. There, causal factorisation provides the organising principle for the perturbative construction of Lorentzian quantum field theories. These approaches are, however, intrinsically Lorentzian and do not address localisation phenomena of Euclidean correlators, nor the regularity issues arising from multiplication by discontinuous functions.

\paragraph{Distribution theory, singular products, and regularity structures.}
The multiplication of distributions by non-smooth functions is a classical problem in analysis, closely related to the structure of singularities and wavefront interactions \cite{HormanderI}. In recent years, such questions have played a central role in the development of renormalisation methods for singular stochastic PDEs. Notable examples include Hairer's theory of regularity structures \cite{Hairer2014,HairerPhi43}, the paracontrolled calculus of Gubinelli, Imkeller, and Perkowski \cite{GubinelliImkellerPerkowski2015}, and the large-scale regularity and renormalisation methods developed by Otto and collaborators \cite{OttoWeber2019,FischerOtto2015}. These theories provide frameworks in which singular products can be interpreted after suitable renormalisation procedures.

Although the present work is deterministic and situated entirely within perturbative quantum field theory, it addresses a related question from a different perspective. Rather than renormalising singular products of distributions, we investigate the effect of singular localisation on already renormalised Euclidean correlators. Our results show that multiplication by half-space indicators can be performed directly on renormalised correlators without introducing additional counterterms. Furthermore, the resulting loss of regularity admits a precise quantitative description in Besov--Hölder spaces.

Taken together, these results establish a bridge between perturbative renormalisation, the regularity theory of distributions, and singular localisation in position space. They provide a functional-analytic framework in which renormalised correlators can be studied directly as distributions and in which localisation phenomena can be analysed intrinsically at the level of the correlators themselves.

The paper is organised as follows.

Section~\ref{sec1} recalls the flow equation framework, introduces the basic notation and definitions, and states the principal results of the paper. 
Section~\ref{SecTreeAmp} develops the combinatorial structures underlying our position-space analysis. We introduce a class of generalized tree amplitudes that extends the framework of \cite{KopperMuller2007} by allowing insertions of smooth localisation functions, including the approximations of the half-space indicator that play a central role in the present work. We further develop a calculus for these amplitudes and analyse their behaviour under the flow equation hierarchy.

Our construction  differs from that of \cite{KopperMuller2007}. Besides the localisation insertions, it also adresses a structural incompatibility between the tree representation of \cite{KopperMuller2007} and the natural inductive ordering generated by the flow equations. The resulting class of amplitudes is stable under the flow equation hierarchy and provides the combinatorial foundation for the analysis carried out in the remainder of the paper.

The results of this section supply the main combinatorial input for the subsequent position-space renormalisation procedure. While they are used throughout the paper, their proofs are largely self-contained and may be postponed on a first reading.

Section~\ref{SecProof} constitutes the analytic core of the paper. There, we introduce a family of \emph{power-counting spaces} together with their associated seminorms and develop a systematic calculus for distributions in these spaces. The purpose of this framework is to formulate renormalisability directly at the level of distributions and to obtain uniform control of the correlators under the flow equation. Applying this framework to the Euclidean $\phi^4_4$ theory, we establish Besov--Hölder regularity estimates for the correlators.
 We also construct the truncated correlators as limits of
$$
\prod_{i\in\sigma}
\chi_\varepsilon(z_i)
\mathcal L_{l,n}^{\Lambda,\Lambda_0}
\bigl(
(z_1,p_1),\dots,(z_n,p_n)
\bigr)
$$
 when $\varepsilon\to0$, where $\chi_\varepsilon$ denotes a smooth approximation of the half-space indicator function with nonnegative derivative and bounds uniform in $\varepsilon$. We then establish the existence, regularity, and convergence properties of the resulting truncated correlators and prove the associated regularity-loss phenomenon. These results provide the analytic foundation for the study of renormalisation in models with interactions supported on a half-space, which will be analysed elsewhere.

Finally, we establish in Section \ref{ConvSection} the ultraviolet convergence of both the full and truncated correlators as \(\Lambda_0\to\infty\) in Besov--Hölder spaces of suitable regularity. These results provide the mathematical foundation for the study of quantum field theories with interactions supported on a half-space, which will be investigated elsewhere.

\section{The $\phi^4_4$ model and the flow equation formalism}\label{sec1}

In this section we recall the Wilson-Polchinski flow equation formalism for the massive Euclidean $\phi^4_4$ theory and introduce the notation used throughout the paper. Since the localisation procedures studied later act only on one distinguished coordinate, it is convenient to work in a mixed position--momentum representation. More precisely, we keep one coordinate in position space and perform a Fourier transform in the remaining translationally invariant directions. This representation contains all the information of the full position-space theory while being particularly well adapted to the analysis of half-space localisation.

\subsection{Mixed position--momentum representation and regularised propagator}
We consider a massive scalar field on $\mathbb{R}^4$ with coordinates
$(z,x)\in\mathbb{R}\times\mathbb{R}^3$. We perform a Fourier transform with
respect to the transverse variables $x\in\mathbb{R}^3$ while keeping the
coordinate $z$ in position space. This leads to a mixed position--momentum
representation, which we denote by $(z,p)$.
The regularized flowing propagator is defined by
\begin{equation}\label{prop}
C^{\Lambda,\Lambda_0}(p;z,z')
=
\int_{\frac{1}{\Lambda_0^{2}}}^{\frac{1}{\Lambda^{2}}}
d\lambda\,
e^{-\lambda(p^2+m^2)}\,
K(\lambda;z,z') ,
\end{equation}
where $K$ denotes the one--dimensional Euclidean heat kernel
\begin{equation}
K(\lambda;z,z')
=
\frac{1}{\sqrt{2\pi\lambda}}
\exp\!\left(-\frac{(z-z')^2}{2\lambda}\right).
\end{equation}

The parameter $\Lambda_0$ plays the role of an ultraviolet cutoff, while
$\Lambda$ is the flow parameter. The full propagator is recovered in the limit
$\Lambda_0\to\infty$ and $\Lambda\to0$. Differentiation with respect to $\Lambda$ yields
\begin{equation}
\dot C^\Lambda(p;z,z')
=
\dot C_m^\Lambda(p)\,
K(\Lambda^{-2};z,z'),
\qquad
\dot C^\Lambda_m(p)
=
-\frac{2}{\Lambda^3}
\exp\!\left(-\frac{p^2+m^2}{\Lambda^2}\right).
\end{equation}
The factorisation of $\dot C^\Lambda$ into a momentum-dependent term and a heat kernel will play a central role throughout the paper. 

\subsection{Effective action and connected amputated correlators}

For finite $\Lambda_0$ the effective action is defined through the functional
integral
\begin{equation}\label{eg09}
\left\{
\begin{aligned}
\exp\!\left(
-\frac{1}{\hbar}\bigl(
L^{\Lambda,\Lambda_0}(\phi)
+
I^{\Lambda,\Lambda_0}
\bigr)
\right)
&:=
\int d\mu_{\Lambda,\Lambda_0}(\Phi)\,
\exp\!\left(
-\frac{1}{\hbar}
L^{\Lambda_0,\Lambda_0}(\Phi+\phi)
\right),\\
L^{\Lambda,\Lambda_0}(0)&=0 .
\end{aligned}
\right.
\end{equation}

Here $d\mu_{\Lambda,\Lambda_0}$ denotes the centered Gaussian measure with
covariance $\hbar C^{\Lambda,\Lambda_0}$.

The bare interaction is assumed to be local and of bounded canonical
dimension. For the $\phi^4_4$ model the most general local interaction
compatible with these assumptions takes the form
\begin{multline}\label{bareL}
L^{\Lambda_0,\Lambda_0}(\phi)
=
\frac{\lambda}{4!}
\int_{\mathbb{R}^4}
\phi^4(z,x)\,dz\,d^3x
+
\frac12
\int_{\mathbb{R}^4}
\Big(
a^{\Lambda_0}(z,x)\,\phi^2(z,x)
\\
-b_0^{\Lambda_0}(z,x)\,\phi(z,x)\,\partial_z^2\phi(z,x)
-\sum_{i=1}^3
b_i^{\Lambda_0}(z,x)\,\phi(z,x)\,\partial_{x_i}^2\phi(z,x)
\\
+s_0^{\Lambda_0}(z,x)\,\phi(z,x)\,\partial_z\phi(z,x)
+\sum_{i=1}^3
s_i^{\Lambda_0}(z,x)\,\phi(z,x)\,\partial_{x_i}\phi(z,x)
\\
+\frac{2}{4!}\,
c^{\Lambda_0}(z,x)\,\phi^4(z,x)
\Big)
dz\,d^3x .
\end{multline}

The coefficient functions $a^{\Lambda_0}$, $b_i^{\Lambda_0}$,
$s_i^{\Lambda_0}$ and $c^{\Lambda_0}$ represent the counterterms of the
theory. Their precise form depends on the symmetries of the model and on
the renormalization conditions imposed later.

In principle, the bare interaction is assumed to be local and compatible with power counting.
In particular it contains all local monomials in $\phi$ and its derivatives of
canonical dimension less than or equal to four compatible with the symmetry
$\phi\mapsto-\phi$. A priori, the precise form of the counterterms is not fixed
before renormalization. The only requirement is that
$L^{\Lambda_0,\Lambda_0}$ contains no operators of canonical dimension
greater than four.\\
\indent A convenient choice of renormalization conditions leads to a minimal
form of the effective action, corresponding to multiplicative
renormalization. In this case the counterterms take the same structure as
the tree-level interaction and the bare action reduces to
\begin{equation}
L^{\Lambda_0,\Lambda_0}(\phi)
=
\frac{\lambda}{4!}
\int_{\mathbb{R}^4}
\phi^4(z,x)\,dz\,d^3x
+
\frac12
\int_{\mathbb{R}^4}
\Big(
a^{\Lambda_0}\,\phi^2
-
b^{\Lambda_0}\,\phi\,\Delta\phi
+
\frac{2}{4!}\,c^{\Lambda_0}\,\phi^4
\Big)
dz\,d^3x ,
\end{equation}
where $a^{\Lambda_0}$, $b^{\Lambda_0}$ and $c^{\Lambda_0}$ are
position-independent counterterms. Expanding the effective action as a formal power series in $\hbar$,
\begin{equation}\label{loop}
L^{\Lambda,\Lambda_0}(\phi)
=
\sum_{l=0}^{\infty}
\hbar^l
L^{\Lambda,\Lambda_0}_l(\phi),
\end{equation}
we introduce the connected amputated Schwinger functions (CAS), also
referred to as the $n$-point correlators. They are defined by functional
differentiation of the effective action:
\begin{equation}\label{CASL}
\mathcal L^{\Lambda,\Lambda_0}_{l,n}
((z_1,x_1),\ldots,(z_n,x_n))
:=
\left.
\frac{\delta^n}{\delta\phi(z_1,x_1)\cdots\delta\phi(z_n,x_n)}
L^{\Lambda,\Lambda_0}_l(\phi)
\right|_{\phi=0}.
\end{equation}

Since translation invariance holds in the transverse directions, it is
convenient to work in the mixed position–momentum representation obtained
by Fourier transforming the transverse variables. In this representation
the CAS functions are defined by
\begin{multline}\label{eg015}
\delta^{(3)}(p_1+\cdots+p_n)\,
\mathcal L^{\Lambda,\Lambda_0}_{l,n}
((z_1,p_1),\ldots,(z_n,p_n))
\\
:=
(2\pi)^{3(n-1)}
\left.
\frac{\delta^n}{
\delta\phi(z_1,p_1)\cdots\delta\phi(z_n,p_n)}
L^{\Lambda,\Lambda_0}_l(\phi)
\right|_{\phi\equiv0}.
\end{multline}

The Dirac distribution expresses conservation of the transverse momentum.
\subsection{Flow equations, Polchinski operators and Boundary conditions}

Differentiating \eqref{eg09} with respect to the
flow parameter $\Lambda$ and using the standard identities satisfied by
Gaussian measures with covariance depending on a parameter
(see e.g.\ \cite{Glimm}) yields the flow equation
\begin{equation}\label{flowEq}
\partial_\Lambda
(L^{\Lambda,\Lambda_0}+I^{\Lambda,\Lambda_0})
=
\frac{\hbar}{2}
\Big\langle
\frac{\delta}{\delta\phi},
\dot C^\Lambda
\frac{\delta}{\delta\phi}
\Big\rangle
L^{\Lambda,\Lambda_0}
-
\frac12
\Big\langle
\frac{\delta L^{\Lambda,\Lambda_0}}{\delta\phi},
\dot C^\Lambda
\frac{\delta L^{\Lambda,\Lambda_0}}{\delta\phi}
\Big\rangle .
\end{equation}

Here $\dot C^\Lambda$ denotes the derivative of the covariance with
respect to $\Lambda$, and $\langle\cdot,\cdot\rangle$ stands for the
natural $L^2$ pairing with kernel $\dot C^\Lambda$. Taking functional derivatives of \eqref{flowEq} and expanding in the loop-order \eqref{loop} yields the hierarchy of flow equations satisfied by the CAS
functions:
\begin{multline}\label{FEL4}
\partial_\Lambda
\mathcal L_{l,n}^{\Lambda,\Lambda_0}
(\vec z_{1,n},\underline{p}_{n})
=
\frac12
\int_{\mathbb R}dz
\int_{\mathbb R}dz'
\int_k
\dot C^\Lambda(k;z,z')
\mathcal L_{l-1,n+2}^{\Lambda,\Lambda_0}
(\vec z_{1,n},\underline{p}_n,(z,k),(z',-k))
\\
-
\frac12
\sum_{l_1+l_2=l}
\sum_{\pi\in\mathcal P_n}
\int_{\mathbb R}dz
\int_{\mathbb R}dz'~
\dot C^\Lambda(p_{\pi};z,z')
\\
\times
\mathcal L_{l_1,n_1+1}^{\Lambda,\Lambda_0}
((z_i,p_i)_{i\in\pi_1},(z,p_{\pi}))
\,
\mathcal L_{l_2,n_2+1}^{\Lambda,\Lambda_0}
((z',-p_{\pi}),(z_i,p_i)_{i\in\pi_2}) .
\end{multline}
Here we use the notation
\[
p_\pi = \sum_{i\in\pi_1}p_i
      = -\sum_{i\in\pi_2}p_i,
\qquad
\int_k := \int_{\mathbb{R}^3}\frac{d^3k}{(2\pi)^3}
\]
and
$$
\underline p_n:=(p_1,\ldots,p_n)
$$
denotes an $n$-tuple of momenta satisfying the conservation law
$$
\sum_{i=1}^n p_i=0.
$$
Throughout the paper, the notation $\underline p_n$ will always be understood in this sense. Later, in the derivation of the renormalization bounds, we shall use the notation
$$
\|\underline{p}_n\|=\max_{1\leq i\leq n}\left|p_i\right|~.
$$
If $n=2$, we simply write $\underline{p}=(p,-p)$.
\\
\indent Furthermore, $\mathcal P_n$ denotes the set of partitions of
$\{1,\dots,n\}$ into two non-empty subsets.
For $\pi=\{\pi_1,\pi_2\}\in\mathcal P_n$ we set
\[
n_i := |\pi_i|.
\]
Equation \eqref{FEL4} expresses the hierarchical structure of the flow:
the derivative of an $n$-point function at loop order $l$ depends only on
correlators of strictly lower loop order or lower number of external
legs.

For later use it is convenient to rewrite the flow equation in a more
compact form by introducing the Polchinski operators. First we define the linear operator
\[
\mathbf L^\Lambda :
\mathcal S'(\mathbb R^{4n+8})
\to
\mathcal S'(\mathbb R^{4n})
\]
by
\begin{multline}\label{LinOp}
\mathbf L^\Lambda(T)\left(\vec z_{1,n};\vec{p}_{1,n}\right)
\\
=
\int_k\int_{\mathbb R}dz
\int_{\mathbb R}dz'~
T\left((z_1,p_1)\ldots,(z_n,p_n),(z,k),(z',-k)\right)~K\bigl(\Lambda^{-2};z,z'\bigr)\,\dot C^{\Lambda}_m(k)~.
\end{multline}
Similarly, for a partition $\pi=\{\pi_1,\pi_2\}$ of $\{1,\dots,n\}$ we
introduce the bilinear operator
\[
\mathbf B^\Lambda_\pi :
\mathcal S'(\mathbb R^{n_1+1})
\times
\mathcal S'(\mathbb R^{n_2+1})
\to
\mathcal S'(\mathbb R^n)
\]
defined by
\begin{multline}\label{QuOp}
\mathbf B_\pi^\Lambda(T_1,T_2)
(\vec{z}_{1,n};\vec{p}_{1,n})
:=
e^{-\frac{m^2}{2\Lambda^2}}
\int_{\mathbb R}dz
\int_{\mathbb R}dz'\,
T_1
\bigl(
\left(z_i,p_i\right)_{i\in\pi_1},\left(z,p_{\pi}\right)
\bigr)
\\
\times
T_2
\bigl(
\left(z_i,p_i\right)_{i\in\pi_2},\left(z',-p_{\pi}\right)
\bigr)
K\!\left(
\Lambda^{-2};z,z'
\right).
\end{multline}
In terms of these operators the flow equation
\eqref{FEL4} takes the following compact form
\begin{align}
\partial_\Lambda
\mathcal L_{l,n}^{\Lambda,\Lambda_0}
&=\frac12
\mathbf L^\Lambda
\big(
\mathcal L_{l-1,n+2}^{\Lambda,\Lambda_0}
\big)
-\frac12
\sum_{\pi\in\mathcal P_n}
\mathbf B^\Lambda_\pi
\big(
\mathcal L_{l_1,n_1+1}^{\Lambda,\Lambda_0},
\mathcal L_{l_2,n_2+1}^{\Lambda,\Lambda_0}
\big)
\,\dot C^\Lambda_{\frac{m}{\sqrt{2}}}(p_{\pi}).
\end{align}
This formulation will be convenient when analyzing the correlators later in the paper. It also makes explicit the two basic operations generated by the flow: the linear contraction of two additional variables and the quadratic fusion of two lower-order correlators along a heat kernel. \\
\indent For a multi-index
\[
w=(w_1,\cdots,w_n)\in\mathbb N_n^3,
\]
we set
\[
|w|:=\sum_{i=1}^nw_i,
\qquad
\partial^w
:=
\partial_{p_1}^{w_1}\cdots
\partial_{p_n}^{w_n}.
\]

To close the hierarchy \eqref{FEL4}, one must supplement the flow equations with suitable boundary conditions. Throughout this paper, we employ the standard mixed boundary conditions of the flow equation formalism. These consist of ultraviolet boundary conditions imposed at $\Lambda=\Lambda_0$ for irrelevant terms in the renormalization group sense and renormalization conditions imposed at $\Lambda=0$ for the relevant part of the theory.

More precisely, for all loop orders $l\ge1$ we impose the following conditions:

\begin{itemize}

\item[(i)] Ultraviolet scale: for all $n\ge2$, all multi-indices $w$ and $r\in\mathbb{N}_0$ such that $r,|w|\in\{0,\cdots,3\}$ and satisfying
$
n+|w|+r\ge5$
one has 
\begin{equation}\label{Bc1}
\left(z_1-z_i\right)^r\partial^w
\mathcal L_{l,n}^{\Lambda_0,\Lambda_0}
\bigl(
(z_1,p_1),\ldots,(z_n,p_n)
\bigr)
=
0,\qquad\forall i\in\left\{2,\cdots,n\right\}~.
\end{equation}

\item[(ii)] Renormalisation conditions: at $\Lambda=0$ we impose at zero momenta
\begin{equation}\label{renoc1}
\int_{\mathbb R}
(z_1-z_2)^r
\,
\partial^w
\mathcal L_{l,2}^{0,\Lambda_0}
\bigl(
(z_1,0),(z_2,0)
\bigr)
\,dz_2
=
0,
\qquad
r+|w|\le2,
\end{equation}
together with
\begin{equation}\label{renoc2}
\int_{\mathbb R^3}
\mathcal L_{l,4}^{0,\Lambda_0}
\bigl(
(z_1,0),
(z_2,0),
(z_3,0),
(z_4,0)
\bigr)
\,d\vec z_{2,4}
=
0.
\end{equation}

\end{itemize}
The role of these conditions is to uniquely determine the perturbative solution of the flow equation hierarchy. Their relation with power counting and renormalisability will become apparent in Section~\ref{SecProof}, where we introduce the functional-analytic framework used throughout the sequel.

\subsection{Main results}
\indent In momentum space, perturbative renormalisability is formulated through estimates of the form\cite{Polchinski1984,KellerKopperSalmhofer1992} 
\begin{equation}\label{EquMomentSp}
\left|\hat{\mathcal{L}}_{l,n}^{\Lambda,\Lambda_0}(\underline{ p}_{n})\right|
\le
\left(\Lambda+m\right)^{4-n}
\mathcal P\!\left(\log\frac{\Lambda+m}{m}\right)
\mathcal P\!\left(\frac{ \|\underline{p}_n\|}{\Lambda+m}\right)~.
\end{equation}
Such bounds simultaneously encode the power-counting behaviour of the theory, the logarithmic structure of the ultraviolet divergences, and the dependence on the external momenta. They provide a complete description of renormalisability within the momentum-space framework.

In position space, however, the correlators are distributions rather than functions, and estimates of the form \eqref{EquMomentSp} no longer admit a direct analogue. The main difficulty is therefore to identify a functional-analytic framework capable of encoding simultaneously the singular behaviour of the correlators, their scaling properties, and their stability under the flow equation hierarchy.

The principal objective of the present work is to construct half-space truncations of renormalised correlators and to show that this operation does not require additional renormalisation. Since the correlators themselves are distributions of negative regularity, this problem cannot be addressed within classical multiplication theory. The Besov regularity theory developed below provides the functional-analytic framework that makes this construction possible.

The first step is therefore to formulate perturbative renormalisation intrinsically in position space. To this end, we introduce a family of power-counting spaces that encode the scaling properties of the correlators directly at the level of distributions and are stable under the flow equation hierarchy. Within this framework, renormalisability can be formulated intrinsically in position space.

The Besov regularity estimates stated below arise as a consequence of the corresponding power-counting bounds. Define
\begin{equation}
K(\tau;x,y)
:=
\frac{1}{\sqrt{\tau}}
\exp\!\left(-\frac{(x-y)^2}{2\tau}\right).
\end{equation}
The regularity properties of the correlators will be measured in Besov--Hölder spaces\cite{BahouriCheminDanchin} through the norm
\begin{equation}\label{BesovCorr}
\|\mathcal{L}_{l,n}^{\Lambda,\Lambda_0}
(\underline p_n)\|_{\mathcal C^\alpha(\mathbb R^n)}
:=
\sup_{\tau\in(0,1]}
\tau^{-\frac{\alpha}{2}}
\left\|
K^{\otimes n}
\star
\mathcal L_{l,n}^{\Lambda,\Lambda_0}
(\underline p_n)
\right\|_\infty,
\end{equation}
where
\begin{equation}\label{BesovInf}
\left\|
K^{\otimes n}
\star
\mathcal L_{l,n}^{\Lambda,\Lambda_0}
(\underline p_n)
\right\|_\infty
:=
\sup_{\vec{y}_{1,n}\in\mathbb R^n}
\left|
\int_{\mathbb R^n}
\mathcal L_{l,n}^{\Lambda,\Lambda_0}
\bigl(
(z_1,p_1),\ldots,(z_n,p_n)
\bigr)
\prod_{i=1}^{n}
K(\tau;z_i,y_i)
\,d\vec z_{1,n}
\right|.
\end{equation}
The following theorem provides Besov regularity estimates that are uniform with respect to the ultraviolet cutoff $\Lambda_0$.

\begin{theorem}[Besov regularity of the correlators]\label{BesovNorm}
Let $0\leq \Lambda\leq \Lambda_0$ and set $\Lambda_m:=\Lambda+m$. Fix external momenta
$\underline p_n=(p_1,\ldots,p_n)$ satisfying momentum conservation and assume that the correlators satisfy the boundary conditions
\eqref{Bc1} and the renormalization conditions
\eqref{renoc1}--\eqref{renoc2}.

For $n\ge4$ and $l\ge1$ one has
\begin{align}
\left\|\mathcal{L}_{l,n}^{\Lambda,\Lambda_0}\left(\underline{p}_{n}\right)\right\|_{\mathcal{C}^{-2n+2}(\mathbb{R}^n)}
&\le
\Big(\sum_{i=0}^{n-1}\Lambda_m^{4-n-i}\Big)
\mathcal{P}_{l-1}\!\left(\log\frac{\Lambda_m}{m}\right)
\widetilde{\mathcal{P}}\!\left(\frac{\|\underline{p}_n\|}{\Lambda_m}\right).
\end{align}
For $n\ge4$ one further has
\begin{align}\label{BesovnTree1}
\|\mathcal{L}_{0,n}^{\Lambda,\Lambda_0}\left(\underline{p}_{n}\right)\|_{\mathcal{C}^{-n+1}(\mathbb{R}^n)}
\le
\sum_{i=0}^{n-1}\Lambda_m^{4-n-i}.
\end{align}
For the two–point functions and $l\ge2$,
\begin{align}
\|\mathcal{L}_{l,2}^{\Lambda,\Lambda_0}(\underline{p})\|_{\mathcal{C}^{-4}(\mathbb{R}^2)}
&\le
\Big(\sum_{i=0}^{3}\Lambda_m^{2-i}\Big)
\mathcal{P}_{l-1}\!\left(\log\frac{\Lambda_m}{m}\right)~\widetilde{\mathcal{P}}\!\left(\frac{|p|}{\Lambda_m}\right)~.
\end{align}
Finally,
\begin{equation}
\|\mathcal{L}_{1,2}^{\Lambda,\Lambda_0}\left(\underline{p}\right)\|_{\mathcal{C}^{-1}(\mathbb{R}^2)}
\le
\sum_{i=0}^{3}\Lambda_m^{2-i}~.
\label{Eqq74}
\end{equation}
The polynomials $\mathcal{P}_{l-1}$ and $\widetilde{\mathcal{P}}$ may vary from line to line. Their coefficients are non-negative and depend only on  $l$ and $n$, while the degree of
$\mathcal{P}_{l-1}$ is $l-1$.
\end{theorem}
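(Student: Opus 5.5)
The plan is to derive the theorem from position-space power-counting bounds in the spirit of Kopper--Müller \cite{KopperMuller2007}, then convert them into Besov bounds by smearing with the heat kernel $K^{\otimes n}$.

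\emph{Step 1: induction hypothesis.} The hypothesis, run in the standard order (increasing $n+2l$, and for fixed $n+2l$ increasing $l$), is a representation of the smeared correlator
\[
\int_{\mathbb R^n}\mathcal L_{l,n}^{\Lambda,\Lambda_0}\bigl((z_i,p_i)_i\bigr)\prod_i K(\tau;z_i,y_i)\,d\vec z_{1,n}.
\]
This quantity is bounded by a sum over generalized tree amplitudes, in the class to be introduced in Section~\ref{SecTreeAmp}. Each amplitude is a product of heat kernels $K(\tau+\lambda_e;\cdot,\cdot)$ along the edges of a tree connecting the external points. The weights satisfy $\Lambda_m^{4-n-i}$ times polynomials in $\log(\Lambda_m/m)$ and $\|\underline p_n\|/\Lambda_m$.

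The induction step integrates the flow equation \eqref{FEL4} written with $\mathbf L^\Lambda$ and $\mathbf B^\Lambda_\pi$. The bilinear term fuses two trees along the new heat kernel $K(\Lambda^{-2};z,z')$; the semigroup property of $K$ keeps the result inside the class of trees. The linear term contracts two external legs, producing a tree with one loop that must be reduced back to a tree by bounding one heat kernel by its sup norm, $\lesssim \Lambda$. The $\Lambda$-integration direction follows the boundary conditions. Irrelevant terms ($n+|w|+r\ge5$) are integrated down from $\Lambda_0$ using \eqref{Bc1}, and they gain convergence from the factor $\Lambda^{4-n-\dots}$. Relevant terms are integrated up from $0$ using \eqref{renoc1}--\eqref{renoc2}, after a Taylor expansion in the momenta and in the moments $(z_1-z_i)^r$. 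This last operation is what produces the extra factors $\Lambda_m^{-i}$ and the polynomial $\widetilde{\mathcal P}$.

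\emph{Step 2: from tree bounds to the Besov norm.} Take the supremum over $\vec y_{1,n}$ of a tree amplitude smeared with $K(\tau;\cdot)$ at each vertex. Each of the $n-1$ edges carries a kernel of the form $(\tau+\lambda)^{-1/2}$, bounded by $\tau^{-1/2}$. The root integration is absorbed into the normalization of $K$. Since this $K$ lacks the $1/\sqrt{2\pi}$ factor, each smeared vertex costs a constant rather than $\tau^{\pm 1/2}$.

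For tree level the amplitudes contain no loop integrals, so one gets $\tau^{-(n-1)/2}$, i.e.\ $\mathcal C^{-n+1}$, which is \eqref{BesovnTree1}. At loop order $l\ge1$, moment factors $(z_1-z_i)^r$ with $r\le3$ and the $\delta$-like relevant parts, such as the local counterterm $\delta(z_1-z_2)\partial_z^2$, cost up to a further $\tau^{-1/2}$ per edge. This gives $\tau^{-(n-1)}$ and hence $\mathcal C^{-2n+2}$. For $n=2$ the $\partial_z^2\delta$ term gives $\tau^{-3/2}$ after smearing, and extra slack yields $\mathcal C^{-4}$. For $l=1$ the explicit tadpole computation gives only a $\delta$, hence $\mathcal C^{-1}$. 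The degree $l-1$ of $\mathcal P_{l-1}$ comes from the usual count: a logarithm is generated only when integrating marginal terms, at most once per loop beyond the first.

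\emph{Main obstacle.} I expect the hardest step to be the linear term $\mathbf L^\Lambda$. Contracting two legs of a tree amplitude creates a cycle, and one must show the result is again dominated by a sum of admissible trees with the correct powers of $\Lambda$ and $\tau$, uniformly in $\Lambda_0$. I would first try to cut the cycle at its edge with smallest variance, using $\sup_z K(\lambda;z,z')\le\lambda^{-1/2}$. I would then check that the resulting loss is exactly compensated by the $\Lambda^{-3}\int_k e^{-k^2/\Lambda^2}\sim\Lambda^0$ factor from $\dot C^\Lambda_m$. A second delicate point is the Taylor remainder in the relevant terms: the moments $(z_1-z_i)^r$ must be reabsorbed by heat kernels, via $|x|^r K(\lambda;x)\lesssim\lambda^{r/2}K(2\lambda;x)$, without exceeding the claimed regularity index.
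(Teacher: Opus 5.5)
Your architecture matches the paper's. Both run an inductive heat-kernel/tree-amplitude bound through the flow equation (Theorem~\ref{ThmPrinc}), then smear the root and use $\sup_y\mathcal A_{l,n}\lesssim\tau^{-(n-1)/2}$. Your account of the tree-level exponent and of $\mathcal L_{1,2}$ also agrees with the paper's.

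\textbf{The loop-level exponent is asserted, not derived.} The bound in $\mathcal C^{-2n+2}$ with prefactor $\sum_{i=0}^{n-1}\Lambda_m^{4-n-i}$ rests on your Step~2 claim ``up to a further $\tau^{-1/2}$ per edge''. Nothing in your induction hypothesis produces these factors, or stops them from accumulating over the induction. In the paper the loss is built into the inductive bound itself. There is one weight $1+\tau_i^{-1/2}/\Lambda_m$ per non-root leg, namely $Q_1(\tau^{\Lambda_m}_{\mathfrak s})$ in Definition~\ref{DefPowerCountingSeminorms}. For this the bound must hold with an independent scale $\tau_i$ on each leg. The paper also keeps the root fixed and allows any subset $\mathfrak s$ of smeared legs; the empty subset gives the coefficients $t(z_1)$, $t_r(z_1)$ on which \eqref{renoc1}--\eqref{renoc2} act.

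As written, with one common $\tau$ on all legs, your hypothesis does not close. Both $\mathbf L^\Lambda$ and $\mathbf B^\Lambda_\pi$ need the contracted or fused legs smeared at scale $(2\Lambda^2)^{-1}$. That is exactly where the weight becomes $\le\sqrt2$, and it is why at most one factor $\tau_i^{-1/2}/\Lambda_m$ per external leg survives (Lemma~\ref{QLemma}).

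You also misidentify the source of the loss. The factors $(z_1-z_i)^r$ are gains of $\Lambda^{-r}$ (Lemma~\ref{LemDecPo}). The $\tau_j^{-1/2}$ losses come from moving smearing kernels to the root when the relevant part is localized.

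\textbf{The localization of the relevant part is missing.} Your proposed fix $|x|^rK(\lambda;x)\lesssim\lambda^{r/2}K(2\lambda;x)$ does not address it. Taylor-expanding $K(\tau_j;z_j,y_j)$ around $z_1$ leaves a kernel evaluated at an intermediate point, and no heat-kernel-smeared bound controls that.

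For $n=4$ the paper writes the product of smearing kernels telescopically and keeps the local term $t(z_1)\prod_jK(\tau_j;z_1,y_j)$, which is fixed by \eqref{renoc2}. It bounds the increments through the discrete-derivative seminorms of $\Delta^{(j)}_{z_1}\mathcal L_{l,4}$, each costing $\tau_j^{-1/2}\Lambda_m^{-1}$. These are irrelevant and are integrated down from $\Lambda_0$, where $\Delta\mathcal L^{\Lambda_0,\Lambda_0}_{l,4}=0$ because the bare four-point term is local (Propositions~\ref{integration} and \ref{prop:relevant4}).

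For $n=2$ the third-order remainder needs two-rooted bounds on $(z_1-z_2)^r\partial^w\mathcal L_{l,2}$ with $r+|w|=3$. This remainder, $\tau^{-3/2}$ on top of $\tau^{-1/2}$ from the amplitude, is what fixes $\mathcal C^{-4}$, not the local $\partial_z^2\delta$ term.

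\textbf{The cut in $\mathbf L^\Lambda$ is the wrong way round} (your Step~1 had it right). Cutting the edge of smallest variance yields $\lambda_{\min}^{-1/2}$, which can be as large as $\Lambda_0$. Instead, merge the two contracted legs by the semigroup property and bound the resulting kernel, of variance $\sim\Lambda^{-2}$ (the largest available), by $\Lambda$ (Lemma~\ref{lemm2}). This $\Lambda$ is not compensated by $\dot C^\Lambda_m$, whose $k$-integral is $O(1)$. It is precisely the degree shift $\rho\mapsto\rho+1$ of the linear term.
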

\begin{corollary}[Existence of the renormalised correlators]
\label{CorUVLimit}
Fix external momenta
$\underline p_n=(p_1,\ldots,p_n)$ satisfying momentum conservation and let the $n$-point correlators $\mathcal{L}_{l,n}^{\Lambda,\Lambda_0}$ verify \eqref{Bc1} and \eqref{renoc1}-\eqref{renoc2}.  
Then for $n\ge4$, $l\ge1$, there exists a distribution $
\mathcal L_{l,n}(\underline p_n)$ in $
\mathcal C^{-2n+2}(\mathbb R^n)$
such that
\[
\mathcal L_{l,n}^{\Lambda,\Lambda_0}(\underline p_n)
\longrightarrow
\mathcal L_{l,n}(\underline p_n)
\]
as $\Lambda\to0$ and $\Lambda_0\to\infty$ in
$\mathcal C^{-2n+2}(\mathbb R^n)$.

For the tree-level sector, one has for every $n\ge4$
\[
\mathcal L_{0,n}^{\Lambda,\Lambda_0}(\underline p_n)
\longrightarrow
\mathcal L_{0,n}(\underline p_n)
\]
as $\Lambda\to0$ and $\Lambda_0\to\infty$ in
$\mathcal C^{-n+1}(\mathbb R^n)$.

For the two-point functions, there exist distributions $
\mathcal L_{l,2}(
\underline{p})$ in
$\mathcal C^{-4}(\mathbb R^2)$ with $l\ge2$ such that
\[
\mathcal L_{l,2}^{\Lambda,\Lambda_0}(\underline{p})
\longrightarrow
\mathcal L_{l,2}\left(\underline{p}\right)
\]
as $\Lambda\to0$ and $\Lambda_0\to\infty$ in
$\mathcal C^{-4}(\mathbb R^2)$.

Finally,
\[
\mathcal L_{1,2}^{\Lambda,\Lambda_0}\left(\underline{p}\right)
\longrightarrow
\mathcal L_{1,2}\left(\underline{p}\right)
\]
as $\Lambda\to0$ and $\Lambda_0\to\infty$ in
$\mathcal C^{-1}(\mathbb R^2)$.
\end{corollary}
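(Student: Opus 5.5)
The corollary is a Cauchy-sequence argument in the Banach spaces $\mathcal C^\alpha(\mathbb R^n)$ defined through \eqref{BesovCorr}. Boundedness alone only gives weak-$*$ compactness. I will therefore estimate differences directly, using the flow equation in $\Lambda$ and the standard Polchinski-type argument in $\Lambda_0$. The uniform bounds of Theorem~\ref{BesovNorm} supply the required control. Here $\Lambda_m\ge m$, and since $\|\underline p_n\|/\Lambda_m\le \|\underline p_n\|/m$ for fixed momenta, the right-hand sides are bounded uniformly in $\Lambda,\Lambda_0$ by constants that depend only on $l,n,m,\underline p_n$.

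\emph{Limit $\Lambda\to0$ at fixed $\Lambda_0$.} Write
\[
\mathcal L^{\Lambda,\Lambda_0}_{l,n}-\mathcal L^{\Lambda',\Lambda_0}_{l,n}=\int_{\Lambda'}^{\Lambda}\partial_\lambda\mathcal L^{\lambda,\Lambda_0}_{l,n}\,d\lambda .
\]
Insert the compact form of the flow equation, with the operators $\mathbf L^\lambda$ and $\mathbf B^\lambda_\pi$. Both operators act on the $z$-variables by heat-kernel contractions $K(\lambda^{-2};\cdot,\cdot)$, which commute with the smoothing $K^{\otimes n}\star$ up to a change of the smoothing parameter. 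Smoothing bounds of the type in \eqref{BesovInf} for the right-hand side therefore follow from the bounds of Theorem~\ref{BesovNorm} for the lower-order correlators. This uses induction on $(l,n)$ in the usual order: first $n+2l$ increasing, then $l$ at fixed $n+2l$. Since $\lambda\le\Lambda\le 1$ stays in a compact range where $\lambda_m\simeq m$, the integrand is bounded in $\mathcal C^\alpha$ uniformly in $\lambda$. The family is then Cauchy as $\Lambda\to0$ with rate $O(\Lambda-\Lambda')$.

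\emph{Limit $\Lambda_0\to\infty$.} This is the main obstacle. I would prove the stronger estimate
\[
\bigl\|\partial_{\Lambda_0}\mathcal L^{\Lambda,\Lambda_0}_{l,n}(\underline p_n)\bigr\|_{\mathcal C^{\alpha}}\le \frac{\Lambda_m^{5-n}}{\Lambda_0^{2}}\,\mathcal P\Big(\log\frac{\Lambda_0}{m}\Big)\widetilde{\mathcal P}\Big(\frac{\|\underline p_n\|}{\Lambda_m}\Big),
\]
with the same regularity index $\alpha$ as in Theorem~\ref{BesovNorm}. Then $\int_{\Lambda_0}^\infty \partial_{\Lambda_0'}\,d\Lambda_0'$ converges, with rate $\Lambda_0^{-1}\log^k\Lambda_0$, uniformly in $\Lambda$. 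To prove it, I would differentiate the flow hierarchy with respect to $\Lambda_0$. This gives linear flow equations for $\partial_{\Lambda_0}\mathcal L$ driven by $\mathbf L^\Lambda$ and $\mathbf B^\Lambda_\pi$, with one factor differentiated. The same power-counting-space induction that yields Theorem~\ref{BesovNorm} then applies, with boundary terms handled as follows.

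Irrelevant terms are integrated down from $\Lambda_0$. There the boundary value of $\partial_{\Lambda_0}\mathcal L^{\Lambda_0,\Lambda_0}$ is $-\partial_\Lambda\mathcal L|_{\Lambda=\Lambda_0}$ by \eqref{Bc1}, and it carries the extra factor $\Lambda_0^{-1}$ from its irrelevant dimension. Relevant terms are integrated up from $0$, where \eqref{renoc1}--\eqref{renoc2} make their $\Lambda_0$-derivatives vanish. The delicate point is to check that the moment and Taylor decompositions used in the power-counting spaces respect the $\Lambda_0^{-2}$ gain. If this is obstructed, I would settle for the weaker rate $\Lambda_0^{-1}$, which still suffices.

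\emph{Assembling.} The joint Cauchy property in $(\Lambda,\Lambda_0)$ follows from the triangle inequality, combining the $\Lambda$-Cauchy estimate (uniform in $\Lambda_0$) with the $\Lambda_0$-estimate (uniform in $\Lambda$). Completeness of $\mathcal C^\alpha(\mathbb R^n)$ then yields the limit $\mathcal L_{l,n}(\underline p_n)$ in each of the four stated spaces. The tree-level and one-loop two-point cases need no induction: they are explicit, and the same estimates apply to them directly.
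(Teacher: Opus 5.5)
Your treatment of $\Lambda_0\to\infty$ is the paper's route. Proposition~\ref{PropConv} differentiates the hierarchy in $\Lambda_0$. It gets the factor $(\Lambda_0+m)^{-2}$ from the boundary term $\mathcal R^{\Lambda_0,\Lambda_0}_{l,n}$ in the irrelevant sector, which lies in $\mathcal D^{\Lambda_m}_{3-n-r-|w|,l}$ at $\Lambda=\Lambda_0$. In the relevant sector it gets it from the vanishing renormalisation conditions. Theorem~\ref{Convergence} then transfers this to $\mathcal C^\alpha$ and integrates in $\Lambda_0$.

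\textbf{The gap is in your $\Lambda\to0$ step.} You propose to bound the smoothed right-hand side of the flow equation inductively using only the Besov bounds of Theorem~\ref{BesovNorm} for the lower-order correlators. This does not close. After splitting $K(\lambda^{-2};z,z')$ by the semigroup property, both $\mathbf L^\lambda$ and $\mathbf B^\lambda_\pi$ require an integral over the fusion variable $u\in\mathbb R$, for instance $\int du\,(K\star T_1)(y_{\pi_1},u)\,(K\star T_2)(y_{\pi_2},u)$. A $\mathcal C^\alpha$ norm as in \eqref{BesovCorr} is only a supremum after smoothing all variables at one common scale $\tau\le1$.
\begin{itemize}
\item It gives no integrability in $u$: the bulk correlators decay only in relative coordinates, and a sup norm does not see this.
\item It does not cover the mixed scales $\tau$ and $(2\lambda^2)^{-1}$ that appear.
\end{itemize}
This is exactly why the paper works with the power-counting seminorms, where the tree amplitudes carry that integrability (Corollaries~\ref{CoRe} and \ref{CorFus}).

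The repair is immediate. The proof of Theorem~\ref{ThmPrinc} yields $\partial_\Lambda\mathcal L^{\Lambda,\Lambda_0}_{l,n}\in\mathcal D^{\Lambda_m}_{3-n,l}$ uniformly in $\Lambda_0$; see \eqref{Boobo0}. The transfer argument in the proof of Theorem~\ref{BesovNorm} turns this into a bound on $\|\partial_\lambda\mathcal L^{\lambda,\Lambda_0}_{l,n}(\underline p_n)\|_{\mathcal C^{\alpha}}$ that is uniform in $\lambda\in[0,1]$ and $\Lambda_0$, since $\lambda_m\ge m$. That is the Lipschitz estimate in $\Lambda$ you need. Once repaired, your assembly step actually gives the joint limit in $(\Lambda,\Lambda_0)$ as stated. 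The paper proves Theorem~\ref{Convergence} only at $\Lambda=0$ and leaves the $\Lambda\to0$ part implicit.

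\textbf{Your fallback is also wrong.} A bound $\|\partial_{\Lambda_0}\mathcal L^{\Lambda,\Lambda_0}_{l,n}\|_{\mathcal C^\alpha}\lesssim\Lambda_0^{-1}$ is not integrable at infinity, so it gives no Cauchy property. The full gain $(\Lambda_0+m)^{-2}$, up to logarithms, is indispensable. The ``delicate point'' you raise is not an obstruction. The reconstruction bounds of Propositions~\ref{prop:relevant} and \ref{prop:relevant4} (see \eqref{SemNorm4pt}) are linear in the moments, the $\Delta$-seminorms and the remainder seminorms. The prefactor therefore passes through them unchanged; this is Case~2 in the proof of Proposition~\ref{PropConv}.
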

\begin{remark}
    \noindent As a consistency check, observe that at $l=0$ and $n=4$ one has
\begin{equation}
\mathcal{L}_{0,4}^{\Lambda,\Lambda_0}\left(\vec{z}_{1,4};\underline{p}_4\right)
=
\lambda \prod_{i=2}^4 \delta(z_1-z_i),
\end{equation}
which belongs to $\mathcal{C}^{-3}(\mathbb{R}^4)$ and is therefore
consistent with \eqref{BesovnTree1}.
\end{remark}
\begin{remark}
The regularity exponents appearing in
\eqref{BesovnTree1} and in the first estimate of
Theorem~\ref{BesovNorm} differ.
At tree level, the correlators satisfy the stronger bound
\(
\mathcal L_{0,n}^{\Lambda,\Lambda_0}
\in \mathcal C^{-n+1},
\)
whereas loop corrections lead to the weaker regularity
\(
\mathcal L_{l,n}^{\Lambda,\Lambda_0}
\in \mathcal C^{-2n+2}
\)
for \(l\ge1\).
This loss of regularity reflects the additional short-distance
singularities generated by loop integrations.
\end{remark}
As follows from Theorem~\ref{BesovNorm}, the correlators are distributions whose regularity depends on the number of external arguments and on the loop order. Consequently, multiplication by the discontinuous functions \(\mathds 1^{+}\) and \(\mathds 1^{-}\) is, in general, not defined in distribution theory.

A natural question is therefore whether the renormalised correlators admit a meaningful restriction to a half-space and, more generally, whether products of the form
\[
\prod_{i\in\sigma}\mathds 1_{+}(z_i)\,
\mathcal L_{l,n}^{\Lambda,\Lambda_0}
\]
can be defined without introducing additional renormalisation.

To address this problem, we introduce a class
\(\mathcal K\) of smooth approximations of the half-space indicator.
The precise definition of \(\mathcal K\) will be given in
Definition~\ref{DefK}.

Given \(\chi_\varepsilon\in\mathcal K\) and
\(\sigma\subseteq\{2,\ldots,n\}\), we consider the regularised products
\[
\Bigl(
\prod_{i\in\sigma}
\chi_\varepsilon(z_i)
\Bigr)
\mathcal L_{l,n}^{\Lambda,\Lambda_0}
(\vec z_{1,n};\underline p_n).
\]

Our main result shows that these regularised correlators admit a limit as
\(\varepsilon\to0\), that this limit is independent of the choice of
\(\chi_\varepsilon\in\mathcal K\), and that no additional renormalisation is required. This allows us to define the truncated correlators by
\begin{equation}
\label{DefHalfSpaceCorr}
\bigl(
\mathds 1_{+}^{\sigma}
\mathcal L_{l,n}^{\Lambda,\Lambda_0}
\bigr)
(\vec z_{1,n};\underline p_n)
:=
\lim_{\varepsilon\to0}
\Bigl(
\prod_{i\in\sigma}
\chi_\varepsilon(z_i)
\Bigr)
\mathcal L_{l,n}^{\Lambda,\Lambda_0}
(\vec z_{1,n};\underline p_n).
\end{equation}

The limit is understood in a suitable Besov--Hölder space. The resulting truncated correlators satisfy the following regularity estimates.
\begin{theorem}[Half-space truncation]\label{BesovNorm2}
Under the same assumptions as in Theorem~\ref{BesovNorm}, and for
$n\ge4$ and $l\ge1$, one has 
\begin{align}
\left\|\mathds{1}^{\sigma}_{+}\mathcal{L}_{l,n}^{\Lambda,\Lambda_0}(\underline{p}_{n})\right\|_{\mathcal{C}^{-|\sigma|n+|\sigma|-n}_{\mathrm{loc}}(\mathbb{R}^n)}
&\le
\Big(\sum_{i=0}^{n-1}\Lambda_m^{4-n-i}\Big)
\mathcal{P}_{l-1}\!\left(\log\frac{\Lambda_m}{m}\right)
\mathcal{P}\!\left(\frac{\|\underline{p}_n\|}{\Lambda_m}\right).
\end{align}
For $n\ge4$ one further has
\begin{align}\label{BesovnTreeCutoff}
\left\|\mathds{1}^{\sigma}_+\mathcal{L}_{0,n}^{\Lambda,\Lambda_0}\left(\underline{p}_{n}\right)\right\|_{\mathcal{C}^{-n+1}_{\mathrm{loc}}\left(\mathbb{R}^n\right)}
\le
\sum_{i=0}^{n-1}\Lambda_m^{4-n-i}.
\end{align}
For the two–point functions and $l\ge2$,
\begin{align}
\left\|\mathds{1}^+\mathcal{L}_{l,2}^{\Lambda,\Lambda_0}(\underline{p})\right\|_{\mathcal{C}^{-4}(\mathbb{R}^2)}
&\le
\Big(\sum_{i=0}^{3}\Lambda_m^{2-i}\Big)
\mathcal{P}_{l-1}\!\left(\log\frac{\Lambda_m}{m}\right)
\mathcal{P}\!\left(\frac{|p|}{\Lambda_m}\right).
\end{align}
Finally,
\begin{align}
\left\|\mathds{1}^+\mathcal{L}_{1,2}^{\Lambda,\Lambda_0}(\underline{p})\right\|_{\mathcal{C}^{-1}(\mathbb{R}^2)}
&\le
\Big(\sum_{i=0}^{3}\Lambda_m^{2-i}\Big)
\mathcal{P}\!\left(\frac{|p|}{\Lambda_m}\right).
\end{align}
\end{theorem}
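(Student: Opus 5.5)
The plan is to prove Theorem~\ref{BesovNorm2} by running the flow-equation induction of Theorem~\ref{BesovNorm} on the regularised products
\[
\Bigl(\prod_{i\in\sigma}\chi_\varepsilon(z_i)\Bigr)\mathcal L_{l,n}^{\Lambda,\Lambda_0}(\vec z_{1,n};\underline p_n)
\]
rather than on the bare correlators. We do not first bound $\mathcal L_{l,n}^{\Lambda,\Lambda_0}$ and then try to multiply. Instead we use the tree-amplitude representation of Section~\ref{SecTreeAmp}, which already allows insertions of smooth localisation functions.

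\textbf{Step 1: representation.} The power-counting bounds express the correlator as a sum of generalised tree amplitudes. These are products of heat kernels $K(\lambda;z_v,z_{v'})$ along the edges of a tree, smeared in the scale parameters, and multiplied by polynomials in the $z_i$ and in $\log(\Lambda_m/m)$. Multiplying by $\prod_{i\in\sigma}\chi_\varepsilon(z_i)$ only inserts bounded functions at external vertices. By the closure of the amplitude class under the flow hierarchy, the regularised product is again a sum of such amplitudes. The coefficients are controlled by $\sup|\chi_\varepsilon|\le C$, uniformly in $\varepsilon$.

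\textbf{Step 2: uniform Besov bound.} We then test against $\prod_i K(\tau;z_i,y_i)$ as in \eqref{BesovInf}, localised to a compact set for the $\mathcal C_{\mathrm{loc}}$ norm.

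For vertices $i\notin\sigma$ we integrate $z_i$ against the tree kernels exactly as in Theorem~\ref{BesovNorm}. For $i\in\sigma$ the insertion $\chi_\varepsilon(z_i)$ prevents us from moving the test-function derivative or the heat-kernel smoothing across $z_i$. The argument then loses the gain normally obtained by integrating the tree edge attached to $i$ against the smooth test function.

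Each such edge can be as singular as the full $n$-point diagonal structure, namely $\delta$-like in up to $n-1$ relative variables. So each truncated leg costs at most $\tau^{-(n-1)/2}$ extra. This accounts for the exponent $-|\sigma|(n-1)-n$, with the tree-level and two-point cases handled by the same count using their better base regularity.

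Because $0\le\chi_\varepsilon\le1$ and $\chi_\varepsilon'\ge0$, we bound $|\chi_\varepsilon|$ pointwise instead of differentiating it. This makes all estimates independent of $\varepsilon$, and the $\Lambda_m$, logarithm and momentum polynomials are inherited unchanged from the induction.

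\textbf{Step 3: convergence in $\varepsilon$.} We show that the family is Cauchy as $\varepsilon\to0$ in a slightly weaker space $\mathcal C^{\alpha-\delta}_{\mathrm{loc}}$. The difference $\chi_\varepsilon-\chi_{\varepsilon'}$ is supported in a layer of width $O(\max(\varepsilon,\varepsilon'))$ around $z=0$. Integrating the locally integrable tree amplitudes over this layer gives a factor that vanishes as the width shrinks, since for $\Lambda>0$ the heat kernels of scale at least $\Lambda^{-2}$ are bounded; in general, interpolation with the uniform bound trades a small power of the width for $\tau^{-\delta}$. The limit exists, and since two choices in $\mathcal K$ can be compared the same way, it is independent of $\chi_\varepsilon$. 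Lower semicontinuity of the norm under this convergence then transfers the uniform bound to $\mathds 1_+^\sigma\mathcal L$.

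\textbf{Main obstacle.} The hardest part is Step 2: showing that the truncation costs only the stated finite regularity and produces no genuinely divergent term, that is, no new counterterm. The danger is sub-diagonals where a truncated point $z_i$ collapses onto other vertices near the hyperplane. I would first try to prove that the renormalised amplitudes, after the subtractions encoded by \eqref{Bc1}--\eqref{renoc2}, are absolutely integrable in the relative variables with polynomial weights. Then bounded cutoffs can be inserted with the dominated-convergence–type estimate described above.
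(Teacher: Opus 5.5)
Your overall framing (run the flow induction with $\chi_\varepsilon$ inserted and control everything by decorated tree amplitudes) matches the paper. The mechanism you give in Steps~1--2, however, is where the argument breaks.

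The tree amplitudes of Section~\ref{SecTreeAmp} do not represent $\mathcal L_{l,n}^{\Lambda,\Lambda_0}$ as a sum of positive kernels. They are only majorants for the smeared quantities $K^{\otimes\mathfrak s}\star\chi^{\otimes\sigma}\mathcal L_{l,n}^{\Lambda,\Lambda_0}$. The renormalised correlators are genuine distributions whose $\Lambda_0$-uniform control relies on cancellations. For example, in the mixed representation the one-loop $s$-channel part of $\mathcal L_{1,4}^{\Lambda,\Lambda_0}$ has the form $\delta(z_1-z_2)\delta(z_3-z_4)F(z_1-z_3)$ with $F(z)\sim|z|^{-1}$ near the diagonal. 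It is finite only after subtracting a multiple of $\prod_{i\ge2}\delta(z_1-z_i)$ with coefficient of order $\log(\Lambda_0/m)$, so its total variation is not bounded uniformly in $\Lambda_0$. Your fallback (absolute integrability after the subtractions, then a $\sup|\chi_\varepsilon|$ bound) is therefore false.

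Pairing such an object with $\chi_\varepsilon(z_i)K(\tau_i;z_i,y_i)$ forces you to localise the test function at the root; this is the telescopic expansion in the proof of Proposition~\ref{prop:relevant4}. That expansion produces $\chi_\varepsilon(z_i)-\chi_\varepsilon(z_1)$, which is bounded but not by $C|z_i-z_1|$ uniformly in $\varepsilon$. A $\sup$ bound cannot compensate the non-integrable singularity, so ``bound $|\chi_\varepsilon|$ pointwise instead of differentiating it'' is exactly the step that fails.

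The missing idea is the paper's. The seminorms carry $\sup_{\chi\in\mathcal K}$ together with separate $\Delta\chi$-seminorms. One writes $\chi(z_i)-\chi(z_1)=(z_i-z_1)\Delta^{(1)}_{z_1,z_i}\chi$ and absorbs the distance into a heat kernel at cost $\Lambda^{-1}$; this is what the $\hat{\mathcal A}$ amplitudes encode. One then uses $\chi'\ge0$, i.e.\ $|\chi'|=\chi'$, to integrate by parts and move the derivative onto external heat kernels at cost $\tau_i^{-1/2}$ (Lemmas~\ref{Lemmachi'} and~\ref{LemmeDeriv}, Corollary~\ref{CorTransport}). Monotonicity is what makes differentiating $\chi_\varepsilon$ harmless, not a way to avoid it.

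Consequently your regularity count is not a derivation. In Proposition~\ref{prop:BesovUniform} the exponent has three explicit sources:
\begin{itemize}
\item the weight $Q^\sigma_{|\sigma|}$, of degree $|\sigma|$ in each of the $n-1$ variables $\tau_i^{-1/2}/\Lambda_m$, which records the $\tau_i^{-1/2}$ costs above accumulated through the induction;
\item the factor $\tau^{-(n-1)/2}$ from $\sup\mathcal A_{l,n}^{\Lambda,\Lambda_0}$ (Corollary~\ref{UniLinfty});
\item one further $\tau^{-1/2}$ from reducing $\hat{\mathcal A}$ to $\mathcal A$ when the root is smeared.
\end{itemize}
Charging $\tau^{-(n-1)/2}$ per truncated leg on top of the $\mathcal C^{-2n+2}$ bound of Theorem~\ref{BesovNorm} gives $-2n+2-|\sigma|(n-1)$, not $-n-|\sigma|(n-1)$. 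The ``same count'' at tree level and for $n=2$ also predicts a loss that the statement does not have. The absence of loss there has specific reasons: the $l=0$ seminorm \eqref{e203Bis} has no polynomial weight and no $\hat{\mathcal A}$-term, and for $n=2$ the truncation in the paper's proof acts on the root variable $z_1$, at which \eqref{eq:n2-bound} is pointwise, so $\mathds 1_+(z_1)$ costs nothing.

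Step~3 inherits the same defect:
\begin{itemize}
\item for a general family in $\mathcal K$, $\chi_\varepsilon-\chi_{\varepsilon'}$ is not supported in a thin layer;
\item it is not monotone, so none of the uniform machinery applies to it;
\item the correlators are not locally integrable, and the internal scales range up to $\Lambda_0$ (with $\Lambda=0$ allowed), so ``integrating over the layer'' gives no $\Lambda_0$-uniform smallness.
\end{itemize}
The paper instead combines the $\varepsilon$-uniform bound with the compact embedding (Corollary~\ref{cor:precompact}). It then identifies every subsequential limit through Proposition~\ref{Existence}, which shows by induction on the flow equation, with dominated convergence against $\varepsilon$-independent tree majorants, that $K^{\otimes n}\star\chi_\varepsilon^{\otimes\sigma}\mathcal L_{l,n}^{\Lambda,\Lambda_0}$ converges pointwise to a limit independent of the admissible family.
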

\noindent As a consequence, one obtains the existence of renormalised truncated correlators. 
\begin{corollary}[Existence of the truncated renormalised correlators]
\label{CorTruncatedUV}

Fix $n\ge4$, $l\ge1$, and external momenta
$\underline p_n=(p_1,\ldots,p_n)$ satisfying momentum conservation.
Then there exists a distribution
\[
\mathcal L^{+,\sigma}_{l,n}(\underline p_n)
\in
\mathcal C_{\mathrm{loc}}^{-|\sigma|n+|\sigma|-n}
(\mathbb R^n),
\]
such that
\[
 \mathds 1^{\sigma}_+
\mathcal L_{l,n}^{\Lambda,\Lambda_0}
\bigl(\underline{p}_n
\bigr)
\longrightarrow
\mathcal L^{+,\sigma}_{l,n}(\underline p_n)
\]
as $\Lambda\to0$ and $\Lambda_0\to\infty$, in
$\mathcal C_{\mathrm{loc}}^{-|\sigma|n+|\sigma|-n-\kappa}
(\mathbb R^n)$ for all $\kappa>0$. For the tree-level sector, there exists a distribution
\[
\mathcal L^{+,\sigma}_{0,n}(\underline p_n)
\in
\mathcal C_{\mathrm{loc}}^{-n+1}
(\mathbb R^n),
\]
such that
\[
\mathds 1^{\otimes\sigma}_+(z_i)\,
\mathcal L_{0,n}^{\Lambda,\Lambda_0}
\bigl(
(z_1,p_1),\ldots,(z_n,p_n)
\bigr)
\longrightarrow
\mathcal L^{+,\sigma}_{0,n}\left(\underline p_n\right)
\]
as $\Lambda\to0$ and $\Lambda_0\to\infty$, in
$\mathcal C_{\mathrm{loc}}^{-n+1-\kappa}
\left(\mathbb R^n\right)$ for all $\kappa>0$.\\
Fix $p\in\mathbb{R}^3$. For the two-point correlators and $l\ge2$, there exists a distribution
\[
\mathcal L^+_{l,2}(\underline{p})
\in
\mathcal C^{-4}(\mathbb R^2),
\]
such that
\[
\mathds 1_+(z_1)\,
\mathcal L_{l,2}^{\Lambda,\Lambda_0}
\bigl(\vec{z}_{1,2};\underline{p}\bigr)
\longrightarrow
\mathcal L^+_{l,2}(\underline{p})(\vec{z}_{1,2})
\]
as $\Lambda\to0$ and $\Lambda_0\to\infty$, in
$\mathcal C^{-4}(\mathbb R^2)$.

Finally, there exists a distribution
\[
\mathcal L^+_{1,2}(\underline{p})
\in
\mathcal C^{-1}(\mathbb R^2),
\]
such that
\[
\mathds 1_+(z_2)\,
\mathcal L_{1,2}^{\Lambda,\Lambda_0}
\bigl(\vec{z}_{1,2};\underline{p}\bigr)
\longrightarrow
\mathcal L^+_{1,2}(\underline{p})(\vec{z}_{1,2})
\]
as $\Lambda\to0$ and $\Lambda_0\to\infty$, in
$\mathcal C^{-1}(\mathbb R^2)$.
\end{corollary}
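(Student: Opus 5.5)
The plan is to deduce Corollary~\ref{CorTruncatedUV} from the uniform bounds of Theorem~\ref{BesovNorm2}, combined with a Cauchy-type estimate in the cutoffs and a compactness/interpolation argument for Besov--Hölder spaces.

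\textbf{Step 1: Cauchy property of the untruncated flow.} I would first establish it for $\mathcal L_{l,n}^{\Lambda,\Lambda_0}$ itself. Differentiate in $\Lambda_0$ and run the flow hierarchy \eqref{FEL4} for $\partial_{\Lambda_0}\mathcal L_{l,n}^{\Lambda,\Lambda_0}$, inductively in $(l,n)$ and in the same power-counting spaces used for Theorem~\ref{BesovNorm}. The target is the standard improved bound with an extra factor
\[
\frac{\Lambda_m^{2}}{\Lambda_0^{3}}\,\mathcal P\!\left(\log\frac{\Lambda_0}{m}\right),
\]
which is integrable in $\Lambda_0$. The $\Lambda\to0$ limit is controlled analogously by integrating the flow equation between $\Lambda'$ and $\Lambda$ in the lower Besov norm. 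This is precisely the argument behind Corollary~\ref{CorUVLimit}, so I take it as available.

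\textbf{Step 2: transfer to the truncated objects.} Since $\chi_\varepsilon$ acts only on external variables, it commutes with $\partial_\Lambda$ and $\partial_{\Lambda_0}$. Hence $\prod_{i\in\sigma}\chi_\varepsilon(z_i)\,\partial_{\Lambda_0}\mathcal L_{l,n}^{\Lambda,\Lambda_0}$ belongs to the same class of generalized tree amplitudes with localisation insertions. The estimates proving Theorem~\ref{BesovNorm2} (uniform in $\varepsilon$) then apply verbatim to the difference $\mathcal L^{\Lambda,\Lambda_0}-\mathcal L^{\Lambda',\Lambda_0'}$. Letting $\varepsilon\to0$, which is allowed by the definition \eqref{DefHalfSpaceCorr} and its independence of $\chi_\varepsilon$, shows that $\mathds 1^\sigma_+\mathcal L_{l,n}^{\Lambda,\Lambda_0}$ is Cauchy, possibly only in a weaker norm or at a slightly worse power of $\Lambda_m$.

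\textbf{Step 3: $\kappa$-loss and identification of the limit.} Here is where the $\kappa$-loss enters. Suppose one only has uniform boundedness in $\mathcal C^{\alpha}_{\mathrm{loc}}$, with $\alpha=-|\sigma|n+|\sigma|-n$, together with convergence in some much weaker topology (for example $\mathcal S'$, or $\mathcal C^{\alpha-N}_{\mathrm{loc}}$). Interpolating the norms $\sup_\tau\tau^{-\beta/2}\|K^{\otimes n}\star\cdot\|_\infty$ gives convergence in $\mathcal C^{\alpha-\kappa}_{\mathrm{loc}}$ for all $\kappa>0$. Concretely, split $\tau\le\delta$, where the uniform bound yields a factor $\delta^{\kappa/2}$, and $\tau>\delta$, where the weak convergence is applied. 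By the Fatou property of $\mathcal C^\alpha$ (lower semicontinuity of the sup norm under weak-$*$ convergence), the limit $\mathcal L^{+,\sigma}_{l,n}$ lies in $\mathcal C^{\alpha}_{\mathrm{loc}}$. The same scheme handles the tree level with $\alpha=-n+1$.

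\textbf{Step 4: two-point functions.} For these, no $\kappa$-loss is claimed. I would rely on the Cauchy estimate from Step 2 holding directly in $\mathcal C^{-4}$ (respectively $\mathcal C^{-1}$). The reason is that $n=2$ involves only one truncated variable, and the propagator integral in $z$ against $\chi_\varepsilon$ is harmless. The bounds therefore close at the same exponent.

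\textbf{Main obstacle.} I expect the hardest part to be Step 2: checking that the localisation insertions do not destroy the $\Lambda_0^{-1}$ gain. This gain comes from the irrelevant boundary conditions \eqref{Bc1} at $\Lambda_0$, and those conditions are stated for untruncated moments $(z_1-z_i)^r$. One must verify that the moment/Taylor-subtraction argument survives multiplication by $\chi_\varepsilon(z_i)$, using $\chi_\varepsilon'\ge0$ and uniform bounds. If it fails at full strength, the fallback is to accept a power-counting loss absorbed by $\kappa$.
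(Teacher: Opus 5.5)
Your outline follows the paper's architecture. Proposition~\ref{PropConv} controls $\partial_{\Lambda_0}\mathcal L_{l,n}^{\Lambda,\Lambda_0}$ in the same power-counting seminorms, with one extra power of $\Lambda_m$ and a factor $(\Lambda_0+m)^{-2}$. The heat-kernel conversion behind Proposition~\ref{prop:BesovUniform} is then rerun with $\chi=\chi_\varepsilon$. Integrating over $[\Lambda_0',\Lambda_0]$ at $\Lambda=0$ gives a Cauchy bound in $\mathcal C^{\alpha}(\mathbb R^n)$, $\alpha=-|\sigma|n+|\sigma|-n$, uniformly in $\varepsilon$. A lower-semicontinuity (Fatou) estimate transfers this to $\mathds 1_+^\sigma\mathcal L_{l,n}^{0,\Lambda_0}$ on each compact $K$, and completeness of $\mathcal C^\alpha(K)$ concludes (Theorem~\ref{Convergence}).

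The genuine difference is your Step~3. The paper needs no interpolation, because its Cauchy bound already sits at exponent $\alpha$. It therefore obtains convergence in $\mathcal C^\alpha(K)$ itself. The $\kappa$ in the corollary only reflects how $\mathds 1_+^\sigma\mathcal L_{l,n}^{\Lambda,\Lambda_0}$ is defined at fixed cutoffs (Corollary~\ref{cor:precompact}, Definition~\ref{def:truncated-correlators}). Your interpolation (uniform $\mathcal C^\alpha$ bound from Theorem~\ref{BesovNorm2}, split $\tau\le\delta$ versus $\tau>\delta$, Fatou for the limit) is more robust, since it would survive a lossy estimate on the truncated $\Lambda_0$-derivatives. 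The price is that it proves slightly less.

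The obstacle you flag is exactly what the design of the spaces $\mathcal D^{\Lambda_m}_{\rho,l}$ removes. Their seminorms already take $\sup_{\chi\in\mathcal K}$ and come with companion $\Delta K$- and $\Delta\chi$-seminorms. Consequently, boundary and renormalisation conditions are only ever imposed on the untruncated correlators. In the relevant four-point sector the cutoffs are telescoped to the root, $\chi(z_i)=\chi(z_1)+(z_i-z_1)\Delta^{(1)}_{z_1,z_i}\chi$. The increments are handled through the $\hat{\mathcal A}$-amplitudes by integration by parts using $\chi'\ge0$ (Lemmas~\ref{Lemmachi'} and \ref{LemmeDeriv}, Proposition~\ref{prop:relevant4}). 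So the full $\Lambda_0$-gain survives the insertions and no fallback is needed.

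Two small corrections. First, the gain the paper obtains is $\Lambda_m/(\Lambda_0+m)^2$, not $\Lambda_m^2/\Lambda_0^3$; this is harmless, since only integrability in $\Lambda_0$ is used. Second, in Step~4 the absence of regularity loss is not because a single variable is truncated: for $n\ge4$ even $|\sigma|=1$ costs one unit. The real reason is that for $l\ge2$ the truncated variable is the root $z_1$, in which all two-point bounds are pointwise. Hence $\int_{\mathbb R_+}K(\tau;z_1,y_1)(\cdots)\,dz_1$ is dominated directly, with no $\chi_\varepsilon$-limit at all. For $l=1$ the correlator is a constant multiple of $\delta(z_1-z_2)$, so the truncation is explicit.
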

\begin{remark}
Comparing Theorems~\ref{BesovNorm} and \ref{BesovNorm2}, one observes that
multiplication by a hard half-space cutoff induces a loss of regularity whose magnitude depends explicitly on the cardinality of \(\sigma\). In particular, the truncated correlators belong to strictly less regular Besov--Hölder spaces than the corresponding renormalised correlators.

This loss of regularity is entirely due to the discontinuity of the cutoff and is independent of the ultraviolet behaviour of the theory. It does not signal the appearance of new ultraviolet divergences and does not require the introduction of additional counterterms. The standard renormalisation of the Euclidean \(\phi^4_4\) theory is therefore sufficient to construct all truncated correlators.

The loss should thus be viewed as a purely analytic consequence of multiplying a distribution of negative regularity by a discontinuous function. Moreover, the regularity exponents appearing in Theorem~\ref{BesovNorm2} are optimal within the Besov--Hölder scale considered in this work.

In this sense, Theorem~\ref{BesovNorm2} identifies a precise regularity threshold for the operation
\[
\mathcal L_{l,n}
\longmapsto
\prod_{i\in\sigma}
\mathds 1_{+}(z_i),
\mathcal L_{l,n},
\]
and shows that half-space truncation is compatible with renormalisation at the expense of a quantifiable and optimal loss of regularity.
\end{remark}

\section{Tree representation of the bounds}\label{SecTreeAmp}
The proof of the renormalization bounds relies on a family of tree
amplitudes that encode the iterative structure of the flow equation in
position space. The purpose of this section is to introduce these
combinatorial objects together with the associated amplitudes and to
establish the analytic estimates that will be used throughout the paper.

The appearance of trees is a direct consequence of the quadratic part of
the flow equation. Suppose that the correlators $
\mathcal L_{l_1,n_1+1}^{\Lambda,\Lambda_0}$ and
$\mathcal L_{l_2,n_2+1}^{\Lambda,\Lambda_0}$
are bounded by amplitudes represented by trees whose edges carry
Euclidean heat kernels. The quadratic term combines these amplitudes
through the kernel
\[
K\!\left(\frac{1}{\Lambda^2};z,z'\right),
\]
together with integrations over the variables \(z\) and \(z'\). At the
level of the tree representation, this operation joins the two trees by
creating a new internal edge connecting the vertices \(z\) and \(z'\).
Iterating the flow equation therefore generates a hierarchy of
increasingly complex trees.

Although the perturbative correlators contain contributions from Feynman
diagrams with arbitrarily many loops, the bounds established in this
paper are expressed entirely in terms of tree amplitudes. The
perturbative order is encoded in the combinatorial complexity of the
trees, and in particular in the growth of the number of internal
vertices.
\subsection{Combinatorics of trees}
The tree amplitudes introduced later are indexed by rooted trees.
We therefore begin by collecting the combinatorial operations that will
be used throughout the construction.
\begin{definition}
[Trees]
A tree is understood as a finite connected graph without cycles.
Fix $s\ge2$. We denote by $\mathcal{T}_1^s$ the set of trees with
$s-1$ external vertices and one distinguished vertex called the
\emph{root}. The root is neither internal nor external.

Let $\mathfrak{s}\subset\mathbb{N}$ with $|\mathfrak{s}|=s-1$.
The set of external vertices of $T\in\mathcal{T}_1^s$ is denoted by
$y_{\mathfrak{s}}\equiv\{y_i\}_{i\in\mathfrak{s}}$.

For a tree $T\in\mathcal{T}_1^s$ we denote by
$\mathcal{V}_i(T)$ and $\mathcal{V}_e(T)$
the sets of internal and external vertices respectively and we set
\[
v(T):=|\mathcal{V}_i(T)|
\]
for the number of internal vertices of $T$.
\end{definition}

\begin{remark}
Throughout the paper the vertices of a tree $T\in\mathcal{T}_1^s$
are identified with points of $\mathbb{R}$.
The root, internal vertices and external vertices are denoted
respectively by $z_1$, $\vec z_{2,r}$ and $y_{\mathfrak s}$.
Whenever it is convenient to make this dependence explicit we write
\[
T \equiv T(z_1;\vec z_{2,r};y_{\mathfrak{s}}).
\]
\end{remark}

\begin{definition}[Edges]
We denote by $\left( u,v\right)$ the edge connecting two vertices
$u$ and $v$ in a tree $T$. The sets of internal and external edges
are defined by
\[
\mathcal{I}(T)
:=
\{\langle i,j\rangle:\left( z_i,z_j\right)
\text{ is an internal edge of }T\},
\]
\[
\mathcal{E}(T)
:=
\{\langle i,j\rangle:\left( z_i,y_j\right)
\text{ is an external edge of }T\}.
\]
\end{definition}
\begin{definition}[Double/Triple rooted trees]\label{DiffTr}
Let $T\in\mathcal{T}_1^s$ and let $z',~z''\in\mathcal{V}_i(T)$  be two
internal vertices. The \emph{doubly rooted tree}
$T^{(2)}_{z'}$ is obtained from $T$ by declaring $z'$ to be an
additional root and the \emph{triple rooted tree}
$T^{(3)}_{z'z''}$ is obtained from $T$ by declaring $z'$ and $z''$ to be both
additional roots. In particular
\[
v(T^{(2)}_{z'}) = v(T)-1,~~~v(T^{(3)}_{z'z''}) = v(T)-2 .
\]
The sets of double/triple rooted trees are respectively denoted by $\mathcal{T}^s_2$ and $\mathcal{T}^s_3$.
\end{definition}
\begin{remark}
Multi-rooted trees arise naturally when one or several internal
integration variables are kept fixed. They will play a central role in
the definition of the modified amplitudes introduced below.
\end{remark}
\begin{definition}[Fusion of trees]\label{defFusi}
Let $s_1\ge1$ and $s_2\ge2$.
Let $T_1\in\mathcal{T}_1^{s_1+1}$ and
$T_2\in\mathcal{T}_1^{s_2+1}$ and assume that both trees possess
an external vertex labeled by $u$.

The \emph{fusion tree} $\mathcal F_u(T_1,T_2)$ is obtained by
identifying the two vertices $u$ and removing them, while connecting
the vertices that were adjacent to $u$ in $T_1$ and $T_2$ by a new
internal edge.
\end{definition}
\begin{remark}
The construction of the fusion tree implies that
\[
\mathcal F_u(T_1,T_2)\in\mathcal{T}_1^{s}.
\]
Moreover, if $v$ denotes the number of internal vertices of
$\mathcal F_u(T_1,T_2)$ then
\begin{equation}\label{eg30}
v = v(T_1) + v(T_2) + 1 .
\end{equation}
\end{remark}

\begin{definition}[Reduction of a tree]\label{RedT}
Let $s\ge2$ and $\mathfrak{s}\subset\mathbb N$ with $|\mathfrak{s}|=s-1$.
Let $T\in\mathcal{T}_1^s$ and let $y_j$ be an external vertex. The \emph{reduced tree} $\mathcal R_{y_j}(T)$ is obtained by the
following procedure:

\begin{itemize}
\item Remove the external edge $(z,y_j)$ attached to $y_j$.
\item If the adjacent vertex $z$ becomes of valency one, remove the
internal edge $(z',z)$.
\item Repeat the previous step until reaching a vertex of valency
at least two.
\end{itemize}
If $s=2$, the tree has a single external vertex and the reduction
procedure yields the empty tree. For $s\ge3$ we define the \emph{double reduction}
\[
\mathcal R_{y_j,y_k}(T)
:=
(\mathcal R_{y_j}\circ\mathcal R_{y_k})(T)
=
(\mathcal R_{y_k}\circ\mathcal R_{y_j})(T)
\]
for two distinct external vertices $y_j$ and $y_k$.
\end{definition}
\begin{remark}
The reduction operator $\mathcal R_{y_j}$ removes the external vertex
$y_j$ together with all edges and vertices that become redundant after
its removal. Equivalently, one follows the unique path connecting
$y_j$ to the root and removes successive vertices of valency one until
reaching the first vertex of valency at least two.
\end{remark}

\begin{remark}
The reduction operators decrease the number of external vertices.
More precisely, for every $s\ge2$ one has
\[
\mathcal R_{\cdot}\bigl(\mathcal{T}_1^{s}\bigr)
\subset
\mathcal{T}_1^{s-1},
\]
and for every $s\ge3$
\[
\mathcal R_{\cdot,\cdot}\bigl(\mathcal{T}_1^{s}\bigr)
\subset
\mathcal{T}_1^{s-2}.
\]
\end{remark}
\begin{lemma}\label{lem:red-root-commute}
Let $T$ be a tree and let $z$ be an internal vertex of $T$. Assume that
$z$ is not removed by the corresponding reduction procedure. Then one has
\begin{equation}\label{eq:red-root-commute}
\mathcal R_u\bigl(T^{(2)}_z\bigr)
=
\bigl(\mathcal R_u(T)\bigr)^{(2)}_z,
\qquad
\mathcal R_{u,u}\bigl(T^{(2)}_z\bigr)
=
\bigl(\mathcal R_{u,u}(T)\bigr)^{(2)}_z.
\end{equation}
\end{lemma}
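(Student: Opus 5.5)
The plan is to treat both operations, declaring a root and reducing, as operations on the underlying graph with labelled vertices, and to show they act on disjoint data. Declaring $z$ a root in $T^{(2)}_z$ leaves the vertex set, the edge set and the adjacency structure of $T$ unchanged. It only reassigns $z$ from the class of internal vertices to the class of roots. The reduction $\mathcal R_u$ of Definition~\ref{RedT} is determined by the graph alone: one removes the external edge at $u$, then iteratively removes edges to vertices that have become of valency one, until the first vertex of valency at least two is reached. Since valencies are unaffected by the labelling of $z$, both sides of \eqref{eq:red-root-commute} are obtained by deleting the same set of edges and vertices from the same graph.

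The key steps run in this order.
\begin{enumerate}
\item Let $\gamma$ be the unique path from $u$ towards the root $z_1$. Identify the removed set as the initial segment of $\gamma$ that ends just before the first vertex $w$ of valency at least two, after the external edge has been deleted. This set depends only on the graph.
\item Carry out the same procedure on $T^{(2)}_z$. One must check that the stopping vertex $w$ is the same. The only possible discrepancy is that the reduction might be meant to stop at roots, in which case $z$, now a root, could act as a new stopping point. By hypothesis $z$ is not removed when reducing $T$. So either $z\notin\gamma$ before $w$, or $z=w$, and in both cases the procedure on $T^{(2)}_z$ stops at the same place.
\item Conclude that $\mathcal R_u(T^{(2)}_z)$ and $\mathcal R_u(T)$ have the same graph. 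Moreover, $z$ survives in $\mathcal R_u(T)$ and is still a non-external vertex there. Declaring it a root therefore gives exactly $(\mathcal R_u(T))^{(2)}_z$.
\end{enumerate}

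For the double reduction, I would apply the single-reduction identity twice:
\[
\mathcal R_{u,u'}\bigl(T^{(2)}_z\bigr)=\mathcal R_{u'}\bigl(\mathcal R_u(T^{(2)}_z)\bigr)=\mathcal R_{u'}\bigl((\mathcal R_u T)^{(2)}_z\bigr)=\bigl(\mathcal R_{u'}\mathcal R_u T\bigr)^{(2)}_z .
\]
The second application needs $z$ not to be removed by $\mathcal R_{u'}$ acting on $\mathcal R_u(T)$. This is precisely the standing hypothesis, read for the full double-reduction procedure. The commutativity $\mathcal R_{y_j}\circ\mathcal R_{y_k}=\mathcal R_{y_k}\circ\mathcal R_{y_j}$ recorded in Definition~\ref{RedT} makes the order of the two steps irrelevant.

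The main obstacle is the possibly subtle point in step 2, namely whether a root acts as a barrier in the pruning procedure. I would first settle that valency in Definition~\ref{RedT} is computed in the underlying graph, so that the root label plays no role; on that reading step 2 is immediate. If instead roots do block pruning, the hypothesis that $z$ is not removed is exactly what rules out a change of stopping point, and the argument in step 2 covers that case as well.
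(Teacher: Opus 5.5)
Your proposal is correct and follows the same argument as the paper: double rooting only relabels $z$ and leaves the graph unchanged, while the reduction acts purely on the graph. The hypothesis that $z$ survives then makes the two operations commute. Your extra checks (that a root cannot act as a new stopping point for the pruning, and that the double reduction follows by iterating the single one) spell out in more detail what the paper asserts in one line.
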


\begin{proof}
The operation of double rooting changes only the designation of the
distinguished vertices and leaves the underlying graph unchanged. On the
other hand, the reduction maps act solely on the graph by removing the
prescribed external edge and iteratively removing vertices of valency
one. Since the vertex $z$ is assumed not to be removed during this
procedure, the two operations commute.
\end{proof}
\subsection{Tree amplitudes}

The trees introduced above serve as combinatorial objects encoding the
structure of the bounds appearing in the flow equations. Their edges
correspond to heat kernels while their internal vertices represent
integration variables. We now associate analytic amplitudes with these
trees.

The amplitudes introduced below will be used both for the analysis of
the full-space correlators and for correlators multiplied by smooth
approximations of half-space truncations. To treat these situations
within a common framework, we introduce a class of admissible cutoff
functions containing both the constant function and monotone smooth
approximations of the Heaviside function.

\begin{definition}\label{DefK}[Admissible cutoff functions]
Let \(\mathcal K\) denote the collection of functions
\[
\chi\in C^\infty(\mathbb R)\cap L^\infty(\mathbb R)
\]
such that
\[
\chi'(z)\ge0\quad\mathrm{and}\quad \chi(z)\ge 0,
\qquad \forall z\in\mathbb R.
\]
\end{definition}
\begin{definition}[Feynman rules]\label{FR}
Let $T\in\mathcal{T}_1^{s}$ and set
\[
\delta_{l,n}:=2-\frac{1}{2^{l}n}.
\]
The amplitudes are defined through the following Feynman rules.

Each internal edge $\left( z_i,z_j\right)$ carries the factor
\[
K\!\left(\frac{\delta_{l,n}}{\Lambda_{ij}^{2}};z_i,z_j\right).
\]
Let $y_j$ be an external vertex attached to $z_i$.
If $j\in\sigma$, the edge $\left( z_i,y_j\right)$ carries the factor
\[
\chi^{k_j}(z_i)\,
K\!\left(\delta_{l,n}\tau_j;z_i,y_j\right),
\]
where $k=(k_j)_{j\in\sigma}\in\mathbb N^{|\sigma|}$.
If $j\notin\sigma$, the corresponding factor is simply
\[
K\!\left(\delta_{l,n}\tau_j;z_i,y_j\right).
\]
The tree amplitude is obtained by integrating over all internal
vertices of $T$.
\end{definition}
\begin{remark}
    The parameter $
\delta_{l,n}$
is introduced in order to keep a small reserve of heat-kernel decay.
Throughout the proof, factors of the form $
|z-z'|^r$
arise from Taylor expansions and derivative estimates.
These factors are absorbed into the heat kernels at each order of perturbation theory at the expense of a
slight reduction of the diffusion parameter.
The choice \(\delta_{l,n}<2\) guarantees that
this procedure can be iterated finitely many times while preserving the
required heat-kernel bounds.
\end{remark}
\begin{definition}[Tree amplitudes]\label{TrA}
Let $T\in\mathcal{T}_1^{s}$ and $\sigma\subseteq\mathfrak{s}$.
For $\vec k=(k_j)_{j\in\sigma}\in\mathbb N^{|\sigma|}$ we define
\begin{multline}\label{AmT}
\mathcal{A}^{\Lambda,\Lambda_0}_{l,n}(T)
\bigl(z_1;(\tau_i,y_i)_{i\in\mathfrak{s}};
\chi^{\sigma}_{\vec k}\bigr)
\\
:=
\sup_{\Lambda\le\Lambda_{ij}\le\Lambda_0}
\int_{\mathbb R^{r-1}} d\vec z_{2,r}
\prod_{\langle i,j\rangle\in\mathcal I(T)}
K\!\left(\frac{\delta_{l,n}}{\Lambda_{ij}^{2}};z_i,z_j\right)
\\
\times
\prod_{\substack{\langle i,j\rangle\in\mathcal E(T)\\ j\in\sigma}}
\chi^{k_j}(z_i)\,
K(\delta_{l,n}\tau_j;z_i,y_j)
\prod_{\substack{\langle i,j\rangle\in\mathcal E(T)\\ j\notin\sigma}}
K(\delta_{l,n}\tau_j;z_i,y_j).
\end{multline}
To account for all possible distributions of the cutoff factors we set
\[
\Sigma(\sigma)
:=
\Bigl\{\vec k\in\mathbb N^{|\sigma|}:
\sum_{i\in\sigma} k_i = |\sigma|\Bigr\},
\]
and define
\begin{equation}\label{Sumkchi}
\mathcal{A}^{\Lambda,\Lambda_0}_{l,n}(T)
\bigl(z_1;(\tau_i,y_i)_{i\in\mathfrak{s}};
\chi^{\sigma}\bigr)
:=
\sum_{\vec k\in\Sigma(\sigma)}
\mathcal{A}^{\Lambda,\Lambda_0}_{l,n}(T)
\bigl(
z_1;(\tau_i,y_i)_{i\in\mathfrak{s}};
\chi^{\sigma}_{\vec k}
\bigr).
\end{equation}
If $\chi=\mathds{1}$ we simply write
\[
\mathcal{A}^{\Lambda,\Lambda_0}_{l,n}(T)
\bigl(z_1;(\tau_i,y_i)_{i\in\mathfrak{s}}\bigr).
\]
Note that \eqref{AmT} implies that 
\begin{equation}\label{eq:borneamplitudechi}
   \mathcal{A}^{\Lambda,\Lambda_0}_{l,n}(T)
\bigl(z_1;(\tau_i,y_i)_{i\in\mathfrak{s}};\chi^{\sigma}\bigr)\leq~\|\chi\|^{|\sigma|}_{\infty}~\mathcal{A}^{\Lambda,\Lambda_0}_{l,n}(T)
\bigl(z_1;(\tau_i,y_i)_{i\in\mathfrak{s}}\bigr)~. 
\end{equation}
If $T^{(2)}$ is a doubly rooted tree, the amplitude
$\mathcal{A}^{\Lambda,\Lambda_0}_{l,n}(T^{(2)})$
is defined in the same way, keeping both roots fixed.
\end{definition}
\begin{remark}
The amplitude
\(
\mathcal A_{l,n}^{\Lambda,\Lambda_0}(T)
\)
is obtained by assigning a heat kernel to each edge of \(T\) and then
integrating over all internal vertices except the root.
Multi-rooted amplitudes correspond to keeping one or several of these
integration variables fixed.
\end{remark}
The decomposition of the correlators into relevant and irrelevant
contributions is based on Taylor expansions around the root variable.
When the Taylor operator acts on the cutoff factors, the resulting
remainder terms are expressed through integral representations involving
derivatives of \(\chi\).
These remainders naturally lead to amplitudes in which one or two
internal vertices are distinguished. The corresponding contributions are
encoded by the modified amplitudes introduced below.
For $(z,z')\in\mathbb{R}^2$
we set
\[
\Delta^{(1)}_{z,z'}\chi
:=
\int_0^1
\chi'\!\bigl(tz+(1-t)z'\bigr)\,dt .
\]
\begin{definition}[$\wedge$--amplitudes]\label{def:hat-amplitude}
Let $T\in\mathcal{T}_1^{s}$ and let $\sigma\subset\mathfrak{s}$.
For every internal vertex $z\in\mathcal V_i(T)$, define
\[
\sigma_T(z)
:=
\{\, j\in\sigma : \left(z,y_j\right)~\mathrm{external~edge~in~}T\,\}.
\]
We also set
\[
V_{\mathrm{ext}}^{\sigma}(T)
:=
\{\, z\in\mathcal V_i(T): \sigma_T(z)\neq\varnothing\,\}.
\]
For $\vec k=(k_i)_{i\in\sigma}\in\mathbb N^{|\sigma|}$ we define
\begin{multline}\label{equation72}
\hat{\mathcal A}^{\Lambda,\Lambda_0}_{l,n}(T)
\bigl(
z_1;(\tau_i,y_i)_{i\in\mathfrak{s}};
\chi^\sigma_{\vec k}
\bigr)
:=\sum_{z\in V_{\mathrm{ext}}^\sigma(T)}
\int_{\mathbb R} dz~
\Delta^{(1)}_{z_1,z}\chi~
\mathcal A^{\Lambda,\Lambda_0}_{l,n}
\bigl(
T^{(2)}_{z}
\bigr)
\bigl(
z_1,z;(\tau_i,y_i)_{i\in\mathfrak{s}};
\chi^{\sigma}_{\vec k^{(z)}}
\bigr)\\
+\sum_{z,z'\in V_{\mathrm{ext}}^\sigma(T)}
\int_{\mathbb R} dz\int_{\mathbb{R}}dz'~
\Delta^{(1)}_{z,z'}\chi
\Bigl(
\mathcal A^{\Lambda,\Lambda_0}_{l,n}
\bigl(
T^{(3)}_{zz'}
\bigr)
\bigl(
z_1,z,z';(\tau_i,y_i)_{i\in\mathfrak{s}};
\chi^{\sigma}_{\vec k^{(z)}}
\bigr)
\\
+
\mathcal A^{\Lambda,\Lambda_0}_{l,n}
\bigl(
T^{(3)}_{zz'}
\bigr)
\bigl(
z_1,z,z';(\tau_i,y_i)_{i\in\mathfrak{s}};
\chi^{\sigma}_{\vec k^{(z')}}
\bigr)
\Bigr),
\end{multline}
where
\[
\vec k^{(z)}=(k_i^{(z)})_{i\in\sigma},
\qquad
k_i^{(z)} :=
\begin{cases}
\max(0,k_i-1), & i\in\sigma_T(z),\\
k_i, & i\notin\sigma_T(z).
\end{cases}
\]

The modified decoration \(k^{(z)}\) corresponds to removing one power
of \(\chi\) from every external edge attached to the distinguished
vertex \(z\).
The indices \(z\) and \(z'\) refer to internal vertices of \(T\)
belonging to \(V_{\mathrm{ext}}^\sigma(T)\).
By a slight abuse of notation the same symbols are used for the
corresponding integration variables.\\
The definition \eqref{equation72} relies on the assumption $|\sigma|\ge2$.
If $|\sigma|=1$, we instead define
\begin{equation}\label{equation72-one}
\hat{\mathcal A}^{\Lambda,\Lambda_0}_{l,n}(T)
\bigl(
z_1;(\tau_i,y_i)_{i\in\mathfrak{s}};
\chi
\bigr)
:=
\int_{\mathbb R} dz~
\Delta^{(1)}_{z_1,z}\chi
\,
\mathcal A^{\Lambda,\Lambda_0}_{l,n}
\bigl(
T^{(2)}_{z}
\bigr)
\bigl(
z_1,z;(\tau_i,y_i)_{i\in\mathfrak{s}}
\bigr).
\end{equation}
Summing over all decorations, we set
\begin{equation}
\hat{\mathcal A}^{\Lambda,\Lambda_0}_{l,n}(T)
\bigl(
z_1;(\tau_i,y_i)_{i\in\mathfrak{s}};
\chi^\sigma
\bigr)
:=
\sum_{\vec k\in\Sigma(\sigma)}
\hat{\mathcal A}^{\Lambda,\Lambda_0}_{l,n}(T)
\bigl(
z_1;(\tau_i,y_i)_{i\in\mathfrak{s}};
\chi^\sigma_{\vec k}
\bigr).
\end{equation}
\end{definition}
\begin{remark}
The monotonicity assumption \(\chi'\ge0\) enters at this stage.
Indeed, the localisation difference \(\Delta\chi\) is represented
through derivatives of the cutoff function. In order to estimate the
resulting terms by positive tree amplitudes, the derivative is
transferred onto the heat kernel through integration by parts in case the root is also integrated out. Since the argument is performed at the
level of absolute values, the identity
\(
|\chi'|=\chi'
\)
is crucial.
\end{remark}
\begin{lemma}[Root extraction]\label{lem:multi-root}
Let $T\in\mathcal{T}_1^{s}$ and let $z_k,z_{k'}\in\mathcal V_i(T)$ be
internal vertices of $T$. Then one has
\begin{align}
\mathcal A^{\Lambda,\Lambda_0}_{l,n}(T)
\bigl(
z_1;(\tau_i,y_i)_{i\in\mathfrak{s}};\chi^\sigma
\bigr)
&=
\int_{\mathbb R} dz_k\;
\mathcal A^{\Lambda,\Lambda_0}_{l,n}
\bigl(
T^{(2)}_{z_k}
\bigr)
\bigl(
z_1,z_k;(\tau_i,y_i)_{i\in\mathfrak{s}};\chi^\sigma
\bigr),
\label{ADRT}
\\
\mathcal A^{\Lambda,\Lambda_0}_{l,n}(T)
\bigl(
z_1;(\tau_i,y_i)_{i\in\mathfrak{s}};\chi^\sigma
\bigr)
&=
\int_{\mathbb R} dz_k\int_{\mathbb R} dz_{k'}\;
\mathcal A^{\Lambda,\Lambda_0}_{l,n}
\bigl(
T^{(3)}_{z_k z_{k'}}
\bigr)
\bigl(
z_1,z_k,z_{k'};(\tau_i,y_i)_{i\in\mathfrak{s}};\chi^\sigma
\bigr).
\label{ATRT}
\end{align}
\end{lemma}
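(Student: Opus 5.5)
The identities follow from Fubini--Tonelli applied to the defining integral \eqref{AmT}, together with the observation that double or triple rooting changes only which variables are held fixed and not the integrand. I will argue first for a single decoration $\vec k$ and then sum over $\Sigma(\sigma)$. Since the sum \eqref{Sumkchi} is finite and the integral is linear, summing the per-decoration identities gives the stated ones.

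Fix $\vec k\in\Sigma(\sigma)$ and fix the scales $\Lambda_{ij}\in[\Lambda,\Lambda_0]$. The integrand in \eqref{AmT} is a product of heat kernels and of the factors $\chi^{k_j}(z_i)$. All of these are nonnegative measurable functions: $\chi\ge0$ by Definition~\ref{DefK}, and heat kernels are positive. By Definition~\ref{DiffTr}, the tree $T^{(2)}_{z_k}$ has the same vertex set, edge sets $\mathcal I$ and $\mathcal E$, and edge factors as $T$. The only difference is that $z_k$ is now a root, so the integration runs over $\vec z_{2,r}$ with $z_k$ omitted. Tonelli's theorem therefore splits the $(r-1)$-fold integral into an outer integral over $z_k$ of an inner integral over the remaining internal vertices. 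The inner integral is exactly the fixed-scale integrand defining $\mathcal A(T^{(2)}_{z_k})(z_1,z_k;\dots)$. The triple-rooted identity \eqref{ATRT} follows in the same way, by peeling off both $z_k$ and $z_{k'}$.

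The main obstacle is the supremum over $\Lambda_{ij}$, which sits outside the integral in \eqref{AmT}. Interchanging it with $\int dz_k$ gives only the inequality $\mathcal A(T)\le\int dz_k\,\mathcal A(T^{(2)}_{z_k})$, because the doubly rooted amplitude takes its supremum pointwise in $z_k$. Equality requires a convention on the rooted amplitude: the phrase ``defined in the same way'' in Definition~\ref{TrA} should be read as using the same internal scales, that is, $\mathcal A(T^{(2)})$ is understood before the supremum, and the supremum is taken jointly at the end. I would make that convention explicit and prove the equalities at fixed scales, where they follow directly from Fubini.

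If instead one insists on taking the supremum separately for the rooted amplitude, I would prove only the inequality ``$\le$''. That is the direction needed to bound tree amplitudes by rooted ones in the subsequent Taylor-remainder estimates. A fully honest treatment would state both versions.
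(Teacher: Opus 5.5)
Your fixed-scale Tonelli argument is exactly the paper's proof, which only says "isolate the integration with respect to $z_k$" and never mentions the supremum in \eqref{AmT}. Your objection to it is correct, and the failure is worse than a strict inequality. Take $T$ with root $z_1$, one internal vertex $z_2$ joined to $z_1$, one external leg $(z_2,y)$, and $\sigma=\varnothing$; write $\delta=\delta_{l,n}$. The semigroup property gives
\[
\mathcal A^{\Lambda,\Lambda_0}_{l,n}(T)(z_1;\tau,y)=\sup_{\Lambda\le\Lambda_{12}\le\Lambda_0}K\bigl(\delta\Lambda_{12}^{-2}+\delta\tau;z_1,y\bigr)\lesssim\tau^{-1/2},
\]
uniformly in $\Lambda_0$. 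Choosing $\Lambda_{12}=\sqrt{\delta}/|z_1-z_2|$ instead gives
\[
\mathcal A^{\Lambda,\Lambda_0}_{l,n}\bigl(T^{(2)}_{z_2}\bigr)(z_1,z_2;\tau,y)\gtrsim |z_1-z_2|^{-1}K(\delta\tau;z_2,y)
\qquad\text{whenever } \sqrt{\delta}/\Lambda_0\le|z_1-z_2|\le\sqrt{\delta}/\Lambda .
\]
Set $y=z_1$ and fix $\tau$ and $\Lambda>0$. Integrating in $z_2$ then gives a lower bound of order $\tau^{-1/2}\log\Lambda_0$. So, read literally with Definition~\ref{TrA}, \eqref{ADRT} and \eqref{ATRT} are false. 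The two sides are not even comparable uniformly in $\Lambda_0$, and only $\mathcal A(T)\le\int dz_k\,\mathcal A(T^{(2)}_{z_k})$ survives.

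Where your proposal goes wrong is the claim that this surviving inequality is the direction needed later. The paper invokes \eqref{ADRT} in the opposite direction: to pass from \eqref{eq:FusionTight0} to \eqref{eq:FusionTight1} in Lemma~\ref{lem:FusionBound_tight}, in the last step of Lemma~\ref{Rlemma}, and in Corollary~\ref{CorTransport}. Each time it bounds $\int dz_k\,\mathcal A(T^{(2)}_{z_k})$ by $\mathcal A(T)$, which is precisely what the example above refutes. It fails by a factor growing with $\Lambda_0$, so the defect touches the $\Lambda_0$-uniformity those estimates are meant to deliver.

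Your fixed-scale convention does make the identities true. But it only helps if the joint supremum stays outside every integration in those later arguments. There the rooted amplitudes come from pointwise bounds at fixed $z_2$ (or fixed $u$) that are integrated afterwards, so the supremum does not stay outside.

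The correct conclusion of your analysis is therefore this. The lemma holds as an identity at fixed internal scales. With separately supremized rooted amplitudes only $\mathcal A(T)\le\int dz_k\,\mathcal A(T^{(2)}_{z_k})$ holds. Neither version supplies the inequality the paper actually uses.
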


\begin{proof}
Both identities follow directly from the definition of the tree
amplitudes \eqref{AmT}. Indeed, in the definition of
\(
\mathcal A_{l,n}^{\Lambda,\Lambda_0}(T)
\),
one may isolate the integration with respect to $z_k$, or with respect
to both $z_k$ and $z_{k'}$. The remaining integrand is then precisely
the amplitude of the corresponding doubly rooted tree
\(
T^{(2)}_{z_k}
\)
or triply rooted tree
\(
T^{(3)}_{z_k z_{k'}}
\),
respectively.
\end{proof}

We now introduce the global amplitudes obtained by summing the
contributions of all admissible trees.

\begin{definition}[Tree global amplitudes]
Fix $\mathfrak{s}\subseteq\{2,\dots,n\}$. 
The $(n,l)$ tree global amplitudes are defined by
\begin{align}\label{AsumT}
\mathcal A^{\Lambda,\Lambda_0}_{l,n}
\bigl(
z_1;(\tau_i,y_i)_{i\in\mathfrak s};\chi^\sigma
\bigr)
&:=
\sum_{\substack{T\in\mathcal T^{\mathfrak s}_1\\
0\le v(T)\le l+\frac n2-2}}
\mathcal A^{\Lambda,\Lambda_0}_{l,n}(T)
\bigl(
z_1;(\tau_i,y_i)_{i\in\mathfrak s};\chi^\sigma
\bigr),
\\
\hat{\mathcal A}^{\Lambda,\Lambda_0}_{l,n}
\bigl(
z_1;(\tau_i,y_i)_{i\in\mathfrak s};\chi^\sigma
\bigr)
&:=
\sum_{\substack{T\in\mathcal T^{\mathfrak s}_1\\
0\le v(T)\le l+\frac n2-2}}
\hat{\mathcal A}^{\Lambda,\Lambda_0}_{l,n}(T)
\bigl(
z_1;(\tau_i,y_i)_{i\in\mathfrak s};\chi^\sigma
\bigr).
\end{align}
\end{definition}
\subsection{Calculus of tree amplitudes}
We begin the calculus of tree amplitudes with a simple monotonicity
property with respect to the parameters $(l,n)$.
\subsubsection{Monotonicity of tree amplitudes}
\begin{lemma}\label{lem:monotone_l}
For every $T\in\mathcal T_1^s$, every $l'\le l$, and every $n'\le n$, one has
\begin{equation}\label{AlAl0_nohat}
\mathcal A^{\Lambda,\Lambda_0}_{l',n'}(T)
\;\lesssim\;
\mathcal A^{\Lambda,\Lambda_0}_{l,n}(T),
\qquad
\hat{\mathcal A}^{\Lambda,\Lambda_0}_{l',n'}(T)
\;\lesssim\;
\hat{\mathcal A}^{\Lambda,\Lambda_0}_{l,n}(T).
\end{equation}
\end{lemma}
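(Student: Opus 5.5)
The dependence on $(l,n)$ in the amplitudes enters only through the diffusion parameter $\delta_{l,n}=2-\frac{1}{2^l n}$, which appears in every edge factor. Since $\delta_{l,n}$ is increasing in both $l$ and $n$, for $l'\le l$ and $n'\le n$ we have $\delta_{l',n'}\le\delta_{l,n}$. Moreover, for all $l\ge 0$ and $n\ge1$,
\[
1\le\delta_{l',n'}\le\delta_{l,n}<2 ,
\]
so the ratio $\delta_{l,n}/\delta_{l',n'}$ lies in $[1,2)$.

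The key pointwise estimate compares heat kernels at comparable times. With the normalisation $K(\tau;x,y)=\tau^{-1/2}e^{-(x-y)^2/(2\tau)}$, for $1\le a\le b< 2$ we have
\[
K(a\tau;x,y)=\sqrt{\tfrac{b}{a}}\,\frac{1}{\sqrt{b\tau}}\,e^{-\frac{(x-y)^2}{2a\tau}}
\le \sqrt{2}\,K(b\tau;x,y),
\]
because $e^{-(x-y)^2/(2a\tau)}\le e^{-(x-y)^2/(2b\tau)}$. If one prefers the normalised kernel $(2\pi\tau)^{-1/2}e^{-\cdots}$ of the propagator, the same inequality holds with the same constant. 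We apply this with $a=\delta_{l',n'}$, $b=\delta_{l,n}$, and $\tau=\Lambda_{ij}^{-2}$ for internal edges or $\tau=\tau_j$ for external edges.

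For the plain amplitudes, fix $T$ and $\vec k$. All factors in \eqref{AmT} are nonnegative, and the $\chi^{k_j}$ factors are unchanged since $\chi\ge0$. Each of the $|\mathcal I(T)|+|\mathcal E(T)|$ edges is bounded as above, which gives $\sqrt2^{\,\#\text{edges}}$ times the $(l,n)$ integrand, uniformly in $\Lambda_{ij}$. We then integrate over $\vec z_{2,r}$ and take the supremum over the $\Lambda_{ij}$ on both sides; monotonicity of the integral and of the supremum preserves the inequality. Summing over $\vec k\in\Sigma(\sigma)$ yields the first bound, with constant $2^{(v(T)+s)/2}$. This constant depends only on $T$, and $v(T)$ is in turn bounded in terms of $l,n$ in the global amplitudes, so it is the harmless constant hidden in $\lesssim$.

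For the $\wedge$-amplitudes we use the same reasoning on each term of \eqref{equation72} and \eqref{equation72-one}. These are integrals against $\Delta^{(1)}\chi\ge0$, which is nonnegative precisely because $\chi'\ge0$ by Definition~\ref{DefK}, of multi-rooted amplitudes $\mathcal A(T^{(2)}_z)$ and $\mathcal A(T^{(3)}_{zz'})$. These amplitudes are defined exactly as in \eqref{AmT} with heat kernels on edges, so the edgewise comparison applies to them verbatim. The nonnegativity of the weights lets us integrate the pointwise inequality.

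There is no real obstacle. The only points requiring care are, first, keeping the constant independent of $\Lambda,\Lambda_0,\tau_i$ and the positions, which is guaranteed by $\delta\in[1,2)$, and second, verifying positivity of every weight so that pointwise bounds integrate. The latter is where the monotonicity of $\chi$ is used.
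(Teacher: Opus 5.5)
Your proposal is correct and is essentially the paper's argument. You use $\delta_{l',n'}\le\delta_{l,n}$, compare the heat kernels edge by edge, integrate the pointwise bound over the internal vertices and take the supremum over the internal scales, leaving the $\Delta^{(1)}\chi$ weight in the hatted case unchanged. Two points the paper leaves implicit are useful refinements in your version: the explicit constant $\sqrt{2}$ per edge, coming from $\delta_{l,n}\in[1,2)$, and the observation that $\Delta^{(1)}\chi\ge 0$, which is what allows the pointwise inequality to be integrated against that weight.
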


\begin{proof}
Let $T\in\mathcal T_1^s$. Since $l'\le l$ and $n'\le n$, one has
\[
\delta_{l',n'}
=
2-\frac{1}{2^{l'}n'}
\le
2-\frac{1}{2^l n}
=
\delta_{l,n}.
\]
By definition, both $\mathcal A^{\Lambda,\Lambda_0}_{l,n}(T)$ and
$\hat{\mathcal A}^{\Lambda,\Lambda_0}_{l,n}(T)$ are obtained from
non-negative integrands by integration over internal vertices and by
taking the supremum over the internal scales
$\Lambda\le \Lambda_{ij}\le \Lambda_0$. More precisely, each internal edge $\langle i,j\rangle\in\mathcal I(T)$
contributes a factor
\[
K\!\left(\frac{\delta_{l,n}}{\Lambda_{ij}^2};z_i,z_j\right),
\]
while each external edge $\langle i,j\rangle\in\mathcal E(T)$ contributes
\[
K(\delta_{l,n}\tau_j;z_i,y_j),
\]
possibly multiplied by a non-negative cutoff factor. In the hatted case,
one also has the additional factor $\Delta^{(1)}_{z_1}\chi(z_j)$,
which is independent of $(l,n)$. Using the monotonicity of the heat kernel, namely
\begin{equation}\label{eq:Kmono_nohat}
K(\delta_{l',n'}t;z,z')
\;\lesssim\;
K(\delta_{l,n}t;z,z')
\qquad
\text{for all }t>0,\ (z,z')\in\mathbb R^2,
\end{equation}
we obtain a pointwise bound of the integrand defining
$\mathcal A^{\Lambda,\Lambda_0}_{l',n'}(T)$ by the one defining
$\mathcal A^{\Lambda,\Lambda_0}_{l,n}(T)$.
\end{proof}
\subsubsection{Stability under external insertions}
The next result shows that the class of tree amplitudes is stable under
the insertion of additional external lines attached to the root. This
property will be used repeatedly when Taylor expanding the four-point
correlator around its first argument, which generates additional
heat-kernel factors attached to the expansion point.

\begin{lemma}\label{MagnLemm}
Fix $l\in\mathbb{N}_0$ and $n\in\mathbb{N}$. Let
$\tilde{\mathfrak{s}}_1,\mathfrak{s}_2,\sigma_1\subseteq \mathcal{I}_n$
be pairwise disjoint, and let $\sigma_2\subseteq \mathfrak{s}_2$.
Set
\[
\sigma:=\sigma_1\cup\sigma_2,
\qquad
\mathfrak{s}:=\tilde{\mathfrak{s}}_1\cup\sigma_1\cup\mathfrak{s}_2.
\]
Then one has
\begin{equation}\label{Magnetic}
\prod_{i\in\tilde{\mathfrak{s}}_1}
K(\tau_i;z_1,y_i)
\prod_{i\in\sigma_1}
\chi(z_1)\,K(\tau_i;z_1,y_i)\,
\mathcal{A}_{l,n}^{\Lambda,\Lambda_0}
\bigl(
z_1;(\tau_i,y_i)_{i\in\mathfrak{s}_2};\chi^{\sigma_2}
\bigr)
\lesssim
\mathcal{A}_{l,n}^{\Lambda,\Lambda_0}
\bigl(
z_1;(\tau_i,y_i)_{i\in\mathfrak{s}};\chi^{\sigma}
\bigr).
\end{equation}
The same estimate holds with $\mathcal{A}_{l,n}^{\Lambda,\Lambda_0}$
replaced by $\hat{\mathcal{A}}_{l,n}^{\Lambda,\Lambda_0}$.
\end{lemma}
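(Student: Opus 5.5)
The plan is to prove \eqref{Magnetic} tree by tree, by a purely combinatorial argument: grafting external edges onto the root. Write the left-hand side by the definition \eqref{AsumT} as a sum over trees $T\in\mathcal T_1^{\mathfrak s_2}$ with $v(T)\le l+\frac n2-2$ and over decorations $\vec k\in\Sigma(\sigma_2)$. I assume, as the statement implicitly requires, that external edges may be attached directly to the root $z_1$; for $T$ with $\sigma_2=\varnothing$ this is already the situation of the tree with no internal vertex.

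For each such $T$, let $T'\in\mathcal T_1^{\mathfrak s}$ be the tree obtained from $T$ by adding, for every $i\in\tilde{\mathfrak s}_1\cup\sigma_1$, an external vertex $y_i$ joined to the root $z_1$. The grafting has the following properties.
\begin{itemize}
\item It creates no internal vertex, so $v(T')=v(T)\le l+\frac n2-2$, and $T'$ is admissible in the sum defining the right-hand side.
\item The map $T\mapsto T'$ is injective, since $T=\mathcal R_{y_i,\dots}(T')$ is recovered by removing the grafted leaves, which have valency one at the root.
\end{itemize}

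Next, compare the edge factors. Since $1\le\delta_{l,n}\le 2$, one has for all $\tau>0$
\[
K(\tau;z,y)\le \sqrt{2}\,K(\delta_{l,n}\tau;z,y),
\]
because the Gaussian exponent only improves and the prefactor changes by at most $\sqrt{\delta_{l,n}}$. Hence each prefactor $K(\tau_i;z_1,y_i)$ is bounded by the corresponding external-edge factor of $T'$.

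The factor $\chi(z_1)$ for $i\in\sigma_1$ is matched as follows. Given $\vec k\in\Sigma(\sigma_2)$, set $k_i=1$ for $i\in\sigma_1$. The extended decoration satisfies $\sum_{i\in\sigma}k_i=|\sigma_2|+|\sigma_1|=|\sigma|$, so it lies in $\Sigma(\sigma)$. The new factors depend neither on the internal vertices nor on the scales $\Lambda_{ij}$. They can therefore be pulled inside the integral over $\vec z_{2,r}$ and inside the supremum over $\Lambda\le\Lambda_{ij}\le\Lambda_0$. This gives, for each $T$ and $\vec k$,
\[
\mathrm{LHS}(T,\vec k)\le 2^{|\tilde{\mathfrak s}_1\cup\sigma_1|/2}\,\mathcal A_{l,n}^{\Lambda,\Lambda_0}(T')(\dots;\chi^\sigma_{\vec k}).
\]
Injectivity of $T\mapsto T'$ and of the map on decorations, together with positivity of all terms, then yields \eqref{Magnetic} with a constant depending only on $n$.

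For the hatted version, I would run the same grafting inside each term of \eqref{equation72}. The grafted edges sit at the root, which is not internal. Consequently:
\begin{itemize}
\item $V^{\sigma}_{\mathrm{ext}}(T')=V^{\sigma_2}_{\mathrm{ext}}(T)$;
\item the modified decorations $\vec k^{(z)}$ leave the new entries $k_i=1$, $i\in\sigma_1$, untouched;
\item $(T')^{(2)}_z=(T^{(2)}_z)'$ and $(T')^{(3)}_{zz'}=(T^{(3)}_{zz'})'$, as in Lemma~\ref{lem:red-root-commute}.
\end{itemize}
So the multi-rooted amplitudes are controlled exactly as above, and the weights $\Delta^{(1)}\chi\ge0$ are unchanged.

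The step I expect to be the main obstacle is the mismatch of definitions when $|\sigma_2|=1$ but $|\sigma|\ge2$. There the left-hand side uses \eqref{equation72-one}, while the right-hand side uses \eqref{equation72}. I would show that \eqref{equation72-one} coincides with the single-root part of \eqref{equation72}, taking $z$ to be the vertex carrying the unique $\sigma_2$ edge, for which $k^{(z)}=0$. The right-hand side contains this term plus further non-negative terms, which suffices. The degenerate case $\sigma_2=\varnothing$ needs a convention for $\hat{\mathcal A}$, and I would fix it consistently with this reading.
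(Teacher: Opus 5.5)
Your proposal is correct and follows the paper's proof. Like the paper, you graft the prefactor lines onto the root of each tree, extend the decoration by $k_i=1$ on $\sigma_1$ so that it lies in $\Sigma(\sigma)$, use that $v$ is unchanged, and sum; the hatted case is then read off from \eqref{equation72}. Your additional points only make explicit details the paper leaves implicit: the bound $K(\tau;z,y)\le\sqrt2\,K(\delta_{l,n}\tau;z,y)$ (the paper writes this step as an equality), the injectivity of the grafting, and the $|\sigma_2|=1$ versus $|\sigma|\ge2$ mismatch between the two definitions of $\hat{\mathcal A}$.
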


\begin{proof}
By definition of $\mathcal{A}_{l,n}^{\Lambda,\Lambda_0}$, we have
\begin{equation}
\mathcal{A}_{l,n}^{\Lambda,\Lambda_0}
\bigl(
z_1;(\tau_i,y_i)_{i\in\mathfrak{s}_2};\chi^{\sigma_2}
\bigr)
=
\sum_{\substack{T\in\mathcal{T}^{s_2}_1\\0\le v(T)\le l+\frac{n}{2}-2}}
\sum_{\vec{k}\in\Sigma(\sigma_2)}
\mathcal{A}_{l,n}^{\Lambda,\Lambda_0}(T)
\bigl(
z_1;(\tau_i,y_i)_{i\in\mathfrak{s}_2};\chi^{\sigma_2}_{\vec{k}}
\bigr),
\end{equation}
where we used the notation of Definition~\eqref{Sumkchi}. Multiplying by
\[
\prod_{i\in\tilde{\mathfrak{s}}_1}
K(\tau_i;z_1,y_i)
\prod_{i\in\sigma_1}
\chi(z_1)\,K(\tau_i;z_1,y_i)
\]
amounts to attaching additional external lines to $T$ at the root vertex $z_1$.
The lines indexed by $\sigma_1$ carry a factor $\chi(z_1)$, while those indexed
by $\tilde{\mathfrak{s}}_1$ are undecorated.

Denote by $\tilde{T}$ the resulting tree. Then $\tilde{T}\in\mathcal{T}^s_1$,
with $s-1=|\mathfrak{s}|$, and its internal structure coincides with that of $T$,
so that $v(\tilde{T})=v(T)$. Moreover, the decoration is given by
\[
k_i^{(\sigma)}=
\begin{cases}
k_i, & i\in\sigma_2,\\
1, & i\in\sigma_1.
\end{cases}
\]
In particular, one has $\vec{k}^{(\sigma)}\in\Sigma(\sigma)$. Therefore,
\begin{multline}
\prod_{i\in\tilde{\mathfrak{s}}_1}
K(\tau_i;z_1,y_i)
\prod_{i\in\sigma_1}
\chi(z_1)\,K(\tau_i;z_1,y_i)\,
\mathcal{A}_{l,n}^{\Lambda,\Lambda_0}(T)
\bigl(
z_1;(\tau_i,y_i)_{i\in\mathfrak{s}_2};\chi^{\sigma_2}_{\vec{k}}
\bigr)
\\
=
\mathcal{A}_{l,n}^{\Lambda,\Lambda_0}(\tilde{T})
\bigl(
z_1;(\tau_i,y_i)_{i\in\mathfrak{s}};\chi^{\sigma}_{\vec{k}^{(\sigma)}}
\bigr).
\end{multline}

Summing over all $T$ and $\vec{k}$ yields \eqref{Magnetic}.
The bound for $\hat{\mathcal{A}}$ is obtained in the same way,
starting from its definition~\eqref{equation72}.
\end{proof}
\subsubsection{Distance estimate}
We now turn to the first  geometric estimate in the tree
calculus. It shows that powers of distances between distinguished
vertices can be absorbed into the tree amplitude at the expense of
powers of the infrared scale $\Lambda^{-1}$. This mechanism is one of
the basic tools used to control Taylor remainders.

\begin{lemma}\label{LemDecPo}
Let $r\in\{1,2, 3\}$, and let $T\in\mathcal{T}^s_2$ be a double-rooted tree
with roots $z_1$ and $z_2$. Then the following bound holds for all $l'\ge l$, and $n'>n$
\begin{equation}\label{R83}
\left|z_1-z_2\right|^r\,
\mathcal{A}_{l,n}^{\Lambda,\Lambda_0}(T)
\bigl(
z_1,z_2;(\tau_i,y_i)_{i\in\mathfrak{s}};\chi^{\sigma}
\bigr)
\lesssim
\Lambda^{-r}\,
\mathcal{A}_{l',n'}^{\Lambda,\Lambda_0}(T)
\bigl(
z_1,z_2;(\tau_i,y_i)_{i\in\mathfrak{s}};\chi^{\sigma}
\bigr)~.
\end{equation}
In particular \eqref{R83} holds for $\chi=\mathds{1}$.
\end{lemma}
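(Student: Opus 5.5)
The plan is to reduce the estimate to a single-edge inequality for the heat kernel and propagate it along the unique path in $T$ joining the two roots $z_1$ and $z_2$. Since $T\in\mathcal T_2^s$ is a tree, there is a unique path $z_1=w_0,w_1,\dots,w_q=z_2$ consisting of internal edges $\langle w_{j-1},w_j\rangle\in\mathcal I(T)$. Here $q\le v(T)+1$, and $v(T)$ is bounded in terms of $(l,n)$, so the path has at most $C(l,n)$ edges. By the triangle inequality and $(\sum_{j=1}^q a_j)^r\le q^{r-1}\sum_{j=1}^q a_j^r$,
\[
|z_1-z_2|^r\le q^{r-1}\sum_{j=1}^{q}|w_{j-1}-w_j|^r .
\]
Multiplying the integrand defining $\mathcal A^{\Lambda,\Lambda_0}_{l,n}(T)$ pointwise by this bound therefore reduces the claim to showing that, for each $j$, the single factor $|w_{j-1}-w_j|^r$ can be absorbed into the kernel $K(\delta_{l,n}/\Lambda_{ij}^2;w_{j-1},w_j)$ of the corresponding edge. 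The cutoff factors $\chi^{k_j}$ are non-negative and untouched by this manipulation, so the decorations (and the sum over $\vec k\in\Sigma(\sigma)$) pass through unchanged. This is why the case $\chi=\mathds 1$ is included automatically.

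The key one-edge estimate reads: for $0<\delta<\delta'$, $t>0$ and $x\in\mathbb R$,
\[
|x|^r\,\frac{1}{\sqrt{\delta t}}\,e^{-\frac{x^2}{2\delta t}}\le C_{r,\delta,\delta'}\,t^{r/2}\,\frac{1}{\sqrt{\delta' t}}\,e^{-\frac{x^2}{2\delta' t}} .
\]
It follows from $\sup_{u\ge0}u^r e^{-cu^2}<\infty$ with $c=\frac{1}{2\delta}-\frac{1}{2\delta'}>0$, applied to $u=x/\sqrt t$. We apply it with $t=\Lambda_{ij}^{-2}$, $\delta=\delta_{l,n}$ and $\delta'=\delta_{l',n'}$. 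The hypothesis $n'>n$ (together with $l'\ge l$) gives $2^{l'}n'>2^l n$, hence $\delta_{l',n'}>\delta_{l,n}$ strictly, and the constant depends only on $(l,n,l',n',r)$. The factor $t^{r/2}=\Lambda_{ij}^{-r}$ is at most $\Lambda^{-r}$ because $\Lambda_{ij}\ge\Lambda$. This is the only place the infrared scale enters, and it is uniform in the supremum over the internal scales.

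For all other edges, both internal and external, we only need to pass from $\delta_{l,n}$ to $\delta_{l',n'}$. This is the monotonicity \eqref{eq:Kmono_nohat} used in Lemma~\ref{lem:monotone_l}: the ratio of the normalisations is bounded by $\sqrt{\delta_{l',n'}/\delta_{l,n}}\le\sqrt2$ up to constants, and the Gaussian factor only grows. Combining these steps, the integrand for each path-edge term is bounded by $C\Lambda^{-r}$ times the integrand of $\mathcal A^{\Lambda,\Lambda_0}_{l',n'}(T)$. We then take the supremum over $\Lambda\le\Lambda_{ij}\le\Lambda_0$, integrate over the internal vertices, and sum the at most $q$ terms to get \eqref{R83}.

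The main obstacle is making sure the absorption costs only a fixed, strictly positive gain in the diffusion parameter, and that the path length is bounded. Both are combinatorial bookkeeping, but they require $\delta_{l',n'}>\delta_{l,n}$ strictly, which is exactly why the statement assumes $n'>n$. A degenerate case also needs care: if $z_1$ and $z_2$ are joined through external vertices only, this cannot happen, since external vertices are leaves, so the path always consists of internal edges.
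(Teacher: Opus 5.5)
Your proof is correct and follows the paper's argument: you telescope along the unique root-to-root path of internal edges, absorb the distance factor into the path kernels at the cost of a strict increase $\delta_{l,n}\to\delta_{l',n'}$ (which is where $n'>n$ is needed) and a factor $\Lambda_{ij}^{-r}\le\Lambda^{-r}$, and use monotonicity for all remaining kernels. The only cosmetic difference is that you use $(\sum_j a_j)^r\le q^{r-1}\sum_j a_j^r$ to put the whole power on a single edge, whereas the paper expands multinomially and spreads it over at most three edges. Your one-edge bound also correctly carries $t^{r/2}$, where the paper's corresponding bound is misprinted as $\tau^{-r/2}$.
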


\begin{proof}
Since $T$ is a tree, there exists a unique path connecting the two roots
$z_1$ and $z_2$. Let $\langle z_1,z_2\rangle$ denote the set of internal vertices in $T$
forming the unique path from $z_1$ to $z_2$, including the endpoints.
We write
\begin{equation}
\langle z_1,z_2\rangle
=
\{v_a: 0\le a\le q\},
\qquad v_0=z_1,\quad v_q=z_2.
\end{equation}
We telescope $z_1-z_2$ along this path:
\begin{equation}
z_1-z_2=\sum_{a=0}^{q-1}(v_{a+1}-v_a).
\end{equation}

Recall that $\mathcal{A}_{l,n}^{\Lambda,\Lambda_0}(T)$ is given by a product
of heat kernels evaluated at adjacent vertices, weighted by parameters
$\Lambda\le \Lambda_a\le \Lambda_0$ if both vertices are internal, and by
$\tau_j$ with $j\in\mathfrak{s}$ if one of the vertices is external.
In addition, there is a product of cutoff factors $\chi$ evaluated at
internal vertices attached to external vertices with indices in $\sigma$.
These factors play no role in the present argument and can therefore be
ignored.
The contribution of the internal vertices along the path from $z_1$ to $z_2$
is given by
\begin{equation}
\prod_{a=0}^{q-1}
K\!\left(\frac{\delta_{l,n}}{\Lambda_a^2};v_a,v_{a+1}\right).
\end{equation}
We use the bound
\begin{equation}
|z-z'|^r\,K(\tau;z,z')
\lesssim
\tau^{-r/2}\,K(c\tau;z,z'),
\qquad \forall\, c>1~.
\end{equation}
By the triangle inequality along the path, and since \(r\le 3\), one has
\[
|z_1-z_2|^r
\lesssim
\sum_{\substack{\alpha_0,\ldots,\alpha_{q-1}\in\mathbb N_0\\
\alpha_0+\cdots+\alpha_{q-1}=r}}
\prod_{a=0}^{q-1}
|v_{a+1}-v_a|^{\alpha_a}.
\]
Since \(r\in\{1,2,3\}\), each term contains at most three nontrivial
factors, and each such factor can be absorbed into the corresponding
heat kernel on the path.\\
Combining these bounds and using that $\delta_{l,n}\le \delta_{l',n'}$ for all
$l'\ge l$ and $n'>n$, we obtain
\begin{equation}
|z_1-z_2|^r\,
\prod_{a=0}^{q-1}
K\!\left(\frac{\delta_{l,n}}{\Lambda_a^2};v_a,v_{a+1}\right)
\lesssim
\Lambda^{-r}\,
\prod_{a=0}^{q-1}
K\!\left(\frac{\delta_{l',n'}}{\Lambda_a^2};v_a,v_{a+1}\right).
\end{equation}
The remaining heat kernel factors in the amplitude are bounded using
\[
K(\delta_{l,n}\tau;z,z')
\lesssim
K(\delta_{l',n'}\tau;z,z'),
\qquad \forall\, l'\ge l,~\forall n'>n~.
\]
This yields \eqref{R83}.
\end{proof}
\subsubsection{Estimates for derivative insertions}
The modified amplitudes $\hat{\mathcal A}$ arise from the Taylor
remainder terms generated when the expansion acts on cutoff factors.
The purpose of this subsection is to show that these amplitudes can be
reduced to ordinary tree amplitudes. More precisely, we prove that each
derivative insertion either produces a Gaussian derivative factor
\(\tau_i^{-1/2}\) or can be absorbed into the cutoff decoration of the
tree. This ultimately yields a bound of \(\hat{\mathcal A}\) in terms
of \(\mathcal A\).\\
\indent The following notation will be used to keep track of the cutoff
decorations generated by derivative insertions. For $a\in\sigma$, we write
\[
\vec k+\mathbf 1_a\in\mathbb N_0^\sigma
\]
for the multi--index obtained from $\vec k$ by increasing the $a$--th
component by one, that is,
\[
(\vec k+\mathbf 1_a)_i=
\begin{cases}
k_a+1,& i=a,\\
k_i,& i\neq a.
\end{cases}
\]
Similarly, whenever $r\in\sigma$ with $k_r\ge1$, we denote by
\[
\vec k-\mathbf 1_r\in\mathbb N_0^\sigma
\]
the multi--index obtained from $\vec k$ by decreasing the $r$--th
component by one.

The next lemma describes how the localisation carried by the remainder
factor \(\Delta^{(1)}_{z,z'}\chi\) can be transferred to the cutoff
decorations attached to the tree.
\begin{lemma}\label{Lemmachi'}
Let $\mathfrak s\subset\mathbb N$ be finite with $|\mathfrak s|\ge2$,
and let $\sigma\subseteq\mathfrak s$. Let $T\in\mathcal T_1^s$ with
$v(T)\ge1$. Assume that $a,b\in\sigma$ are such that the internal
vertices $z$ and $z'$ are adjacent respectively to the external
vertices $y_a$ and $y_b$. Let $\vec k=(k_i)_{i\in\sigma}\in\mathbb N_0^\sigma$
satisfy
\[
\sum_{i\in\sigma}k_i=|\sigma|-1 
\]
and set 
\[
\sigma_{\vec k}:=\{r\in\sigma:\;k_r\ge1\}.
\]
Then, for every $l'\ge l$ and $n'>n$, one has
\begin{multline}\label{equation41-new}
\int_{\mathbb R}dz\int_{\mathbb R}dz'~
\Delta^{(1)}_{z,z'}\chi~
\mathcal A_{l,n}^{\Lambda,\Lambda_0}
\bigl(T^{(2)}_{z'}\bigr)
\bigl(
z,z';(\tau_i,y_i)_{i\in\mathfrak s};
\chi^\sigma_{\vec k}
\bigr)
\\
\lesssim
\sum_{i\in\mathfrak s}\tau_i^{-1/2}
\int_{\mathbb R}dz~
\Bigl(
\mathcal A_{l',n'}^{\Lambda,\Lambda_0}(T)
\bigl(
z;(\tau_i,y_i)_{i\in\mathfrak s};
\chi^\sigma_{\vec k+\mathbf 1_a}
\bigr)
+
\mathcal A_{l',n'}^{\Lambda,\Lambda_0}(T)
\bigl(
z;(\tau_i,y_i)_{i\in\mathfrak s};
\chi^\sigma_{\vec k+\mathbf 1_b}
\bigr)
\Bigr)
\\
+
\sum_{r\in\sigma_{\vec k}}
\int_{\mathbb R}dz~
\Bigl(
\hat{\mathcal A}_{l,n}^{\Lambda,\Lambda_0}(T)
\bigl(
z;(\tau_i,y_i)_{i\in\mathfrak s};
\chi'_r\,\chi^\sigma_{\vec k-\mathbf 1_r+\mathbf 1_a}
\bigr)
+
\hat{\mathcal A}_{l,n}^{\Lambda,\Lambda_0}(T)
\bigl(
z;(\tau_i,y_i)_{i\in\mathfrak s};
\chi'_r\,\chi^\sigma_{\vec k-\mathbf 1_r+\mathbf 1_b}
\bigr)
\Bigr),
\end{multline}
where $\chi'_r$ denotes the insertion of $\chi'$ at the internal vertex
attached to $y_r$.
\end{lemma}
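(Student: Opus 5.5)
The plan is to turn the difference quotient $\Delta^{(1)}_{z,z'}\chi$ into a derivative that either falls on a heat kernel or stays on the cutoff. By definition,
\[
\Delta^{(1)}_{z,z'}\chi\,(z-z')=\chi(z)-\chi(z').
\]
I do not want to divide by $z-z'$. Instead I would use the integral representation directly: with $w_t:=tz+(1-t)z'$, one has $\chi'(w_t)=\frac{d}{dz}\chi(w_t)\,t^{-1}$. A cleaner route is to pass to new variables $u=w_t$ and $z'$ for fixed $t$, integrate over $u$, and integrate by parts in $u$ so that $\chi'(u)$ becomes $\chi(u)$ times derivatives of the kernels. The roles of $z$ and $z'$ are symmetric, and I expect this symmetry to produce the two terms carrying the extra decorations $\mathbf 1_a$ and $\mathbf 1_b$.

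The key steps, in order, are as follows.

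\emph{Step 1.} Write $\mathcal A_{l,n}^{\Lambda,\Lambda_0}(T^{(2)}_{z'})(z,z';\dots)$ as the product over the edges of $T$. This is a product of kernels along the path between $z$ and $z'$, the remaining subtrees hanging off that path, and the external edges $(z,y_a)$ and $(z',y_b)$.

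\emph{Step 2.} Since $\chi'\ge0$, everything is nonnegative, so I may bound $\Delta^{(1)}_{z,z'}\chi$ by its integral form. I then integrate by parts in whichever of $z$ or $z'$ carries the $\chi'$ after the change of variables.

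\emph{Step 3.} The derivative lands on one of three kinds of factor.
\begin{itemize}
\item[(i)] On the external kernel $K(\delta_{l,n}\tau_a;z,y_a)$ or $K(\delta_{l,n}\tau_b;z',y_b)$. Here I use $|\partial_z K(\tau;z,y)|\lesssim \tau^{-1/2}K(c\tau;z,y)$ with $c$ chosen so that $\delta_{l,n}c\le\delta_{l',n'}$. The boundary term $\chi$ is absorbed into the decoration, giving $\chi^\sigma_{\vec k+\mathbf 1_a}$ or $\chi^\sigma_{\vec k+\mathbf 1_b}$. This uses $\|\chi\|_\infty<\infty$ and the fact that the total degree becomes $|\sigma|$, so the multi-index lies in $\Sigma(\sigma)$.
\item[(ii)] On an internal kernel. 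The factor produced is $\Lambda_{ij}\lesssim$ (via the path/distance mechanism of Lemma~\ref{LemDecPo}). I would rather transport this derivative along the path until it reaches an external kernel, again yielding a $\tau_i^{-1/2}$. This is the source of the sum over $i\in\mathfrak s$.
\item[(iii)] On an existing cutoff factor $\chi^{k_r}$ at the vertex attached to $y_r$. This produces $k_r\chi'\chi^{k_r-1}$, which is exactly the hatted amplitude with insertion $\chi'_r$ and decoration $\vec k-\mathbf 1_r+\mathbf 1_{a}$ (or $+\mathbf 1_b$), with the parameters $(l,n)$ unchanged.
\end{itemize}

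\emph{Step 4.} After the by-parts step, reassemble the remaining integral over the variable that was $z'$. By Lemma~\ref{lem:multi-root} this gives $\int dz\,\mathcal A(T)(z;\dots)$. The kernel widenings are then absorbed into $\delta_{l',n'}$ using Lemma~\ref{lem:monotone_l}.

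I expect the main obstacle to be Step 3(ii): controlling derivatives that fall on internal edges. A derivative $\partial_z K(\delta/\Lambda_{ij}^2;\cdot)$ costs $\Lambda_{ij}\le\Lambda_0$, which is not uniform in the cutoff. My first attempt would be translation invariance of the heat kernels. Within each subtree, $\partial_z$ acting on the whole product of internal kernels can be rewritten, via $\partial_z K(z-w)=-\partial_w K(z-w)$ and repeated integration by parts over internal vertices, as a sum of derivatives on the external kernels of that subtree. Each such term costs only $\tau_i^{-1/2}$, which matches the form of the stated bound. I would verify this telescoping on the path from $z$ to $z'$ first, and then on the hanging subtrees.
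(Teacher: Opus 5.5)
\paragraph{Verdict.} There is a genuine gap. It lies in how you turn $\Delta^{(1)}_{z,z'}\chi$ into a derivative.

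\paragraph{Where your route fails.} You write $\chi'(w_t)=t^{-1}\partial_z\chi(w_t)$, or equivalently change variables $u=w_t$ at fixed $z'$, which has Jacobian $t^{-1}$. Integrating by parts in that single root then leaves a factor $t^{-1}$ in front of $\chi(w_t)\,\partial_z[\cdots]$, and $\int_0^1 t^{-1}\,dt=\infty$. Using $z'$ instead produces $(1-t)^{-1}$.

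The same one-variable choice is what creates your Step~3(ii) obstacle, and the telescoping you propose does not close as described. With $z'$ held fixed, $\partial_zK(z-w)=-\partial_wK(z-w)$ does let you push the derivative off the subtrees hanging at $z$. Along the path from $z$ to $z'$, however, it always ends on the edge incident to $z'$. Discharging it there needs an integration by parts in $z'$, which hits $\chi(w_t)$ again and regenerates a $\chi'(w_t)$ term. Already for the tree with a single internal edge $(z,z')$, this step returns the original integrand with coefficient $(1-t)/t$. As written, the argument therefore either diverges in $t$ or pays $\Lambda_{ij}\le\Lambda_0$ on an internal edge.

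\paragraph{The missing idea.} Differentiate along the diagonal. One has $(\partial_z+\partial_{z'})\chi\bigl(tz+(1-t)z'\bigr)=\chi'(w_t)$ exactly, with no $t$-dependent factor. The paper implements this by translating $z$ and every internal integration variable by $z'$. All internal kernels, including those on the $z$--$z'$ path, then become independent of $z'$, and
\[
\Delta^{(1)}_{z,z'}\chi=\partial_{z'}\int_0^1\chi(tz+z')\,dt .
\]
A single integration by parts in $z'$ then lands only in two places:
\begin{itemize}
\item on the external kernels $K(\delta_{l,n}\tau_i;z_u,y_i-z')$ for every $i\in\mathfrak s$, which gives the full sum $\sum_{i\in\mathfrak s}\tau_i^{-1/2}$ and not just $\tau_a^{-1/2}+\tau_b^{-1/2}$;
\item on cutoff factors at any decorated vertex, which gives the $\chi'_r$ insertions for $r\in\sigma_{\vec k}$.
\end{itemize}
No internal edge is ever differentiated. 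Your telescoping instinct is precisely this translation invariance, but it must be applied to the total translation generator, including both roots.

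\paragraph{Absorbing the leftover factor.} The leftover $\int_0^1\chi(w_t)\,dt$ cannot be absorbed into $\chi^\sigma_{\vec k+\mathbf 1_a}$ by invoking $\|\chi\|_\infty<\infty$. That only yields the weight-$(|\sigma|-1)$ bound of \eqref{Transport3}. You need monotonicity and positivity, $\chi(w_t)\le\chi(z)+\chi(z')$. Then $\chi(z)$ goes on the vertex adjacent to $y_a$ and $\chi(z')$ on the vertex adjacent to $y_b$. This, not the $z\leftrightarrow z'$ symmetry, is the source of the two decorations $\vec k+\mathbf 1_a$ and $\vec k+\mathbf 1_b$.
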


\begin{proof}
Fix $T\in\mathcal T_1^s$ with $v(T)\ge1$. By definition of the double
rooted amplitude,
\begin{multline}\label{eq:double-rooted-proof-revised}
\mathcal A_{l,n}^{\Lambda,\Lambda_0}
\bigl(T^{(2)}_{z'}\bigr)
\bigl(
z,z';(\tau_i,y_i)_{i\in\mathfrak s};
\chi^\sigma_{\vec k}
\bigr)
\\
=
\sup_{\Lambda\le\Lambda_{uv}\le\Lambda_0}
\int d\vec z_{\mathrm{int}}
\prod_{\langle u,v\rangle\in\mathcal I(T)}
K\!\left(\frac{\delta_{l,n}}{\Lambda_{uv}^2};z_u,z_v\right)
\\
\times
\prod_{\substack{\langle u,v\rangle\in\mathcal E(T)\\ v\in\sigma}}
\chi^{k_v}(z_u)\,
K(\delta_{l,n}\tau_v;z_u,y_v)
\prod_{\substack{\langle u,v\rangle\in\mathcal E(T)\\ v\notin\sigma}}
K(\delta_{l,n}\tau_v;z_u,y_v),
\end{multline}
where the two roots $z$ and $z'$ are kept fixed. We translate every internal integration variable distinct from $z'$
according to
\[
z_u\mapsto z_u-z'.
\]
By translation invariance of the heat kernel, the internal kernels
become independent of $z'$. After the  change of variables, the remainder factor takes the form
\[
\Delta^{(1)}_{z,z'}\chi
=
\int_0^1 \chi'(tz+z')\,dt
=
\partial_{z'}
\left(
\int_0^1 \chi(tz+z')\,dt
\right).
\]
Hence the dependence on $z'$ appears only
through the cutoff factors and the external kernels, together with
\[
\Delta^{(1)}_{z,z'}\chi
=
\partial_{z'}
\left(
\int_0^1 \chi(tz+z')\,dt
\right).
\]
Substituting this representation into the left-hand side of
\eqref{equation41-new} and integrating by parts in $z'$ yields
\begin{equation}\label{eq:ibp-scheme}
\int_{\mathbb R}dz\int_{\mathbb R}dz'
\int_0^1\chi(tz+z')\,dt~
\partial_{z'}
\Bigl[
\mathcal A_{l,n}^{\Lambda,\Lambda_0}
\bigl(T^{(2)}_{z'}\bigr)
\bigl(
z,0;(\tau_i,y_i-z')_{i\in\mathfrak s};
\theta_{z'}\chi^\sigma_{\vec k}
\bigr)
\Bigr].
\end{equation}
No boundary terms appear since $\chi$ is bounded and the heat kernels
have sufficient decay. We now analyze the derivative $\partial_{z'}$:

\smallskip
\noindent
\textit{Derivative acting on cutoff factors.}
If $\partial_{z'}$ hits $\chi^{k_r}(z_u-z')$ with $r\in\sigma$ such that $k_r\geq 1$ we obtain
\[
|\partial_{z'}\chi^{k_r}(z_u-z')|
\lesssim
\chi'(z_u-z')\chi^{k_r-1}(z_u-z').
\]
Multiplying by $\chi(tz+(1-t)z')$ and using that $\chi$ is nonnegative and increasing, one has
\[
\chi(tz+(1-t)z')
\le
\chi(z)+\chi(z')
\]
and this produces the two hatted contributions appearing in the second line of
\eqref{equation41-new}.

\smallskip
\noindent
\textit{Derivative acting on external kernels.}
If $\partial_{z'}$ hits an external kernel attached to
$i\in\mathfrak s$, we use the Gaussian derivative bound
\[
|\partial_z K(\delta_{l,n}\tau_i;z,z')|
\lesssim
\tau_i^{-1/2}\,
K(\delta_{l',n'}\tau_i;z,z'),
\qquad l'\ge l,\;n'>n .
\]
Combining again with
\(
\chi(tz+(1-t)z')\le\chi(z)+\chi(z')
\)
yields the first line of the right-hand side of
\eqref{equation41-new}. The remaining kernels are controlled by the
monotonicity bound
\begin{equation}\label{HKMONO}
K(\delta_{l,n}\tau;z,z')\lesssim K(\delta_{l',n'}\tau;z,z').
\end{equation}
Summing over $i\in\mathfrak s$ concludes the proof.
\end{proof}
\begin{remark}
The assumption that the distinguished vertices
\(z,z'\) belong to \(V^{\sigma}_{\mathrm{ext}}(T)\) is used only to
preserve the cutoff bookkeeping. If neither \(z\) nor \(z'\) carries a
cutoff decoration, then the localisation factor
\(\Delta^{(1)}_{z,z'}\chi\) cannot be absorbed into the multi-index
\(\vec k\). In this case, the argument of Lemma~\ref{Lemmachi'} together with \eqref{eq:borneamplitudechi}
still yields the estimate
\begin{multline}\label{Transport3}
\int_{\mathbb R}dz\int_{\mathbb R}dz'~
\Delta^{(1)}_{z,z'}\chi~
\mathcal A_{l,n}^{\Lambda,\Lambda_0}
\bigl(T^{(2)}_{z'}\bigr)
\bigl(
z,z';(\tau_i,y_i)_{i\in\mathfrak s};
\chi^\sigma_{\vec k}
\bigr)
\\
\lesssim
\|\chi\|_{\infty}^{|\sigma|}
\sum_{i\in\mathfrak s}\tau_i^{-1/2}
\int_{\mathbb R}dz~
\mathcal A_{l',n'}^{\Lambda,\Lambda_0}(T)
\bigl(
z;(\tau_i,y_i)_{i\in\mathfrak s};
\chi^\sigma_{\vec k}
\bigr)
\\
+
\|\chi\|_{\infty}
\sum_{r\in\sigma_{\vec k}}
\int_{\mathbb{R}}dz~
\hat{\mathcal A}_{l,n}^{\Lambda,\Lambda_0}(T)
\bigl(
z;(\tau_i,y_i)_{i\in\mathfrak{s}};
\chi'_r\,\chi^{\sigma}_{\vec{k}-\mathbf{1}_r}
\bigr).
\end{multline}
Thus the same type of bound remains valid, but the localisation factor
is no longer converted into an additional cutoff power. Consequently,
the total cutoff weight is not preserved, which is why we restrict
attention to distinguished vertices in
\(V^{\sigma}_{\mathrm{ext}}(T)\) throughout the subsequent inductive
analysis.
\end{remark}
The previous lemma shows that the remainder factor
$\Delta^{(1)}_{z,z'}\chi$ carries one unit of localisation that can be
transported to either of the two external vertices adjacent to $z$ and
$z'$. The cost of this transport is of two possible types: either a
Gaussian derivative factor $\tau_i^{-1/2}$ when the derivative acts on an
external heat kernel, or a $\chi'$-insertion when it acts on a pre-existing
cutoff factor. In this way the remainder term can be rewritten in terms of
standard decorated amplitudes, which fit directly into the inductive
bookkeeping of cutoff powers and will be repeatedly used in the fusion
estimates below.

The previous lemma reduces the problem to amplitudes containing
\(\chi'\)-insertions. The next result shows that these derivative
decorations can in turn be absorbed into the original cutoff
decoration.
\begin{lemma}\label{LemmeDeriv}
Let $\mathfrak s$, $\sigma$, and $\vec k$ be as in Lemma~\ref{Lemmachi'}.
Let $T\in\mathcal T_1^{\mathfrak s}$ with $v(T)\ge1$. Then for every $l'\geq l$ and $n'>n$ one has
\begin{multline}\label{equation411}
\sum_{r\in\sigma_{\vec k}}
\int_{\mathbb R}dz~
\hat{\mathcal A}^{\Lambda,\Lambda_0}_{l,n}(T)
\bigl(
z;(\tau_i,y_i)_{i\in\mathfrak s};
\chi'_r\,\chi^\sigma_{\vec k-\mathbf 1_r}
\bigr)\\
\lesssim
\left(\sum_{i\in\mathfrak s}\tau_i^{-1/2}\right)
\int_{\mathbb R}dz~
\mathcal A^{\Lambda,\Lambda_0}_{l',n'}(T)
\bigl(
z;(\tau_i,y_i)_{i\in\mathfrak s};
\chi^\sigma_{\vec k}
\bigr).
\end{multline}
\end{lemma}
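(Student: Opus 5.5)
The plan is to reuse the integration-by-parts mechanism from the proof of Lemma~\ref{Lemmachi'}. This time the derivative $\chi'$ sits on a \emph{pre-existing} cutoff factor rather than on a remainder factor, and we move it back onto the Gaussian kernels.

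Fix $r\in\sigma_{\vec k}$. Unfold $\hat{\mathcal A}^{\Lambda,\Lambda_0}_{l,n}(T)$ through Definition~\ref{def:hat-amplitude}, so each term is an integral over one or two distinguished vertices in $V^\sigma_{\mathrm{ext}}(T)$. The integrand contains $\Delta^{(1)}\chi$, the factor $\chi'(z_u)$ at the internal vertex $z_u$ adjacent to $y_r$, and the remaining decorated heat kernels. Since the outer variable $z$ (the root) is also integrated, every vertex of the tree is integrated. This is exactly the situation in which the preceding remark says integration by parts is admissible.

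We write $\chi'(z_u)=\partial_{z_u}\chi(z_u)$ and integrate by parts in $z_u$. We may choose either to perform the global translation $z_v\mapsto z_v-z_u$ of all internal variables as in Lemma~\ref{Lemmachi'}, or simply to differentiate in $z_u$ directly. Boundary terms vanish because $\chi\in L^\infty$ and the kernels decay.

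The derivative then falls on one of three kinds of factor.
\begin{itemize}
\item[(a)] An external kernel $K(\delta_{l,n}\tau_i;\cdot,y_i)$. The Gaussian derivative bound gives $\tau_i^{-1/2}K(\delta_{l',n'}\tau_i;\cdot,y_i)$, and the freed $\chi$ restores the decoration to $\vec k$. This is the source of the prefactor $\sum_i\tau_i^{-1/2}$.
\item[(b)] Another cutoff factor $\chi^{k_j}$. This produces a new $\chi'$-insertion with total cutoff weight still $|\sigma|-1$.
\item[(c)] The localisation factor $\Delta^{(1)}\chi$ or an internal kernel. Under the translated variables the internal kernels are independent of $z_u$, so this case reduces to (a) or (b). Alternatively, we bound $\Delta^{(1)}\chi$ using $\chi(tz+(1-t)z')\le\chi(z)+\chi(z')$ exactly as before.
\end{itemize}
After case (a), the remaining $\Delta^{(1)}\chi$ together with the root integration $\int dz$ is handled by the root extraction Lemma~\ref{lem:multi-root}: we integrate $\Delta^{(1)}_{z,z'}\chi$, which is bounded by $\|\chi'\|_{L^1}$-type control, against the double-rooted amplitude and recombine to $\int dz\,\mathcal A(T)$. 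The kernel monotonicity \eqref{HKMONO} then upgrades all factors to level $(l',n')$.

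The main obstacle is case (b). There the new $\chi'$ insertion is again of the same type, so a naive argument would loop. We will close it by induction on the number of cutoff factors that can still be differentiated, namely $\sum k_j$, or equivalently on $|\sigma_{\vec k}|$. Each pass through (b) strictly lowers the number of undifferentiated $\chi$ powers, so after finitely many steps every $\chi'$ has been converted to a kernel derivative as in (a).

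A cleaner route, which we would try first, is to group all $\chi$ factors at a vertex. Then
\[
\textstyle\sum_r \chi'_r\prod_{j\ne r}\chi^{k_j} = \partial\bigl(\prod_j \chi^{k_j}\bigr)/(\text{multiplicities}),
\]
which, by monotonicity and positivity, is dominated by a total derivative. A single integration by parts then places the derivative entirely on kernels, and this avoids the recursion. The delicate point is showing that no $\chi$-power is lost in the process, so that the final decoration is exactly $\vec k$ and the right-hand side is $\mathcal A(T)(\cdot;\chi^\sigma_{\vec k})$.
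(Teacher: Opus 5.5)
\textbf{There is a genuine gap in your main line: the case-(b) recursion does not close, and you also carry a $\Delta^{(1)}\chi$ factor that breaks the argument.}

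\emph{Case (b).} You claim each integration by parts strictly lowers the number of undifferentiated cutoff powers, but it does not. Write $\chi'\chi^{k_r-1}=k_r^{-1}\partial(\chi^{k_r})$ and integrate by parts along the global translation. When the derivative lands on another factor $\chi^{k_j}$, you get exactly the decoration $\chi'_j\chi^\sigma_{\vec k-\mathbf 1_j}$, again of total weight $|\sigma|-1$. So the induction on $\sum_j k_j$ or $|\sigma_{\vec k}|$ never terminates.

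What closes the argument is a sign, not an induction. These case-(b) terms enter with a minus sign and are nonnegative, since $\chi,\chi'\ge0$, so they can simply be dropped. Equivalently, and this is the paper's proof, sum over $r$ first. Up to the weights $k_r\ge1$, $\sum_r k_r\chi'_r\chi^\sigma_{\vec k-\mathbf 1_r}$ is the derivative of the cutoff product along the diagonal translation. Translate every vertex by the root $z$ and let $G(z)$ be the translated amplitude of $T$ with decoration $\chi^\sigma_{\vec k}$. Then:
\begin{itemize}
\item the internal kernels are $z$-independent;
\item the left-hand integrand is dominated by the cutoff part of $\partial_z G$;
\item $\int_{\mathbb R}\partial_z G\,dz=0$ because $G$ decays at infinity;
\item positivity bounds the cutoff part by the external-kernel terms, each $\lesssim\tau_i^{-1/2}\mathcal A_{l',n'}(T)(z;\dots;\chi^\sigma_{\vec k})$.
\end{itemize}
Your ``cleaner route'' is exactly this. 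The ``delicate point'' is automatic: after the integration by parts, the undifferentiated cutoff product is $\chi^\sigma_{\vec k}$ itself.

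\emph{The $\Delta^{(1)}\chi$ factor.} You should not unfold the $\chi'_r$-decorated $\hat{\mathcal A}$ through Definition~\ref{def:hat-amplitude}. In this lemma, and in Lemma~\ref{Lemmachi'}, it is the plain tree amplitude of $T$ with $\chi'$ inserted at the vertex adjacent to $y_r$. It has no $\Delta^{(1)}\chi$ weight and no extra rooted vertices; this is precisely the object that appears as the cutoff part of $\partial_z G$ in the proof. Keeping an extra $\Delta^{(1)}\chi$ fails in three ways:
\begin{itemize}
\item The two sides would not scale alike: two inverse-length factors ($\Delta^{(1)}\chi$ and $\chi'$) on the left against a single $\tau_i^{-1/2}$ on the right.
\item The global translation derivative also hits $\Delta^{(1)}\chi$ and produces $\int_0^1\chi''\,dt$, which has no sign and destroys the positivity argument.
\item Your $\|\chi'\|_{L^1}$-type recombination does not return $\int dz\,\mathcal A(T)$; it trades the integral over the diagonal variable for a supremum.
\end{itemize}

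\emph{The option of differentiating in $z_u$ alone.} This is not actually available. It hits the internal kernels $K(\delta_{l,n}/\Lambda_{ij}^2;\cdot,\cdot)$ and produces factors $\Lambda_{ij}$ as large as $\Lambda_0$. The global translation, including the root, is required, not optional.
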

\begin{proof}
By the product rule, one has
\begin{equation}\label{eq:chi-derivative-identity}
(\chi^\sigma_{\vec k})'
=
\sum_{r\in\sigma_{\vec{k}}}
k_r\,\chi'_r\,\chi^\sigma_{\vec k-\mathbf 1_r}.
\end{equation}
Since $\chi'\ge0$, the right-hand side is a nonnegative linear combination of
the decorations $\chi'_r\,\chi^\sigma_{\vec k-\mathbf 1_r}$. By linearity of
the amplitude in the decoration and positivity of the integrand, it follows that
\begin{equation}\label{eq:chi-prime-bound}
\sum_{r\in\sigma_{\vec{k}}}
\hat{\mathcal A}_{l,n}^{\Lambda,\Lambda_0}
\bigl(
T
\bigr)
\bigl(
z;(\tau_i,y_i)_{i\in\mathfrak s};
\chi'_r\,\chi^\sigma_{\vec k-\mathbf 1_r}
\bigr)
\lesssim~
\hat{\mathcal A}_{l,n}^{\Lambda,\Lambda_0}
\bigl(
T
\bigr)
\bigl(
z;(\tau_i,y_i)_{i\in\mathfrak s};
(\chi^\sigma_{\vec k})'
\bigr).
\end{equation}
We now translate all internal integration variables according to
\[
z_u\mapsto z_u-z.
\]
By translation invariance of the heat kernel, all internal kernels remain
unchanged under this change of variables. After translation, each external
kernel is of the form
\[
K(\delta_{l,n}\tau_i;z_{a(i)},y_i-z),
\]
while each cutoff factor becomes a function of the form $\chi(z_u+z)$.
Therefore, all dependence on $z$ is carried either by the translated cutoff
factors or by the external kernels.\\
Set
\[
G_{l,n}(z)
:=
\mathcal A_{l,n}^{\Lambda,\Lambda_0}
\bigl(
T
\bigr)
\bigl(
0;(\tau_i,y_i-z)_{i\in\mathfrak s};
\Theta_z(\chi^\sigma_{\vec k})
\bigr),
\]
where $\Theta_z(\chi^\sigma_{\vec k})$ denotes the translated decoration.
Differentiating under the integral sign, one obtains
\begin{equation}\label{eq:derivative-splitting}
\partial_z G_{l,n}(z)
=
\hat{\mathcal A}_{l,n}^{\Lambda,\Lambda_0}
\bigl(
T
\bigr)
\bigl(
z;(\tau_i,y_i)_{i\in\mathfrak s};
(\chi^\sigma_{\vec k})'
\bigr)
+
\sum_{i\in\mathfrak s} B_{l,n;i}(z),
\end{equation}
where, for each $i\in\mathfrak s$, the term $B_{l,n;i}(z)$ is given by the same
translated amplitude as $G_{l,n}(z)$, except that the external heat kernel
corresponding to the label $i$ is replaced by its derivative with respect to
$z$. Since the external kernels provide Gaussian decay in $z$, one has
\[
G_{l,n}(z)\to0
\qquad\text{as }|z|\to\infty.
\]
Consequently,
\[
\int_{\mathbb R}\partial_z G_{l,n}(z)\,dz=0.
\]
Integrating \eqref{eq:derivative-splitting} over $z\in\mathbb R$ and using
\eqref{eq:chi-prime-bound}, we obtain
\[
\sum_{r\in\sigma_{\vec{k}}}
\int_{\mathbb R}dz~
\hat{\mathcal A}_{l,n}^{\Lambda,\Lambda_0}
\bigl(
T
\bigr)
\bigl(
z;(\tau_i,y_i)_{i\in\mathfrak s};
\chi'_r\,\chi^\sigma_{\vec k-\mathbf 1_r}
\bigr)
\lesssim
\sum_{i\in\mathfrak s}\int_{\mathbb R}dz\,|B_{l,n;i}(z)|.
\]
It therefore remains to bound the terms $B_{l,n;i}$. By the Gaussian derivative
estimate for the heat kernel, for every $l'\geq l$ and $n'>n$, one has $\delta_{l',n'}\geq \delta_{l,n}$ which gives 
\[
|\partial_z K(\delta_{l,n}\tau_i;z,z')|
\lesssim
\tau_i^{-1/2}\,
K(\delta_{l',n}\tau_i;z,z').
\]
Furthermore, by the monotonicity bound \eqref{HKMONO},
the remaining heat kernels are controlled at scale
\(\delta_{l',n'}\).
Combining these estimates with the positivity of the remaining factors
yields
\[
|B_{l,n;i}(z)|
\lesssim
\tau_i^{-1/2}\,
\mathcal A_{l',n'}^{\Lambda,\Lambda_0}
\bigl(
T
\bigr)
\bigl(
z;(\tau_j,y_j)_{j\in\mathfrak s};
\chi^\sigma_{\vec k}
\bigr).
\]
Integrating over $z$ and summing over $i\in\mathfrak s$, we obtain
\[
\sum_{r\in\sigma_{\vec{k}}}
\int_{\mathbb R}dz\,
\hat{\mathcal A}_{l,n}^{\Lambda,\Lambda_0}
\bigl(
T
\bigr)
\bigl(
z;(\tau_i,y_i)_{i\in\mathfrak s};
\chi'_r\,\chi^\sigma_{\vec k-\mathbf 1_r}
\bigr)
\lesssim
\left(\sum_{i\in\mathfrak s}\tau_i^{-1/2}\right)
\int_{\mathbb R}dz\,
\mathcal A_{l',n'}^{\Lambda,\Lambda_0}
\bigl(
T
\bigr)
\bigl(
z;(\tau_i,y_i)_{i\in\mathfrak s};
\chi^\sigma_{\vec k}
\bigr),
\]
which is exactly \eqref{equation411}.
\end{proof}
\begin{remark}
Lemma~\ref{LemmeDeriv} shows that a $\chi'$-insertion can be absorbed
into the cutoff decoration of the amplitude. More precisely, the factor
$\chi'_r\,\chi^{k_r-1}$ can be replaced by $\chi^{k_r}$ at the price of
a Gaussian derivative factor $\tau_i^{-1/2}$. Together with
Lemma~\ref{Lemmachi'}, this provides a systematic way of transporting
localisation factors into the multi-index $\vec k$, which is the
quantity controlled in the subsequent bounds.
\end{remark}
\begin{corollary}[Reduction of $\hat{\mathcal A}$ to $\mathcal A$]\label{CorTransport}
Let $\mathfrak s$, $\sigma$, and $T$ be as in Lemma~\ref{Lemmachi'}.
Then, for every $l'\ge l$ and $n'>n$, one has
\begin{multline}\label{equation41-new}
\int_{\mathbb R}dz\int_{\mathbb R}dz'~
\Delta^{(1)}_{z,z'}\chi~
\mathcal A_{l,n}^{\Lambda,\Lambda_0}
\bigl(
T^{(2)}_{z'}
\bigr)
\bigl(
z,z';(\tau_i,y_i)_{i\in\mathfrak s};
\chi^\sigma
\bigr)
\\
\lesssim
\left(\sum_{i\in\mathfrak s}\tau_i^{-1/2}\right)
\int_{\mathbb R}dz~
\mathcal A_{l',n'}^{\Lambda,\Lambda_0}
\bigl(
T
\bigr)
\bigl(
z;(\tau_i,y_i)_{i\in\mathfrak s};
\chi^\sigma
\bigr).
\end{multline}
As a consequence, using the definition of $\hat{\mathcal A}$ and summing
over all pairs of vertices in $V^\sigma_{\mathrm{ext}}(T)$, one obtains
\begin{equation}\label{equationNuevo}
\int_{\mathbb R}dz~
\hat{\mathcal A}_{l,n}^{\Lambda,\Lambda_0}
\bigl(
T
\bigr)
\bigl(
z;(\tau_i,y_i)_{i\in\mathfrak s};
\chi^\sigma
\bigr)
\lesssim
\left(\sum_{i\in\mathfrak s}\tau_i^{-1/2}\right)
\int_{\mathbb R}dz~
\mathcal A_{l',n'}^{\Lambda,\Lambda_0}
\bigl(
T
\bigr)
\bigl(
z;(\tau_i,y_i)_{i\in\mathfrak s};
\chi^\sigma
\bigr).
\end{equation}
Note that here $z\in V^{\sigma}_{\mathrm{ext}}(T)$.
\end{corollary}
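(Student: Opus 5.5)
The plan is to obtain the corollary by chaining Lemma~\ref{Lemmachi'} with Lemma~\ref{LemmeDeriv}, and then summing over decorations and distinguished vertices. Fix a decoration $\vec k$ entering the left-hand side, that is, one of the reduced decorations $\vec k^{(z)}$ with total weight $|\sigma|-1$. Let $a,b\in\sigma$ be labels of external vertices adjacent to $z$ and $z'$, which exist because $z,z'\in V^\sigma_{\mathrm{ext}}(T)$. Lemma~\ref{Lemmachi'} bounds the double integral of $\Delta^{(1)}_{z,z'}\chi$ against $\mathcal A_{l,n}(T^{(2)}_{z'})$ by two groups of terms:
\begin{itemize}
\item terms of the form $\tau_i^{-1/2}\int dz\,\mathcal A_{l',n'}(T)(\,\cdot\,;\chi^\sigma_{\vec k+\mathbf 1_a})$, together with the analogous term with $b$;
\item hatted terms carrying $\chi'_r\chi^\sigma_{\vec k-\mathbf 1_r+\mathbf 1_a}$ and $\chi'_r\chi^\sigma_{\vec k-\mathbf 1_r+\mathbf 1_b}$.
\end{itemize}
The shifted multi-indices $\vec k+\mathbf 1_a$ and $\vec k+\mathbf 1_b$ have total weight $|\sigma|$, so they lie in $\Sigma(\sigma)$. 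The first group is therefore already dominated by the right-hand side, after summing over $\vec k$.

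For the hatted group, I will apply Lemma~\ref{LemmeDeriv} with the multi-index $\vec k':=\vec k+\mathbf 1_a$ (and likewise $\vec k+\mathbf 1_b$). One technical point is that Lemma~\ref{LemmeDeriv} is stated with the weight of Lemma~\ref{Lemmachi'}, but its proof only uses the product rule identity \eqref{eq:chi-derivative-identity} and integration by parts. I will therefore check that it applies verbatim to any weight, in particular to $\vec k'$ with $\sum k'_i=|\sigma|$. Since $\chi'\ge0$, the sum over $r$ of $\chi'_r\chi^\sigma_{\vec k'-\mathbf 1_r}$ is dominated by $(\chi^\sigma_{\vec k'})'$. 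A further case occurs when $r=a$ and $k_a=0$. Then $\vec k-\mathbf 1_r+\mathbf 1_a=\vec k$ and the decoration is simply $\chi'_a\chi^\sigma_{\vec k}$, which is again one of the summands of $(\chi^\sigma_{\vec k+\mathbf 1_a})'$ by the product rule. Lemma~\ref{LemmeDeriv} then gives the bound $(\sum_i\tau_i^{-1/2})\int dz\,\mathcal A_{l'',n''}(T)(\cdot;\chi^\sigma_{\vec k'})$. The parameters $(l',n')$ increase twice along the way, but I will absorb this using Lemma~\ref{lem:monotone_l} and the monotonicity bound \eqref{HKMONO}. This works because the statement only asks for some $l'\ge l$, $n'>n$, with implicit constants depending on $T$.

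Summing over $\vec k\in\Sigma(\sigma)$ and over the finitely many choices of $a$, $b$ and $r$ yields the first inequality \eqref{equation41-new}. Multiplicities are controlled by $|\sigma|$ and the number of vertices of $T$, so they enter only the constant.

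For \eqref{equationNuevo}, I will expand $\hat{\mathcal A}_{l,n}(T)$ using Definition~\ref{def:hat-amplitude}. The triple-rooted terms are rewritten through Lemma~\ref{lem:multi-root}, integrating out the third root, which turns them into exactly the double-integral form just treated with $T^{(2)}_{z'}$. The terms involving $\Delta^{(1)}_{z_1,z}\chi$ with the root $z_1$ are of the same form once $z_1$ is integrated (here $z_1$ plays the role of $z$). The case $|\sigma|=1$, governed by \eqref{equation72-one}, is handled in the same way, using \eqref{Transport3} when the root carries no decoration. Summing over pairs in $V^\sigma_{\mathrm{ext}}(T)$ concludes.

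The main obstacle is the bookkeeping of the weights: one must verify that, after transport to $a$ or $b$ and the $\chi'$-absorption, every resulting decoration indeed lies in $\Sigma(\sigma)$, and that Lemma~\ref{LemmeDeriv} extends to weight $|\sigma|$.
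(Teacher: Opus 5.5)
Your proposal follows the paper's proof. You apply Lemma~\ref{Lemmachi'} decoration by decoration, absorb the $\chi'$-insertions with Lemma~\ref{LemmeDeriv}, and sum over decorations; you then obtain \eqref{equationNuevo} by expanding Definition~\ref{def:hat-amplitude}, integrating out the root so that the triply rooted terms become doubly rooted, and summing over pairs in $V^\sigma_{\mathrm{ext}}(T)$. Your explicit checks make precise a weight-bookkeeping step the paper passes over, since it applies Lemma~\ref{Lemmachi'} directly to $\vec k\in\Sigma(\sigma)$: that $\vec k+\mathbf 1_a$ and $\vec k+\mathbf 1_b$ have weight $|\sigma|$, and that Lemma~\ref{LemmeDeriv} is really used at that weight. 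The undecorated-root case you route through \eqref{Transport3} is excluded anyway by the standing assumption that the root lies in $V^\sigma_{\mathrm{ext}}(T)$, and it would not preserve the weight there.
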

\begin{proof}
By definition of the decorated amplitudes, one has
\[
\mathcal A_{l,n}^{\Lambda,\Lambda_0}(T^{(2)}_{z'})
\bigl(
z,z';(\tau_i,y_i)_{i\in\mathfrak s};\chi^\sigma
\bigr)
=
\sum_{\vec k\in\Sigma(\sigma)}
\mathcal A_{l,n}^{\Lambda,\Lambda_0}(T^{(2)}_{z'})
\bigl(
z,z';(\tau_i,y_i)_{i\in\mathfrak s};\chi^\sigma_{\vec k}
\bigr).
\]
Applying Lemma~\ref{Lemmachi'} to each summand yields terms involving either
decorations of the form $\chi^\sigma_{\vec k+\mathbf 1_a}$ and
$\chi^\sigma_{\vec k+\mathbf 1_b}$, or terms with derivative insertions,
namely
\[
\chi'_r\,\chi^\sigma_{\vec k-\mathbf 1_r+\mathbf 1_a},
\qquad
\chi'_r\,\chi^\sigma_{\vec k-\mathbf 1_r+\mathbf 1_b}.
\]
Summing over $\vec k\in\Sigma(\sigma)$ preserves this structure. The terms
containing derivatives of the cutoff are controlled by
Lemma~\ref{LemmeDeriv}, which yields a factor
\[
\sum_{i\in\mathfrak s}\tau_i^{-1/2}.
\]
Integrating in $z'$ and using \eqref{ADRT} from Lemma~\ref{lem:multi-root}
then yields \eqref{equation41-new}.\\
We now turn to the proof of \eqref{equationNuevo}. Recall the definition of
the $\wedge$--amplitude:
\begin{multline}\label{bobo}
\hat{\mathcal A}^{\Lambda,\Lambda_0}_{l,n}(T)
\bigl(
z;(\tau_i,y_i)_{i\in\mathfrak{s}};
\chi^\sigma_{\vec k}
\bigr)
:=\sum_{z'\in V_{\mathrm{ext}}^\sigma(T)}
\int_{\mathbb R} dz'~
\Delta^{(1)}_{z,z'}\chi~
\mathcal A^{\Lambda,\Lambda_0}_{l,n}
\bigl(
T^{(2)}_{z'}
\bigr)
\bigl(
z,z';(\tau_i,y_i)_{i\in\mathfrak{s}};
\chi^{\sigma}_{\vec k^{(z')}}
\bigr)\\
+\sum_{z',z''\in V_{\mathrm{ext}}^\sigma(T)}
\int_{\mathbb R} dz'\int_{\mathbb{R}}dz''~
\Delta^{(1)}_{z',z''}\chi
\Bigl(
\mathcal A^{\Lambda,\Lambda_0}_{l,n}
\bigl(
T^{(3)}_{z'z''}
\bigr)
\bigl(
z,z',z'';(\tau_i,y_i)_{i\in\mathfrak{s}};
\chi^{\sigma}_{\vec k^{(z')}}
\bigr)\\
+
\mathcal A^{\Lambda,\Lambda_0}_{l,n}
\bigl(
T^{(3)}_{z'z''}
\bigr)
\bigl(
z,z',z'';(\tau_i,y_i)_{i\in\mathfrak{s}};
\chi^{\sigma}_{\vec k^{(z'')}}
\bigr)
\Bigr).
\end{multline}
Integrating over $z$ in \eqref{bobo}, we can rewrite it as
\begin{multline}
\int_{\mathbb{R}}dz~
\hat{\mathcal A}^{\Lambda,\Lambda_0}_{l,n}(T)
\bigl(
z;(\tau_i,y_i)_{i\in\mathfrak{s}};
\chi^\sigma_{\vec k}
\bigr)
=\sum_{z'\in V_{\mathrm{ext}}^\sigma(T)}
\int_{\mathbb{R}}dz\int_{\mathbb R} dz'~
\Delta^{(1)}_{z,z'}\chi~
\mathcal A^{\Lambda,\Lambda_0}_{l,n}
\bigl(
T^{(2)}_{z'}
\bigr)
\bigl(
z,z';(\tau_i,y_i)_{i\in\mathfrak{s}};
\chi^{\sigma}_{\vec k^{(z')}}
\bigr)\\
+\sum_{z',z''\in V_{\mathrm{ext}}^\sigma(T)}
\int_{\mathbb R} dz''\int_{\mathbb{R}}dz'~
\Delta^{(1)}_{z',z''}\chi
\Bigl(
\mathcal A^{\Lambda,\Lambda_0}_{l,n}
\bigl(
\hat{T}^{(2)}_{z'z''}
\bigr)
\bigl(
z',z'';(\tau_i,y_i)_{i\in\mathfrak{s}};
\chi^{\sigma}_{\vec k^{(z')}}
\bigr)\\
+
\mathcal A^{\Lambda,\Lambda_0}_{l,n}
\bigl(
\hat{T}^{(2)}_{z'z''}
\bigr)
\bigl(
z',z'';(\tau_i,y_i)_{i\in\mathfrak{s}};
\chi^{\sigma}_{\vec k^{(z'')}}
\bigr)
\Bigr),
\end{multline}
where $\hat{T}$ denotes the tree obtained from $T$ by converting the root
vertex $z$ into an internal vertex.

Summing \eqref{equation41-new} over all pairs $z\in V_{\mathrm{ext}}^\sigma(T)$ and all pairs $(z',z'')$ in
$V_{\mathrm{ext}}^\sigma(T)\times V_{\mathrm{ext}}^\sigma(T)$, we obtain
\[
\int_{\mathbb R}dz~
\hat{\mathcal A}_{l,n}^{\Lambda,\Lambda_0}
\bigl(
T
\bigr)
\bigl(
z;(\tau_i,y_i)_{i\in\mathfrak s};\chi^\sigma
\bigr)
\lesssim
\left(\sum_{i\in\mathfrak s}\tau_i^{-1/2}\right)
\int_{\mathbb R}dz~
\mathcal A_{l',n'}^{\Lambda,\Lambda_0}
\bigl(
T
\bigr)
\bigl(
z;(\tau_i,y_i)_{i\in\mathfrak s};\chi^\sigma
\bigr),
\]
which is precisely \eqref{equationNuevo}.
\end{proof}
\begin{remark}
The condition
\(
z\in V^\sigma_{\mathrm{ext}}(T)
\)
is not needed in order to obtain bounds on the corresponding
\(\hat{\mathcal A}\)-amplitudes. Indeed, the argument of
Corollary~\ref{CorTransport} can be adapted to the case
\(
z\notin V^\sigma_{\mathrm{ext}}(T)
\),
yielding estimates of the same type.

The role of this assumption is instead to preserve the cutoff
bookkeeping encoded by the decoration. Recall that the localisation
factor
\(
\Delta^{(1)}\chi
\)
is transferred to one of the distinguished vertices and produces
decorations of the form
\(
\chi^\sigma_{\vec k+\mathbf 1_a}
\)
or
\(
\chi^\sigma_{\vec k+\mathbf 1_b}.
\)
Since the initial decoration satisfies
\[
\sum_{i\in\sigma}k_i=|\sigma|-1,
\]
the additional unit generated by the localisation restores the total
cutoff weight to
\[
\sum_{i\in\sigma}(k_i+\delta_{ia})
=
\sum_{i\in\sigma}(k_i+\delta_{ib})
=
|\sigma|.
\]

When one of the distinguished vertices belongs to
\(
V^\sigma_{\mathrm{ext}}(T),
\)
the localisation factor can therefore be absorbed into an existing
cutoff decoration while preserving the total number of cutoff powers.
As a consequence, after reduction of the \(\hat{\mathcal A}\)-amplitude,
one recovers an \(\mathcal A\)-amplitude carrying exactly the same
cutoff weight as the original decoration \(\chi^\sigma\). This
conservation property will be essential in the inductive estimates
below.

Combining \eqref{Transport3} with Lemma~\ref{LemmeDeriv} and \eqref{eq:borneamplitudechi}, one obtains
that if
\(
z\notin V^\sigma_{\mathrm{ext}}(T),
\)
then
\begin{equation}\label{Nuevo2}
\int_{\mathbb R}dz~
\hat{\mathcal A}_{l,n}^{\Lambda,\Lambda_0}
\bigl(
T
\bigr)
\bigl(
z;(\tau_i,y_i)_{i\in\mathfrak s};
\chi^\sigma
\bigr)
\lesssim
\|\chi\|_\infty^{|\sigma|}
\left(
\sum_{i\in\mathfrak s}\tau_i^{-1/2}
\right)
\int_{\mathbb R}dz~
\mathcal A_{l',n'}^{\Lambda,\Lambda_0}
\bigl(
T
\bigr)
\bigl(
z;(\tau_i,y_i)_{i\in\mathfrak s}
\bigr).
\end{equation}
In this case, the estimate remains valid, but the cutoff weight is no
longer preserved.
\end{remark}
\subsubsection{Reduction of amplitudes}
We now analyse the effect of the reduction map on tree amplitudes.
This operation corresponds to the linear part of the flow equation,
where two external lines are integrated at a common point. At the level
of tree amplitudes, the two associated heat kernels are merged by the
semigroup property. This produces a factor of order $\Lambda$ and
replaces the original tree by its reduced version.

\begin{lemma}\label{lemm2}
Let $s\ge2$ and let $\mathfrak{s}\subset\mathbb N$ be such that $|\mathfrak{s}|=s-1$.
Let $\sigma\subset\mathfrak{s}$ denote the subset of indices corresponding to
external edges carrying cutoff decorations. Fix
$\vec k=(k_i)_{i\in\sigma}\in\mathbb N^{|\sigma|}$.

Let $T\in\mathcal{T}^{s+2}_3$ be a tree rooted at the vertices
$z_1,z_2,z_3$, and assume that
\[
z_2,z_3\in V^{\sigma}_{\mathrm{ext}}(T),
\]
that is, the vertices $z_2,z_3$ are incident to external edges labelled by
indices in $\sigma$. Then one has
\begin{multline}\label{eq:lemm2-three-root}
\int_{\mathbb R}
\mathcal{A}^{\Lambda,\Lambda_0}_{l-1,n+2}(T)
\Bigl(
z_1,z_2,z_3;
(\tau_i,y_i)_{i\in\mathfrak s},
\bigl(\tfrac{1}{2\Lambda^2},u\bigr),
\bigl(\tfrac{1}{2\Lambda^2},u\bigr);
\chi^\sigma_{\vec k}
\Bigr)\,du
\\
\lesssim
\Lambda~
\mathcal{A}^{\Lambda,\Lambda_0}_{l,n}\bigl(\mathcal{R}_{u,u}(T)\bigr)
\Bigl(
z_1,z_2,z_3;
(\tau_i,y_i)_{i\in\mathfrak s};
\chi^\sigma_{\vec k}
\Bigr).
\end{multline}
The implicit constants are independent of $\Lambda$ and $\Lambda_0$.
\end{lemma}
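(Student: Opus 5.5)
The plan is to reduce the statement to a one-dimensional Gaussian computation on the two external edges labelled by $u$, and then compare the remaining heat kernels at the two diffusion parameters $\delta_{l-1,n+2}$ and $\delta_{l,n}$.

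First I unfold Definition~\ref{TrA} for the triply rooted tree $T$. The two external vertices carrying the argument $u$ are attached to some vertices $w,w'$ of $T$, each either internal or one of the roots. Their factors are
\[
K\!\left(\tfrac{\delta_{l-1,n+2}}{2\Lambda^2};w,u\right)K\!\left(\tfrac{\delta_{l-1,n+2}}{2\Lambda^2};w',u\right).
\]
These edges carry no cutoff, since the labels of the $u$-lines are not in $\sigma$. The only place $u$ appears is in these two kernels. The integrand is nonnegative, so Tonelli lets me perform the $u$-integration first, inside the supremum over the $\Lambda_{ij}$ (the supremum of an integral being bounded by the integral after taking the supremum pointwise in the order we need, or more simply by bounding for each fixed choice of scales and then taking sup).

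Next I apply the semigroup property with $K(\tau;x,y)=\tau^{-1/2}e^{-(x-y)^2/2\tau}$. This gives
\[
\int K(a;w,u)K(a;w',u)\,du=\sqrt{\pi}\,K(2a;w,w'),
\]
which up to normalisation is just the Gaussian convolution, with $a=\delta_{l-1,n+2}/(2\Lambda^2)$. The two external lines are thus fused into a single "internal" Gaussian $K(\delta_{l-1,n+2}/\Lambda^2;w,w')$. With the paper's normalisation of $K$, this carries a prefactor of order $a^{1/2}\cdot a^{-1/2}\cdot a^{-1/2}\sqrt{2a}$; the convention should be checked so that the bookkeeping indeed yields the factor $\Lambda$ advertised in the statement. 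Concretely, the product of two $\tau^{-1/2}$-normalised kernels integrates to $\sqrt{2\pi}\,a^{1/2}\cdot(2a)^{-1/2}\sqrt{\ldots}$. The prefactor $(2a)^{-1/2}\sim\Lambda$ is what I expect to extract; if instead one uses probability-normalised kernels, I bound $K(2a;w,w')\le\sqrt{2a}^{-1}\sim\Lambda$ by its sup norm. I will take the second, robust route: bound the fused kernel by $\|K(2a;\cdot,\cdot)\|_\infty\lesssim\Lambda$ and discard it. This is exactly the step where the closed cycle $w\to u\to w'$ is cut, so the graph becomes the tree obtained by removing both external edges.

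The main obstacle is then purely combinatorial: matching the remaining tree with $\mathcal R_{u,u}(T)$. Definition~\ref{RedT} further deletes vertices that become of valency one, together with their internal edges. Such a vertex $v$ is never a root, since $z_1,z_2,z_3$ are kept; this is ensured by the hypothesis $z_2,z_3\in V^\sigma_{\mathrm{ext}}(T)$, which carry other external edges, together with Lemma~\ref{lem:red-root-commute}. For a dangling integrated vertex $v$ attached by a single internal kernel $K(\delta/\Lambda_{ij}^2;v,v')$, I integrate $v$ out and use
\[
\int K(\delta/\Lambda_{ij}^2;v,v')\,dv\lesssim \Lambda_{ij}^{-1}\le\Lambda^{-1}
\]
for the unnormalised convention. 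If this loses powers, I instead use the unit-mass normalisation, under which the integral equals one. I will fix the normalisation consistently so that these dangling integrations cost $O(1)$, iterating along the path until a vertex of valency at least two is reached. Finally, $\delta_{l-1,n+2}\le\delta_{l,n}$ follows from $2^{l-1}(n+2)\le 2^l n$ for $n\ge2$. With this inequality, the monotonicity \eqref{eq:Kmono_nohat}, as in Lemma~\ref{lem:monotone_l}, upgrades every surviving kernel to parameter $\delta_{l,n}$. The cutoff factors $\chi^{k_j}$ sit on untouched external edges, so the decoration $\vec k$ is unchanged, and this gives \eqref{eq:lemm2-three-root} with constants independent of $\Lambda,\Lambda_0$.
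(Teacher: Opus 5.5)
Your argument is the paper's proof. You integrate out $u$ with the semigroup identity and bound the fused kernel $K(\delta_{l-1,n+2}/\Lambda^2;w,w')$ by its sup norm $\lesssim\Lambda$. You then remove the vertices that become of valency one by conservation of mass, upgrade the surviving kernels via $\delta_{l-1,n+2}\le\delta_{l,n}$, and note that the $\chi$-decorations sit on vertices that are never pruned.

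Two small corrections:
\begin{itemize}
\item The hedging over normalisations is unnecessary. For either convention of $K$ used in the paper, $\int K(t;v,v')\,dv$ is an absolute constant, so your claim that it is $\lesssim\Lambda_{ij}^{-1}$ for the $t^{-1/2}$-normalised kernel is wrong. Also $\|K(t;\cdot,\cdot)\|_\infty\asymp t^{-1/2}$, so the factor $\Lambda$ comes out either way.
\item The hypothesis $z_2,z_3\in V^\sigma_{\mathrm{ext}}(T)$ protects only $z_2,z_3$ from pruning. For $z_1$ you are relying on the convention that roots are never removed; Lemma~\ref{lem:red-root-commute} presupposes this rather than proving it.
\end{itemize}

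The step whose justification fails is the exchange of $\int du$ with the supremum over the internal scales. By Definition~\ref{TrA} the left-hand side is $\int du\,\sup_{\Lambda_{ij}}F(u;\Lambda_{ij})$, where $F$ denotes the integral over the non-root internal vertices, including the two $u$-kernels. Tonelli says nothing about a supremum. \emph{Bounding for each fixed choice of scales and then taking the sup} only yields
\[
\sup_{\Lambda_{ij}}\int du\,F(u;\Lambda_{ij})\lesssim\Lambda\,\mathcal A^{\Lambda,\Lambda_0}_{l,n}(\mathcal R_{u,u}(T)).
\]
Since $\sup\int\le\int\sup$ (the inequality you quote), this controls the smaller quantity, not the one in the statement.

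Closing the step needs an extra argument showing that $\int du\,\sup_{\Lambda_{ij}}F(u;\Lambda_{ij})\lesssim\Lambda\,\mathcal A^{\Lambda,\Lambda_0}_{l,n}(\mathcal R_{u,u}(T))$ uniformly in $\Lambda_0$. One route is to show that, uniformly in the internal scales, the $u$-profile of $F$ lives on a window of width $O(\Lambda^{-1})$ wherever its mass is non-negligible. This is plausible, since the $u$-legs have variance of order $\Lambda^{-2}$, no smaller than any internal variance, but it has to be proved. The paper's own proof performs the same exchange without comment, so this is a gap you share with it rather than a departure from it.
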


\begin{proof}
The proof consists of three steps. We first merge the two external heat
kernels by the semigroup property. We then extract the resulting factor
of order $\Lambda$. Finally, we identify the remaining amplitude with
the amplitude of the reduced tree.

By Definition~\eqref{AmT}, the amplitude reads
\begin{multline}\label{eq:lemm2:def}
\mathcal{A}_{l-1,n+2}^{\Lambda,\Lambda_0}(T)
\Bigl(
z_1,z_2,z_3;
(\tau_i,y_i)_{i\in\mathfrak s},
\bigl(\tfrac{1}{2\Lambda^2},u\bigr),
\bigl(\tfrac{1}{2\Lambda^2},u\bigr);
\chi^\sigma_{\vec k}
\Bigr)
\\
=
\sup_{\Lambda\le \Lambda_{ij}\le\Lambda_0}
\int_{\mathbb R^{r-3}} d\vec z_{4,r}\;
\prod_{\langle i,j\rangle\in\mathcal I(T)}
K\!\left(\tfrac{\delta_{l-1,n+2}}{\Lambda_{ij}^2};z_i,z_j\right)
\prod_{\substack{\langle i,j\rangle\in\mathcal E(T)\\ j\in\sigma}}
\chi^{k_j}(z_i)\,K(\delta_{l-1,n+2}\tau_j;z_i,y_j)
\\
\times
\prod_{\substack{\langle i,j\rangle\in\mathcal E(T)\\ j\notin\sigma,\,j\in\mathfrak s}}
K(\delta_{l-1,n+2}\tau_j;z_i,y_j)
\;
K\!\left(\tfrac{\delta_{l-1,n+2}}{2\Lambda^2};z_k,u\right)
K\!\left(\tfrac{\delta_{l-1,n+2}}{2\Lambda^2};z_p,u\right).
\end{multline}
Since
\[
\delta_{l-1,n+2}
=
2-\frac{1}{2^{\,l-1}(n+2)}
\le
2-\frac{1}{2^l n}
=
\delta_{l,n},
\]
the monotonicity of the heat kernel yields
\begin{equation}\label{eq:lemm2:mon}
K(\delta_{l-1,n+2}\tau;z,z')
\lesssim
K(\delta_{l,n}\tau;z,z'),
\end{equation}
uniformly in $z,z'\in\mathbb R$. Hence every factor involving
$\delta_{l-1,n+2}$ can be bounded by the corresponding one with
$\delta_{l,n}$.
Integrating over $u$ and using the semigroup property of the heat kernel
gives
\begin{equation}\label{eq:lemm2:semigroup}
\int_{\mathbb R}
K\!\left(\tfrac{\delta_{l-1,n+2}}{2\Lambda^2};z_k,u\right)
K\!\left(\tfrac{\delta_{l-1,n+2}}{2\Lambda^2};z_p,u\right)\,du
=
K\!\left(\tfrac{\delta_{l-1,n+2}}{\Lambda^2};z_k,z_p\right).
\end{equation}
The resulting kernel has spatial scale $\Lambda^{-1}$ and therefore satisfies
\begin{equation}\label{eq:lemm2:Lambda}
\sup_{z_k,z_p\in\mathbb R}
K\!\left(\tfrac{\delta_{l-1,n+2}}{\Lambda^2};z_k,z_p\right)
\lesssim
\Lambda .
\end{equation}

Extracting this factor $\Lambda$ corresponds precisely to replacing the two
external edges attached to the integration variable $u$ by a single internal
edge connecting $z_k$ and $z_p$.\\
\noindent If this operation creates a vertex of valency one, say $z_j$ attached only
to a vertex $z_i$, then the corresponding integration yields
\begin{equation}\label{eq:lemm2:mass}
\int_{\mathbb R}
K\!\left(\tfrac{\delta_{l,n}}{\Lambda_{ij}^2};z_i,z_j\right)\,dz_j
=1,
\end{equation}
by conservation of mass of the heat kernel.

Observe that none of the vertices removed in this way can carry a cutoff
factor $\chi^{k'}(z_j)$. Indeed such a factor can only arise from an
external edge $\left( z_j,y_i\right)$ with $i\in\sigma$. In that case the
vertex $z_j$ remains incident to this external edge and therefore cannot
become of valency one during the reduction procedure $\mathcal R_{u,u}$.
Consequently
\[
V^{\sigma}_{\mathrm{ext}}\bigl(T\bigr)
=
V^{\sigma}_{\mathrm{ext}}\bigl(\mathcal{R}_{u,u}(T)\bigr).
\]

Iterating \eqref{eq:lemm2:mass} therefore removes only vertices attached
exclusively to (internal) heat-kernel factors and produces exactly the reduced tree
$\mathcal R_{u,u}(T)$.\\
Collecting the above bounds and using \eqref{eq:lemm2:mon} for the remaining
kernel factors yields \eqref{eq:lemm2-three-root}.
\end{proof}
\begin{remark}
The same argument applies to doubly rooted and one rooted trees.
In particular, for every $\tilde T\in\mathcal T_2^{s+2}$ and every
$T\in\mathcal T_1^{s+2}$ one obtains the corresponding estimates
\begin{multline}\label{eq:lemm2-two-root}
\int_{\mathbb R}
\mathcal{A}^{\Lambda,\Lambda_0}_{l-1,n+2}(\tilde T)
\Bigl(
z_1,z_2;
(\tau_i,y_i)_{i\in\mathfrak s},
\bigl(\tfrac{1}{2\Lambda^2},u\bigr),
\bigl(\tfrac{1}{2\Lambda^2},u\bigr);
\chi^\sigma_{\vec k}
\Bigr)\,du
\\
\lesssim
\Lambda~
\mathcal{A}^{\Lambda,\Lambda_0}_{l,n}\bigl(\mathcal{R}_{u,u}(\tilde T)\bigr)
\Bigl(
z_1,z_2;
(\tau_i,y_i)_{i\in\mathfrak s};
\chi^\sigma_{\vec k}
\Bigr).
\end{multline}
and
\begin{multline}\label{RedRooted}
\int_{\mathbb R}
\mathcal{A}^{\Lambda,\Lambda_0}_{l-1,n+2}(T)
\Bigl(
z_1;
(\tau_i,y_i)_{i\in\mathfrak s},
\bigl(\tfrac{1}{2\Lambda^2},u\bigr),
\bigl(\tfrac{1}{2\Lambda^2},u\bigr);
\chi^\sigma
\Bigr)\,du
\\
\lesssim
\Lambda\,
\mathcal{A}^{\Lambda,\Lambda_0}_{l,n}\bigl(\mathcal{R}_{u,u}(T)\bigr)
\Bigl(
z_1;
(\tau_i,y_i)_{i\in\mathfrak s};
\chi^\sigma
\Bigr).
\end{multline}.
\end{remark}
\begin{corollary}\label{Co1}
Let $s\ge2$ and let $\mathfrak{s}\subset\mathbb N$ with $|\mathfrak{s}|=s-1$.
For every $T\in\mathcal T_1^s$ one has
\begin{multline}\label{borneThat}
\int_{\mathbb R}
\hat{\mathcal A}^{\Lambda,\Lambda_0}_{l-1,n+2}(T)
\left(
z_1;(\tau_i,y_i)_{i\in\mathfrak s},
\bigl(\tfrac{1}{2\Lambda^2},u\bigr),
\bigl(\tfrac{1}{2\Lambda^2},u\bigr);\chi^\sigma
\right)du
\\
\lesssim
\Lambda~
\hat{\mathcal A}^{\Lambda,\Lambda_0}_{l,n}\!\left(\mathcal R_{u,u}(T)\right)
\left(
z_1;(\tau_i,y_i)_{i\in\mathfrak s};\chi^\sigma
\right).
\end{multline}
\end{corollary}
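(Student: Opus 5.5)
The plan is to unfold the definition of $\hat{\mathcal A}$ and apply the reduction estimates of Lemma~\ref{lemm2} and its Remark term by term. By \eqref{equation72} and summation over $\vec k\in\Sigma(\sigma)$, the left-hand side of \eqref{borneThat} is a finite sum of two kinds of contributions. The first kind involves a single distinguished vertex $z\in V^\sigma_{\mathrm{ext}}(T)$ and reads
\[
\int_{\mathbb R}du\int_{\mathbb R}dz\;\Delta^{(1)}_{z_1,z}\chi\;\mathcal A^{\Lambda,\Lambda_0}_{l-1,n+2}\bigl(T^{(2)}_z\bigr)\bigl(z_1,z;\dots,(\tfrac{1}{2\Lambda^2},u),(\tfrac{1}{2\Lambda^2},u);\chi^\sigma_{\vec k^{(z)}}\bigr).
\]
The second kind involves a pair $z,z'\in V^\sigma_{\mathrm{ext}}(T)$, with triply rooted amplitudes weighted by $\Delta^{(1)}_{z,z'}\chi$. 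All integrands are nonnegative, since $\chi'\ge0$ gives $\Delta^{(1)}\chi\ge0$. Tonelli's theorem therefore lets us move the $u$-integration inside, past the $z$ (and $z'$) integrations and past the weight $\Delta^{(1)}\chi$, which does not depend on $u$.

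For each fixed choice of distinguished vertices, we then apply \eqref{eq:lemm2-two-root} to the doubly rooted terms and \eqref{eq:lemm2-three-root} to the triply rooted terms. This bounds the $u$-integral by $\Lambda$ times the amplitude of $\mathcal R_{u,u}(T^{(2)}_z)$, respectively $\mathcal R_{u,u}(T^{(3)}_{zz'})$, at parameters $(l,n)$ with the same decoration $\vec k^{(z)}$. Next, Lemma~\ref{lem:red-root-commute} gives $\mathcal R_{u,u}(T^{(2)}_z)=(\mathcal R_{u,u}(T))^{(2)}_z$, together with the analogous identity for triple rooting, applied twice. To use it, we must check that the distinguished vertices survive the reduction. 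This holds because $z,z'\in V^\sigma_{\mathrm{ext}}(T)$ carry an external edge labelled in $\sigma\subseteq\mathfrak s$, which is untouched by $\mathcal R_{u,u}$. It is exactly the observation in the proof of Lemma~\ref{lemm2}: such vertices never become of valency one. The same argument gives $V^\sigma_{\mathrm{ext}}(\mathcal R_{u,u}(T))=V^\sigma_{\mathrm{ext}}(T)$ and shows that the sets $\sigma_T(z)$ are unchanged, so the modified decorations $\vec k^{(z)}$ coincide for $T$ and for $\mathcal R_{u,u}(T)$.

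Reassembling, we sum over the same index set of distinguished vertices and decorations, with the same weights $\Delta^{(1)}\chi$. The resulting expression is exactly $\Lambda\,\hat{\mathcal A}^{\Lambda,\Lambda_0}_{l,n}(\mathcal R_{u,u}(T))(z_1;\dots;\chi^\sigma)$ as given by \eqref{equation72} for the reduced tree; in the case $|\sigma|=1$ the same holds via \eqref{equation72-one}. The implicit constants are those of Lemma~\ref{lemm2}, uniform in $\Lambda,\Lambda_0$, multiplied by the finite number of terms.

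The main obstacle is this bookkeeping step. We must make sure the two contracted legs labelled $u$ are not themselves in $\sigma$, so that they carry no $\chi$-factor and no $\vec k$ entry. We must also make sure the vertices adjacent to the $u$-legs that get pruned are never distinguished roots, so that the hypotheses of Lemma~\ref{lem:red-root-commute} and Lemma~\ref{lemm2} apply. If a $u$-leg is attached directly to a distinguished vertex, that vertex keeps its $\sigma$-edge and is not removed, so the argument still goes through. The only genuinely delicate case is when a $u$-leg attaches to the root $z_1$; there we would invoke \eqref{RedRooted} and the fact that roots are never pruned.
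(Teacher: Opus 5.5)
Your proposal is correct and follows essentially the same route as the paper: expand $\hat{\mathcal A}$ via \eqref{equation72}, integrate in $u$, apply the two- and three-root versions of Lemma~\ref{lemm2} term by term, and reassemble using Lemma~\ref{lem:red-root-commute} together with $V^\sigma_{\mathrm{ext}}(T)=V^\sigma_{\mathrm{ext}}(\mathcal R_{u,u}(T))$. Your extra bookkeeping (Tonelli, unchanged $\sigma_T(z)$ and hence unchanged decorations $\vec k^{(z)}$, the $|\sigma|=1$ case) only makes explicit what the paper uses implicitly, and the root case you flag needs no separate appeal to \eqref{RedRooted}, since the multi-root versions of Lemma~\ref{lemm2} already allow a $u$-leg to attach at any root.
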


\begin{proof}
Recall from the definition of the $\hat{\mathcal A}$–amplitude that
\begin{multline}\label{Co1:def-wedge}
\hat{\mathcal A}^{\Lambda,\Lambda_0}_{l-1,n+2}(T)
\left(
z_1;(\tau_i,y_i)_{i\in\mathfrak s},
\bigl(\tfrac{1}{2\Lambda^2},u\bigr),
\bigl(\tfrac{1}{2\Lambda^2},u\bigr);\chi^\sigma
\right)
\\
=
\sum_{z\in V^\sigma_{\mathrm{ext}}(T)}
\sum_{\vec{k}\in\Sigma(\sigma)}
\int_{\mathbb R}dz\,
\Bigl(\int_0^1 \chi'\!\bigl(tz_1+(1-t)z\bigr)\,dt\Bigr)
\mathcal A^{\Lambda,\Lambda_0}_{l-1,n+2}
\!\left(T^{(2)}_z\right)
\left(
z_1,z;
(\tau_i,y_i)_{i\in\mathfrak s},
\bigl(\tfrac{1}{2\Lambda^2},u\bigr),
\bigl(\tfrac{1}{2\Lambda^2},u\bigr);
\chi^{\sigma}_{\vec k^{(z)}}
\right)
\\
+
\sum_{z,z'\in V^\sigma_{\mathrm{ext}}(T)}
\sum_{\vec{k}\in\Sigma(\sigma)}
\int_{\mathbb R}dz\int_{\mathbb R}dz'\,
\Bigl(\int_0^1 \chi'\!\bigl(tz+(1-t)z'\bigr)\,dt\Bigr)
\\
\times
\Bigl(
\mathcal A^{\Lambda,\Lambda_0}_{l-1,n+2}
\!\left(T^{(3)}_{z,z'}\right)
\left(
z_1,z,z';
(\tau_i,y_i)_{i\in\mathfrak s},
\bigl(\tfrac{1}{2\Lambda^2},u\bigr),
\bigl(\tfrac{1}{2\Lambda^2},u\bigr);
\chi^{\sigma}_{\vec k^{(z)}}
\right)
\\
+
\mathcal A^{\Lambda,\Lambda_0}_{l-1,n+2}
\!\left(T^{(3)}_{z,z'}\right)
\left(
z_1,z,z';
(\tau_i,y_i)_{i\in\mathfrak s},
\bigl(\tfrac{1}{2\Lambda^2},u\bigr),
\bigl(\tfrac{1}{2\Lambda^2},u\bigr);
\chi^{\sigma}_{\vec k^{(z')}}
\right)
\Bigr).
\end{multline}
We integrate \eqref{Co1:def-wedge} with respect to $u$.
Applying Lemma~\ref{lemm2} to the twice-rooted trees
$T^{(2)}_z\in\mathcal T^{s+2}_2$ yields
\begin{multline}\label{Co1:apply-lemm2}
\int_{\mathbb R}
\mathcal A^{\Lambda,\Lambda_0}_{l-1,n+2}
\!\left(T^{(2)}_z\right)
\left(
z_1,z;
(\tau_i,y_i)_{i\in\mathfrak s},
\bigl(\tfrac{1}{2\Lambda^2},u\bigr),
\bigl(\tfrac{1}{2\Lambda^2},u\bigr);
\chi^{\sigma}_{\vec k^{(z)}}
\right)du
\\
\lesssim
\Lambda~
\mathcal A^{\Lambda,\Lambda_0}_{l,n}
\!\left(\mathcal R_{u,u}(T^{(2)}_z)\right)
\left(
z_1,z;
(\tau_i,y_i)_{i\in\mathfrak s};
\chi^{\sigma}_{\vec k^{(z)}}
\right).
\end{multline}
Similarly, applying Lemma~\ref{lemm2} to the three-rooted trees
$T^{(3)}_{z,z'}\in\mathcal T^{s+2}_3$ gives
\begin{multline}
\int_{\mathbb R}
\mathcal A^{\Lambda,\Lambda_0}_{l-1,n+2}
\!\left(T^{(3)}_{z,z'}\right)
\left(
z_1,z,z';
(\tau_i,y_i)_{i\in\mathfrak s},
\bigl(\tfrac{1}{2\Lambda^2},u\bigr),
\bigl(\tfrac{1}{2\Lambda^2},u\bigr);
\chi^{\sigma}_{\vec k^{(\cdot)}}
\right)du
\\
\lesssim
\Lambda~
\mathcal A^{\Lambda,\Lambda_0}_{l,n}
\!\left(\mathcal R_{u,u}(T^{(3)}_{z,z'})\right)
\left(
z_1,z,z';
(\tau_i,y_i)_{i\in\mathfrak s};
\chi^{\sigma}_{\vec k^{(\cdot)}}
\right).
\end{multline}
Using the commutation between reduction and rooting,
\begin{equation}\label{Co1:root-red}
\mathcal R_{u,u}(T^{(2)}_z)
=
(\mathcal R_{u,u}(T))^{(2)}_z,
\qquad
\mathcal R_{u,u}(T^{(3)}_{z,z'})
=
(\mathcal R_{u,u}(T))^{(3)}_{z,z'},
\end{equation}
together with the invariance of the set of external vertices
\begin{equation}\label{Co1:Vext}
V^\sigma_{\mathrm{ext}}(T)
=
V^\sigma_{\mathrm{ext}}(\mathcal R_{u,u}(T)),
\end{equation}
we obtain
\[
\int_{\mathbb R}
\hat{\mathcal A}^{\Lambda,\Lambda_0}_{l-1,n+2}(T)\,du
\lesssim
\Lambda~
\hat{\mathcal A}^{\Lambda,\Lambda_0}_{l,n}
\!\left(\mathcal R_{u,u}(T)\right),
\]
which proves \eqref{borneThat}.
\end{proof}

\begin{corollary}[Reduction of global amplitudes]\label{CoRe}
Fix $n\ge2$ and $l\ge1$. Let $\mathfrak{s}\subset\mathbb N$ with
$s:=|\mathfrak{s}|+1$, and let $\sigma\subseteq\mathfrak{s}$.
Then one has
\begin{align}
\int_{\mathbb R}
\mathcal A_{l-1,n+2}^{\Lambda,\Lambda_0}\!\left(
z_1;(\tau_i,y_i)_{i\in\mathfrak s},
\left(\tfrac{1}{2\Lambda^2},u\right),
\left(\tfrac{1}{2\Lambda^2},u\right);
\chi^\sigma
\right)\,du
&\lesssim
\Lambda~
\mathcal A_{l,n}^{\Lambda,\Lambda_0}\!\left(
z_1;(\tau_i,y_i)_{i\in\mathfrak s};
\chi^\sigma
\right),
\label{Redu1}
\end{align}
and 
\begin{multline}
    \int_{\mathbb R}
\hat{\mathcal A}_{l-1,n+2}^{\Lambda,\Lambda_0}\!\left(
z_1;(\tau_i,y_i)_{i\in\mathfrak s},
\left(\tfrac{1}{2\Lambda^2},u\right),
\left(\tfrac{1}{2\Lambda^2},u\right);\chi^\sigma
\right)\,du\lesssim
\Lambda~
\hat{\mathcal A}_{l,n}^{\Lambda,\Lambda_0}\!\left(
z_1;(\tau_i,y_i)_{i\in\mathfrak s};\chi^\sigma
\right).
\label{Redu2}
\end{multline}
The implicit constants are independent of $\Lambda$ and $\Lambda_0$.
\end{corollary}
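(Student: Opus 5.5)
The plan is to reduce Corollary~\ref{CoRe} to the tree-wise estimates already established, namely \eqref{RedRooted} for $\mathcal A$ and Corollary~\ref{Co1} for $\hat{\mathcal A}$, and then to control the combinatorics of summing over trees. By the definition \eqref{AsumT}, the left-hand side of \eqref{Redu1} is a finite sum over trees $T\in\mathcal T_1^{\mathfrak s\cup\{u,u\}}$ with $0\le v(T)\le (l-1)+\frac{n+2}{2}-2=l+\frac n2-2$. Each summand also carries a sum over decorations $\vec k\in\Sigma(\sigma)$. Since the two new external labels attached to $u$ are undecorated, $\sigma\subseteq\mathfrak s$ is unchanged. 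Because all terms are nonnegative, we may interchange the $u$-integral with these finite sums and apply \eqref{RedRooted} (equivalently Lemma~\ref{lemm2} in its one-rooted version) to each tree and each decoration. This gives the bound $\Lambda\,\mathcal A^{\Lambda,\Lambda_0}_{l,n}(\mathcal R_{u,u}(T))(z_1;(\tau_i,y_i)_{i\in\mathfrak s};\chi^\sigma_{\vec k})$.

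The next step is to check that each reduced tree $\mathcal R_{u,u}(T)$ is an admissible index in the global amplitude $\mathcal A_{l,n}$. First, $\mathcal R_{u,u}(T)\in\mathcal T_1^{\mathfrak s}$ by the remark following Definition~\ref{RedT}. Second, $v(\mathcal R_{u,u}(T))\le v(T)$, because reduction only removes vertices. The step I expect to need the most care is the vertex count. In Lemma~\ref{lemm2}, the merged kernel $K(\delta/\Lambda^2;z_k,z_p)$ is reinterpreted as a new internal edge of scale $\Lambda$, which is admissible since $\Lambda\le\Lambda_{ij}$ is allowed in the supremum. The reduced object must therefore be understood as the tree obtained by joining $z_k$ and $z_p$. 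I would verify that this does not increase the number of internal vertices, so that $v\le l+\frac n2-2$ still holds. If the joined graph instead contains a cycle, I would bound it by deleting one edge, using $K\lesssim\Lambda_{ij}$ together with mass conservation, so that a tree remains. Finally, the map $T\mapsto\mathcal R_{u,u}(T)$ is finite-to-one with multiplicity bounded in terms of $(l,n)$ only. Summing over $T$ therefore yields \eqref{Redu1} with constants depending only on $l,n$, and in particular independent of $\Lambda,\Lambda_0$.

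For \eqref{Redu2} the argument is the same with Corollary~\ref{Co1} in place of \eqref{RedRooted}. Here the essential input is \eqref{Co1:Vext}: reduction preserves $V^\sigma_{\mathrm{ext}}$, since vertices carrying $\chi$-decorations are never removed. Together with the commutation of reduction and rooting from Lemma~\ref{lem:red-root-commute}, this ensures that the hatted structure of $\hat{\mathcal A}_{l-1,n+2}(T)$ is mapped onto that of $\hat{\mathcal A}_{l,n}(\mathcal R_{u,u}(T))$. The same counting of trees then gives the result.

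One further point needs checking: the distinguished vertices $z$ and $z'$ must not be removed by the reduction, as Lemma~\ref{lem:red-root-commute} requires. This holds because they belong to $V^\sigma_{\mathrm{ext}}$, and such vertices are never removed.
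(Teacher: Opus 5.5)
Your proposal is correct and follows the paper's own proof. Like the paper, you expand both global amplitudes into finite sums over trees and decorations, apply \eqref{RedRooted} (respectively Corollary~\ref{Co1}) tree by tree, observe that $\mathcal R_{u,u}(T)\in\mathcal T^s_1$ with $v(\mathcal R_{u,u}(T))\le v(T)\le l+\frac n2-2$, and reindex using the uniformly bounded multiplicity of $T\mapsto\mathcal R_{u,u}(T)$.

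Your detour about ``joining'' $z_k$ and $z_p$ is unnecessary. By Definition~\ref{RedT}, $\mathcal R_{u,u}$ simply deletes the two $u$-legs, and the merged kernel is bounded by its supremum $\lesssim\Lambda$, which is exactly where the factor $\Lambda$ comes from, so no cycle ever arises. If you did break such a cycle, only the merged edge may be removed: removing any other edge costs a factor $\Lambda_{ij}$, which can be as large as $\Lambda_0$.
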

\begin{proof}
We prove \eqref{Redu1}. The proof of \eqref{Redu2} is identical. Recall that
$|\mathfrak{s}|=s-1$ and that, by definition of the global amplitude at
order $l-1$,
\begin{multline}\label{eq:CoRe:treeexp}
\mathcal A_{l-1,n+2}^{\Lambda,\Lambda_0}\!\left(
z_1;(\tau_i,y_i)_{i\in\mathfrak{s}},
\left(\tfrac{1}{2\Lambda^2},u\right),
\left(\tfrac{1}{2\Lambda^2},u\right);
\chi^\sigma
\right)
\\
=
\sum_{\substack{
T\in\mathcal T^{s+2}_1\\
0\le v(T)\le l+\frac{n}{2}-2}}
\mathcal A_{l-1,n+2}^{\Lambda,\Lambda_0}(T)\!\left(
z_1;(\tau_i,y_i)_{i\in\mathfrak{s}},
\left(\tfrac{1}{2\Lambda^2},u\right),
\left(\tfrac{1}{2\Lambda^2},u\right);
\chi^\sigma
\right).
\end{multline}
Integrating \eqref{eq:CoRe:treeexp} with respect to $u$ and applying
\eqref{RedRooted} to each summand yields
\begin{multline}\label{eq:CoRe:afterCo1}
\int_{\mathbb R}
\mathcal A_{l-1,n+2}^{\Lambda,\Lambda_0}\!\left(
z_1;(\tau_i,y_i)_{i\in\mathfrak{s}},
\left(\tfrac{1}{2\Lambda^2},u\right),
\left(\tfrac{1}{2\Lambda^2},u\right);
\chi^\sigma
\right)du
\\
\lesssim
\Lambda
\sum_{\substack{
T\in\mathcal T^{s+2}_1\\
0\le v(T)\le l+\frac{n}{2}-2}}
\mathcal A_{l,n}^{\Lambda,\Lambda_0}\!\left(\mathcal R_{u,u}(T)\right)\!\left(
z_1;(\tau_i,y_i)_{i\in\mathfrak{s}};
\chi^\sigma
\right).
\end{multline}

By construction, the reduction $\mathcal R_{u,u}$ removes the two
external lines attached at $\{u,u\}$ and then eliminates all induced
valency-one vertices. Consequently,
$\mathcal R_{u,u}(T)\in\mathcal T^{s}_1$ whenever
$T\in\mathcal T^{s+2}_1$.

For fixed \(l,n\), \(s\), and \(\tilde T\in\mathcal T_1^s\), the number
of trees \(T\in\mathcal T_1^{s+2}\) satisfying
\[
v(T)\le l+\frac n2-2,
\qquad
\mathcal R_{u,u}(T)=\tilde T,
\]
is finite. Since the number of internal vertices is bounded by
\[
v(T)\le l+\frac n2-2,
\]
the length of every chain of valency-one vertices inserted before
reduction is uniformly bounded. Hence there
exists a constant \(C_{l,n,s}\) such that
\begin{equation}\label{eq:CoRe:multiplicity}
\#\Bigl\{
T\in\mathcal T^{s+2}_1:
v(T)\le l+\frac n2-2,\ 
\mathcal R_{u,u}(T)=\tilde T
\Bigr\}
\le C_{l,n,s}.
\end{equation}
Rewriting the sum according to the value of the reduced tree yields
\[
\sum_T
\mathcal A_{l,n}\!\left(\mathcal R_{u,u}(T)\right)
=
\sum_{\tilde T}
\#\Bigl\{
T:\mathcal R_{u,u}(T)=\tilde T
\Bigr\}
\mathcal A_{l,n}(\tilde T).
\]
Applying \eqref{eq:CoRe:multiplicity} gives 
\begin{equation}\label{eq:CoRe:reindex}
\sum_{\substack{
T\in\mathcal T^{s+2}_1\\
0\le v(T)\le l+\frac{n}{2}-2}}
\mathcal A_{l,n}^{\Lambda,\Lambda_0}\!\left(\mathcal R_{u,u}(T)\right)\!\left(
z_1;(\tau_i,y_i)_{i\in\mathfrak{s}};
\chi^\sigma
\right)
\lesssim
\sum_{\substack{
\tilde T\in\mathcal T^{s}_1\\
0\le v(\tilde T)\le l+\frac{n}{2}-2}}
\mathcal A_{l,n}^{\Lambda,\Lambda_0}(\tilde T)\!\left(
z_1;(\tau_i,y_i)_{i\in\mathfrak{s}};
\chi^\sigma
\right).
\end{equation}
The right-hand side is precisely the definition of
\[
\mathcal A_{l,n}^{\Lambda,\Lambda_0}\!\left(
z_1;(\tau_i,y_i)_{i\in\mathfrak{s}};
\chi^\sigma
\right),
\]
and inserting \eqref{eq:CoRe:reindex} into
\eqref{eq:CoRe:afterCo1} yields \eqref{Redu1}.
\end{proof}
\begin{remark}
    The same result holds for the twice-rooted global amplitudes. Namely we have 
\begin{align}
\int_{\mathbb R}
\mathcal A_{l-1,n+2}^{\Lambda,\Lambda_0}\!\left(
z_1,z_2;(\tau_i,y_i)_{i\in\mathfrak s},
\left(\tfrac{1}{2\Lambda^2},u\right),
\left(\tfrac{1}{2\Lambda^2},u\right)
\right)\,du
&\lesssim
\Lambda~
\mathcal A_{l,n}^{\Lambda,\Lambda_0}\!\left(
z_1,z_2;(\tau_i,y_i)_{i\in\mathfrak s}
\right).
\label{DouRed}
\end{align}
\end{remark}
\subsubsection{Fusion of tree amplitudes}
The quadratic part of the flow equation involves products of two
correlators connected by a covariance kernel. At the level of tree
amplitudes, this operation joins two rooted trees through an additional
internal edge and produces a new tree amplitude. The purpose of this
subsection is to formalize this construction and to establish the
corresponding bounds.

To describe the distribution of the external variables among the two
factors, we introduce the following notation: let $n\ge2$ and set
\[
\mathfrak I_n:=\{2,\dots,n\}.
\]
Let $\pi=\{\pi_1,\pi_2\}$ be a partition of $\mathfrak I_n$.
Given $\mathfrak s\subseteq\mathfrak I_n$ and $\sigma\subseteq\mathfrak s$
we define
\[
\mathfrak s_i(\pi):=\mathfrak s\cap\pi_i,
\qquad
\sigma_i(\pi):=\sigma\cap\pi_i,
\qquad
s_i:=|\mathfrak s_i(\pi)|.
\]
In the sequel we slightly abuse notation and write
\[
\mathfrak s_i:=\mathfrak s_i(\pi),
\qquad
\sigma_i:=\sigma_i(\pi).
\]
We also introduce the difference operators
\[
\Delta_{z_1}\chi(z):=\chi(z)-\chi(z_1),
\qquad
\Delta^{(1)}_{z,z'}\chi
:=
\int_0^1\chi'\left(tz+(1-t)z'\right)\,dt,
\]
as well as the difference of heat kernels
\begin{equation}
\Delta K_{\tau,y}(z_1,z_2)
:=
K(\tau;z_1,y)-K(\tau;z_2,y).
\end{equation}
Let $T_1\in\mathcal T^{s_1+1}_1$ and $T_2\in\mathcal T^{s_2+1}_1$. 
For amplitudes $\mathcal B_1,\mathcal B_2$ of the form
$\mathcal A$ or $\hat{\mathcal A}$ we define the fusion product
\begin{multline}\label{fusion-general}
\mathcal B_1(T_1)
\otimes_\pi
\mathcal B_2(T_2)
\Bigl(
z_1;
(\tau_i,y_i)_{i\in\mathfrak s};
\chi^\sigma
\Bigr)
\\
:=
\int_{\mathbb R}du\;
\mathcal B_{1}^{\Lambda,\Lambda_0}(T_1)
\Bigl(
z_1;
(\tau_i,y_i)_{i\in\mathfrak s_1},
(\tfrac{1}{2\Lambda^2},u);
\chi^{\sigma_1}
\Bigr)
\\
\times
\int_{\mathbb R}dz\;
\mathcal B_{2}^{\Lambda,\Lambda_0}(T_2)
\Bigl(
z;
(\tau_i,y_i)_{i\in\mathfrak s_2},
(\tfrac{1}{2\Lambda^2},u);
\chi^{\sigma_2}
\Bigr).
\end{multline}
The covariance connecting the two factors is represented by the common
integration variable \(u\), while the second root variable \(z\) is
integrated against the second amplitude.
If the label $\sigma$ is omitted on the left-hand side,
this means that $\chi\equiv\mathds 1$ is taken in the
right-hand side of \eqref{fusion-general}. If a weight \(W(z_1,z)\) is present (for example
\(\Delta_{z_1}\chi(z)\) or \(\Delta K_{\tau,y}(z_1,z)\)),
the fusion product is defined by inserting this factor into the second
integral in \eqref{fusion-general}. This convention allows us to treat
the various remainder terms generated by Taylor expansions within the
same formalism.\\
\paragraph{{Fusion of Tree Amplitudes without insertions}}
The purpose of this part is to control the fusion product introduced
above. We first prove that ordinary tree amplitudes are stable under
fusion and that the fusion operation is represented combinatorially by
the fusion tree introduced in Definition~\ref{defFusi}. We then extend
this result to amplitudes containing Taylor remainder insertions,
encoded by the hatted amplitudes. The final estimate shows that the
fusion of two hatted amplitudes can again be controlled by a single
hatted amplitude at the expense of the factor
$$
\sum_i \tau_i^{-1/2}+\Lambda
$$
which comes from the reduction of derivative insertions.

We begin with the fusion of ordinary amplitudes. The following result
shows that the fusion product is represented at the level of trees by
the combinatorial fusion operation introduced in
Definition~\ref{defFusi}.

\begin{lemma}[Stability of amplitudes under fusion]\label{lem:FusionBound_tight}
Let $l_1,l_2\in\mathbb N_0$ and let
$T_1\in\mathcal T^{s_1+1}_1$, $T_2\in\mathcal T^{s_2+1}_1$.
Denote by $\mathcal F_u(T_1,T_2)$ the fusion tree introduced in
Definition~\ref{defFusi}, and set
\[
\hat n=\max\left(n_1+1,n_2+1\right),\qquad
l_1\vee l_2:=\max(l_1,l_2).
\]
Then one has
\begin{align}
\mathcal A^{\Lambda,\Lambda_0}_{l_1,n_1+1}(T_1)\otimes_\pi
\mathcal A^{\Lambda,\Lambda_0}_{l_2,n_2+1}(T_2)\left(z_1,z_2;\cdot\right)
&\lesssim
\mathcal A^{\Lambda,\Lambda_0}_{l_1\vee l_2,\hat n}
\bigl(\mathcal F_u(T_1,T_2)^{(2)}_{z_2}\bigr)\left(z_1,z_2;\cdot\right),
\label{eq:FusionTight0}
\\
\mathcal A^{\Lambda,\Lambda_0}_{l_1,n_1+1}(T_1)\otimes_\pi
\mathcal A^{\Lambda,\Lambda_0}_{l_2,n_2+1}(T_2)
&\lesssim
\mathcal A^{\Lambda,\Lambda_0}_{l_1\vee l_2,\hat n}
\bigl(\mathcal F_u(T_1,T_2)\bigr),
\label{eq:FusionTight1}
\\[0.3em]
\hat{\mathcal A}^{\Lambda,\Lambda_0}_{l_1,n_1+1}(T_1)\otimes_\pi
\mathcal A^{\Lambda,\Lambda_0}_{l_2,n_2+1}(T_2)
&\lesssim
\hat{\mathcal A}^{\Lambda,\Lambda_0}_{l_1\vee l_2,\hat n}
\bigl(\mathcal F_u(T_1,T_2)\bigr).
\label{eq:FusionTight5}
\end{align}
\end{lemma}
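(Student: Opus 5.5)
The proof unfolds the fusion product \eqref{fusion-general} directly from the Feynman rules of Definition~\ref{FR} and identifies the resulting integrand, factor by factor, with the integrand of the fusion tree amplitude. Write both amplitudes via \eqref{AmT}. The internal vertices of $T_1$ and $T_2$ are integrated; the root $z_1$ of $T_1$ stays fixed. The root $z$ of $T_2$ is integrated in \eqref{eq:FusionTight1} and kept fixed (renamed $z_2$) in \eqref{eq:FusionTight0}. The external vertex $u$ appears in both trees, carrying the kernels $K(\delta_{l_1,n_1+1}/(2\Lambda^2);v_1,u)$ and $K(\delta_{l_2,n_2+1}/(2\Lambda^2);v_2,u)$, where $v_1,v_2$ are the vertices adjacent to $u$ in $T_1$ and $T_2$.

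\textbf{Step 1 (common parameter).} First replace every $\delta_{l_i,n_i+1}$ by $\delta_{l_1\vee l_2,\hat n}\ge\delta_{l_i,n_i+1}$ using the monotonicity \eqref{HKMONO}, as in Lemma~\ref{lem:monotone_l}. The cutoff decorations $\chi^{k_j}(z_i)$ are nonnegative, independent of $(l,n)$, and attached to external edges of $T_1$ or $T_2$. These edges remain external edges of $\mathcal F_u(T_1,T_2)$, and $\sigma=\sigma_1\sqcup\sigma_2$. Hence a pair $\vec k_1\in\Sigma(\sigma_1)$, $\vec k_2\in\Sigma(\sigma_2)$ concatenates to an element of $\Sigma(\sigma)$, and summing over such pairs is dominated by summing over $\Sigma(\sigma)$.

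\textbf{Step 2 (merging at $u$).} Integrate over $u$ and apply the semigroup property, exactly as in \eqref{eq:lemm2:semigroup}:
\[
\int_{\mathbb R}K\!\left(\tfrac{\delta}{2\Lambda^2};v_1,u\right)K\!\left(\tfrac{\delta}{2\Lambda^2};v_2,u\right)du=K\!\left(\tfrac{\delta}{\Lambda^2};v_1,v_2\right).
\]
This is an internal edge kernel with scale $\Lambda_{v_1v_2}=\Lambda\in[\Lambda,\Lambda_0]$. It is therefore admissible in the supremum defining $\mathcal A(\mathcal F_u(T_1,T_2))$, and by Definition~\ref{defFusi} it is precisely the new edge created by the fusion.

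\textbf{Step 3 (suprema and root of $T_2$).} The supremum of a product of the two independent suprema over internal scales of $T_1$ and $T_2$ is bounded by the joint supremum over scales of $\mathcal F_u(T_1,T_2)$, with the new edge frozen at $\Lambda$. By Fubini, the resulting integrand is that of $\mathcal A_{l_1\vee l_2,\hat n}(\mathcal F_u(T_1,T_2))$. The root $z$ of $T_2$ becomes an internal vertex of the fused tree (consistent with \eqref{eg30}), or, if $z$ is not integrated, an additional root, which gives \eqref{eq:FusionTight0} through $\mathcal F_u(T_1,T_2)^{(2)}_{z_2}$. Integrating over $z_2$ and using \eqref{ADRT} gives \eqref{eq:FusionTight1}.

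\textbf{Step 4 (hatted estimate).} For \eqref{eq:FusionTight5}, expand $\hat{\mathcal A}(T_1)$ via \eqref{equation72}. Each summand is an $\mathcal A$-amplitude of $(T_1)^{(2)}_{z}$ or $(T_1)^{(3)}_{zz'}$ integrated against $\Delta^{(1)}\chi\ge0$, with $z,z'\in V^\sigma_{\mathrm{ext}}(T_1)$. Apply Steps 1--3 to these multi-rooted trees, keeping the distinguished vertices fixed; fusion with $T_2$ only adds structure and does not touch $z,z'$. Use that rooting commutes with fusion, $\mathcal F_u\bigl((T_1)^{(2)}_z,T_2\bigr)=\mathcal F_u(T_1,T_2)^{(2)}_z$, in the spirit of Lemma~\ref{lem:red-root-commute}. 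Use also that $V^\sigma_{\mathrm{ext}}(T_1)\subseteq V^\sigma_{\mathrm{ext}}(\mathcal F_u(T_1,T_2))$, since the external edges carrying $\sigma$-labels survive fusion. Under concatenation with the decoration of $T_2$, the decorations $\vec k^{(z)}$ match the corresponding decorations in the hatted amplitude of the fused tree. Summing gives a subset of the terms defining $\hat{\mathcal A}(\mathcal F_u(T_1,T_2))$, and positivity concludes.

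\textbf{Main obstacle.} I expect the main difficulty to be the bookkeeping in Step 4, in particular the decorations. The defining sum of $\hat{\mathcal A}$ ranges over $\vec k\in\Sigma(\sigma)$ with total weight $|\sigma|$, and I must check that the concatenated decorations, including the lowered $\vec k^{(z)}$, land inside that index set. I also need that no $\chi$-factor is lost when $u$ is removed; this holds because $u$ is the covariance variable and is never a $\sigma$-label. If the weights fail to match exactly, I would fall back on \eqref{eq:borneamplitudechi}, absorbing the mismatch into a constant.
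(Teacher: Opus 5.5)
Your route is the paper's own. You enlarge every diffusion parameter to $\delta_{l_1\vee l_2,\hat n}$, merge the two $u$-kernels by the semigroup property, recognise the fused tree, pass to \eqref{eq:FusionTight1} via \eqref{ADRT}, and expand \eqref{equation72} for the hatted bound. The obstacle you anticipate in Step 4 is harmless: the $\sigma$-labelled external edges at a vertex of $T_1$ are unchanged in $\mathcal F_u(T_1,T_2)$, so concatenating $\vec k_1\in\Sigma(\sigma_1)$ and $\vec k_2\in\Sigma(\sigma_2)$ commutes with $\vec k\mapsto\vec k^{(z)}$.

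The real gap is in Steps 2--3, where you handle the suprema. In \eqref{AmT} the supremum over internal scales sits \emph{outside} the integrals over internal vertices. In \eqref{fusion-general}, by contrast, the integrations over $u$ and over the root $z$ of $T_2$ sit outside the suprema of the two factors. The identity $\sup_a f_a\,\sup_b g_b=\sup_{a,b}f_ag_b$ holds only pointwise in $(u,z)$. To apply the semigroup identity and to recognise $\mathcal A(\mathcal F_u(T_1,T_2))$, you must move $\int du$ and $\int dz$ inside the supremum, i.e.\ you need $\int\sup\lesssim\sup\int$. That is the wrong direction, and Fubini does not provide it. For the same reason \eqref{ADRT} is really only the inequality $\mathcal A(T)\le\int dz_k\,\mathcal A(T^{(2)}_{z_k})$, and deducing \eqref{eq:FusionTight1} from \eqref{eq:FusionTight0} uses it backwards. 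The paper's proof passes over exactly the same point, so following it does not help.

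This is not a removable technicality: \eqref{eq:FusionTight1} fails with a $\Lambda_0$-independent constant. Take $\sigma=\varnothing$ and let $T_1$ be the root $z_1$ joined directly to $u$. Let $T_2$ have root $z$, one internal vertex $w$, internal edge $(z,w)$, and external edges $(w,y_j)$, $(w,u)$, $(z,y_k)$. Choose $y_k=y_j$, $\tau_k=\Lambda^{-2}$, $|z_1-y_j|=\Lambda^{-1}$, write $\delta:=\delta_{l_2,n_2+1}$, and let $\tau_j\to0$. By Fatou's lemma the left-hand side of \eqref{eq:FusionTight1} is then eventually bounded below by a constant times
\[
K\Bigl(\frac{\delta_{l_1,n_1+1}+\delta}{2\Lambda^2};z_1,y_j\Bigr)\int_{\mathbb R}dz\,K\Bigl(\frac{\delta}{\Lambda^2};z,y_j\Bigr)\sup_{\delta\Lambda_0^{-2}\le t\le\delta\Lambda^{-2}}K(t;z,y_j)\gtrsim\Lambda^2\log\frac{\Lambda_0}{\Lambda}.
\]
This holds because $\sup_tK(t;z,y_j)\gtrsim|z-y_j|^{-1}$ on $\sqrt\delta\,\Lambda_0^{-1}\le|z-y_j|\le\sqrt\delta\,\Lambda^{-1}$. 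In $\mathcal A_{l_1\vee l_2,\hat n}(\mathcal F_u(T_1,T_2))$, the $z$-integral is instead performed under a fixed choice of scales. This yields the upper bound $(2\pi\delta_{l_1\vee l_2,\hat n}\tau_k)^{-1/2}\sup_{s>0}K(s;z_1,y_j)\lesssim\Lambda^2$.

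Hence Steps 3 and 4 cannot be completed by bookkeeping. A correct statement needs a different placement of the suprema in \eqref{AmT} or \eqref{fusion-general}, or a $\Lambda_0$-dependent constant. Even \eqref{eq:FusionTight0} needs a genuine argument for interchanging $\int du$ with the suprema.
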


\begin{proof}
The proof consists of three steps.
First, we enlarge all diffusion parameters to the common scale
\(\delta_{l_1\vee l_2,\hat n}\).
Second, we merge the two fusion kernels using the semigroup property.
Third, we identify the resulting expression with the amplitude
associated with the fused tree. First we prove \eqref{eq:FusionTight0}. By definition of the
$\otimes_\pi$--product,
\begin{multline}\label{eq:FusionRep_tight}
\mathcal A_{l_1,n_1+1}^{\Lambda,\Lambda_0}(T_1)\otimes_\pi
\mathcal A_{l_2,n_2+1}^{\Lambda,\Lambda_0}(T_2)
\left(z_1,z_2;(\tau_i,y_i)_{i\in\mathfrak{s}};\chi^{\sigma}\right)
\\
=
\int_{\mathbb R}du\;
\mathcal A_{l_1,n_1+1}^{\Lambda,\Lambda_0}(T_1)
\Bigl(
z_1;
(\tau_i,y_i)_{i\in\mathfrak s_1},
(\tfrac{1}{2\Lambda^2},u);
\chi^{\sigma_1}
\Bigr)
\\
\times
\mathcal A_{l_2,n_2+1}^{\Lambda,\Lambda_0}(T_2)
\Bigl(
z_2;
(\tau_i,y_i)_{i\in\mathfrak s_2},
(\tfrac{1}{2\Lambda^2},u);
\chi^{\sigma_2}
\Bigr).
\end{multline}
Using the representation of the amplitudes given by the Feynman rules in \eqref{AmT}
and the definition of the decorations in \eqref{Sumkchi}, the integrand in
\eqref{eq:FusionRep_tight} can be written as a sum over pairs of admissible
multi-indices
\[
\vec k^{(1)}\in\Sigma(\sigma_1),
\qquad
\vec k^{(2)}\in\Sigma(\sigma_2).
\]
Fix such a pair. The corresponding term contains all internal and
external kernels of $T_1$ at scale $\delta_{l_1,n_1+1}$, all kernels of
$T_2$ at scale $\delta_{l_2,n_2+1}$, together with the two fusion kernels
\[
K\!\left(\frac{\delta_{l_1,n_1+1}}{2\Lambda^2};z,u\right),
\qquad
K\!\left(\frac{\delta_{l_2,n_2+1}}{2\Lambda^2};z',u\right).
\]
Since $l_i\le l_1\vee l_2$ and $n_i+1\le \hat n$, we obtain
\[
\delta_{l_i,n_i+1}\le \delta_{l_1\vee l_2,\hat n},
\qquad i\in\{1,2\}.
\]
By monotonicity of the heat kernel, every kernel in the integrand can
therefore be bounded by the corresponding one at scale
$\delta_{l_1\vee l_2,\hat n}$. In particular,
\[
K\!\left(\frac{\delta_{l_1,n_1+1}}{2\Lambda^2};z,u\right)
\lesssim
K\!\left(\frac{\delta_{l_1\vee l_2,\hat n}}{2\Lambda^2};z,u\right),
\qquad
K\!\left(\frac{\delta_{l_2,n_2+1}}{2\Lambda^2};z',u\right)
\lesssim
K\!\left(\frac{\delta_{l_1\vee l_2,\hat n}}{2\Lambda^2};z',u\right).
\]
Applying the semigroup identity then yields
\[
\int_{\mathbb R}
K\!\left(\frac{\delta_{l_1\vee l_2,\hat n}}{2\Lambda^2};z,u\right)
K\!\left(\frac{\delta_{l_1\vee l_2,\hat n}}{2\Lambda^2};z',u\right)\,du
=
K\!\left(\frac{\delta_{l_1\vee l_2,\hat n}}{\Lambda^2};z,z'\right),
\]
which is precisely the kernel associated with the new internal edge in
the fused tree $\mathcal F_u(T_1,T_2)$.

The remaining kernels and cutoff factors coincide with those of the
fused tree, with decoration $\chi^\sigma_{\vec k}$, where
$\vec k\in\Sigma(\sigma)$ is obtained by combining
$\vec k^{(1)}$ and $\vec k^{(2)}$ on the disjoint sets
$\sigma_1$ and $\sigma_2$. Hence each summand is bounded by
\[
\mathcal A^{\Lambda,\Lambda_0}_{l_1\vee l_2,\hat n}
\bigl(
\mathcal F_u(T_1,T_2)^{(2)}_{z_2};\chi^\sigma_{\vec k}
\bigr).
\]
Summing over all admissible pairs
$(\vec k^{(1)},\vec k^{(2)})$ yields \eqref{eq:FusionTight0}.\\
The estimate \eqref{eq:FusionTight1} is obtained by integrating
\eqref{eq:FusionTight0} over $z_2$ and using \eqref{ADRT}.\\
The proof of \eqref{eq:FusionTight5} is identical. Expanding the
decorations using \eqref{equation72} and repeating the above argument
shows that the hatted factor $\Delta_{\cdot,\cdot}^{(1)}\chi$ is transported
unchanged through the fusion, which gives
\[
\hat{\mathcal A}^{\Lambda,\Lambda_0}_{l_1,n_1+1}(T_1)\otimes_\pi
\mathcal A^{\Lambda,\Lambda_0}_{l_2,n_2+1}(T_2)
\lesssim
\hat{\mathcal A}^{\Lambda,\Lambda_0}_{l_1\vee l_2,\hat n}
\bigl(\mathcal F_u(T_1,T_2)\bigr),
\]
as claimed.
\end{proof}

\begin{lemma}\label{lem:FusionDeltaChi}
Let $(l_1,l_2)\in\mathbb N_0\times\mathbb N$, $(n_1,n_2)\in\mathbb N^2$
be such that $n_1\ge1$ and $n_2\ge3$, and set $n:=n_1+n_2$.
Let $T_1\in\mathcal T^{s_1+1}_1$ and $T_2\in\mathcal T^{s_2+1}_1$. Then one has
\begin{equation}\label{eq:FusionDeltaChi}
\mathcal A_{l_1,n_1+1}^{\Lambda,\Lambda_0}(T_1)
\otimes_\pi
\hat{\mathcal A}_{l_2,n_2+1}^{\Lambda,\Lambda_0}(T_2)
\lesssim
\hat{\mathcal A}_{l_1\vee l_2,n}^{\Lambda,\Lambda_0}
\bigl(\mathcal F_u(T_1,T_2)\bigr).
\end{equation}
\end{lemma}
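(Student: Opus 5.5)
The plan mirrors the proof of \eqref{eq:FusionTight5} in Lemma~\ref{lem:FusionBound_tight}, with the roles of the two factors exchanged. The hatted factor now sits on $T_2$, whose root $z$ is integrated in the fusion product \eqref{fusion-general}. I will expand $\hat{\mathcal A}^{\Lambda,\Lambda_0}_{l_2,n_2+1}(T_2)$ through Definition~\ref{def:hat-amplitude}. This produces two kinds of terms. The first kind carries a remainder $\Delta^{(1)}_{z,w}\chi$ linking the root $z$ of $T_2$ to a distinguished vertex $w\in V^{\sigma_2}_{\mathrm{ext}}(T_2)$, together with the doubly rooted amplitude $\mathcal A(T^{(2)}_{2,w})$. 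The second kind carries $\Delta^{(1)}_{w,w'}\chi$ between two distinguished vertices, with the triply rooted amplitude $\mathcal A(T^{(3)}_{2,ww'})$. I will also expand the decorations of $T_1$ and $T_2$ into sums over $\vec k^{(1)}\in\Sigma(\sigma_1)$ and $\vec k^{(2)}$ as in the proof of Lemma~\ref{lem:FusionBound_tight}. After this, every term is a positive integrand of heat kernels, cutoff powers and one nonnegative factor $\Delta^{(1)}\chi$; positivity uses $\chi'\ge0$.

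For each fixed term, I will first enlarge all diffusion parameters to $\delta_{l_1\vee l_2,n}$ by the monotonicity \eqref{eq:Kmono_nohat}. This needs $\delta_{l_i,n_i+1}\le\delta_{l_1\vee l_2,n}$, and it is where the hypotheses $n_1\ge1$, $n_2\ge3$ enter: they give $n_1+1\le n$ and $n_2+1\le n$. Then I will merge the two fusion kernels $K(\cdot/2\Lambda^2;\cdot,u)$ by the semigroup identity, which creates the new internal edge of $\mathcal F_u(T_1,T_2)$. The vertices $w,w'$ are still internal vertices of the fused tree. Moreover, $V^{\sigma_2}_{\mathrm{ext}}(T_2)\subseteq V^{\sigma}_{\mathrm{ext}}(\mathcal F_u(T_1,T_2))$, since fusion only removes the vertex $u$, which is not a $\sigma$-external edge.

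The terms with $\Delta^{(1)}_{w,w'}\chi$ are then, after combining $\vec k^{(1)}$ and $\vec k^{(2)}$ into $\vec k\in\Sigma(\sigma)$ with the modified decoration $\vec k^{(w)}$, exactly the triply rooted summands in the second line of \eqref{equation72} for the fused tree. The delicate terms are those with $\Delta^{(1)}_{z,w}\chi$. Here $z$ is the root of $T_2$, but in $\mathcal F_u(T_1,T_2)$ it becomes an ordinary internal vertex, integrated out, and the root is $z_1$. I will treat $z$ as an additional distinguished vertex. The integrand is then the triply rooted amplitude $\mathcal A(\mathcal F_u(T_1,T_2)^{(3)}_{zw})$ weighted by $\Delta^{(1)}_{z,w}\chi$, integrated in $z$ and $w$, which has the shape of the second sum in \eqref{equation72}.

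\textbf{Main obstacle.} The vertex $z$ need not belong to $V^\sigma_{\mathrm{ext}}$ of the fused tree, so this term is not literally a summand of \eqref{equation72}. My first attempt will use the symmetry of $\Delta^{(1)}_{z,w}\chi$ and choose the labelling so that the decoration is lowered only at $w\in V^{\sigma}_{\mathrm{ext}}$. I expect this bookkeeping to allow $z$ in one slot. If that fails, I will fall back on Lemma~\ref{Lemmachi'} and Lemma~\ref{LemmeDeriv}, via Remark \eqref{Transport3}, to transport the localisation onto the external vertex attached to $w$. Finally, summing over all decorations and over the distinguished vertices, with a combinatorial constant depending only on the trees, gives \eqref{eq:FusionDeltaChi}.
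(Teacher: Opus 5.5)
Your plan follows the paper's own proof of this lemma. You expand $\hat{\mathcal A}_{l_2,n_2+1}(T_2)$ through \eqref{equation72}, enlarge all diffusion parameters to $\delta_{l_1\vee l_2,n}$, merge the two fusion kernels, and read off summands of $\hat{\mathcal A}_{l_1\vee l_2,n}(\mathcal F_u(T_1,T_2))$. Your handling of the $\Delta^{(1)}_{w,w'}\chi$ family is correct. The gap is exactly the step you flag, and neither of your remedies closes it.

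\emph{Relabelling.} The pair sum in \eqref{equation72} runs only over $w,w'\in V^\sigma_{\mathrm{ext}}$. The symmetry of $\Delta^{(1)}_{z,w}\chi$ only decides at which vertex the decoration is lowered. It cannot make the former root $z$ of $T_2$ an admissible index.

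\emph{Transport fallback.} Going through Lemma~\ref{Lemmachi'}, Lemma~\ref{LemmeDeriv} and \eqref{Transport3} yields a factor $\sum_i\tau_i^{-1/2}$ in front of an \emph{unhatted} amplitude at a raised index $n'>n_2+1$. This index exceeds $n$ when $n_1=1$. In addition, $z$ is tied to the fixed root $z_1$ by the fusion edge, so differentiating that edge costs a factor $\Lambda$. Since $\tau_i\to0$ is allowed, this factor cannot be hidden in $\lesssim$. What you would prove is a bound of the type \eqref{Aa11}, not \eqref{eq:FusionDeltaChi}.

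In fact the step cannot be completed as stated. Here is a counterexample.
\begin{itemize}
\item Take $\mathfrak s_1=\varnothing$, so $T_1$ is just $z_1$ joined to $u$.
\item Let $T_2$ have root $z$ carrying the $u$-leg and an undecorated leg $y_b$, joined by one internal edge to a vertex $w$ that carries the only decorated leg $y_a$.
\item Then $V^\sigma_{\mathrm{ext}}(\mathcal F_u(T_1,T_2))=\{w\}$. Every localisation factor on the right-hand side therefore depends only on $z_1$ and $w$.
\item Choose $\chi\in\mathcal K$ with $\chi=0$ on $(-\infty,-\varepsilon]$ and $\chi=1$ on $[\varepsilon,\infty)$. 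Put $z_1=y_a=-\eta$ and $y_b=\eta$ with $\eta>2\varepsilon$.
\item As $\tau_a,\tau_b\to0$, the left-hand side stays bounded below by a positive multiple of $\Delta^{(1)}_{\eta,-\eta}\chi=(2\eta)^{-1}$, because $z\to\eta$ and $w\to-\eta$ straddle the step.
\item On the right-hand side the localisation factor vanishes unless $w>-\varepsilon$. This costs $O\bigl(\exp(-c(\eta-\varepsilon)^2/\tau_a)\bigr)$, even at fixed $\Lambda_0$.
\end{itemize}
Hence no constant uniform in $(\tau_i,y_i)$ exists.

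The paper's proof passes over this point by simply asserting that the fused first-family term is ``one of the terms appearing in the definition''. That assertion holds only if the root of $T_2$ carries a $\sigma_2$-decorated leg. This is precisely the hypothesis the paper imposes in Lemma~\ref{LemIm} and Lemma~\ref{lem:Delta-chi-fusion}. With that hypothesis added, $z\in V^\sigma_{\mathrm{ext}}(\mathcal F_u(T_1,T_2))$ and your first attempt works verbatim. The term with decoration $\vec k^{(w)}$ is then exactly the $\chi^\sigma_{\vec k^{(w)}}$-summand attached to the pair $(z,w)$ in \eqref{equation72}. Without that hypothesis, or without enlarging the index set in Definition~\ref{def:hat-amplitude}, the lemma as stated is not provable.
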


\begin{proof}
Recall that
\begin{multline}\label{eq:tensor-hat-compact-proof}
\Bigl(
\mathcal A_{l_1,n_1+1}^{\Lambda,\Lambda_0}(T_1)\otimes_\pi
\hat{\mathcal A}_{l_2,n_2+1}^{\Lambda,\Lambda_0}(T_2)
\Bigr)
\bigl(
z_1;(\tau_i,y_i)_{i\in\mathfrak s};\chi^\sigma
\bigr)
\\
=
\int_{\mathbb R}du\;
\mathcal A_{l_1,n_1+1}^{\Lambda,\Lambda_0}(T_1)
\Bigl(
z_1;
(\tau_i,y_i)_{i\in\mathfrak s_1},
\bigl(\tfrac{1}{2\Lambda^2},u\bigr);
\chi^{\sigma_1}
\Bigr)
\\
\times
\int_{\mathbb R}dz_2\;
\hat{\mathcal A}_{l_2,n_2+1}^{\Lambda,\Lambda_0}(T_2)
\Bigl(
z_2;
(\tau_i,y_i)_{i\in\mathfrak s_2},
\bigl(\tfrac{1}{2\Lambda^2},u\bigr);
\chi^{\sigma_2}
\Bigr).
\end{multline}
We expand both factors according to the definitions of
$\mathcal A$ and $\hat{\mathcal A}$. For the first factor, we write
\begin{multline}\label{eq:proof-fusion-hat:first-factor}
\mathcal A_{l_1,n_1+1}^{\Lambda,\Lambda_0}(T_1)
\bigl(
z_1;
(\tau_i,y_i)_{i\in\mathfrak s_1},
\bigl(\tfrac{1}{2\Lambda^2},u\bigr);
\chi^{\sigma_1}
\bigr)
\\
=
\sum_{\vec k^{(1)}\in\Sigma(\sigma_1)}
\mathcal A_{l_1,n_1+1}^{\Lambda,\Lambda_0}(T_1)
\bigl(
z_1;
(\tau_i,y_i)_{i\in\mathfrak s_1},
\bigl(\tfrac{1}{2\Lambda^2},u\bigr);
\chi^{\sigma_1}_{\vec k^{(1)}}
\bigr).
\end{multline}
For the second factor, using the definition of $\hat{\mathcal A}$ and
separating the contributions in which one of the two distinguished vertices is
the root $z_2$ from those in which both distinguished vertices belong to
$V^{\sigma_2}_{\mathrm{ext}}(T_2)$, we obtain
\begin{multline}\label{eq:proof-fusion-hat:second-factor}
\hat{\mathcal A}_{l_2,n_2+1}^{\Lambda,\Lambda_0}(T_2)
\bigl(
z_2;
(\tau_i,y_i)_{i\in\mathfrak s_2},
\bigl(\tfrac{1}{2\Lambda^2},u\bigr);
\chi^{\sigma_2}
\bigr)
\\
=
\sum_{\vec k^{(2)}\in\Sigma(\sigma_2)}
\sum_{z\in V^{\sigma_2}_{\mathrm{ext}}(T_2)}
\int_{\mathbb R}dz\;
\Delta^{(1)}_{z_2,z}\chi\,
\mathcal A_{l_2,n_2+1}^{\Lambda,\Lambda_0}
\bigl(
T^{(2)}_{2,z}
\bigr)
\bigl(
z_2,z;
(\tau_i,y_i)_{i\in\mathfrak s_2},
\bigl(\tfrac{1}{2\Lambda^2},u\bigr);
\chi^{\sigma_2}_{(\vec k^{(2)})^{(z)}}
\bigr)
\\
\quad+
\sum_{\vec k^{(2)}\in\Sigma(\sigma_2)}
\sum_{z,z'\in V^{\sigma_2}_{\mathrm{ext}}(T_2)}
\int_{\mathbb R}dz\int_{\mathbb R}dz'\;
\Delta^{(1)}_{z,z'}\chi\,
\mathcal A_{l_2,n_2+1}^{\Lambda,\Lambda_0}
\bigl(
T^{(3)}_{2,zz'}
\bigr)
\bigl(
z_2,z,z';
(\tau_i,y_i)_{i\in\mathfrak s_2},
\bigl(\tfrac{1}{2\Lambda^2},u\bigr);
\chi^{\sigma_2}_{(\vec k^{(2)})^{(z,z')}}
\bigr).
\end{multline}
Here $(\vec k^{(2)})^{(z)}$ denotes the decoration obtained from
$\vec k^{(2)}$ by decreasing by one the exponent attached to the vertex $z$,
while $(\vec k^{(2)})^{(z,z')}$ denotes the symmetrised decoration appearing
in the definition of $\hat{\mathcal A}$ when the two distinguished vertices
are $z$ and $z'$. In particular, the vertices $z$ and $z'$ play the same role.

Substituting \eqref{eq:proof-fusion-hat:first-factor} and
\eqref{eq:proof-fusion-hat:second-factor} into
\eqref{eq:tensor-hat-compact-proof}, we obtain a sum of nonnegative terms.
Fix one such term. It is indexed by
$\vec k^{(1)}\in\Sigma(\sigma_1)$, $\vec k^{(2)}\in\Sigma(\sigma_2)$, and
either by a vertex $z\in V^{\sigma_2}_{\mathrm{ext}}(T_2)$ in the first sum of
\eqref{eq:proof-fusion-hat:second-factor}, or by a pair
$z,z'\in V^{\sigma_2}_{\mathrm{ext}}(T_2)$ in the second one.

We first consider a term coming from the first sum in
\eqref{eq:proof-fusion-hat:second-factor}. It has the form
\begin{multline}\label{eq:proof-fusion-hat:first-kind}
\int_{\mathbb R}du\;
\mathcal A_{l_1,n_1+1}^{\Lambda,\Lambda_0}(T_1)
\bigl(
z_1;
(\tau_i,y_i)_{i\in\mathfrak s_1},
\bigl(\tfrac{1}{2\Lambda^2},u\bigr);
\chi^{\sigma_1}_{\vec k^{(1)}}
\bigr)
\\
\times
\int_{\mathbb R}dz_2\int_{\mathbb R}dz\;
\Delta^{(1)}_{z_2,z}\chi\,
\mathcal A_{l_2,n_2+1}^{\Lambda,\Lambda_0}
\bigl(
T^{(2)}_{2,z}
\bigr)
\bigl(
z_2,z;
(\tau_i,y_i)_{i\in\mathfrak s_2},
\bigl(\tfrac{1}{2\Lambda^2},u\bigr);
\chi^{\sigma_2}_{(\vec k^{(2)})^{(z)}}
\bigr).
\end{multline}
We fuse the two amplitudes along the common external variable $u$.
By the fusion estimate for amplitudes, this yields an amplitude on
the fused tree $\mathcal F_u(T_1,T_2)$ with scale index $l_1\vee l_2$.
The appearance of $l_1\vee l_2$ is precisely the content of the
fusion bound: when two amplitudes of scales $l_1$ and $l_2$ are fused, the
resulting internal kernels are controlled at the larger of the two scales.
Hence the fused contribution is one of the terms appearing in the definition of
\[
\hat{\mathcal A}_{l_1\vee l_2,n_1+n_2}^{\Lambda,\Lambda_0}
\bigl(
\mathcal F_u(T_1,T_2)
\bigr).
\]
We next consider a term coming from the second sum in
\eqref{eq:proof-fusion-hat:second-factor}. It has the form
\begin{multline}\label{eq:proof-fusion-hat:second-kind}
\int_{\mathbb R}du\;
\mathcal A_{l_1,n_1+1}^{\Lambda,\Lambda_0}(T_1)
\bigl(
z_1;
(\tau_i,y_i)_{i\in\mathfrak s_1},
\bigl(\tfrac{1}{2\Lambda^2},u\bigr);
\chi^{\sigma_1}_{\vec k^{(1)}}
\bigr)
\\
\times
\int_{\mathbb R}dz_2\int_{\mathbb R}dz\int_{\mathbb R}dz'\;
\Delta^{(1)}_{z,z'}\chi\,
\mathcal A_{l_2,n_2+1}^{\Lambda,\Lambda_0}
\bigl(
T^{(3)}_{2,zz'}
\bigr)
\bigl(
z_2,z,z';
(\tau_i,y_i)_{i\in\mathfrak s_2},
\bigl(\tfrac{1}{2\Lambda^2},u\bigr);
\chi^{\sigma_2}_{(\vec k^{(2)})^{(z,z')}}
\bigr).
\end{multline}
Again, fusing the two ordinary amplitudes along the variable $u$ and applying
the ordinary fusion bound produces an amplitude on
$\mathcal F_u(T_1,T_2)$ with scale index $l_1\vee l_2$. The vertices $z$ and
$z'$ are unaffected by the fusion and remain vertices of
$V^\sigma_{\mathrm{ext}}(\mathcal F_u(T_1,T_2))$. Therefore, the fused term is
again one of the contributions entering the definition of
\[
\hat{\mathcal A}_{l_1\vee l_2,n_1+n_2}^{\Lambda,\Lambda_0}
\bigl(
\mathcal F_u(T_1,T_2)
\bigr).
\]

Thus every term arising in the expansion of
\eqref{eq:tensor-hat-compact-proof} is bounded by a corresponding term in the
definition of
\[
\hat{\mathcal A}_{l_1\vee l_2,n_1+n_2}^{\Lambda,\Lambda_0}
\bigl(\mathcal F_u(T_1,T_2)\bigr).
\]
Since all terms are nonnegative, summing over all
$\vec k^{(1)}\in\Sigma(\sigma_1)$, $\vec k^{(2)}\in\Sigma(\sigma_2)$, over all
$z\in V^{\sigma_2}_{\mathrm{ext}}(T_2)$, and over all
$z,z'\in V^{\sigma_2}_{\mathrm{ext}}(T_2)$ yields
\[
\mathcal A_{l_1,n_1+1}^{\Lambda,\Lambda_0}(T_1)
\otimes_\pi
\hat{\mathcal A}_{l_2,n_2+1}^{\Lambda,\Lambda_0}(T_2)
\lesssim
\hat{\mathcal A}_{l_1\vee l_2,n_1+n_2}^{\Lambda,\Lambda_0}
\bigl(
\mathcal F_u(T_1,T_2)
\bigr),
\]
which is exactly \eqref{eq:FusionDeltaChi}.
\end{proof}
The next estimate treats the mixed fusion
\(\hat{\mathcal A}\otimes_\pi \hat{\mathcal A}\).  Its proof uses the
derivative-insertion estimate of Corollary~\ref{CorTransport} on the
second factor. For this reason we assume that the root of \(T_2\) is
attached to an external vertex \(y_j\) with \(j\in\sigma_2\); equivalently,
the root carries one of the localisation factors. Under this assumption,
the hatted amplitude in the second factor can first be reduced to an
ordinary decorated amplitude, and the mixed fusion bound can then be
applied.
\begin{lemma}\label{LemIm}
Let $(l_1,l_2)\in\mathbb{N}_0^2$ and let $(n_1,n_2)\in\mathbb{N}^2$ such that $n_1,n_2\ge3$.
Let $T_1\in\mathcal T^{s_1+1}_1$ and $T_2\in\mathcal T^{s_2+1}_1$, and assume that the root vertex $z_2$ of $T_2$ is attached to an external vertex
$y_j$ with $j\in\sigma_2$. 
Then one has
\begin{equation}\label{Aa11}
\hat{\mathcal A}_{l_1,n_1+1}^{\Lambda,\Lambda_0}(T_1)
\otimes_\pi
\hat{\mathcal A}_{l_2,n_2+1}^{\Lambda,\Lambda_0}(T_2)
\;\lesssim\;
\left(\sum_{i\in\mathfrak s_2}\tau_i^{-1/2}+\Lambda\right)
\hat{\mathcal A}_{l_1\vee l_2,n_1+n_2}^{\Lambda,\Lambda_0}
\bigl(\mathcal F_u(T_1,T_2)\bigr).
\end{equation}
\end{lemma}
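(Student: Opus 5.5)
The plan is to reduce the mixed fusion $\hat{\mathcal A}\otimes_\pi\hat{\mathcal A}$ to the already controlled case $\hat{\mathcal A}\otimes_\pi\mathcal A$ of Lemma~\ref{lem:FusionBound_tight}, estimate \eqref{eq:FusionTight5}. To do this, I will remove the hat from the second factor using Corollary~\ref{CorTransport}. The price of this removal is exactly the prefactor $\sum_{i\in\mathfrak s_2}\tau_i^{-1/2}+\Lambda$. The first summand comes from external kernels of $T_2$. The term $\Lambda$ comes from the fusion leg $(\tfrac{1}{2\Lambda^2},u)$, whose diffusion parameter gives $(\tfrac{1}{2\Lambda^2})^{-1/2}\sim\Lambda$.

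The first step is to write out the $\otimes_\pi$-product \eqref{fusion-general}. Here the second factor is integrated over its root $z_2$, so the relevant object is
\[
\int_{\mathbb R}dz_2\;\hat{\mathcal A}^{\Lambda,\Lambda_0}_{l_2,n_2+1}(T_2)\Bigl(z_2;(\tau_i,y_i)_{i\in\mathfrak s_2},(\tfrac{1}{2\Lambda^2},u);\chi^{\sigma_2}\Bigr)
\]
at fixed $u$. By hypothesis, the root $z_2$ is attached to some $y_j$ with $j\in\sigma_2$, so $z_2\in V^{\sigma_2}_{\mathrm{ext}}(T_2)$. This is precisely the situation covered by \eqref{equationNuevo}, where the cutoff weight is preserved. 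The degenerate alternative \eqref{Nuevo2} would lose the decoration, and the hypothesis is there to exclude it. Applying \eqref{equationNuevo} with external data $\mathfrak s_2\cup\{u\}$ bounds this integral by
\[
\Bigl(\sum_{i\in\mathfrak s_2}\tau_i^{-1/2}+\sqrt{2}\,\Lambda\Bigr)\int_{\mathbb R}dz_2\,\mathcal A^{\Lambda,\Lambda_0}_{l_2',n_2'}(T_2)\Bigl(z_2;(\tau_i,y_i)_{i\in\mathfrak s_2},(\tfrac{1}{2\Lambda^2},u);\chi^{\sigma_2}\Bigr),
\]
with some $l_2'\ge l_2$ and $n_2'>n_2+1$. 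The implicit constants are uniform in $u$ because the Gaussian derivative bounds are translation invariant. All factors are nonnegative, so I can insert this bound under the $u$-integral, against the first factor $\hat{\mathcal A}_{l_1,n_1+1}(T_1)$.

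The next step is to invoke \eqref{eq:FusionTight5}, which gives $\hat{\mathcal A}(T_1)\otimes_\pi\mathcal A(T_2)\lesssim\hat{\mathcal A}_{\max(l_1,l_2'),\max(n_1+1,n_2')}(\mathcal F_u(T_1,T_2))$. The hatted factor $\Delta^{(1)}\chi$ of $T_1$ passes through the fusion unchanged. Its distinguished vertices stay in $V^\sigma_{\mathrm{ext}}$ of the fused tree, since fusion only adds an internal edge.

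The last step is the index bookkeeping, and I expect it to be the main obstacle. One must choose $l_2'$ and $n_2'$ so that the final scale parameter is still at most $\delta_{l_1\vee l_2,n_1+n_2}$. Then the monotonicity of Lemma~\ref{lem:monotone_l} (heat-kernel monotonicity, run in the right direction) closes the estimate. Since $n_1\ge3$, we have $n_1+n_2\ge n_2+3>n_2+1$. So I take $n_2'=n_2+2\le n_1+n_2$ and $l_2'=l_2$. This is allowed because Corollary~\ref{CorTransport} only requires $n'>n$, and its derivative bound already works with $\delta_{l,n}\le\delta_{l,n'}$. Likewise $n_1+1\le n_1+n_2$. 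With these choices, every kernel is dominated by the one at $\delta_{l_1\vee l_2,n_1+n_2}$.

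Finally, I sum over the decorations $\vec k^{(1)},\vec k^{(2)}$. The $\tau$-prefactor does not depend on them, so it factors out of the sum and yields \eqref{Aa11}.
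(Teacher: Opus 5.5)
Your proposal is correct and follows the paper's proof. Both arguments strip the hat from the second factor using \eqref{equationNuevo}, which applies because the root of $T_2$ lies in $V^{\sigma_2}_{\mathrm{ext}}(T_2)$ and the fusion leg $\tau=(2\Lambda^2)^{-1}$ contributes the $\Lambda$ term, and then close the estimate with \eqref{eq:FusionTight5}. The only cosmetic difference is that the paper takes $n'=n_1+n_2$ directly in Corollary~\ref{CorTransport}, so the fusion bound already lands at index $n_1+n_2$; you pass through $n_2+2$ and then need the extra monotonicity step of Lemma~\ref{lem:monotone_l}.
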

\begin{proof}
By definition of the fusion product with two hatted amplitudes, one has
\begin{multline}\label{eq:tensor-hat-hat}
\hat{\mathcal A}_{l_1,n_1+1}^{\Lambda,\Lambda_0}(T_1)\otimes_\pi
\hat{\mathcal A}_{l_2,n_2+1}^{\Lambda,\Lambda_0}(T_2)
\\
=
\int_{\mathbb R}du\;
\hat{\mathcal A}_{l_1,n_1+1}^{\Lambda,\Lambda_0}(T_1)
\Bigl(
z_1;
(\tau_i,y_i)_{i\in\mathfrak s_1},
(\tfrac{1}{2\Lambda^2},u);\chi^{\sigma_1}
\Bigr)
\\
\times
\int_{\mathbb R}dz_2\;
\hat{\mathcal A}_{l_2,n_2+1}^{\Lambda,\Lambda_0}(T_2)
\Bigl(
z_2;
(\tau_i,y_i)_{i\in\mathfrak s_2},
(\tfrac{1}{2\Lambda^2},u);\chi^{\sigma_2}
\Bigr)~,
\end{multline}
where $z_2\in V^{\sigma_2}_{\mathrm{ext}}(T_2)$. We now apply the bound \eqref{equationNuevo} from Corollary~\ref{CorTransport}
to the second factor. Since the external variable corresponding to the fusion
line carries $
\tau=(2\Lambda^2)^{-1}$,
the contribution $\tau^{-1/2}$ produced by the corollary is bounded by a
constant multiple of $\Lambda$. Hence
\begin{multline}
\int_{\mathbb R}dz_2\;
\hat{\mathcal A}_{l_2,n_2+1}^{\Lambda,\Lambda_0}(T_2)
\Bigl(
z_2;
(\tau_i,y_i)_{i\in\mathfrak s_2},
(\tfrac{1}{2\Lambda^2},u);
\chi^{\sigma_2}
\Bigr)
\\
\lesssim
\left(\sum_{i\in\mathfrak s_2}\tau_i^{-1/2}+\Lambda\right)
\int_{\mathbb R}dz_2\;
\mathcal A_{l_2,n_1+n_2}^{\Lambda,\Lambda_0}(T_2)
\Bigl(
z_2;
(\tau_i,y_i)_{i\in\mathfrak s_2},
(\tfrac{1}{2\Lambda^2},u);
\chi^{\sigma_2}
\Bigr).
\end{multline}
Substituting this bound into \eqref{eq:tensor-hat-hat} yields
\begin{multline}
\hat{\mathcal A}_{l_1,n_1+1}^{\Lambda,\Lambda_0}(T_1)\otimes_\pi
\hat{\mathcal A}_{l_2,n_2+1}^{\Lambda,\Lambda_0}(T_2)
\\
\lesssim
\left(\sum_{i\in\mathfrak s_2}\tau_i^{-1/2}+\Lambda\right)
\int_{\mathbb R}du\;
\hat{\mathcal A}_{l_1,n_1+1}^{\Lambda,\Lambda_0}(T_1)
\Bigl(
z_1;
(\tau_i,y_i)_{i\in\mathfrak s_1},
(\tfrac{1}{2\Lambda^2},u);
\chi^{\sigma_1}
\Bigr)
\\
\times
\int_{\mathbb R}dz_2\;
\mathcal A_{l_2,n_1+n_2}^{\Lambda,\Lambda_0}(T_2)
\Bigl(
z_2;
(\tau_i,y_i)_{i\in\mathfrak s_2},
(\tfrac{1}{2\Lambda^2},u);
\chi^{\sigma_2}
\Bigr).
\end{multline}
The right-hand side is precisely
\[
\left(\sum_{i\in\mathfrak s_2}\tau_i^{-1/2}+\Lambda\right)
\hat{\mathcal A}_{l_1,n_1+1}^{\Lambda,\Lambda_0}(T_1)
\otimes_\pi
\mathcal A_{l_2,n_1+n_2}^{\Lambda,\Lambda_0}(T_2).
\]
Finally, applying the mixed fusion bound \eqref{eq:FusionTight5} from
Lemma~\ref{lem:FusionBound_tight} gives
\[
\hat{\mathcal A}_{l_1,n_1+1}^{\Lambda,\Lambda_0}(T_1)\otimes_\pi
\mathcal A_{l_2,n_1+n_2}^{\Lambda,\Lambda_0}(T_2)
\lesssim
\hat{\mathcal A}_{l_1\vee l_2,n_1+n_2}^{\Lambda,\Lambda_0}
\bigl(\mathcal F_u(T_1,T_2)\bigr),
\]
which proves \eqref{Aa11}.
\end{proof}

\paragraph{Fusion of amplitudes with localisation insertions.}
We now state the fusion estimates needed when a localisation difference
is produced by the Taylor expansion at the root.  The typical term contains
a factor
\[
\Delta_{z_1}\chi(z_2):=\chi(z_1)-\chi(z_2),
\]
where \(z_1\) and \(z_2\) are the two root variables before fusion.  The
point is that whenever one of the localization factors is attached to the
root of the second tree, this difference can be written as
\[
\chi(z_1)-\chi(z_2)
=
\left(z_1-z_2\right)~\Delta^{(1)}_{z_1,z_2}\chi .
\]
The distance factor is then absorbed by the geometric estimate
Lemma~\ref{LemDecPo}, producing a factor \(\Lambda^{-1}\), while the
remaining derivative insertion is exactly of the type encoded by the
hatted amplitudes introduced in Definition~\ref{def:hat-amplitude}.

The following lemma gives the precise form of this mechanism.  We write
\(\sigma_i\) for the subset of localisation-decorated external vertices
belonging to \(T_i\), and after fusion we set
\[
\sigma:=\sigma_1\cup\sigma_2 .
\]
\begin{lemma}\label{lem:Delta-chi-fusion}
Fix $(l_1,l_2)\in\mathbb{N}_0^2$ and
\((n_1,n_2)\in\mathbb N^2\) with \(n_1,n_2\ge2\).
Let $T_1\in\mathcal{T}^{s_1+1}_1$ and $T_2\in\mathcal{T}^{s_2+1}_1$,
and denote by $z_1$ and $z_2$ their respective root variables.
Assume that the root $z_i$ of $T_i$ is adjacent to an external
vertex $y_{a_i}$ with $a_i\in\sigma_i$, $i\in\{1,2\}$.
Set
\[
l:=l_1+l_2,
\qquad
n:=n_1+n_2.
\]
Then the following bounds hold:
\begin{align}
\mathcal{A}_{l_1,n_1+1}^{\Lambda,\Lambda_0}(T_1)\otimes_{\pi}
\Delta_{z_1}\chi\,
\mathcal{A}_{l_2,n_2+1}^{\Lambda,\Lambda_0}(T_2)
&\lesssim
\Lambda^{-1}\,
\hat{\mathcal{A}}_{l,n}^{\Lambda,\Lambda_0}
\bigl(\mathcal{F}_{u}(T_1,T_2)\bigr),
\label{a-108}
\\
\mathcal{A}_{l_1,n_1+1}^{\Lambda,\Lambda_0}(T_1)\otimes_{\pi}
\Delta_{z_1}\chi\,
\hat{\mathcal{A}}_{l_2,n_2+1}^{\Lambda,\Lambda_0}(T_2)
&\lesssim
\hat{\mathcal{A}}_{l,n}^{\Lambda,\Lambda_0}
\bigl(\mathcal{F}_{u}(T_1,T_2)\bigr),
\label{a-109}
\\
\hat{\mathcal{A}}_{l_1,n_1+1}^{\Lambda,\Lambda_0}(T_1)\otimes_{\pi}
\Delta_{z_1}\chi\,
\mathcal{A}_{l_2,n_2+1}^{\Lambda,\Lambda_0}(T_2)
&\lesssim
\hat{\mathcal{A}}_{l,n}^{\Lambda,\Lambda_0}
\bigl(\mathcal{F}_{u}(T_1,T_2)\bigr),
\label{a-110}
\\
\hat{\mathcal{A}}_{l_1,n_1+1}^{\Lambda,\Lambda_0}(T_1)\otimes_{\pi}
\Delta_{z_1}\chi\,
\hat{\mathcal{A}}_{l_2,n_2+1}^{\Lambda,\Lambda_0}(T_2)
&\lesssim
\Bigl(\sum_{i\in\mathfrak{s}_2}\tau_i^{-1/2}+\Lambda\Bigr)
\hat{\mathcal{A}}_{l,n}^{\Lambda,\Lambda_0}
\bigl(\mathcal{F}_{u}(T_1,T_2)\bigr).
\label{a-111}
\end{align}
\end{lemma}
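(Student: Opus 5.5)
The plan is to reduce all four estimates to the mechanism described just before the statement. We write
\[
\Delta_{z_1}\chi(z_2)=\chi(z_2)-\chi(z_1)=(z_2-z_1)\,\Delta^{(1)}_{z_1,z_2}\chi ,
\]
and then absorb the distance factor with Lemma~\ref{LemDecPo}. This costs $\Lambda^{-1}$ and raises the indices $(l_i,n_i+1)$ to $(l,n)$, which is allowed because $l\ge l_i$ and $n>n_i+1$. The derivative factor $\Delta^{(1)}_{z_1,z_2}\chi$ is then recognised as one of the terms defining the hatted amplitude of the fused tree. The hypotheses that $z_1$ and $z_2$ are adjacent to $y_{a_1}$ and $y_{a_2}$ with $a_i\in\sigma_i$ guarantee that both roots lie in $V^\sigma_{\mathrm{ext}}(\mathcal F_u(T_1,T_2))$. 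The bookkeeping of Definition~\ref{def:hat-amplitude} therefore applies: we strip one power of $\chi$ from the decoration at $z_2$ (replacing $\vec k$ by $\vec k^{(z_2)}$), using $\chi\le\|\chi\|_\infty$ and the bound \eqref{eq:borneamplitudechi} where necessary.

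\textbf{Proof of \eqref{a-108}.} We expand both amplitudes over trees and decorations. Fixing $\vec k^{(1)}$ and $\vec k^{(2)}$, we fuse along $u$ exactly as in \eqref{eq:FusionTight0}, using kernel monotonicity and the semigroup identity. This gives a doubly rooted amplitude $\mathcal A_{l_1\vee l_2,\hat n}(\mathcal F_u(T_1,T_2)^{(2)}_{z_2})(z_1,z_2;\cdot)$ multiplied by $|z_2-z_1|\,\Delta^{(1)}_{z_1,z_2}\chi$. Since $\chi'\ge0$, the factor $\Delta^{(1)}\chi$ is nonnegative, so Lemma~\ref{LemDecPo} may be applied pointwise in $(z_1,z_2)$; the path between the two roots now runs through the new internal edge of scale $\Lambda^{-1}$. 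Integrating over $z_2$ reproduces the first sum in \eqref{equation72}, evaluated with distinguished vertex $z=z_2$ and decoration $\vec k^{(z_2)}$. The remaining task is to check that this decoration lies in the admissible class once the extra $\chi$ at $z_2$ has been removed, and we then sum over decorations.

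\textbf{Proof of \eqref{a-109} and \eqref{a-110}.} Here one factor already contains a $\Delta^{(1)}\chi$ insertion, and a second one is produced by $\Delta_{z_1}\chi$. To avoid accumulating derivative insertions, we do not expand $\Delta_{z_1}\chi$. Instead we use only $0\le\chi\le\|\chi\|_\infty$, so that $|\Delta_{z_1}\chi|\lesssim\chi(z_1)+\chi(z_2)$. Each term $\chi(z_i)$ is absorbed into the decoration at $z_i$, giving a $\vec k+\mathbf 1_{a_i}$ type decoration (up to a bounded factor that compensates for the $\vec k^{(z)}$ shift in the existing hatted term). The fusion bounds \eqref{eq:FusionTight5} and Lemma~\ref{lem:FusionDeltaChi} then finish the argument. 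No $\Lambda^{-1}$ appears, which matches the statement.

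\textbf{Proof of \eqref{a-111}.} We treat this in the same way. We bound $|\Delta_{z_1}\chi|$ by decorations as above, and then apply Lemma~\ref{LemIm}. The hypothesis of Lemma~\ref{LemIm}, namely that the root $z_2$ is attached to $y_{a_2}$ with $a_2\in\sigma_2$, holds by assumption, and the lemma supplies the factor $\sum_{i\in\mathfrak s_2}\tau_i^{-1/2}+\Lambda$.

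\textbf{Main obstacle.} I expect the difficulty to lie in the decoration bookkeeping rather than in the analysis. The total $\chi$-weight must come out to exactly $|\sigma|$, so every term must land in $\Sigma(\sigma)$ for the fused tree. This is why the hypothesis $a_i\in\sigma_i$ is needed. I would first settle the simplest case, where each root carries exactly one decorated external edge, and then handle the general case by the sum over $\vec k$.
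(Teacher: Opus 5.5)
Your proposal is correct and follows the paper's proof essentially step by step. For \eqref{a-108} you write $\Delta_{z_1}\chi$ as $(z_2-z_1)\Delta^{(1)}_{z_1,z_2}\chi$, fuse via \eqref{eq:FusionTight0}, absorb the distance with Lemma~\ref{LemDecPo}, and recognise the first sum of \eqref{equation72} at $z=z_2$. For \eqref{a-109}--\eqref{a-111} you bound $|\Delta_{z_1}\chi|\le\chi(z_1)+\chi(z_2)$, absorb both terms into the root decorations, and conclude with \eqref{eq:FusionTight5}, Lemma~\ref{lem:FusionDeltaChi} and Lemma~\ref{LemIm}.

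One small correction: $z_1$ remains the root of $\mathcal F_u(T_1,T_2)$, so it is not in $V^\sigma_{\mathrm{ext}}$. Only $z_2\in V^\sigma_{\mathrm{ext}}(\mathcal F_u(T_1,T_2))$ is needed for \eqref{a-108}; the hypothesis on $z_1$ is used only to absorb $\chi(z_1)$ in the other three bounds.
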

\begin{proof}
We first prove \eqref{a-108}. Since the root $z_2$ of $T_2$ is adjacent
to an external vertex $y_{a_2}$ with $a_2\in\sigma_2$, one localisation
factor can be extracted from the amplitude of $T_2$ and the definition of the
$\otimes_\pi$-product gives
\begin{multline}\label{eq:DeltaChiFusion1}
\mathcal{A}^{\Lambda,\Lambda_0}_{l_1,n_1+1}(T_1)
\otimes_{\pi}
\Delta_{z_1}\chi\,
\mathcal{A}^{\Lambda,\Lambda_0}_{l_2,n_2+1}(T_2)
\\
=
\int_{\mathbb R}dz_2\,
\bigl(\chi(z_1)-\chi(z_2)\bigr)
\Bigl(
\mathcal{A}^{\Lambda,\Lambda_0}_{l_1,n_1+1}(T_1)
\otimes_{\pi}
\mathcal{A}^{\Lambda,\Lambda_0}_{l_2,n_2+1}(T_2)
\Bigr)
\bigl(
z_1,z_2;
(\tau_i,y_i)_{i\in\mathfrak s};
\chi^{\sigma\setminus\{a_2\}}
\bigr).
\end{multline}
Applying the fusion estimate \eqref{eq:FusionTight0} yields
\begin{equation}\label{eq:DeltaChiFusion2}
\eqref{eq:DeltaChiFusion1}
\lesssim
\int_{\mathbb R}dz_2\,
\bigl|\chi(z_1)-\chi(z_2)\bigr|
\,
\mathcal A^{\Lambda,\Lambda_0}_{l_1\vee l_2,\hat{n}}
\bigl(
\mathcal F_u(T_1,T_2)^{(2)}_{z_2}
\bigr)
\bigl(
z_1,z_2;
(\tau_i,y_i)_{i\in\mathfrak s};
\chi^{\sigma\setminus\{a_2\}}
\bigr).
\end{equation}
Using
\[
\chi(z_1)-\chi(z_2)
=
\left(z_1-z_2\right)\Delta^{(1)}_{z_1,z_2}\chi,
\]
together with Lemma~\ref{LemDecPo} and the monotonicity property of the amplitude \eqref{AlAl0_nohat} , we obtain
\[
|z_1-z_2|
\,
\mathcal A^{\Lambda,\Lambda_0}_{l_1\vee l_2,\hat{n}}
\bigl(
\mathcal F_u(T_1,T_2)^{(2)}_{z_2}
\bigr)
\lesssim
\Lambda^{-1}
\mathcal A^{\Lambda,\Lambda_0}_{l_1\vee l_2,\hat{n}}
\bigl(
\mathcal F_u(T_1,T_2)^{(2)}_{z_2}
\bigr)\lesssim
\Lambda^{-1}
\mathcal A^{\Lambda,\Lambda_0}_{l,n}
\bigl(
\mathcal F_u(T_1,T_2)^{(2)}_{z_2}
\bigr)~.
\]
Substituting this estimate into \eqref{eq:DeltaChiFusion2}, the resulting
integrand coincides with the first contribution appearing in the
definition of
\(
\hat{\mathcal A}^{\Lambda,\Lambda_0}_{l,n}
(\mathcal F_u(T_1,T_2))
\).
Using the root integration identity \eqref{ADRT}, we therefore obtain
\[
\mathcal A^{\Lambda,\Lambda_0}_{l_1,n_1+1}(T_1)
\otimes_\pi
\Delta_{z_1}\chi\,
\mathcal A^{\Lambda,\Lambda_0}_{l_2,n_2+1}(T_2)
\lesssim
\Lambda^{-1}
\hat{\mathcal A}^{\Lambda,\Lambda_0}_{l,n}
\bigl(
\mathcal F_u(T_1,T_2)
\bigr),
\]
which proves \eqref{a-108}. For \eqref{a-109}, we use
\[
|\chi(z_1)-\chi(z_2)|
\le
\chi(z_1)+\chi(z_2).
\]
Since the root $z_1$ is adjacent to the $\sigma_1$-decorated external
vertex $y_{a_1}$, the factor $\chi(z_1)$ can be absorbed into the
localisation factors already present in
\(
\mathcal A^{\Lambda,\Lambda_0}_{l_1,n_1+1}(T_1)
\).
Similarly, the factor $\chi(z_2)$ is absorbed into the localisation
structure of
\(
\hat{\mathcal A}^{\Lambda,\Lambda_0}_{l_2,n_2+1}(T_2)
\).
Consequently,
\[
\mathcal A^{\Lambda,\Lambda_0}_{l_1,n_1+1}(T_1)
\otimes_\pi
\Delta_{z_1}\chi\,
\hat{\mathcal A}^{\Lambda,\Lambda_0}_{l_2,n_2+1}(T_2)
\lesssim
\mathcal A^{\Lambda,\Lambda_0}_{l_1,n_1+1}(T_1)
\otimes_\pi
\hat{\mathcal A}^{\Lambda,\Lambda_0}_{l_2,n_2+1}(T_2).
\]
The bound \eqref{a-109} now follows from \eqref{eq:FusionTight5}.\\
The same argument yields
\[
\hat{\mathcal A}^{\Lambda,\Lambda_0}_{l_1,n_1+1}(T_1)
\otimes_\pi
\Delta_{z_1}\chi\,
\mathcal A^{\Lambda,\Lambda_0}_{l_2,n_2+1}(T_2)
\lesssim
\hat{\mathcal A}^{\Lambda,\Lambda_0}_{l_1,n_1+1}(T_1)
\otimes_\pi
\mathcal A^{\Lambda,\Lambda_0}_{l_2,n_2+1}(T_2),
\]
and
\[
\hat{\mathcal A}^{\Lambda,\Lambda_0}_{l_1,n_1+1}(T_1)
\otimes_\pi
\Delta_{z_1}\chi\,
\hat{\mathcal A}^{\Lambda,\Lambda_0}_{l_2,n_2+1}(T_2)
\lesssim
\hat{\mathcal A}^{\Lambda,\Lambda_0}_{l_1,n_1+1}(T_1)
\otimes_\pi
\hat{\mathcal A}^{\Lambda,\Lambda_0}_{l_2,n_2+1}(T_2).
\]
The bounds \eqref{a-110} and \eqref{a-111} then follow respectively from
\eqref{eq:FusionDeltaChi} and \eqref{Aa11}.
\end{proof}
\begin{corollary}\label{cor:Delta-chi-fusion-combined}
Under the assumptions and notation of Lemma~\ref{lem:Delta-chi-fusion}, one has for $\left(l_1,l_2\right)\in\mathbb{N}^2$ and $n_1,n_2\ge 3$ that
\begin{multline}
\Bigl(
\mathcal{A}_{l_1,n_1+1}^{\Lambda,\Lambda_0}(T_1)
+\Lambda_m^{-1}\hat{\mathcal{A}}_{l_1,n_1+1}^{\Lambda,\Lambda_0}(T_1)
\Bigr)
\otimes_{\pi}
\Delta_{z_1}\chi\,
\Bigl(
\mathcal{A}_{l_2,n_2+1}^{\Lambda,\Lambda_0}(T_2)
+\Lambda_m^{-1}\hat{\mathcal{A}}_{l_2,n_2+1}^{\Lambda,\Lambda_0}(T_2)
\Bigr)
\\
\lesssim
\Lambda^{-1}
\left(
1+\sum_{i\in\mathfrak{s}_2}\frac{\tau_i^{-1/2}}{\Lambda_m}
\right)
\hat{\mathcal{A}}_{l,n}^{\Lambda,\Lambda_0}\bigl(\mathcal{F}_u(T_1,T_2)\bigr).
\end{multline}
\end{corollary}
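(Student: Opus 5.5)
By bilinearity of the fusion product $\otimes_\pi$ together with the weight $\Delta_{z_1}\chi$, the left-hand side expands into four terms. Each term is a product of nonnegative integrands, except for the weight, which is handled in absolute value as in the proof of Lemma~\ref{lem:Delta-chi-fusion}. The plan is to bound each of the four terms by the corresponding estimate \eqref{a-108}--\eqref{a-111} and then compare the resulting prefactors with the common factor $\Lambda^{-1}(1+\sum_{i\in\mathfrak s_2}\tau_i^{-1/2}\Lambda_m^{-1})$.

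Concretely, \eqref{a-108} bounds the $\mathcal A\otimes_\pi\Delta_{z_1}\chi\,\mathcal A$ term by $\Lambda^{-1}\hat{\mathcal A}_{l,n}(\mathcal F_u(T_1,T_2))$, which is already of the required form. Each of the two mixed terms carries one factor $\Lambda_m^{-1}$. By \eqref{a-109} and \eqref{a-110} they are bounded by $\Lambda_m^{-1}\hat{\mathcal A}_{l,n}(\mathcal F_u(T_1,T_2))$, and since $\Lambda\le\Lambda_m$ this is at most $\Lambda^{-1}\hat{\mathcal A}_{l,n}$. The doubly hatted term carries $\Lambda_m^{-2}$. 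By \eqref{a-111} it is bounded by
\[
\Lambda_m^{-2}\Bigl(\sum_{i\in\mathfrak s_2}\tau_i^{-1/2}+\Lambda\Bigr)\hat{\mathcal A}_{l,n}
\le \Lambda^{-1}\Bigl(\sum_{i\in\mathfrak s_2}\frac{\tau_i^{-1/2}}{\Lambda_m}+1\Bigr)\hat{\mathcal A}_{l,n}.
\]
This uses $\Lambda_m^{-2}\le\Lambda^{-1}\Lambda_m^{-1}$ and $\Lambda\Lambda_m^{-2}\le\Lambda^{-1}$. Summing the four contributions gives the claim.

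The remaining point is to check that the hypotheses of the lemmas are met. Lemma~\ref{lem:Delta-chi-fusion} requires $n_1,n_2\ge2$. The estimate \eqref{a-111} passes through Lemma~\ref{LemIm}, which needs $n_1,n_2\ge3$, and \eqref{a-110} passes through Lemma~\ref{lem:FusionDeltaChi}, which needs $l_2\ge1$. This is precisely why the corollary restricts to $(l_1,l_2)\in\mathbb N^2$ and $n_1,n_2\ge3$. The rooting assumptions, namely that $z_1$ and $z_2$ are adjacent to $\sigma$-decorated external vertices, are inherited from the statement.

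The step I expect to be the main obstacle is index bookkeeping rather than analysis. The estimates \eqref{a-108}--\eqref{a-111} produce amplitudes whose diffusion indices are at most $(l_1+l_2,n_1+n_2)$, and in the Lemma~\ref{LemIm} route the index is $(l_1\vee l_2,n)$. All of these must be brought to the single index $(l,n)$ using the monotonicity Lemma~\ref{lem:monotone_l}, since $l_1\vee l_2\le l$. This is harmless, and the constants depend only on $l$ and $n$.
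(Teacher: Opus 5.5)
Your proposal is correct and is essentially the paper's proof. You expand the left-hand side by bilinearity into four terms, bound them by \eqref{a-108}--\eqref{a-111}, and absorb the prefactors $\Lambda_m^{-1}$ and $\Lambda_m^{-2}\bigl(\sum_{i\in\mathfrak s_2}\tau_i^{-1/2}+\Lambda\bigr)$ into $\Lambda^{-1}\bigl(1+\sum_{i\in\mathfrak s_2}\tau_i^{-1/2}\Lambda_m^{-1}\bigr)$ using $\Lambda\le\Lambda_m$. The index bookkeeping you expect to be the main obstacle is already handled inside Lemma~\ref{lem:Delta-chi-fusion}: its four bounds are stated directly with $\hat{\mathcal A}_{l,n}^{\Lambda,\Lambda_0}$ on the right-hand side, where $l=l_1+l_2$ and $n=n_1+n_2$, so the corollary needs no further appeal to Lemma~\ref{lem:monotone_l}.
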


\begin{proof}
By bilinearity of the $\otimes_\pi$–product, the left-hand side expands into
\begin{align*}
&\mathcal{A}_{l_1,n_1+1}^{\Lambda,\Lambda_0}(T_1)\otimes_{\pi}
\Delta_{z_1}\chi\,\mathcal{A}_{l_2,n_2+1}^{\Lambda,\Lambda_0}(T_2),
\\
&\Lambda_m^{-1}\,
\mathcal{A}_{l_1,n_1+1}^{\Lambda,\Lambda_0}(T_1)\otimes_{\pi}
\Delta_{z_1}\chi\,\hat{\mathcal{A}}_{l_2,n_2+1}^{\Lambda,\Lambda_0}(T_2),
\\
&\Lambda_m^{-1}\,
\hat{\mathcal{A}}_{l_1,n_1+1}^{\Lambda,\Lambda_0}(T_1)\otimes_{\pi}
\Delta_{z_1}\chi\,\mathcal{A}_{l_2,n_2+1}^{\Lambda,\Lambda_0}(T_2),
\\
&\Lambda_m^{-2}\,
\hat{\mathcal{A}}_{l_1,n_1+1}^{\Lambda,\Lambda_0}(T_1)\otimes_{\pi}
\Delta_{z_1}\chi\,\hat{\mathcal{A}}_{l_2,n_2+1}^{\Lambda,\Lambda_0}(T_2).
\end{align*}
Applying the bounds \eqref{a-108}--\eqref{a-111} to these four terms yields
\[
\Lambda^{-1}\hat{\mathcal{A}}_{l,n}^{\Lambda,\Lambda_0}\bigl(\mathcal{F}_u(T_1,T_2)\bigr),
\qquad
\Lambda_m^{-1}\hat{\mathcal{A}}_{l,n}^{\Lambda,\Lambda_0}\bigl(\mathcal{F}_u(T_1,T_2)\bigr),
\]
and
\[
\Lambda_m^{-2}
\Bigl(\sum_{i\in\mathfrak{s}_2}\tau_i^{-1/2}+\Lambda\Bigr)
\hat{\mathcal{A}}_{l,n}^{\Lambda,\Lambda_0}\bigl(\mathcal{F}_u(T_1,T_2)\bigr).
\]
Since $\Lambda_m=\Lambda+m\ge \Lambda$, one has $\Lambda_m^{-1}\le\Lambda^{-1}$.
Therefore each of the above contributions is bounded by
\[
\Lambda^{-1}
\left(
1+\sum_{i\in\mathfrak{s}_2}\frac{\tau_i^{-1/2}}{\Lambda_m}
\right)
\hat{\mathcal{A}}_{l,n}^{\Lambda,\Lambda_0}\bigl(\mathcal{F}_u(T_1,T_2)\bigr),
\]
which proves the claim.
\end{proof}
The next lemma shows that polynomial factors in the distance between the
two root variables can be absorbed into the fused tree amplitude at the
expense of powers of $\Lambda^{-1}$. Such factors arise repeatedly from
Taylor expansions and localisation remainders.
\begin{lemma}\label{Rlemma}
Fix $(l_1,l_2)\in\mathbb N_0^2$ and $n_1,~n_2\ge 2$. Define $$
l=l_1+l_2,
\qquad
n=n_1+n_2.
$$
Let $
T_1\in\mathcal T_1^{s_1+1}$ and $
T_2\in\mathcal T_1^{s_2+1}$
with root variables $z_1$ and $z_2$, respectively.
For $r\in\{1,2,3\}$ one has
\begin{multline}\label{Rrelation}
\int_{\mathbb R}dz_2\,
|z_1-z_2|^r
\Bigl(
\mathcal A^{\Lambda,\Lambda_0}_{l_1,n_1+1}(T_1)
\otimes_\pi
\mathcal A^{\Lambda,\Lambda_0}_{l_2,n_2+1}(T_2)
\Bigr)
\bigl(
z_1,z_2;(\tau_i,y_i)_{i\in\mathfrak s}
\bigr)\\
\lesssim
\Lambda^{-r}
\mathcal A^{\Lambda,\Lambda_0}_{l,n}
\bigl(
\mathcal F_u(T_1,T_2)
\bigr).
\end{multline}
The amplitudes are evaluated with $\chi\equiv1$.
\end{lemma}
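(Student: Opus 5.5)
The plan is to reduce the statement to the three ingredients already available: the pointwise fusion bound \eqref{eq:FusionTight0}, the distance estimate of Lemma~\ref{LemDecPo}, and the root integration identity \eqref{ADRT}. With $\chi\equiv1$ no cutoff decorations are present, so none of the hatted machinery is needed.

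First, for fixed $z_1,z_2$, I apply \eqref{eq:FusionTight0}. This bounds the integrand $\bigl(\mathcal A_{l_1,n_1+1}(T_1)\otimes_\pi\mathcal A_{l_2,n_2+1}(T_2)\bigr)(z_1,z_2;\cdot)$ by
\[
\mathcal A^{\Lambda,\Lambda_0}_{l_1\vee l_2,\hat n}\bigl(\mathcal F_u(T_1,T_2)^{(2)}_{z_2}\bigr)(z_1,z_2;\cdot).
\]
Here the two fusion kernels at scale $(2\Lambda^2)^{-1}$ have been merged by the semigroup property into a single internal edge $K(\delta/\Lambda^2;z,z')$. That edge lies on the unique path from $z_1$ to $z_2$ in the fused tree.

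Second, I multiply by $|z_1-z_2|^r$ with $r\le3$ and invoke Lemma~\ref{LemDecPo} on the doubly rooted tree $\mathcal F_u(T_1,T_2)^{(2)}_{z_2}$. The distance is telescoped along the path from $z_1$ to $z_2$. Each power $|v_{a+1}-v_a|^{\alpha_a}$ is absorbed into the corresponding internal kernel at the cost of $\Lambda_a^{-\alpha_a}\le\Lambda^{-\alpha_a}$, using $\Lambda_a\ge\Lambda$. The diffusion parameter is raised from $\delta_{l_1\vee l_2,\hat n}$ to $\delta_{l',n'}$ for any $l'\ge l_1\vee l_2$ and $n'>\hat n$.

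Third, I choose the target indices. I take $l'=l=l_1+l_2\ge l_1\vee l_2$ and $n'=n=n_1+n_2$. The condition $n>\hat n=\max(n_1+1,n_2+1)$ holds because $n_1,n_2\ge2$, so Lemma~\ref{LemDecPo} applies directly. If one prefers $n'=\hat n+1$ instead, Lemma~\ref{lem:monotone_l} then upgrades to $(l,n)$. This gives the pointwise bound
\[
|z_1-z_2|^r\bigl(\mathcal A_{l_1,n_1+1}(T_1)\otimes_\pi\mathcal A_{l_2,n_2+1}(T_2)\bigr)(z_1,z_2)\lesssim\Lambda^{-r}\mathcal A^{\Lambda,\Lambda_0}_{l,n}\bigl(\mathcal F_u(T_1,T_2)^{(2)}_{z_2}\bigr)(z_1,z_2).
\]

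Finally, I integrate over $z_2$. By \eqref{ADRT} this converts the doubly rooted amplitude back into $\mathcal A^{\Lambda,\Lambda_0}_{l,n}(\mathcal F_u(T_1,T_2))(z_1;\cdot)$, which is \eqref{Rrelation}.

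The main obstacle is checking that Lemma~\ref{LemDecPo} applies with constants uniform in the supremum over the internal scales $\Lambda_{ij}$. The supremum is taken after the kernels are bounded; since each bound holds for every admissible choice of $\Lambda_{ij}\ge\Lambda$, it passes through the supremum. I also need to check that the $z_2$-integration can be exchanged with that supremum in \eqref{ADRT}. I would handle this by proving the bound at the level of the integrand for each fixed choice of scales, and then using that the supremum of an integral is at most the integral of the supremum, which is the convention implicit in \eqref{ADRT}.
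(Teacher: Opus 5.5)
Your three steps are the paper's proof step for step: \eqref{eq:FusionTight0} pointwise in $z_2$, then Lemma~\ref{LemDecPo} on $\mathcal F_u(T_1,T_2)^{(2)}_{z_2}$ with $(l',n')=(l,n)$, then \eqref{ADRT}. The choice $(l',n')=(l,n)$ is admissible precisely because $n>\hat n$ when $n_1,n_2\ge2$.

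The gap is in the last step, which you rightly flagged but did not close. In \eqref{AmT} the supremum over internal scales sits outside the integral, so what you need is
\[
\int_{\mathbb R}dz_2\,
\mathcal A^{\Lambda,\Lambda_0}_{l,n}\bigl(\mathcal F_u(T_1,T_2)^{(2)}_{z_2}\bigr)(z_1,z_2;\cdot)
\lesssim
\mathcal A^{\Lambda,\Lambda_0}_{l,n}\bigl(\mathcal F_u(T_1,T_2)\bigr)(z_1;\cdot).
\]
This is a bound of ``integral of a supremum'' by ``supremum of the integral''. The inequality you invoke, $\sup\int\le\int\sup$, gives only the reverse bound $\mathcal A(T)\le\int dz_2\,\mathcal A(T^{(2)}_{z_2})$. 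Fixing the scales first does not help either. On the left of \eqref{Rrelation} the suprema defining $\mathcal A(T_1)$ and $\mathcal A(T_2)$ are taken separately for each value of $u$ and $z_2$, so the optimal scales may vary with the integration variables. The paper's proof uses \eqref{ADRT} as an exact identity at this point, so the gap is inherited from that lemma rather than introduced by you.

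The needed direction genuinely fails. Pointwise in $x$, for $|x|\lesssim\Lambda^{-1}$ one has $\sup_{\Lambda\le\lambda\le\Lambda_0}K(\delta/\lambda^2;x)\sim\min(\Lambda_0,|x|^{-1})$, which is only logarithmically integrable. Here is a concrete counterexample.
\begin{itemize}
\item Let $T_1$ be the root $z_1$ carrying only the leg $u$.
\item Let $T_2$ consist of the root $z_2$ joined by one internal edge to a vertex $w$ carrying the legs $u$ and $(\tau,y)$.
\item For each $z_2$, choose the scale of the edge $(z_2,w)$ with $\delta/\lambda^2\approx|z_2-y|^2$.
\item Take $|z_1-y|\sim\Lambda^{-1}$ and $\Lambda_0^{-1}\ll\sqrt\tau\ll\Lambda^{-1}$.
\end{itemize}
Then the left side of \eqref{Rrelation} is $\gtrsim\Lambda^{1-r}\log(\Lambda^{-1}\tau^{-1/2})$. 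On the other hand, $\Lambda^{-r}\mathcal A(\mathcal F_u(T_1,T_2))(z_1;\tau,y)=\Lambda^{-r}\sup_{\lambda}K(\delta/\lambda^2+\delta\tau;z_1,y)\lesssim\Lambda^{1-r}$. So with the amplitudes as defined, the estimate cannot hold with $\tau$-uniform constants.

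Moving the supremum inside the integrals would make \eqref{ADRT} exact. However, it would destroy the $\Lambda_0$-uniform bound of Lemma~\ref{lem:uniform_L0_Hairer_style}, since $\int\sup_{\lambda}K(\delta/\lambda^2;x)\,dx\sim\log(\Lambda_0/\Lambda)$. Closing this step therefore needs a different bookkeeping of the internal scales, not just the three cited results.
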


\begin{proof}
Applying the fusion estimate \eqref{eq:FusionTight0} together with \eqref{AlAl0_nohat}, we obtain
\begin{multline}\label{eq:Rlemma-step1}
\int_{\mathbb R}dz_2\,
|z_1-z_2|^r
\Bigl(
\mathcal A^{\Lambda,\Lambda_0}_{l_1,n_1+1}(T_1)
\otimes_\pi
\mathcal A^{\Lambda,\Lambda_0}_{l_2,n_2+1}(T_2)
\Bigr)
\bigl(
z_1,z_2;(\tau_i,y_i)_{i\in\mathfrak s}
\bigr)
\\
\lesssim
\int_{\mathbb R}dz_2\,
|z_1-z_2|^r
\mathcal A^{\Lambda,\Lambda_0}_{l_1\vee l_2,\hat{n}}
\bigl(
\mathcal F_u(T_1,T_2)^{(2)}_{z_2}
\bigr)
\bigl(
z_1,z_2;(\tau_i,y_i)_{i\in\mathfrak s}
\bigr),
\end{multline}
where $\hat{n}=\max\left(n_1+1,n_2+1\right)$. Using Lemma~\ref{LemDecPo}, we obtain
\[
|z_1-z_2|^r
\mathcal A^{\Lambda,\Lambda_0}_{l_1\vee l_2,\hat{n}}
\bigl(
\mathcal F_u(T_1,T_2)^{(2)}_{z_2}
\bigr)
\lesssim
\Lambda^{-r}
\mathcal A^{\Lambda,\Lambda_0}_{l,n}
\bigl(
\mathcal F_u(T_1,T_2)^{(2)}_{z_2}
\bigr).
\]
Substituting this estimate into \eqref{eq:Rlemma-step1} and using the
root extraction identity \eqref{ADRT}, we obtain
\[
\int_{\mathbb R}dz_2\,
|z_1-z_2|^r
\Bigl(
\mathcal A^{\Lambda,\Lambda_0}_{l_1,n_1+1}(T_1)
\otimes_\pi
\mathcal A^{\Lambda,\Lambda_0}_{l_2,n_2+1}(T_2)
\Bigr)
\bigl(
z_1,z_2;(\tau_i,y_i)_{i\in\mathfrak s}
\bigr)
\lesssim
\Lambda^{-r}
\mathcal A^{\Lambda,\Lambda_0}_{l,n}
\bigl(
\mathcal F_u(T_1,T_2)
\bigr),
\]
which proves the claim.
\end{proof}

We now consider fusion estimates involving differences of heat kernels.
The basic mechanism is that a kernel difference between two vertices can
be expanded along the unique path connecting them in the tree. Each
increment is then absorbed into the heat kernel associated with the
corresponding edge, producing an additional external leg and a factor
$\Lambda^{-1}$. We begin by introducing some notation that will be used
throughout this section. Let $T$ be a tree and let $u,v\in V(T)$. We denote by
$\langle u,v\rangle$ the unique simple path in $T$ connecting $u$ and $v$,
viewed as the ordered set of vertices along this path and including both
endpoints. We also use the interval-type notations
\[
\langle u,v)
:= \langle u,v\rangle\setminus\{v\}, 
\qquad
(u,v\rangle
:= \langle u,v\rangle\setminus\{u\}~.
\]
All these sets are understood as subsets of $V(T)$.
\begin{lemma}\label{Lemme16}
Fix $l\ge2$, $n\ge2$, $1\le n_1\le n-1$ and $1\le l_1<l$. Let
$T_1\in\mathcal{T}^{s}_1$ and $\tau'>0$. Assume that
\begin{equation}\label{eq:Lemme16_ass}
\hat{\delta}_{l,n}\Lambda \ge b_{l_1,l}\,(\tau')^{-1/2},
\qquad
b_{l_1,l}:=\frac{(2-2^{-l-1})(2-2^{-l_1-1})}{2^{-l_1-1}-2^{-l-1}},
\qquad
\hat{\delta}_{l,n}:=\delta_{l,n}^{1/3}-1.
\end{equation}
Then one has
\begin{multline}\label{eq:Lemme16}
\bigl|K(\tau';z_1,y)-K(\tau';u,y)\bigr|\,
\mathcal{A}^{\Lambda,\Lambda_0}_{l_1,n_1+1}(T_1)
\bigl(
z_1;(\tau_i,y_i)_{i\in\mathfrak{s}},(\tfrac{1}{2\Lambda^{2}},u)
\bigr)
\\
\lesssim
\tau'^{-1/2}\Lambda^{-1}
\sum_{v\in\langle z_1,u)}
\mathcal{A}^{\Lambda,\Lambda_0}_{l,n}\!\left(T_{1,v}\right)
\bigl(
z_1;(\tau_i,y_i)_{i\in\mathfrak{s}},(\tau',y),(\tfrac{1}{2\Lambda^{2}},u)
\bigr),
\end{multline}
where $T_{1,v}\in\mathcal{T}^{s+1}_1$ is obtained from $T_1$ by attaching
an additional external leg $(\tau',y)$ to the vertex $v$.
\end{lemma}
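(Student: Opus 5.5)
We telescope the kernel difference along the unique path in $T_1$ from the root $z_1$ to the internal vertex to which the external leg $(\tfrac{1}{2\Lambda^2},u)$ is attached. We then absorb each increment into the internal heat kernel on the corresponding edge, following the mechanism already used in Lemma~\ref{LemDecPo}. Write the path as $\langle z_1,u\rangle=\{v_0=z_1,v_1,\dots,v_q\}$, where $v_q$ is the vertex adjacent to $u$, and close the chain with the final increment from $v_q$ to $u$. Then
\[
K(\tau';z_1,y)-K(\tau';u,y)=\sum_{a}\bigl(K(\tau';v_a,y)-K(\tau';v_{a+1},y)\bigr),
\]
and each summand is controlled by the mean value theorem. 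On the segment $[v_a,v_{a+1}]$ one has
\[
|\partial_x K(\tau';x,y)|\lesssim \tau'^{-1/2}K(c\tau';x,y)
\]
for any $c>1$. The increment is therefore bounded by $\tau'^{-1/2}|v_a-v_{a+1}|\sup_{x\in[v_a,v_{a+1}]}K(c\tau';x,y)$.

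The key step is to replace the supremum over the segment by a kernel anchored at the vertex $v_a\in\langle z_1,u)$. For this we use the elementary Gaussian inequality, valid for $\gamma>1$,
\[
K(c\tau';x,y)\lesssim K(\gamma c\tau';v_a,y)\,\exp\!\Bigl(\tfrac{(x-v_a)^2}{2(\gamma-1)c\tau'}\Bigr).
\]
We must then show that the growth factor $\exp\bigl(C|v_a-v_{a+1}|^2/\tau'\bigr)$, together with the prefactor $|v_a-v_{a+1}|$, is absorbed by the internal kernel $K(\delta_{l_1,n_1+1}/\Lambda_{a}^2;v_a,v_{a+1})$ (or by the external kernel at scale $1/(2\Lambda^2)$ for the last step) once its diffusion parameter is enlarged to $\delta_{l,n}/\Lambda_a^2$. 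The resulting factor is $\Lambda_a^{-1}\le\Lambda^{-1}$.

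This absorption is where hypothesis \eqref{eq:Lemme16_ass} enters, and it is the main obstacle. Absorption requires the quadratic exponent produced by the $y$-kernel, of order $\Lambda_a^2(v_a-v_{a+1})^2/\tau'$ after rescaling, to be dominated by the gap between $\delta_{l_1,n_1+1}$ and $\delta_{l,n}$. The split $\delta_{l,n}=(\delta_{l,n}^{1/3})^3$ suggests the budget: one factor $\delta^{1/3}$ enlarges the $\tau'$-kernel, and the other two enlarge the internal edge kernel. The constant $b_{l_1,l}$ is built from $2^{-l_1-1}-2^{-l-1}$, which is precisely the difference $\delta_{l,n}-\delta_{l_1,n_1+1}$ up to the $n$-dependence, so the condition $\hat\delta_{l,n}\Lambda\ge b_{l_1,l}\tau'^{-1/2}$ should be exactly what makes completing the square in the combined exponent produce a Gaussian at scale $\delta_{l,n}/\Lambda_a^2$ in $v_a-v_{a+1}$. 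I would check this by explicitly completing the square for the product of the two Gaussians in the variable $v_{a+1}-v_a$ and verifying the sign of the resulting quadratic coefficient under \eqref{eq:Lemme16_ass}.

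Once each increment is controlled, the term indexed by $v_a$ reads as follows. It is $\tau'^{-1/2}\Lambda^{-1}$ times the integrand of $T_1$ with all kernels at scale $\delta_{l,n}$, where the monotonicity bound \eqref{eq:Kmono_nohat} is used on the untouched edges, multiplied by one additional factor $K(\tau'\cdot\mathrm{const};v_a,y)$. This is precisely the Feynman-rule integrand of the tree $T_{1,v_a}$ with an extra external leg $(\tau',y)$ at $v_a$. The constant rescaling of $\tau'$ is harmless up to $\lesssim$, since the amplitude definition multiplies $\tau'$ by $\delta_{l,n}$. Because all integrands are nonnegative, we integrate over the internal vertices and take the supremum over the $\Lambda_{ij}$. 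Summing over $v_a\in\langle z_1,u)$ then gives \eqref{eq:Lemme16}.
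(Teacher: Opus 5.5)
Your proposal is correct and follows the paper's proof essentially step by step. The common steps are: telescoping along $\langle z_1,u\rangle$; a gradient bound on each increment; re-anchoring the $y$-Gaussian at the root-side endpoint $v_a$; and absorbing both $|v_a-v_{a+1}|$ and the resulting Gaussian growth factor into the edge kernel by enlarging $\delta_{l_1,n_1+1}$ to $\delta_{l,n}$, which is exactly the paper's completion-of-squares step, where the threshold $\hat\delta_{l,n}\Lambda\ge b_{l_1,l}(\tau')^{-1/2}$ enters. The differences are cosmetic: you take a supremum over the segment instead of the paper's $\int_0^1 dt$, and you treat the last edge to $u$ explicitly at scale $1/(2\Lambda^2)$, giving $(\sqrt2\,\Lambda)^{-1}\le\Lambda^{-1}$. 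Note, though, that the total enlargement of $\tau'$ must stay $\le\delta_{l,n}$; the paper puts all of $(1+\hat\delta_{l,n})^3=\delta_{l,n}$ into the $\tau'$-kernel, not the split you guessed.
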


\begin{proof}
Let $\langle z_1,u\rangle=\{v_0,\dots,v_q\}$ with $v_0=z_1$ and $v_q=u$.
Using the telescopic decomposition along the unique path connecting
$z_1$ and $u$, we write
\[
K(\tau';z_1,y)-K(\tau';u,y)
=
\sum_{i=0}^{q-1}
\bigl(
K(\tau';v_i,y)-K(\tau';v_{i+1},y)
\bigr).
\]

\medskip
\noindent
For every $i\in\{0,\dots,q-1\}$, a first-order Taylor expansion together
with the standard gradient bound
\[
|\nabla_x K(\tau;x,y)|
\lesssim
\tau^{-1/2}K(c\tau;x,y),
\qquad c>1,
\]
yields
\begin{equation}\label{eq:Lemme16_Taylor}
\bigl|K(\tau';v_i,y)-K(\tau';v_{i+1},y)\bigr|
\lesssim
\tau'^{-1/2}|v_i-v_{i+1}|
\int_0^1
K\bigl((1+\hat{\delta}_{l,n})\tau';tv_i+(1-t)v_{i+1},y\bigr)\,dt .
\end{equation}

Multiplying by the heat-kernel factor
$K(\delta_{l_1,n_1+1}/\Lambda_i^2;v_i,v_{i+1})$ attached to the edge
$(v_i,v_{i+1})$ in the representation of
$\mathcal{A}^{\Lambda,\Lambda_0}_{l_1,n_1+1}(T_1)$ and using, for
$1\le l_1\le l-1$,
\begin{equation}\label{eq:Lemme16_edge}
|v_i-v_{i+1}|
K\!\left(\frac{\delta_{l_1,n_1+1}}{\Lambda_i^2};v_i,v_{i+1}\right)
\lesssim
\Lambda^{-1}
K\!\left(\frac{\delta_{l,n}}{\Lambda_i^2};v_i,v_{i+1}\right)
\exp\!\Bigl(-\frac{|v_i-v_{i+1}|^2\Lambda_i^2}{2b_{l_1,l}}\Bigr),
\end{equation}
we invoke the assumption \eqref{eq:Lemme16_ass}. The assumption \eqref{eq:Lemme16_ass} guarantees that the Gaussian factor
in \eqref{eq:Lemme16_edge} can be absorbed into the enlarged heat kernel
appearing in \eqref{eq:Lemme16_Taylor}. A direct completion-of-squares
argument yields
\begin{equation}\label{equ76}
\frac{|v_i-v_{i+1}|^2\Lambda_i^2}{b_{l_1,l}}
+
\frac{|tv_i+(1-t)v_{i+1}-y|^2}
{\tau'(1+\hat\delta_{l,n})}
\gtrsim
\frac{|v_i-y|^2}
{\tau'(1+\hat\delta_{l,n})^3}.
\end{equation}
Using \eqref{equ76}, we obtain the following estimate
\begin{multline*}\label{eq:Lemme16_conv}
K\!\left(\frac{\delta_{l_1,n_1+1}}{\Lambda_i^2};v_i,v_{i+1}\right)
\int_0^1
K\bigl((1+\hat{\delta}_{l,n})\tau';tv_i+(1-t)v_{i+1},y\bigr)\,dt
\\
\lesssim
K\!\left(\frac{\delta_{l,n}}{\Lambda_i^2};v_i,v_{i+1}\right)
K(\delta_{l,n}\tau';v_i,y).
\end{multline*}

\medskip
\noindent
Combining
\eqref{eq:Lemme16_Taylor},
\eqref{eq:Lemme16_edge}
and the preceding convolution estimate,
the contribution associated with the edge
$(v_i,v_{i+1})$
is bounded by
\[
\tau'^{-1/2}\Lambda^{-1}
\mathcal A^{\Lambda,\Lambda_0}_{l,n}(T_{1,v_i})
\Bigl(
z_1;
(\tau_j,y_j)_{j\in\mathfrak s},
(\tau',y),
(\tfrac1{2\Lambda^2},u)
\Bigr),
\]
where $T_{1,v_i}$ is obtained by attaching the additional external leg
$(\tau',y)$ to the vertex $v_i$.
Summing over all vertices
$v_i\in\langle z_1,u)$
yields \eqref{eq:Lemme16}.
\end{proof}
\begin{lemma}\label{LemmaDeltaFusion}
Let $l\ge2$, $n\ge2$, $1\le n_1\le n-1$, and $1\le l_1<l$, and set
\[
l_2:=l-l_1,
\qquad
n_2:=n-n_1.
\]
Let $
T_1\in\mathcal{T}^{s_1+1}_1$ and $
T_2\in\mathcal{T}^{s_2}_1$
rooted respectively at $z_1$ and $z_2$. For $\tau_2>0$ and $y_2\in\mathbb R$,
define
\[
\Delta_{z_1}K_{\tau_2,y_2}(z_2)
:=
K(\tau_2;z_1,y_2)-K(\tau_2;z_2,y_2).
\]
Then one has
\begin{multline}\label{DeltaFusionBound}
\int_{\mathbb R}dz_2\,
\bigl|\Delta_{z_1}K_{\tau_2,y_2}(z_2)\bigr|\,
\mathcal{A}_{l_1,n_1+1}^{\Lambda,\Lambda_0}(T_1)
\otimes_{\pi}
\mathcal{A}_{l_2,n_2+1}^{\Lambda,\Lambda_0}(T_2)
\Bigl(
z_1,z_2;(\tau_i,y_i)_{i\in\mathfrak{s}}
\Bigr)
\\
\lesssim
\tau_2^{-1/2}\Lambda^{-1}
\sum_{v\in\langle z_1,z_2\rangle}
\mathcal{A}_{l,n}^{\Lambda,\Lambda_0}
\bigl(
\mathcal{F}_{u}(T_1,T_2)_{v}
\bigr)
\Bigl(
z_1;
(\tau_i,y_i)_{i\in\mathfrak s},(\tau_2,y_2)
\Bigr),
\end{multline}
where $\mathcal{F}_{u}(T_1,T_2)_{v}$ denotes the tree obtained from
$\mathcal{F}_{u}(T_1,T_2)$ by attaching an additional external leg
$(v,y_2)$ to the vertex $v$ and where the path
\(
\langle z_1,z_2\rangle
\)
is understood in the fused tree
\(
\mathcal F_u(T_1,T_2)
\).
\end{lemma}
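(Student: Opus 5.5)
The plan is to reduce Lemma~\ref{LemmaDeltaFusion} to Lemma~\ref{Lemme16}, applied to the fused tree rather than to $T_1$ alone.

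\emph{Step 1 (fusion).} Using the $\otimes_\pi$-product, write the left-hand side of \eqref{DeltaFusionBound} as an integral over $z_2$ and $u$. Apply the fusion estimate \eqref{eq:FusionTight0} of Lemma~\ref{lem:FusionBound_tight}. This bounds the product, for fixed $z_2$, by the doubly rooted amplitude
\[
\mathcal A^{\Lambda,\Lambda_0}_{l_1\vee l_2,\hat n}\bigl(\mathcal F_u(T_1,T_2)^{(2)}_{z_2}\bigr)(z_1,z_2;\cdot).
\]
In this amplitude the two fusion kernels have been merged by the semigroup property into a single internal edge, carrying $K(\delta/\Lambda^2;\cdot,\cdot)$ with $\Lambda\le\Lambda_{ij}\le\Lambda_0$. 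The factor $|\Delta_{z_1}K_{\tau_2,y_2}(z_2)|$ is nonnegative and does not depend on the internal variables, so it can be pulled inside the supremum and the internal integrals. After this step $z_1$ and $z_2$ are two vertices of one tree, joined by the unique path $\langle z_1,z_2\rangle$ in $\mathcal F_u(T_1,T_2)$.

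\emph{Step 2 (telescoping).} Write
\[
K(\tau_2;z_1,y_2)-K(\tau_2;z_2,y_2)=\sum_{i}\bigl(K(\tau_2;v_i,y_2)-K(\tau_2;v_{i+1},y_2)\bigr)
\]
along the path $\langle z_1,z_2\rangle=\{v_0,\dots,v_q\}$ in the fused tree. Each increment is treated exactly as in the proof of Lemma~\ref{Lemme16}.
\begin{itemize}
\item A first-order Taylor expansion with the Gaussian gradient bound gives \eqref{eq:Lemme16_Taylor}.
\item The factor $|v_i-v_{i+1}|$ is absorbed into the edge kernel via \eqref{eq:Lemme16_edge}, at the price of $\Lambda^{-1}$ and an increase of the diffusion index from $\delta_{l_1\vee l_2,\hat n}$ to $\delta_{l,n}$. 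This is legitimate because $l_1\vee l_2<l$ and $\hat n\le n$ when $l_1,l_2\ge1$.
\item The completion-of-squares estimate \eqref{equ76} moves the intermediate-point kernel onto the vertex $v_i$. This produces a new external edge $(v_i,y_2)$ carrying $K(\delta_{l,n}\tau_2;v_i,y_2)$.
\end{itemize}
The result for each edge is $\tau_2^{-1/2}\Lambda^{-1}$ times the amplitude of the fused tree with an extra leg at $v_i$, still doubly rooted at $z_2$. All remaining kernels are raised to index $(l,n)$ by the monotonicity \eqref{HKMONO} and Lemma~\ref{lem:monotone_l}.

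\emph{Step 3 (root extraction and summation).} Integrate over $z_2$ and use the root extraction identity \eqref{ADRT} to turn each doubly rooted amplitude into $\mathcal A^{\Lambda,\Lambda_0}_{l,n}(\mathcal F_u(T_1,T_2)_{v})$. Sum over $v\in\langle z_1,z_2)\subset\langle z_1,z_2\rangle$. Adding the endpoint term is harmless, since all terms are nonnegative. This yields \eqref{DeltaFusionBound}.

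\emph{Main obstacle.} The delicate point is Step~2. Lemma~\ref{Lemme16} assumes the scale condition \eqref{eq:Lemme16_ass}, $\hat\delta_{l,n}\Lambda\ge b\,\tau_2^{-1/2}$, which is not among the hypotheses here. This condition is what allows the Gaussian remnant in \eqref{eq:Lemme16_edge} to dominate the spread coming from the intermediate point.

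I would first try to avoid it by using the $\sup$ over $\Lambda_{ij}\ge\Lambda$ together with the reserve $\delta_{l,n}-\delta_{l_1\vee l_2,\hat n}>0$. If that fails, I would split into two regimes.
\begin{itemize}
\item If $\tau_2^{-1/2}\lesssim\Lambda$, argue exactly as in Lemma~\ref{Lemme16}.
\item If $\tau_2^{-1/2}\gg\Lambda$, use the crude bound $|\Delta K|\le K(\tau_2;z_1,y_2)+K(\tau_2;z_2,y_2)$. Each term attaches an extra leg at a root, via Lemma~\ref{MagnLemm}. The resulting factor $1\le \tau_2^{-1/2}\Lambda^{-1}$ in this regime recovers the stated bound.
\end{itemize}
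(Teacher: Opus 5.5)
Your proposal is correct and follows the paper's proof. It uses the same two regimes:
\begin{itemize}
\item when $\hat\delta_{l,n}\Lambda\gtrsim\tau_2^{-1/2}$, the telescoping, Taylor, edge-absorption and completion-of-squares mechanism of Lemma~\ref{Lemme16};
\item in the complementary regime, the crude bound $|\Delta_{z_1}K_{\tau_2,y_2}(z_2)|\le K(\tau_2;z_1,y_2)+K(\tau_2;z_2,y_2)$ together with $1\lesssim\tau_2^{-1/2}\Lambda^{-1}$.
\end{itemize}
The only difference is the order of operations. The paper inserts $K(\tau_2;u,y_2)$ at the fusion point, applies Lemma~\ref{Lemme16} to $T_1$ and $T_2$ separately, and then fuses. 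You fuse first and telescope along the merged path.

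The split is not optional, so your \emph{first try} to avoid it would fail. The edge-wise absorption needs $\Lambda_i^2\tau_2\gtrsim1$, and the supremum over $\Lambda_{ij}\in[\Lambda,\Lambda_0]$ cannot supply this when $\tau_2\ll\Lambda^{-2}$. Your fallback two-regime argument is therefore the actual proof.
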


\begin{proof}
The proof combines Lemma~\ref{Lemme16} with the fusion estimate
\eqref{eq:FusionTight1}. The argument depends on whether the scale
$\tau_2$ is sufficiently large compared with $\Lambda^{-2}$:

\medskip
\noindent
\emph{Case 1:}
\[
\hat{\delta}_{l,n}\Lambda \ge b_{l_1,l}\,\tau_2^{-1/2},
\qquad
b_{l_1,l}:=\frac{(2-2^{-l-1})(2-2^{-l_1-1})}{2^{-l_1-1}-2^{-l-1}},
\qquad
\hat{\delta}_{l,n}:=\delta_{l,n}^{1/3}-1.
\]
We insert and subtract the intermediate kernel $K(\tau_2;u,y_2)$ in
$\Delta_{\tau_2,y_2}K$ and decompose the left-hand side of
\eqref{DeltaFusionBound} into the sum of the two contributions
\begin{multline}\label{eq:case1_a}
\int_{\mathbb R}
\bigl(K(\tau_2;z_1,y_2)-K(\tau_2;u,y_2)\bigr)
\mathcal A_{l_1,n_1+1}^{\Lambda,\Lambda_0}\left(T_1\right)
\left(
z_1;(\tau_i,y_i)_{i\in\mathfrak s_1},\left(\tfrac{1}{2\Lambda^2},u\right)
\right)
\\
\times
\mathcal A_{l_2,n_2+1}^{\Lambda,\Lambda_0}(T_2)
\left(
z_2;(\tau_i,y_i)_{i\in\mathfrak s_2},\left(\tfrac{1}{2\Lambda^2},u\right)
\right)\,du
\end{multline}
and
\begin{multline}\label{eq:case1_b}
\int_{\mathbb R}
\mathcal A_{l_1,n_1+1}^{\Lambda,\Lambda_0}(T_1)
\left(
z_1;(\tau_i,y_i)_{i\in\mathfrak s_1},\left(\tfrac{1}{2\Lambda^2},u\right)
\right)
\\
\times
\bigl(K(\tau_2;u,y_2)-K(\tau_2;z_2,y_2)\bigr)
\mathcal A_{l_2,n_2+1}^{\Lambda,\Lambda_0}(T_2)
\left(
z_2;(\tau_i,y_i)_{i\in\mathfrak s_2},\left(\tfrac{1}{2\Lambda^2},u\right)
\right)\,du.
\end{multline}
Applying Lemma~\ref{Lemme16} and then the fusion estimate
\eqref{eq:FusionTight1}, we obtain
\begin{equation}\label{eq:case1_est1}
\eqref{eq:case1_a}
\lesssim
\tau_2^{-1/2}\Lambda^{-1}
\sum_{v\in\langle z_1,u)}
\mathcal A^{\Lambda,\Lambda_0}_{l,n}
\bigl(
\mathcal F_u(T_{1,v},T_2)
\bigr).
\end{equation}
An entirely analogous argument applied to \eqref{eq:case1_b}, now under the
assumption
\[
\hat{\delta}_{l,n}\Lambda \ge b_{l_2,l}\,\tau_2^{-1/2},
\qquad
b_{l_2,l}:=\frac{(2-2^{-l-1})(2-2^{-l_2-1})}{2^{-l_2-1}-2^{-l-1}},
\]
yields
\begin{equation}\label{eq:case1_est2}
\eqref{eq:case1_b}
\lesssim
\tau_2^{-1/2}\Lambda^{-1}
\sum_{v\in\langle z_2,u)}
\mathcal A^{\Lambda,\Lambda_0}_{l,n}
\bigl(
\mathcal F_u(T_{1},T_{2,v})
\bigr)~.
\end{equation}
The vertices produced by the first contribution lie on the
segment joining $z_1$ to the fusion vertex $u$, whereas those
produced by the second contribution lie on the segment joining
$u$ to $z_2$. Together they form the full path
$
\langle z_1,z_2\rangle
$
in the fused tree
$
\mathcal F_u(T_1,T_2)$.
Hence the contribution of Case~1 is bounded by the right-hand
side of \eqref{DeltaFusionBound}.\\
\medskip
\noindent
\emph{Case 2:}
Since
$$
\hat\delta_{l,n}\Lambda
\le
\max(b_{l_1,l},b_{l_2,l})\tau_2^{-1/2},
$$
we have
$
1\lesssim \tau_2^{-1/2}\Lambda^{-1}$.
Hence
\[
|\Delta_{z_1}K_{\tau_2,y_2}(z_2)|
\lesssim
\tau_2^{-1/2}\Lambda^{-1}
\Bigl(
K(\tau_2;z_1,y_2)
+
K(\tau_2;z_2,y_2)
\Bigr).
\]
The two heat kernels may therefore be viewed as additional external
legs attached at $z_1$ and $z_2$, respectively. Applying the fusion
estimate \eqref{eq:FusionTight1} yields a contribution bounded by the
right-hand side of \eqref{DeltaFusionBound}.
Combining the estimates obtained in Cases~1 and~2 yields
\eqref{DeltaFusionBound}.
\end{proof}

\paragraph{Calculus of global $\otimes_{\pi}$ fusion products }
We now move from treewise fusion estimates to estimates for the global
amplitudes. Since the latter are obtained by summing over all admissible
trees, it is necessary to control the combinatorial multiplicity of the
fusion map. The crucial observation is that any fused tree admits only a
uniformly bounded number of preimages under the fusion operation.
Consequently, the estimates established in the previous subsections
remain stable after summation over trees. The next lemma makes this
principle precise.
\begin{lemma}[Fusion pushforward bound]\label{lem:fusion_pushforward_general2}
Fix $s_1,s_2\in\mathbb N$, and set $s:=s_1+s_2$.
Recall the fusion map
\[
\mathcal F_u:\mathcal T^{s_1+1}_1\times \mathcal T^{s_2+1}_1
\longrightarrow
\mathcal T^s_1
\]
 obtained by fusing the distinguished external legs of
both trees at the vertex $u$: the external vertex $u$ is contracted, the two
edges $(z,u)$ and $(z',u)$ are replaced by a single edge $(z,z')$, the root of
the first factor is kept as the root of the fused tree, and the root of the
second factor becomes an internal vertex.

Then the following hold:
\begin{enumerate}
\item For all $T_1\in\mathcal T^{s_1+1}_1$ and $T_2\in\mathcal T^{s_2+1}_1$,
\begin{equation}\label{eq:v_fusion_identity_i}
v\bigl(\mathcal F_u(T_1,T_2)\bigr)=v(T_1)+v(T_2)+1.
\end{equation}

\item For every nonnegative functional $\mathsf F:\mathcal T^s_1\to[0,\infty)$
and all integers $M_1,M_2\ge1$, there exists a constant
$C=C(M_1,M_2,s_1,s_2)$ such that
\begin{equation}\label{eq:fusion_pushforward_general_i}
\sum_{\substack{T_1\in\mathcal T^{s_1+1}_1\\ v(T_1)\le M_1}}
\ \sum_{\substack{T_2\in\mathcal T^{s_2+1}_1\\ v(T_2)\le M_2}}
\mathsf F\bigl(\mathcal F_u(T_1,T_2)\bigr)
\;\le\;
C
\sum_{\substack{T\in\mathcal T^s_1\\ v(T)\le M_1+M_2+1}}
\mathsf F(T).
\end{equation}
\end{enumerate}
\end{lemma}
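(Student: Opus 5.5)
For part (1) we count vertices directly from Definition~\ref{defFusi}, in the concrete form recalled in the statement. The internal vertices of $\mathcal F_u(T_1,T_2)$ are the internal vertices of $T_1$, the internal vertices of $T_2$, and the former root $z_2$ of $T_2$, which is demoted to an internal vertex. The root of $T_1$ stays the root. The contracted vertex $u$ was external in both factors and disappears; every other external vertex stays external. The root is neither internal nor external, so these sets are disjoint and \eqref{eq:v_fusion_identity_i} follows. This is the same count as \eqref{eg30}.

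Part (2) rests on a finite-multiplicity statement for $\mathcal F_u$. Since $\mathsf F\ge0$, we first rewrite the left-hand side as
\[
\sum_{T\in\mathcal T^s_1} N(T)\,\mathsf F(T),
\qquad
N(T):=\#\bigl\{(T_1,T_2):\ v(T_i)\le M_i,\ \mathcal F_u(T_1,T_2)=T\bigr\}.
\]
By (1), $N(T)=0$ unless $v(T)\le M_1+M_2+1$, so the range of summation matches the right-hand side. It remains to bound $N(T)\le C(M_1,M_2,s_1,s_2)$.

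To bound $N(T)$, we reconstruct a preimage from $T$ and a small amount of data. Fusion creates exactly one new edge $(z,z')$, where $z$ is the old neighbour of $u$ in $T_1$ and $z'$ is its old neighbour in $T_2$. Deleting this edge splits $T$ into two components. The component containing the root $z_1$ must be $T_1$ minus its $u$-leg. The other component is $T_2$ minus its $u$-leg, and its root must be declared to be one of its vertices. Reattaching the $u$-legs at $z$ and $z'$ recovers $(T_1,T_2)$ uniquely. A preimage is therefore determined by the choice of one edge of $T$ together with the choice of one vertex of the non-root component as the root of $T_2$. The number of edges of $T$ is bounded in terms of $v(T)+s$, since a tree's edge count is its vertex count minus one. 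Hence
\[
N(T)\le \bigl(v(T)+s+1\bigr)^2\le (M_1+M_2+s+2)^2=:C.
\]
This bound is uniform in $T$, which gives \eqref{eq:fusion_pushforward_general_i}.

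The main obstacle is a bookkeeping subtlety rather than analysis. The trees carry labelled external vertices $y_{\mathfrak s}$, and the vertex identifications are only meaningful up to this labelling. The reconstruction above must therefore respect the partition $\pi$, which means the labels of $\mathfrak s_1$ must land in the root component. We would handle this by noting that the constraint only restricts the admissible edges, so it can only lower $N(T)$ and the bound still holds. We would also check that the degenerate case where $z'$ is the root $z_2$ itself is covered by the vertex choice above. No genuine difficulty is expected beyond this.
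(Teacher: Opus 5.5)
Your proposal is correct and follows the same route as the paper. Part (1) is the same vertex count. Part (2) likewise rewrites the left-hand side as a sum over fused trees and bounds the number of preimages by cutting one edge of $T$. Your reconstruction is slightly more careful than the paper's sketch in two respects: you explicitly record the choice of the root of $T_2$ among the vertices of the non-root component, which the paper leaves implicit, and you note that the split of the external labels is forced by the two components rather than being an extra choice.
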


\begin{proof}
The identity \eqref{eq:v_fusion_identity_i} follows directly from the
definition of the fusion map. The root of the first tree remains the root
of the fused tree, whereas the root of the second tree becomes an internal
vertex. The fusion also creates one additional internal vertex along the
new internal edge. Hence
\[
v\bigl(\mathcal F_u(T_1,T_2)\bigr)
=
v(T_1)+v(T_2)+1.
\]
We now prove \eqref{eq:fusion_pushforward_general_i}. Since $\mathsf F$ is
nonnegative, it is enough to bound, for fixed
$T\in\mathcal T^s_1$ with $v(T)\le M_1+M_2+1$, the cardinality of
\[
\bigl\{(T_1,T_2)\in\mathcal T^{s_1+1}_1\times\mathcal T^{s_2+1}_1:
\mathcal F_u(T_1,T_2)=T\bigr\}.
\]
Fix \(T\in\mathcal T^s_1\) with
\(v(T)\le M_1+M_2+1\). Any pair
\((T_1,T_2)\) satisfying
\[
\mathcal F_u(T_1,T_2)=T
\]
is obtained by cutting one internal edge of \(T\), declaring the component
containing the root of \(T\) to be the first factor, and inserting the
distinguished external vertex \(u\) on both sides of the cut. One must also
choose which \(s_1\) of the \(s\) external labels belong to the first
factor; the remaining labels then belong to the second factor. Since the
number of internal edges of \(T\) is bounded in terms of
\(M_1,M_2,s_1,s_2\), the number of such decompositions is bounded by a
constant
\[
C=C(M_1,M_2,s_1,s_2).
\]
Thus
\[
\#\bigl\{
(T_1,T_2):
\mathcal F_u(T_1,T_2)=T,\,
v(T_i)\le M_i
\bigr\}
\le
C(M_1,M_2,s_1,s_2).
\]
Using also \eqref{eq:v_fusion_identity_i}, we see that
\[
v(T_1)\le M_1,\qquad v(T_2)\le M_2
\quad\Longrightarrow\quad
v\bigl(\mathcal F_u(T_1,T_2)\bigr)\le M_1+M_2+1.
\]
Therefore,
\begin{align*}
\sum_{\substack{T_1\in\mathcal T^{s_1+1}_1\\ v(T_1)\le M_1}}
\ \sum_{\substack{T_2\in\mathcal T^{s_2+1}_1\\ v(T_2)\le M_2}}
\mathsf F\bigl(\mathcal F_u(T_1,T_2)\bigr)
&=
\sum_{\substack{T\in\mathcal T^s_1\\ v(T)\le M_1+M_2+1}}
\sum_{\substack{(T_1,T_2)\in\mathcal T^{s_1+1}_1\times\mathcal T^{s_2+1}_1\\
\mathcal F_u(T_1,T_2)=T}}
\mathsf F(T)
\\
&\le
C(M_1,M_2,s_1,s_2)
\sum_{\substack{T\in\mathcal T^s_1\\ v(T)\le M_1+M_2+1}}
\mathsf F(T),
\end{align*}
which is precisely \eqref{eq:fusion_pushforward_general_i}.
\end{proof}
\noindent For convenience, we introduce the shorthand
\begin{equation}\label{eq:MixedFusionNotation}
\langle \mathcal A,\hat{\mathcal A}\rangle_{l,n}
:=
\mathcal A_{l,n}^{\Lambda,\Lambda_0}
+
(1-\delta_{l,0})\mathbf{1}_{\{\sigma\neq \varnothing\}}
\Lambda_m^{-1}
\hat{\mathcal A}_{l,n}^{\Lambda,\Lambda_0}.
\end{equation}
Furthermore, if $j\in\sigma_2$, we write
\begin{equation}\label{eq:MixedFusionDeltaNotation}
\mathcal X_1
\otimes_\pi^{(j)}
\Delta_{z_1}\chi\,
\mathcal X_2
\end{equation}
for the fusion product in which the localisation factor indexed by
$j$ is extracted from the second amplitude and replaced by the insertion
$\Delta_{z_1}\chi$ before fusion.

The preceding estimates were established at the level of individual
rooted trees. Combining them with the pushforward estimate of
Lemma~\ref{lem:fusion_pushforward_general2} allows one to pass to the
corresponding bounds for the global amplitudes. The resulting estimates
are collected in the following corollary, which will be used repeatedly
in the analysis of the quadratic part of the flow equation.
\begin{corollary}[Global fusion bounds]\label{CorFus}
Let $(l_1,l_2)\in\mathbb N_0^2$ and $n_1,~n_2\ge2$, and set
\[
l:=l_1+l_2,
\qquad
n:=n_1+n_2.
\]
\emph{Fusion without insertions.}
One has
\begin{align}
\mathcal A_{l_1,2}^{\Lambda,\Lambda_0}
\otimes_\pi
\mathcal A_{l_2,2}^{\Lambda,\Lambda_0}
&\lesssim
\mathcal A_{l,2}^{\Lambda,\Lambda_0},
\label{157CorFus}
\\
\Bigl(
\mathcal A_{l_1,n}^{\Lambda,\Lambda_0}
+\Lambda_m^{-1}\hat{\mathcal A}_{l_1,n}^{\Lambda,\Lambda_0}
\Bigr)
\otimes_\pi
\mathcal A_{l_2,2}^{\Lambda,\Lambda_0}
&\lesssim
\Bigl(
\mathcal A_{l,n}^{\Lambda,\Lambda_0}
+\Lambda_m^{-1}\hat{\mathcal A}_{l,n}^{\Lambda,\Lambda_0}
\Bigr),
\qquad l_1\ge1,\label{CorFuss}\\
\mathcal A_{l_1,n_1+1}^{\Lambda,\Lambda_0}
\otimes_\pi
\mathcal A_{l_2,n_2+1}^{\Lambda,\Lambda_0}
&\lesssim
\mathcal A_{l,n}^{\Lambda,\Lambda_0}~,
\label{CorFuss1}\\
\mathcal A_{l_1,n_1+1}^{\Lambda,\Lambda_0}
\otimes_\pi
\hat{\mathcal A}_{l_2,n_2+1}^{\Lambda,\Lambda_0}
&\lesssim
\hat{\mathcal A}_{l,n}^{\Lambda,\Lambda_0}~,\qquad l_2\ge 1,
\label{CorFuss2}\\
\hat{\mathcal A}_{l_1,n_1+1}^{\Lambda,\Lambda_0}
\otimes_\pi
\hat{\mathcal A}_{l_2,n_2+1}^{\Lambda,\Lambda_0}
&\lesssim
~\left(\sum_{i\in\mathfrak{s}_2}\tau_i^{-\frac{1}{2}}+\Lambda\right)~\hat{\mathcal A}_{l,n}^{\Lambda,\Lambda_0}~,\qquad l_1\ge 1,~~l_2\ge 1~.
\label{CorFuss3}
\end{align}
In particular for $n_1,n_2\ge 3$ we have 
\begin{equation}\label{eq:CorFus_2}
    \langle\mathcal A,\hat{\mathcal A}\rangle_{l_1,n_1+1}
\otimes_\pi
\langle\mathcal A,\hat{\mathcal A}\rangle_{l_2,n_2+1}
\lesssim
\left(
1+
(1-\delta_{l_2,0})
\sum_{i\in\mathfrak s_2}
\frac{\tau_i^{-1/2}}{\Lambda_m}
\right)
\langle
\mathcal A,\hat{\mathcal A}\rangle_{l,n}~.
\end{equation}
\smallskip
\noindent
\emph{Fusion with one insertion of $\Delta_{z_1}K_{\tau_2,y_2}$.}
For $n_1,n_2\ge1$,
\begin{equation}\label{eq:CorFus_3}
\mathcal A_{l_1,n_1+1}^{\Lambda,\Lambda_0}
\otimes_\pi
\Delta_{z_1}K_{\tau_2,y_2}\,
\mathcal A_{l_2,n_2+1}^{\Lambda,\Lambda_0}
\bigl(
z_1;(\tau_i,y_i)_{i\in\mathfrak s}
\bigr)
\lesssim
\tau_2^{-1/2}~\Lambda^{-1}~
\mathcal A_{l,n}^{\Lambda,\Lambda_0}
\bigl(
z_1;
(\tau_i,y_i)_{i\in\mathfrak s},
(\tau_2,y_2)
\bigr).
\end{equation}

\smallskip
\noindent
\emph{Fusion with one insertion of $\Delta_{z_1}\chi$.}
For $n_1,n_2\ge2$ and $j\in\sigma_2$,
\begin{equation}\label{eq:CorFus_4}
\langle\mathcal A,\hat{\mathcal A}\rangle_{l_1,n_1+1}
\otimes_\pi^{(j)}
\Delta_{z_1}\chi\,
\langle\mathcal A,\hat{\mathcal A}\rangle_{l_2,n_2+1}
\lesssim
\Lambda^{-1}
\left(
1+
(1-\delta_{l_2,0})\mathbf{1}_{\{\sigma_2^{(j)}\neq\varnothing\}}
\sum_{i\in\mathfrak s_2}
\frac{\tau_i^{-1/2}}{\Lambda_m}
\right)
\hat{\mathcal A}_{l,n}^{\Lambda,\Lambda_0}.
\end{equation}
All amplitudes are evaluated at
$
z_1$, $(\tau_i,y_i)_{i\in\mathfrak s}$ and $\chi^\sigma$.
The implicit constants depend only on $l$, $n$, and $\mathfrak s$.
\end{corollary}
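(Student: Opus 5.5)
The plan is to derive every estimate in Corollary~\ref{CorFus} from the treewise bounds already established, and then pass to global amplitudes with the pushforward Lemma~\ref{lem:fusion_pushforward_general2}.

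\textbf{Step 1: expand into trees.} I will expand each global amplitude on the left-hand side as its defining sum over trees $T_1\in\mathcal T^{s_1+1}_1$ with $v(T_1)\le l_1+\frac{n_1+1}{2}-2$, and similarly for $T_2$. Since $\otimes_\pi$ is bilinear, the fusion product becomes a double sum of treewise fusion products $\mathcal B_1(T_1)\otimes_\pi\mathcal B_2(T_2)$.

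\textbf{Step 2: apply the treewise bound.} For each pair, I will invoke the matching treewise estimate:
\begin{itemize}
\item \eqref{eq:FusionTight1} for \eqref{157CorFus} and \eqref{CorFuss1};
\item \eqref{eq:FusionTight5} and Lemma~\ref{lem:FusionDeltaChi} for \eqref{CorFuss} and \eqref{CorFuss2};
\item Lemma~\ref{LemIm} for \eqref{CorFuss3};
\item Lemma~\ref{LemmaDeltaFusion} for \eqref{eq:CorFus_3};
\item Lemma~\ref{lem:Delta-chi-fusion} and Corollary~\ref{cor:Delta-chi-fusion-combined} for \eqref{eq:CorFus_4} and \eqref{eq:CorFus_2}.
\end{itemize}
Each of these bounds the pair by an amplitude (hatted or not) on $\mathcal F_u(T_1,T_2)$, or on $\mathcal F_u(T_1,T_2)_v$ in the case of \eqref{eq:CorFus_3}. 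The index on the right is $(l_1\vee l_2,\hat n)$ or $(l,n)$. Lemma~\ref{lem:monotone_l} raises it to $(l,n)$, because $l_1\vee l_2\le l$ and $\hat n\le n$ when $n_1,n_2\ge1$.

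For the combined estimate \eqref{eq:CorFus_2}, I will expand $\langle\mathcal A,\hat{\mathcal A}\rangle$ into its four cross terms and treat each separately. The $\hat{\mathcal A}\otimes\hat{\mathcal A}$ term carries $\Lambda_m^{-2}(\sum\tau_i^{-1/2}+\Lambda)$. This gives the factor $\Lambda_m^{-1}(1+\sum\tau_i^{-1/2}/\Lambda_m)$, using $\Lambda\le\Lambda_m$. The prefactors $(1-\delta_{l_2,0})$ and $\mathbf 1_{\{\sigma\ne\varnothing\}}$ simply record when the hatted terms are present.

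\textbf{Step 3: sum back over trees.} I will apply \eqref{eq:fusion_pushforward_general_i} with the nonnegative functional $\mathsf F(T)=\mathcal A_{l,n}(T)$ (or $\hat{\mathcal A}_{l,n}(T)$). The tree count is controlled by \eqref{eq:v_fusion_identity_i}:
\[
v\le l_1+l_2+\tfrac{n_1+n_2+2}{2}-4+1=l+\tfrac n2-2,
\]
which is exactly the range of trees in the definition of the global amplitude $\mathcal A_{l,n}$. Hence the sum is dominated by $C\,\mathcal A_{l,n}$.

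For \eqref{eq:CorFus_3}, the extra leg is attached to $v$ on a path. The number of choices of $v$ is bounded by $v(T)+1$, so the resulting trees still lie in the admissible class with one more external vertex, and the multiplicity stays uniformly bounded.

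\textbf{Main obstacle.} I expect the main difficulty in \eqref{eq:CorFus_4} and \eqref{eq:CorFus_2}. There the treewise lemmas require the roots to be attached to $\sigma$-decorated external vertices, which is the hypothesis of Lemmas~\ref{LemIm} and~\ref{lem:Delta-chi-fusion}.

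I would first split the trees according to whether the root of $T_2$ lies in $V^{\sigma_2}_{\mathrm{ext}}$:
\begin{itemize}
\item If it does, the lemmas apply directly.
\item If it does not, I would use the remark leading to \eqref{Nuevo2} and the bound $|\Delta_{z_1}\chi|\le 2\|\chi\|_\infty$. The cost is a constant and the $\Lambda^{-1}$ is not gained, but this is acceptable after absorbing the constant.
\end{itemize}
The indicator $\mathbf 1_{\{\sigma_2^{(j)}\ne\varnothing\}}$ arises exactly when no further hatted term is available in the second factor once the $j$-th localisation factor has been extracted.
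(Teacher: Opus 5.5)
Your architecture is the paper's own. You expand both global amplitudes into tree sums, apply the treewise bounds of Lemmas~\ref{lem:FusionBound_tight}, \ref{lem:FusionDeltaChi}, \ref{LemIm}, \ref{LemmaDeltaFusion} and Corollary~\ref{cor:Delta-chi-fusion-combined}, and resum with Lemma~\ref{lem:fusion_pushforward_general2}. Your vertex count $v\le l+\frac n2-2$ and the passage from $(l_1\vee l_2,\hat n)$ to $(l,n)$ via Lemma~\ref{lem:monotone_l} are correct, and more explicit than the paper. (Minor: your bullet list attributes \eqref{eq:CorFus_2} to the $\Delta\chi$-fusion results. It actually comes from Lemmas~\ref{lem:FusionBound_tight}, \ref{lem:FusionDeltaChi} and~\ref{LemIm}, which is what your four-term expansion later uses.)

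\textbf{The fallback in your ``main obstacle'' paragraph fails.} Suppose you bound $|\Delta_{z_1}\chi|\le 2\|\chi\|_\infty$ and reduce hatted factors with \eqref{Nuevo2}. You then get at best $C\|\chi\|_\infty^{|\sigma|}$ times an undecorated, unhatted $\mathcal A_{l,n}^{\Lambda,\Lambda_0}$, with no factor $\Lambda^{-1}$. This cannot be absorbed into the right-hand side of \eqref{eq:CorFus_4} with constants independent of $\Lambda$ and $\chi$, for two reasons.
\begin{itemize}
\item The target carries $\Lambda^{-1}$, which is small for large $\Lambda$.
\item $\hat{\mathcal A}_{l,n}^{\Lambda,\Lambda_0}(\cdot;\chi^\sigma)$ carries the factor $\Delta^{(1)}\chi=\int_0^1\chi'$. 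For $\chi=c+\varepsilon\psi$ with $\psi$ a smooth monotone step, it is $O(\varepsilon)$, while your bound stays of order $c^{|\sigma|}$.
\end{itemize}
Discarding the difference $\chi(z_1)-\chi(z_2)$ destroys exactly the structure the estimate records. Moreover, \eqref{Nuevo2} converts $\int\hat{\mathcal A}$ into an undecorated $\mathcal A$, the opposite of what you need. Applied to the $\hat{\mathcal A}\otimes_\pi\hat{\mathcal A}$ term of \eqref{eq:CorFus_2}, it would likewise lose the $\chi^\sigma$ decoration on the right.

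\textbf{The case split is vacuous, so you can simply drop it.} By definition, $\otimes_\pi^{(j)}$ extracts the $j$-th localisation factor $\chi(z_j)K(\tau_j;z_j,y_j)$ from the second factor. That is, the second amplitude is rooted at $z_j$, with the leg $y_j$, $j\in\sigma_2$, attached to its root (see \eqref{qtte} and Lemma~\ref{MagnLemm}). Hence every $T_2$ in the sum has its root in $V^{\sigma_2}_{\mathrm{ext}}(T_2)$, which is exactly the hypothesis on the second tree in Lemmas~\ref{lem:Delta-chi-fusion} and~\ref{LemIm}. In particular, $\chi(z_1)-\chi(z_2)=(z_1-z_2)\Delta^{(1)}_{z_1,z_2}\chi$ together with Lemma~\ref{LemDecPo} produces the $\Lambda^{-1}$ and the hatted amplitude term by term. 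Similarly, the $\hat{\mathcal A}\otimes_\pi\hat{\mathcal A}$ term of \eqref{eq:CorFus_2} only arises when $\sigma_2\neq\varnothing$, where the same rooting is used (cf.\ \eqref{279-}). For $\sigma_2=\varnothing$ the second bracket is just $\mathcal A_{l_2,n_2+1}^{\Lambda,\Lambda_0}$. With the fallback replaced by this observation, your proposal matches the paper's proof.
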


\begin{proof}
Expanding the global amplitudes into sums over rooted trees, each term
appearing on the left-hand side becomes a double sum indexed by pairs
\(
(T_1,T_2)
\).
Applying the corresponding treewise fusion estimate yields a bound by a
nonnegative functional of the fused tree
\[
\mathcal F_u(T_1,T_2).
\]
Since all amplitudes are nonnegative,
Lemma~\ref{lem:fusion_pushforward_general2}
converts the resulting double sum into a single sum over fused trees,
up to a combinatorial constant depending only on the number of external
legs.

The estimates
\eqref{157CorFus}--\eqref{CorFuss}
follow directly from the ordinary treewise fusion bounds established in
Lemma~\ref{lem:FusionBound_tight}. The corresponding mixed estimates
involving hatted amplitudes are obtained in exactly the same way.

To prove \eqref{eq:CorFus_2}, we expand the two brackets and apply the
appropriate treewise estimate to each of the four resulting terms
\[
\mathcal A\otimes_\pi\mathcal A,
\qquad
\mathcal A\otimes_\pi\hat{\mathcal A},
\qquad
\hat{\mathcal A}\otimes_\pi\mathcal A,
\qquad
\hat{\mathcal A}\otimes_\pi\hat{\mathcal A}.
\]
The first three terms are controlled by
Lemma~\ref{lem:FusionBound_tight}, while the last term is controlled by
Lemma~\ref{LemIm}. Summation over trees then yields
\eqref{eq:CorFus_2}.

Estimate \eqref{eq:CorFus_3} follows from
Lemma~\ref{LemmaDeltaFusion}. After applying the corresponding treewise
bound, Lemma~\ref{lem:fusion_pushforward_general2} yields the global
estimate with the factor $\tau_2^{-1/2}$.

Finally, \eqref{eq:CorFus_4} follows from
Corollary~\ref{cor:Delta-chi-fusion-combined}. Expanding the brackets,
applying the corresponding treewise estimates, and summing over trees
using Lemma~\ref{lem:fusion_pushforward_general2} yields the factor
$\Lambda^{-1}$ together with the hatted amplitude on the right-hand
side.
\end{proof}
\subsection{Norms of tree amplitudes}
The following lemma provides the basic estimate that will be used repeatedly
in the sequel. It shows that amplitudes containing cutoff decorations can
be controlled by the corresponding undecorated amplitudes at one higher
loop order, up to a factor involving the external scales $\tau_i$.
\begin{lemma}\label{lem:Linfty_forest}
Let $\mathfrak{s}\subseteq \left\{1,\cdots,n\right\}$ and $\sigma\subsetneq \mathfrak{s}$ with
$\mathfrak{s}\neq\varnothing$. Then, uniformly for $\Lambda\le \Lambda_0$ and
for $(\tau_i,y_i)_{i\in\mathfrak{s}}$, one has
\begin{equation}\label{eq:BornDomi_AHP}
\left(
\mathcal{A}_{l,n}^{\Lambda,\Lambda_0}
+\Lambda_m^{-1}\hat{\mathcal{A}}_{l,n}^{\Lambda,\Lambda_0}
\right)
\bigl((\tau_i,y_i)_{i\in\mathfrak{s}};\chi^{\sigma}\bigr)
\;\lesssim~\|\chi\|_{\infty}^{|\sigma|}
\left(1+
\sum_{i\in\mathfrak{s}}
\frac{\tau_i^{-1/2}}{\Lambda_m}
\right)
\mathcal{A}_{l+1,n}^{\Lambda,\Lambda_0}
\bigl((\tau_i,y_i)_{i\in\mathfrak{s}}\bigr).
\end{equation}
The implicit constant depends only on $l$, $n$, and $\mathfrak{s}$.
\end{lemma}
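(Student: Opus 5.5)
The bound splits into two pieces: the undecorated part $\mathcal A_{l,n}^{\Lambda,\Lambda_0}(\cdot;\chi^\sigma)$, and the hatted part $\Lambda_m^{-1}\hat{\mathcal A}_{l,n}^{\Lambda,\Lambda_0}(\cdot;\chi^\sigma)$. Both are sums over finitely many trees $T\in\mathcal T^{\mathfrak s}_1$ with $v(T)\le l+\frac n2-2$ and over decorations $\vec k\in\Sigma(\sigma)$. The number of these terms is bounded by a constant depending only on $l,n,\mathfrak s$, so it suffices to prove the estimate treewise and for fixed $\vec k$, and then sum. Throughout, the loss from $(l,n)$ to $(l+1,n)$ is used only through the monotonicity $\delta_{l,n}\le\delta_{l+1,n}$. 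This is what allows each Gaussian-derivative estimate (which enlarges the diffusion parameter slightly) to be absorbed, exactly as in Lemma~\ref{lem:monotone_l} and the proofs of Lemma~\ref{Lemmachi'} and Lemma~\ref{LemmeDeriv}.

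For the undecorated part I would simply use \eqref{eq:borneamplitudechi}: since $0\le\chi\le\|\chi\|_\infty$ and $\sum_i k_i=|\sigma|$, one has $\mathcal A_{l,n}(T)(\cdot;\chi^\sigma)\le\|\chi\|_\infty^{|\sigma|}\mathcal A_{l,n}(T)(\cdot)$. Lemma~\ref{lem:monotone_l} then gives the bound by $\mathcal A_{l+1,n}(T)$, which accounts for the term $1$ in the prefactor.

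For the hatted part I would go back to Definition~\ref{def:hat-amplitude} and treat the two families of terms separately. In the first family the root $z_1$ is one of the distinguished points, and the factor $\Delta^{(1)}_{z_1,z}\chi$ multiplies the doubly rooted amplitude $\mathcal A(T^{(2)}_z)$. Here the root is not integrated, so Corollary~\ref{CorTransport} (which bounds $\int dz_1\,\hat{\mathcal A}$) does not apply directly. Instead I would repeat the integration-by-parts argument of Lemma~\ref{Lemmachi'} in the variable $z$ alone. After translating all internal variables other than $z_1$ by $z$, write $\Delta^{(1)}_{z_1,z}\chi=\partial_z\Phi(z)$ with $\Phi(z)=\int_0^1\chi(tz_1+(1-t)z)\,\frac{dt}{1-t}$-type primitive. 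A cleaner alternative is to integrate by parts with the bounded primitive $(\chi(z)-\chi(z_1))/(z-z_1)$, absorbing $|z-z_1|^{-1}$ issues by splitting into $|z-z_1|\lesssim\Lambda^{-1}$, where one uses $\Delta^{(1)}\chi\le \|\chi'\|$-free bounds through monotonicity, and the complementary region. The derivative then falls either on an external kernel, producing $\tau_i^{-1/2}$ at the cost of $\delta_{l,n}\to\delta_{l+1,n}$; or on a cutoff factor, producing a $\chi'$ insertion, which Lemma~\ref{LemmeDeriv} reabsorbs into a $\tau_i^{-1/2}$ factor; or on an internal kernel through the translation, producing a factor $\Lambda_{ij}\le\Lambda_0$. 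In each case the remaining $\chi$ factors are bounded by $\|\chi\|_\infty^{|\sigma|}$ as in \eqref{Nuevo2}. The second family, with two non-root distinguished vertices, is handled directly by \eqref{equationNuevo}/\eqref{Nuevo2} together with root extraction \eqref{ADRT}, since both $z,z'$ are integrated. After multiplying by $\Lambda_m^{-1}$, these contributions give $\sum_{i\in\mathfrak s}\tau_i^{-1/2}/\Lambda_m$ times $\|\chi\|_\infty^{|\sigma|}\mathcal A_{l+1,n}$.

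The main obstacle is the derivative landing on internal kernels in the root-type term. A naive bound would produce $\Lambda_0$ rather than something controlled by $\Lambda_m$, which would destroy uniformity in $\Lambda_0$. My first attempt is to avoid translating internal variables at all. Instead, I would integrate by parts only in the single variable $z$, moving $\partial_z$ onto the kernels adjacent to $z$ (the edges at $z$) and onto its own external legs. For the internal edge at $z$, I would then use the identity $\partial_zK(t;z,w)=-\partial_wK(t;z,w)$ to push the derivative along the path toward an external leg, terminating at some $\tau_i^{-1/2}$. If this fails, I would use the bound $|\partial_z K(\delta/\Lambda_{ij}^2)|\lesssim\Lambda_{ij}K$ only where $\Lambda_{ij}\lesssim\tau_i^{-1/2}$, and the Taylor-remainder trick of Lemma~\ref{Lemme16} otherwise.
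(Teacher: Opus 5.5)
The undecorated part of your proposal is fine: \eqref{eq:borneamplitudechi} together with Lemma~\ref{lem:monotone_l} handles it. The hatted part, however, rests on a misreading of the statement.

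The amplitudes in \eqref{eq:BornDomi_AHP} carry no root argument. They are the root-integrated quantities $\int_{\mathbb R}\hat{\mathcal A}^{\Lambda,\Lambda_0}_{l,n}(T)\bigl(z_1;(\tau_i,y_i)_{i\in\mathfrak s};\chi^\sigma\bigr)\,dz_1$, which is also why $\mathfrak s$ may contain the index $1$. You assert instead that ``the root is not integrated'' and try to prove the bound pointwise in $z_1$.

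That pointwise version is false uniformly in $\chi$. Take a tree in which the root is joined by an internal edge to a vertex $z\in V^\sigma_{\mathrm{ext}}(T)$, set $z_1=0$, and let $\chi=\chi_\varepsilon$ sharpen to $\mathds 1_+$. For $\varepsilon\ll|z|\lesssim1$, the quotient $\Delta^{(1)}_{0,z}\chi_\varepsilon=(\chi_\varepsilon(z)-\chi_\varepsilon(0))/z\ge0$ is of order $|z|^{-1}$ on at least one side of $0$. Hence the root-type term of $\hat{\mathcal A}$ grows like $\log(1/\varepsilon)$, while the right-hand side depends on $\chi$ only through $\|\chi\|_\infty$.

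The tools you propose show the same difficulty:
\begin{itemize}
\item $\int_0^1\chi(tz_1+(1-t)z)\,\frac{dt}{1-t}$ diverges at $t=1$ unless $\chi(z_1)=0$. After subtracting $\chi(z_1)$, its size is controlled only through $\|\chi'\|_\infty$, which is not uniform along $\chi_\varepsilon$.
\item Your ``cleaner alternative'' $(\chi(z)-\chi(z_1))/(z-z_1)$ is $\Delta^{(1)}_{z_1,z}\chi$ itself, not a primitive of it.
\item The ``main obstacle'' of derivatives hitting internal kernels at scales up to $\Lambda_0$ is an artefact of keeping the root fixed while translating the other vertices.
\end{itemize}

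The missing observation is this: once $z_1$ is integrated, all vertices, root included, can be translated simultaneously. The derivative produced by integrating by parts then lands only on external kernels, giving $\tau_i^{-1/2}$, or on cutoff factors, giving $\chi'$-insertions that Lemma~\ref{LemmeDeriv} reabsorbs. This is exactly what Corollary~\ref{CorTransport} encodes, and \eqref{equationNuevo} is already a bound on the root-integrated $\hat{\mathcal A}$.

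The paper's proof is then short:
\begin{enumerate}
\item Expand the global $\hat{\mathcal A}$ into finitely many trees and integrate in $z_1$.
\item Apply \eqref{equationNuevo}, or \eqref{Nuevo2} when the root carries no $\sigma$-leg, to get $\|\chi\|_\infty^{|\sigma|}\bigl(\sum_{i\in\mathfrak s}\tau_i^{-1/2}\bigr)\int\mathcal A^{\Lambda,\Lambda_0}_{l',n}\,dz_1$.
\item Multiply by $\Lambda_m^{-1}$ and take $l'=l+1$ via Lemma~\ref{lem:monotone_l}.
\end{enumerate}
The triple-rooted terms are covered the same way, by demoting the integrated root to an internal vertex. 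No separate treatment of root-type terms and no pushing of derivatives along paths is needed.
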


\begin{proof}
Recall the decomposition
\begin{equation}\label{eq:Ahat_decomp_clean}
\hat{\mathcal{A}}^{\Lambda,\Lambda_0}_{l,n}
\Bigl(
z_1;(\tau_i,y_i)_{i\in\mathfrak{s}};\chi^{\sigma}
\Bigr)
=
\sum_{\substack{
T\in\mathcal{T}^{\mathfrak{s}}_1\\
0\le v(T)\le l+\frac{n}{2}-2}}
\hat{\mathcal{A}}^{\Lambda,\Lambda_0}_{l,n}(T)
\Bigl(
z_1;(\tau_i,y_i)_{i\in\mathfrak{s}};\chi^{\sigma}
\Bigr).
\end{equation}
Integrating with respect to the distinguished variable $z_1$ yields
\begin{equation}\label{eq:Ahat_integrated_clean}
\hat{\mathcal{A}}^{\Lambda,\Lambda_0}_{l,n}\left(T\right)
\Bigl(
(\tau_i,y_i)_{i\in\mathfrak{s}};\chi^{\sigma}
\Bigr)
=
\int_{\mathbb{R}}
\hat{\mathcal{A}}^{\Lambda,\Lambda_0}_{l,n}\left(T\right)
\Bigl(
z_1;(\tau_i,y_i)_{i\in\mathfrak{s}};\chi^{\sigma}
\Bigr)\,dz_1.
\end{equation}
Using \eqref{Nuevo2} and \eqref{equationNuevo} from Corollary~\ref{CorTransport}, one obtains for every $l'>l$
\begin{equation}\label{bBo}
\int_{\mathbb{R}}
\hat{\mathcal{A}}^{\Lambda,\Lambda_0}_{l,n}\left(T\right)
\Bigl(
z_1;(\tau_i,y_i)_{i\in\mathfrak{s}};\chi^{\sigma}
\Bigr)\,dz_1
\;\lesssim\;
\|\chi\|_{\infty}^{|\sigma|}\left(
\sum_{i\in\mathfrak{s}}\tau_i^{-1/2}
\right)
\int_{\mathbb{R}}
\mathcal{A}^{\Lambda,\Lambda_0}_{l',n}
\Bigl(
z_1;(\tau_i,y_i)_{i\in\mathfrak{s}}
\Bigr)\,dz_1 .
\end{equation}
Combining \eqref{bBo} with  Lemma \ref{lem:monotone_l},
and choosing $l'=l+1$ yields \eqref{eq:BornDomi_AHP}.
\end{proof}
\begin{lemma}[Uniform $\Lambda_0$ bound via external $L^\infty$ estimate and semigroup contraction]
\label{lem:uniform_L0_Hairer_style}
Let $K$ denote the one--dimensional Euclidean heat kernel
\[
K(t;z,\bar z):=(4\pi t)^{-1/2}\exp\!\Bigl(-\frac{(z-\bar z)^2}{4t}\Bigr),
\qquad t>0.
\]
Let $T$ be a finite rooted tree with internal vertices $\{1,\dots,r\}$ (root $1$), internal edges
$\mathcal I(T)$, and external vertices indexed by a finite set $\mathfrak{s}$ with $|\mathfrak{s}|\ge 1$,
together with incidence relation $\mathcal E(T)\subset \{1,\dots,r\}\times\mathfrak{s}$.

Then there exists a constant $C_T<\infty$, depending only on the combinatorics of $T$, such that,
uniformly in $\Lambda_0$,
\begin{equation}\label{eq:uniform_L0_Hairer_min}
\sup_{(y_j)_{j\in\mathfrak{s}}}
\int_{\mathbb R}
\mathcal A^{\Lambda,\Lambda_0}_{l,n}(T)
\Bigl(z_1;(\tau_j,y_j)_{j\in\mathfrak{s}}\Bigr)\,dz_1
\;\le\;
C_T\,\delta_l^{-\frac{|\mathfrak{s}|-1}{2}}\,
\tau_*^{-\frac{|\mathfrak{s}|-1}{2}},
\end{equation}
where $\tau_*:=\min_{j\in\mathfrak{s}}\tau_j$.
\end{lemma}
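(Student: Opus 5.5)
We prove the bound by integrating out the internal vertices of $T$ from the leaves towards the root, using only mass conservation and the semigroup property of the heat kernel. The supremum over the external positions is handled by one external $L^\infty$ bound per external leg except one.

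\textbf{Step 1: reduction.} Fix the internal scales $\Lambda\le\Lambda_{ij}\le\Lambda_0$. It suffices to bound the integrand uniformly in these scales, since the supremum in \eqref{AmT} is then harmless. After integrating $z_1$, all internal vertices $z_1,\dots,z_r$ are integrated, and the integrand is a product of heat kernels indexed by the edges of $T$. Choose one distinguished external index $j_0\in\mathfrak s$ with external edge $(z_{a},y_{j_0})$. For every other $j\in\mathfrak s\setminus\{j_0\}$ use
\[
\sup_{z,y}K(\delta_{l,n}\tau_j;z,y)\lesssim(\delta_{l,n}\tau_j)^{-1/2}\le(\delta_{l,n}\tau_*)^{-1/2}.
\]
This removes those legs and produces exactly the factor $\delta^{-(|\mathfrak s|-1)/2}\tau_*^{-(|\mathfrak s|-1)/2}$; here $\delta_l$ in the statement is $\delta_{l,n}$, which is bounded above and below by absolute constants.

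\textbf{Step 2: integrating the tree.} The remaining object is
\[
\int\prod_{\mathcal I(T)}K\Bigl(\frac{\delta_{l,n}}{\Lambda_{ij}^2};z_i,z_j\Bigr)\,K(\delta_{l,n}\tau_{j_0};z_a,y_{j_0})\,d\vec z_{1,r},
\]
a tree of kernels with a single external leg. Re-root the tree at $z_a$ and integrate vertices of valency one iteratively, starting from the leaves. Each leaf integration $\int K(t;z_i,z_j)\,dz_j=1$ uses only mass conservation, as in \eqref{eq:lemm2:mass}. This collapses the tree to $\int K(\delta_{l,n}\tau_{j_0};z_a,y_{j_0})\,dz_a=1$, uniformly in $y_{j_0}$, $\Lambda_{ij}$ and $\Lambda_0$. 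The constant $C_T$ only counts the tree's combinatorics and the normalisation of $K$; the normalisation in the lemma differs from that of \eqref{AmT} by factors $\sqrt{2\pi}$, $\sqrt2$ in time, which go into $C_T$.

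\textbf{Expected obstacle.} The only delicate point is ensuring that no $\Lambda_0$-dependence enters. This would happen if one bounded an internal kernel in $L^\infty$, which costs $\Lambda_{ij}\le\Lambda_0$. So the order of operations matters: all $L^\infty$ bounds must be placed on external legs, whose times $\tau_j$ are fixed, and internal kernels must only ever be integrated. This works because the tree has no cycles. Fixing $r$ internal positions but integrating them all, with exactly one external leg kept in $L^1$, leaves every internal edge available for a mass-conservation integration in leaf order. That is the semigroup-contraction mechanism in the lemma's title. If some internal vertex carries no external leg and is not a leaf, it still becomes a leaf after its subtrees are removed, so the induction on the number of vertices closes.
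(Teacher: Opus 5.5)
Your two steps ($L^\infty$ bounds on every external leg but one, then leaf-by-leaf mass conservation) are exactly the paper's proof. However, the reduction at the start of Step~1 is not valid for the quantity in the statement.

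In \eqref{AmT} the supremum over the internal scales is taken at \emph{fixed} root $z_1$. So the lemma asks for a bound on $\int dz_1\,\sup_{\Lambda_{ij}}\int d\vec z_{2,r}(\cdots)$. Freezing the scales and bounding $\int dz_1\,d\vec z_{2,r}(\cdots)$ only controls $\sup_{\Lambda_{ij}}\int dz_1\,d\vec z_{2,r}(\cdots)$, and $\int\sup\ge\sup\int$. This is not cosmetic, and it is a different issue from the one you flagged as the obstacle.

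Take $r=2$, one internal edge $\langle 1,2\rangle$ and a single leg $(\tau,y)$ at $z_2$. By the semigroup property, $\mathcal A^{\Lambda,\Lambda_0}_{l,n}(T)(z_1;(\tau,y))=\sup_{s\in[a,b]}K(s;z_1,y)$, with $a=\delta_{l,n}(\Lambda_0^{-2}+\tau)$ and $b=\delta_{l,n}(\Lambda^{-2}+\tau)$. With the normalisation of \eqref{AmT}, this supremum equals $(2\pi e)^{-1/2}|z_1-y|^{-1}$ whenever $|z_1-y|^2\in[a,b]$. Hence $\int dz_1\,\mathcal A\ge(2\pi e)^{-1/2}\log(b/a)$. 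For fixed $\Lambda>0$ this grows like $\log(1/(\Lambda^2\tau))$ as $\tau\to0$, $\Lambda_0\to\infty$, and it is $+\infty$ for $\Lambda=0$. The claimed right-hand side is the constant $C_T$. The paper's proof performs the same interchange: it bounds the $z_1,\vec z_{2,r}$ integral ``uniformly in the internal scales''. So your proposal reproduces its gap.

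What actually makes the supremum harmless is the choice of $j_0$. It is not $\tau_{j_0}=\tau_*$ that matters; any $j_0$ yields $\tau_*^{-(|\mathfrak s|-1)/2}$, since $\tau_j\ge\tau_*$. What matters is taking a leg attached to the root. Then, at fixed $z_1$, your leaf-by-leaf integration over $\vec z_{2,r}$ alone returns exactly $K(\delta_{l,n}\tau_{j_0};z_1,y_{j_0})$ for every choice of the scales. The $L^\infty$ bounds on the other legs are scale independent and can be taken inside the supremum. So the supremum disappears and $\int dz_1$ gives $1$.

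If no leg touches the root, your argument proves only the version with the supremum outside all integrals, and the literal statement can fail as shown above. For the later applications (Corollary~\ref{UniLinfty} and the Besov bounds) this is enough, because there the root is smeared against $K(\tau;z_1,y_1)$. Keeping that kernel in $L^1$ and putting all legs in $\mathcal I_n$ in $L^\infty$ gives a bound $\lesssim\tau_*^{-(n-1)/2}$, with the supremum handled correctly.
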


\begin{proof}
Fix $j_0\in\mathfrak{s}$ and let $i_0$ be the unique internal vertex such that
$\langle i_0,j_0\rangle\in\mathcal E(T)$.
We first bound all external kernels except the one indexed by $j_0$ in $L^\infty$.
Since
\begin{equation}\label{eq:K_Linfty}
K(\delta_l\tau_j;z,y)\le (4\pi\delta_l\tau_j)^{-1/2},
\qquad j\in\mathfrak{s},
\end{equation}
inserting \eqref{eq:K_Linfty} into \eqref{AmT} for every $j\neq j_0$ yields
\begin{multline}\label{eq:reduce_to_one_external}
\mathcal A^{\Lambda,\Lambda_0}_l(T)
\Bigl(z_1;(\tau_j,y_j)_{j\in\mathfrak{s}}\Bigr)
\le
\Biggl(\prod_{\substack{j\in\mathfrak{s}\\ j\neq j_0}}
(4\pi\delta_l\tau_j)^{-1/2}\Biggr)\\\times
\sup_{\Lambda\le \Lambda_{ab}\le \Lambda_0}
\int_{\mathbb R^{r-1}} d\vec z_{2,r}\;
\Biggl(
\prod_{\langle a,b\rangle\in\mathcal I(T)}
K\!\left(\frac{\delta_l}{\Lambda_{ab}^2};z_a,z_b\right)
\Biggr)
K(\delta_l\tau_{j_0};z_{i_0},y_{j_0}).
\end{multline}
We claim that for every choice of nonnegative time parameters $(t_e)_{e\in\mathcal I(T)}$
and every $t>0$,
\begin{equation}\label{eq:tree_markov_identity}
\int_{\mathbb R^{r}} dz_1\cdots dz_r\;
\Biggl(
\prod_{\langle a,b\rangle\in\mathcal I(T)}
K(t_{\langle a,b\rangle};z_a,z_b)
\Biggr)
K(t;z_{i_0},y)
=1 ,
\end{equation}
for every $y\in\mathbb R$. \\
\indent Assuming \eqref{eq:tree_markov_identity} for the moment, we apply it to the integrand in
\eqref{eq:reduce_to_one_external} with $t=\delta_l\tau_{j_0}$ and
$t_{\langle a,b\rangle}=\delta_l/\Lambda_{ab}^2$. It follows that the integral over
$z_1$ and $\vec z_{2,r}$ is bounded by $1$, uniformly in the internal scales and hence
uniformly in $\Lambda_0$. Taking the supremum over $(y_j)$ does not change the bound,
since the right-hand side is independent of $y$. Absorbing $(4\pi)^{-(|\mathfrak{s}|-1)/2}$
into the constant yields \eqref{eq:uniform_L0_Hairer_min} after choosing $j_0$ such that
$\tau_{j_0}=\tau_*$.\\
\indent It therefore remains to prove \eqref{eq:tree_markov_identity}.
We argue by induction on the number $r$ of internal vertices.\\
For $r=1$, the left-hand side equals
\[
\int_{\mathbb R}K(t;z_1,y)\,dz_1=1.
\]

Assume now that \eqref{eq:tree_markov_identity} holds for all rooted trees with
$r-1$ internal vertices and consider a tree with $r$ internal vertices.
Since $T$ is a tree, there exists an index $u\in\{1,\dots,r\}$ with $u\neq i_0$
such that the corresponding internal vertex is connected to the rest of the tree
by a single internal edge (whenever $r\ge2$).
Let $v$ denote the unique index such that $\langle u,v\rangle\in\mathcal I(T)$.\\
In the integrand of \eqref{eq:tree_markov_identity}, the variable $z_u$ appears only
in the factor
\[
K(t_{\langle u,v\rangle};z_u,z_v).
\]
Integrating out $z_u$ and using the identity
\[
\int_{\mathbb R}K(s;z_u,z_v)\,dz_u=1
\]
removes this factor without altering the remaining integrand. The resulting
expression is precisely of the form \eqref{eq:tree_markov_identity} for the
reduced tree obtained by removing $u$ and the edge $\langle u,v\rangle$.
The induction hypothesis then yields $1$, completing the proof of
\eqref{eq:tree_markov_identity}.
\end{proof}
\begin{corollary}[Uniform short-time $L^\infty$ bounds]\label{UniLinfty}
Let $n\ge1$ and $0<\tau_i<1$ for $1\le i\le n$. Set
\[
\tau_*:=\min_{1\le i\le n}\tau_i .
\]
Then, uniformly for $\Lambda\le\Lambda_0$, one has
\begin{equation}\label{eq:LinftyAmp}
\sup_{(y_1,\dots,y_n)\in\mathbb{R}^n}
\mathcal{A}_{l,n}^{\Lambda,\Lambda_0}
\bigl((\tau_i,y_i);\chi^{\sigma}\bigr)
\;\lesssim\;\|\chi\|_{\infty}^{|\sigma|}~
\tau_*^{-\frac{n-1}{2}} .
\end{equation}
In particular,
\begin{equation}\label{eq:A-bound-cor}
\sup_{(y_1,\dots,y_n)\in\mathbb{R}^n}
\left(
\mathcal{A}_{l,n}^{\Lambda,\Lambda_0}
+\Lambda_m^{-1}\hat{\mathcal{A}}_{l,n}^{\Lambda,\Lambda_0}
\right)
\bigl((\tau_i,y_i);\chi^{\sigma}\bigr)
\;\lesssim\;
\\\|\chi\|_{\infty}^{|\sigma|}~\left(1+\frac{\tau_*^{-1/2}}{m}\right)\,
\tau_*^{-\frac{n-1}{2}} .
\end{equation}
The implicit constants depend only on $l$, $n$ and $|\mathfrak{s}|$.
\end{corollary}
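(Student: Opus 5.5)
The first bound \eqref{eq:LinftyAmp} follows by applying Lemma~\ref{lem:uniform_L0_Hairer_style} tree by tree and then summing over the finitely many admissible trees. First we remove the decoration. By \eqref{eq:borneamplitudechi}, the decorated amplitude $\mathcal A^{\Lambda,\Lambda_0}_{l,n}\bigl((\tau_i,y_i);\chi^\sigma\bigr)$ is bounded by $\|\chi\|_\infty^{|\sigma|}$ times the undecorated one. The undecorated global amplitude integrated over the root, which is what is meant when $z_1$ is suppressed, is a finite sum over $T\in\mathcal T_1^{\mathfrak s}$ with $v(T)\le l+\frac n2-2$. The number of such trees depends only on $l$, $n$ and $|\mathfrak s|$. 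For each tree, Lemma~\ref{lem:uniform_L0_Hairer_style} gives a bound $C_T\,\delta_{l,n}^{-(|\mathfrak s|-1)/2}\tau_*^{-(|\mathfrak s|-1)/2}$, uniformly in $\Lambda\le\Lambda_0$ and in the $y$'s. Since $1\le\delta_{l,n}<2$, the factor $\delta_{l,n}^{-(|\mathfrak s|-1)/2}$ is harmless. Two normalisation points need care. The lemma is stated with the heat kernel normalised as $(4\pi t)^{-1/2}e^{-(z-\bar z)^2/4t}$, whereas the amplitudes use $K(\tau;x,y)=\tau^{-1/2}e^{-(x-y)^2/2\tau}$, so the two differ only by constant factors and a rescaling of time, and the Markov identity \eqref{eq:tree_markov_identity} is unaffected up to constants. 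The number of external legs is $|\mathfrak s|\le n$, and since $\tau_*<1$ we have $\tau_*^{-(|\mathfrak s|-1)/2}\le\tau_*^{-(n-1)/2}$. Summing over trees gives \eqref{eq:LinftyAmp}.

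For \eqref{eq:A-bound-cor}, the hatted part must be handled. We would use Lemma~\ref{lem:Linfty_forest}, which bounds $\bigl(\mathcal A_{l,n}+\Lambda_m^{-1}\hat{\mathcal A}_{l,n}\bigr)(\cdot;\chi^\sigma)$ by
\[
\|\chi\|_\infty^{|\sigma|}\Bigl(1+\sum_{i\in\mathfrak s}\tfrac{\tau_i^{-1/2}}{\Lambda_m}\Bigr)\mathcal A_{l+1,n}.
\]
We then use $\Lambda_m=\Lambda+m\ge m$ and $\tau_i^{-1/2}\le\tau_*^{-1/2}$, so the prefactor is at most $1+|\mathfrak s|\,\tau_*^{-1/2}/m$. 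Finally we apply \eqref{eq:LinftyAmp}, already proved, at loop order $l+1$ and with $\chi\equiv\mathds 1$ to $\mathcal A_{l+1,n}$. All constants depend only on $l$, $n$ and $|\mathfrak s|$, and none depend on $\Lambda$ or $\Lambda_0$.

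The main obstacle is the case $\sigma=\mathfrak s$. Lemma~\ref{lem:Linfty_forest} is stated only for $\sigma\subsetneq\mathfrak s$. If that restriction is genuine, we would instead bound the hatted amplitude directly through Corollary~\ref{CorTransport}: \eqref{equationNuevo} when the root lies in $V^\sigma_{\rm ext}$, and \eqref{Nuevo2} otherwise. These give $\int dz_1\,\hat{\mathcal A}_{l,n}\lesssim\|\chi\|_\infty^{|\sigma|}\bigl(\sum_i\tau_i^{-1/2}\bigr)\int dz_1\,\mathcal A_{l',n'}$, and the previous paragraph then applies. A further check is needed here. Lemma~\ref{LemmeDeriv} appears to need $\chi'\ge0$ and the vanishing of boundary terms. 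Boundedness of $\chi$ together with Gaussian decay of the external kernels should guarantee the latter uniformly in the $y$'s, because no $y$-dependent quantity survives after the $L^\infty$ bounds are taken.
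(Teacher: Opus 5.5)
Your proposal is correct and follows essentially the same route as the paper. You strip the decoration with \eqref{eq:borneamplitudechi}, apply Lemma~\ref{lem:uniform_L0_Hairer_style} to each of the finitely many admissible trees and sum, and for the hatted part invoke Lemma~\ref{lem:Linfty_forest} with $\Lambda_m\ge m$ before reusing the first bound at loop order $l+1$. The paper passes silently over several points that you make explicit: the heat-kernel normalisation, the harmless factor $\delta_{l,n}^{-(|\mathfrak s|-1)/2}$, and the case $\sigma=\mathfrak s$. That last case is excluded by Lemma~\ref{lem:Linfty_forest} as literally stated, and you handle it by going directly to Corollary~\ref{CorTransport}, exactly as the proof of that lemma does.
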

\begin{proof}
Set
\[
\tau_*:=\min_{i\in\{1,\dots,n\}}\tau_i\in(0,1).
\]
All implicit constants below are uniform for $\Lambda\le \Lambda_0$ and in
$(\tau_i,y_i)$, and may depend on $l$ and $n$.
Using \eqref{eq:borneamplitudechi} and \eqref{eq:uniform_L0_Hairer_min}, we obtain
\begin{align}\label{A25}
\sup_{(y_1,\dots,y_n)\in\mathbb{R}^n}
\mathcal{A}_{l,n}^{\Lambda,\Lambda_0}
\bigl((\tau_i,y_i);\chi^{\sigma}\bigr)
&\leq \|\chi\|_{\infty}^{|\sigma|}~
\sup_{(y_1,\dots,y_n)\in\mathbb{R}^n}
\mathcal{A}_{l,n}^{\Lambda,\Lambda_0}
\bigl((\tau_i,y_i);\mathds{1}^{\sigma}\bigr)
\nonumber\\
&\leq \|\chi\|_{\infty}^{|\sigma|}~
\sum_{\substack{T\in\mathcal{T}^s_1\\0\le v(T)\le l+\frac{n}{2}-2}}
\sup_{(y_1,\dots,y_n)\in\mathbb{R}^n}
\mathcal{A}_{l,n}^{\Lambda,\Lambda_0}(T)
\bigl((\tau_i,y_i)\bigr)
\nonumber\\
&\lesssim \|\chi\|_{\infty}^{|\sigma|}~
\tau_*^{-\frac{n-1}{2}} .
\end{align}
Applying Lemma~\ref{lem:Linfty_forest} together with \eqref{A25} and using $\Lambda_m\ge m$, we obtain
\begin{align*}
\sup_{(y_1,\dots,y_n)\in\mathbb{R}^n}
\left(
\mathcal{A}_{l,n}^{\Lambda,\Lambda_0}
+\Lambda_m^{-1}\hat{\mathcal{A}}_{l,n}^{\Lambda,\Lambda_0}
\right)
\bigl((\tau_i,y_i);\chi^{\sigma}\bigr)
&\lesssim
\left(
1+\frac{\tau_*^{-1/2}}{m}
\right)~\sup_{(y_1,\dots,y_n)\in\mathbb{R}^n}
\mathcal{A}_{l+1,n}^{\Lambda,\Lambda_0}
\bigl((\tau_i,y_i);\chi^{\sigma}\bigr)
\nonumber\\
&\lesssim
\tau_*^{-\frac{n-1}{2}}
\left(
1+\frac{\tau_*^{-1/2}}{m}
\right).
\end{align*}
\end{proof}
\section{Perturbative renormalization in position space}\label{SecProof}

The proofs of Theorems~\ref{BesovNorm} and \ref{BesovNorm2} are based on an inductive argument
within the Polchinski renormalization framework. The main point of the
present section is to reformulate the required uniform bounds that establishes both Theorems in terms
of a scale of distribution spaces adapted to power counting. More precisely, to each scaling degree $\rho$ and loop order $l$ we
associate a space
\[
\mathcal D^{\Lambda_m}_{\rho,l}(\mathbb M^n),
\qquad
\Lambda_m=\Lambda+m,
\]
whose seminorms encode the expected behaviour of a distribution at the
scale $\Lambda$. In this formulation, the estimates stated in
Theorem~\ref{BesovNorm} are equivalent to proving that the correlators $
\mathcal L_{l,n}^{\Lambda,\Lambda_0}
$ satisfy the bounds associated with the corresponding power-counting
space $\mathcal D^{\Lambda_m}_{\rho,l}(\mathbb M^n)$, uniformly in the
ultraviolet cutoff $\Lambda_0$.

The Polchinski flow is then interpreted as a stability property of this
scale of spaces. Indeed, the linear and bilinear contributions appearing
in the flow equation are encoded by operators $
\mathbf L^\Lambda$ and $\mathbf B^\Lambda_\pi$,
and the key analytic strategy consists in proving that these operators map
the spaces $\mathcal D^{\Lambda_m}_{\rho,l}$ into spaces with the
appropriate shifted scaling degree.

The proof is organised as an induction on the total degree $n+2l$, and for fixed $n+2l$ by increasing loop order $l$. The tree-level bounds are
verified directly. In the inductive step, one differentiates the
correlators with respect to the flow parameter $\Lambda$ and estimates
each contribution generated by the flow equation using the stability
properties of the Polchinski operators. The resulting differential
bounds are then integrated in $\Lambda$.

The integration procedure is determined by power counting. Terms with
strictly negative scaling degree are irrelevant and are integrated from
$\Lambda$ to the ultraviolet scale $\Lambda_0$. When the scaling degree
is nonnegative, one first isolates the relevant part by a Taylor
expansion, either in the external momenta or in the first spatial
variable. The remainder then acquires improved scaling behaviour and can
be treated by the irrelevant integration argument. This mechanism forms
the core of the renormalization procedure developed below.

The purpose of this section is therefore to establish the abstract
analytic framework underlying the proof of
Theorems~\ref{BesovNorm} and \ref{BesovNorm2}: the definition of the spaces
$\mathcal D^{\Lambda_m}_{\rho,l}$, their stability under the
Polchinski operators, the Taylor remainder estimates for relevant
distributions, and the integration results in the scale parameter
$\Lambda$.
\subsection{Scale-dependent power-counting spaces}
In this subsection, we introduce a family of seminorms adapted to the
renormalization-group analysis in position space. These seminorms measure
distributions after convolution with heat kernels and compare them to
the canonical tree amplitudes governing the perturbative expansion.\\
The polynomial weights introduced below arise naturally from the
renormalization procedure in position space. More precisely, Taylor
expansions around the first spatial argument generate polynomial factors
in the variables $\tau_i$, while derivatives acting on the heat kernels
produce additional powers of these variables. Controlling the order of
the resulting polynomials is therefore essential for the regularity
analysis of the correlators. The superscript $\sigma$ keeps track of the
insertion of the cutoff factors
\[
\prod_{i\in\sigma}\chi(z_i)
\]
into the corresponding $n$-point distributions.
\begin{definition}\label{DefPol}
Let $\mathfrak s$ be a finite index set and let $q\in\mathbb N$. We define
\begin{equation}\label{DefQ}
Q_q\bigl((x_i)_{i\in\mathfrak s}\bigr)
:=
\prod_{i\in\mathfrak s}
\left(
1+x_i+x_i^2+\cdots+x_i^q
\right).
\end{equation}
For $\sigma\subseteq\mathfrak s$, we furthermore define
\begin{equation}\label{DefQsigma}
Q_q^\sigma
\bigl(
(x_i)_{i\in\mathfrak s}
\bigr)
:=
\mathbf 1_{\{\sigma\neq\varnothing\}}
Q_q
\bigl(
(x_i)_{i\in\mathfrak s}
\bigr)
+
\mathbf 1_{\{\sigma=\varnothing\}}
Q_1
\bigl(
(x_i)_{i\in\mathfrak s}
\bigr),
\end{equation}
where $\mathbf 1_A$ denotes the indicator function of the set $A$.
\end{definition} 
    The distinction between
\(
\sigma=\varnothing
\)
and
\(
\sigma\neq\varnothing
\)
reflects the different power-counting behaviour of correlators with and
without cutoff insertions. The modified polynomial
\(
Q_q^\sigma
\)
is defined such that it encodes this distinction directly at the level of the
seminorms.

The following lemma collects the basic stability properties of the
polynomial $Q_q$. These weights are stable under the
operations naturally arising in the flow equations, namely restriction
to subsets of variables, multiplication, and the insertion of additional
factors of low degree. These estimates will be used repeatedly in the
sequel.
\begin{lemma}[Stability of polynomial weights]\label{QLemma}
Let $q\in\mathbb N$, and let $Q_q$ be the polynomial defined in
Definition~\ref{DefPol}. Assume throughout that all variables are
non-negative.

\medskip

\noindent
\textit{(i)}
Let $\{\mathfrak s_1,\mathfrak s_2\}$ be a partition of
$\mathfrak s$, and let $q_1,q_2\in\mathbb N$ satisfy
$q=q_1+q_2$. Then, 
\begin{align}
Q_{q_1}
\bigl(
(x_i)_{i\in\mathfrak s_1}
\bigr)
Q_{q_2}
\bigl(
(x_i)_{i\in\mathfrak s_2}
\bigr)
&\lesssim
Q_q
\bigl(
(x_i)_{i\in\mathfrak s}
\bigr),
\label{Prop1Q}
\\
Q_q
\bigl(
(x_i)_{i\in\mathfrak s_1}
\bigr)
Q_q
\bigl(
(x_i)_{i\in\mathfrak s_2}
\bigr)
&\lesssim
Q_q
\bigl(
(x_i)_{i\in\mathfrak s}
\bigr),
\label{Prop2Q}
\\
Q_q
\bigl(
(x_i)_{i\in\mathfrak s},y
\bigr)
&\lesssim
Q_q
\bigl(
(x_i)_{i\in\mathfrak s}
\bigr),\quad \forall~0\le y\le1
\label{Prop2QBis}
\\
Q_q
\bigl(
(x_i)_{i\in\mathfrak s}
\bigr)
&\le
Q_{q'}
\bigl(
(x_i)_{i\in\mathfrak s}
\bigr),
\qquad
q'\ge q,
\label{Prop3Q}
\\
Q_q
\bigl(
(x_i)_{i\in\mathfrak s\setminus\{j\}}
\bigr)
&\le
Q_q
\bigl(
(x_i)_{i\in\mathfrak s}
\bigr).
\label{Prop6Q}
\end{align}

\medskip

\noindent
\textit{(ii)}
For every $j\in\mathfrak s$,
\begin{equation}\label{Prop4Q}
x_j\,
Q_q
\bigl(
(x_i)_{i\in\mathfrak s\setminus\{j\}}
\bigr)
\le
Q_q
\bigl(
(x_i)_{i\in\mathfrak s}
\bigr).
\end{equation}

\medskip

\noindent
\textit{(iii)}
For every $k\in\mathfrak s$,
\begin{equation}\label{Prop5Q}
x_k\,
Q_q
\bigl(
(x_i)_{i\in\mathfrak s}
\bigr)
\lesssim
Q_{q+1}
\bigl(
(x_i)_{i\in\mathfrak s}
\bigr).
\end{equation}
\end{lemma}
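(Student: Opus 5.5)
Since $Q_q$ is a finite product of one-variable polynomials $P_q(x):=1+x+\cdots+x^q$ with nonnegative coefficients, every item of Lemma~\ref{QLemma} reduces to an elementary inequality for $P_q$ on $[0,\infty)$, applied factor by factor. The basic facts I will use are: $P_q(x)\ge 1$ for $x\ge0$; $P_q\le P_{q'}$ for $q'\ge q$, since the extra monomials are nonnegative; $P_{q_1}P_{q_2}\le (q_1+1)(q_2+1)\,P_{q_1+q_2}$; and $xP_q(x)\le P_{q+1}(x)$. For the last two, each monomial $x^k$ with $k\le q_1+q_2$ (respectively $k\le q+1$) appears in the right-hand polynomial with coefficient one, and the left-hand side is a nonnegative combination of such monomials with bounded coefficients.

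With these, (i) follows directly. For \eqref{Prop1Q}, I bound $P_{q_1}(x_i)\le P_q(x_i)$ on $\mathfrak s_1$ and $P_{q_2}(x_i)\le P_q(x_i)$ on $\mathfrak s_2$. Because $\mathfrak s_1$ and $\mathfrak s_2$ are disjoint, the resulting product is exactly $Q_q((x_i)_{i\in\mathfrak s})$. The same disjointness gives \eqref{Prop2Q}, in fact with equality. For \eqref{Prop2QBis}, the extra factor is $P_q(y)\le q+1$ when $0\le y\le 1$. Monotonicity in $q$ gives \eqref{Prop3Q}. Finally, \eqref{Prop6Q} holds because the omitted factor $P_q(x_j)$ is at least one.

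For (ii), I use $x_j\le P_q(x_j)$ for $q\ge1$ and multiply by the remaining factors. For (iii), I write $Q_q=P_q(x_k)\prod_{i\ne k}P_q(x_i)$ and apply $x_kP_q(x_k)\le P_{q+1}(x_k)$. I then raise $q$ to $q+1$ in the other factors via \eqref{Prop3Q}.

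No step is a real obstacle; the proof is routine bookkeeping. The only thing to watch is that the implicit constants depend only on $q$ and $|\mathfrak s|$, and not on the variables. This is automatic, because each factorwise constant is at most $(q+1)^2$.
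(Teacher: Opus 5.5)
Your proof is correct and takes the same route as the paper. Both reduce every item factorwise to elementary one-variable facts: monotonicity of $1+x+\cdots+x^q$ in $q$, the bound $1+y+\cdots+y^q\le q+1$ for $0\le y\le 1$, $x\le 1+x+\cdots+x^q$, and $x(1+\cdots+x^q)\le 1+\cdots+x^{q+1}$, with disjointness of $\mathfrak s_1,\mathfrak s_2$ handling the product statements. The bound $P_{q_1}P_{q_2}\le (q_1+1)(q_2+1)P_{q_1+q_2}$ that you list is true, but you never use it, since the two factors involve disjoint sets of variables.
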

\begin{proof}
All variables being non-negative, the factors
\(1+x_i+\cdots+x_i^q\)
are monotone in \(q\). Properties
\eqref{Prop1Q},
\eqref{Prop2Q},
\eqref{Prop3Q}
and
\eqref{Prop6Q}
follow immediately from this monotonicity and from the fact that
\(\{\mathfrak s_1,\mathfrak s_2\}\)
forms a partition of
\(\mathfrak s\). Since \(0\le y\le1\),
\[
1+y+\cdots+y^q\le q+1,
\]
which yields \eqref{Prop2QBis}. Property \eqref{Prop4Q} follows from
\[
x_j\le1+x_j+\cdots+x_j^q,
\]
while \eqref{Prop5Q} follows from
\[
x_k(1+x_k+\cdots+x_k^q)
\le
1+x_k+\cdots+x_k^{q+1}.
\]
The claim follows.
\end{proof}
\noindent In the sequel, we use the following notations
\[
\mathcal I_n:=\{2,\dots,n\},
\qquad
\sigma\subseteq\mathfrak s\subseteq\mathcal I_n,
\]
and, for
$(\tau_i,y_i)\in\mathbb R_+\times\mathbb R$ and $z_i\in\mathbb{R}$
\[
\underline y_{\mathfrak s}
:=
\bigl(
\tau_i,y_i
\bigr)_{i\in\mathfrak s},\qquad
\tau_{\mathfrak s}^{\Lambda_m}
:=
\left(
\frac{\tau_i^{-1/2}}{\Lambda_m}
\right)_{i\in\mathfrak s},
\qquad \vec{z}_{1,n}:=\left(z_1,\cdots,z_n\right).
\]
We also set
\[
\mathbb M:=\mathbb R\times\mathbb R^3.
\]
\begin{definition}[Power-counting seminorms]\label{DefPowerCountingSeminorms}
Fix $\rho\in\mathbb Z$, $l\in\mathbb N_0$, $\Lambda\ge0$, and $m>0$,
and set $
\Lambda_m:=\Lambda+m$.
We consider distributions $
T(\vec{z}_{1,n};\underline p_n)$
on $\mathbb M^n$, partially Fourier transformed in the spatial
variables. We denote by $\overline{\mathcal K}$ the set of non-constant
elements of $\mathcal K$.

\medskip

\noindent
Let first $n\ge4$ and $l\ge1$. For $\chi\in\mathcal K$, we define
\begin{multline}\label{e203}
\mathcal N^{\Lambda_m}_{\rho,l}
\bigl(
\chi T(\underline p_n)
\bigr)
:=
\Lambda_m^{-\rho}
\sup_{z_1\in\mathbb R}
\sup_{\substack{
\sigma\subseteq\mathfrak s\subseteq\mathcal I_n\\
(\tau_i,y_i)\in\mathbb R_+\times\mathbb R
}}
\left(
Q_{|\sigma|}^{\sigma}
\bigl(
\tau_{\mathfrak s}^{\Lambda_m}
\bigr)
\right)^{-1}
\frac{
\left|
\left(
K^{\otimes\mathfrak s}
\star
\chi^{\otimes\sigma}T
\right)
\bigl(
z_1;\underline y_{\mathfrak s};\underline p_n
\bigr)
\right|
}{
\langle
\mathcal{A},\hat{\mathcal{A}}\rangle_{l,n}
\bigl(
z_1;\underline y_{\mathfrak s};\chi^\sigma
\bigr)
}.
\end{multline}
For $l=0$ and $n\ge4$, we set
\begin{equation}\label{e203Bis}
\mathcal N^{\Lambda_m}_{\rho,0}
\bigl(
\chi T(\underline p_n)
\bigr)
:=
\Lambda_m^{-\rho}
\sup_{z_1\in\mathbb R}
\sup_{\substack{
\sigma\subseteq\mathfrak s\subseteq\mathcal I_n\\
(\tau_i,y_i)\in\mathbb R_+\times\mathbb R
}}
\frac{
\left|
\left(
K^{\otimes\mathfrak s}
\star
\chi^{\otimes\sigma}T
\right)
\bigl(
z_1;\underline y_{\mathfrak s};\underline p_n
\bigr)
\right|
}{
\mathcal A_{0,n}^{\Lambda,\Lambda_0}
\bigl(
z_1;\underline y_{\mathfrak s};\chi^\sigma
\bigr)
}.
\end{equation}
The case $\sigma=\varnothing$ corresponds to the absence of cutoff
insertions. In all supremums above, the choices
$\sigma=\varnothing$ and $\mathfrak s=\varnothing$ are allowed. In this
case, no heat kernel is inserted and the numerator reduces to
\[
t(z_1;\underline p_n)
:=
\int_{\mathbb R^{n-1}}
T(\vec{z}_{1,n};\underline p_n)\,
d\vec z_{2,n}.
\]
Moreover, since all heat kernels are integrated out, the denominator
\[
\mathcal A_{l,n}^{\Lambda,\Lambda_0}\left(z_1;\varnothing;1\right)
\]
reduces to a positive constant depending only on $l$ and $n$. Thus this
contribution to the seminorm controls
\[
\Lambda_m^{-\rho}
|t(z_1;\underline p_n)|.
\]
For $n=2$ and $l\ge1$, we define for $\rho\in\mathbb{N}_0$ 
\begin{equation}\label{e203-n2}
\mathcal N^{\Lambda_m}_{\rho,l}
\bigl(
T(\underline{p})
\bigr)
:=
\Lambda_m^{-\rho}
\sup_{z_1\in\mathbb R}
\sup_{(\tau,y)\in\mathbb R_+\times\mathbb R}
\left(
\sum_{i=0}^{\rho+1}
\left(
\frac{\tau^{-1/2}}{\Lambda_m}
\right)^i
\right)^{-1}
\frac{
\left|
(K\star T)
(z_1;\tau,y;\underline{p})
\right|
}{
\mathcal A_{l,2}^{\Lambda,\Lambda_0}
(z_1;\tau,y)
}.
\end{equation}
If $\rho<0$ the polynomial reduces to a constant and we define 
\begin{equation}\label{e203-n2Bis}
\mathcal N^{\Lambda_m}_{\rho,l}
\bigl(
T(\underline{p})
\bigr)
:=
\Lambda_m^{-\rho}
\sup_{z_1\in\mathbb R}
\sup_{(\tau,y)\in\mathbb R_+\times\mathbb R}
\frac{
\left|
(K\star T)
(z_1;\tau,y;\underline{p})
\right|
}{
\mathcal A_{l,2}^{\Lambda,\Lambda_0}
(z_1;\tau,y)
}.
\end{equation}
In the supremum above, the empty configuration is also allowed. In this
case, no heat kernel is inserted and the numerator reduces to
\[
t(z_1;\underline{p})
:=
\int_{\mathbb R}
T(z_1,z_2;\underline{p})\,dz_2,
\]
while the denominator $
\mathcal A_{l,2}^{\Lambda,\Lambda_0}(z_1;\varnothing)
$
reduces to a positive constant depending only on $l$.
\\
\indent We also introduce seminorms involving discrete derivatives. These are
defined for $n\ge4$ and $l\ge1$. We define
the discrete derivative of $T$ in the $i$-th variable, relative to the
base point $z_1$, by
\[
\bigl\langle
\Delta_{z_1}^{(i)}T,
\phi_2\otimes\cdots\otimes\phi_n
\bigr\rangle
=
\bigl\langle
T,\Delta_{z_1}^{(i)}\left(\phi_2\otimes\cdots\otimes\phi_n\right)
\bigr\rangle
\]
where 
$$\Delta_{z_1}^{(i)}\left(\phi_2\otimes\cdots\otimes\phi_n\right):=\bigl(\phi_i(z_i)-\phi_i(z_1)\bigr)
\prod_{\substack{k=2\\k\neq i}}^n
\phi_k(z_k)~.$$
\smallskip

\noindent
\textit{(i) Derivative on the kernel:} we define for $j\in\mathcal{I}_n$
\begin{equation}\label{e203-DeltaK}
\mathcal N^{\Lambda_m}_{\rho,l}
\bigl(
\Delta^{(j)}T(\underline p_n)
\bigr)
:=
\Lambda_m^{-\rho+1}
\sup_{z_1\in\mathbb R}
\sup_{\substack{
\mathfrak s\subseteq\mathcal I_n\\
(\tau_i,y_i)_{i\in\mathfrak s}\in(\mathbb R_+\times\mathbb R)^{\mathfrak s}}}~\mathbf 1_{\{j\in\mathfrak s\}}~
\sqrt{\tau_j}~
Q_1^{-1}
\bigl(
\tau_{\mathfrak s\setminus\{j\}}^{\Lambda_m}
\bigr)
\frac{
\left|
\left(
K^{\otimes\mathfrak s}
\star
\Delta_{z_1}^{(j)}T
\right)
\bigl(
z_1;\underline y_{\mathfrak s};\underline p_n
\bigr)
\right|
}{
{\mathcal A}_{l,n}^{\Lambda,\Lambda_0}
\bigl(
z_1;\underline y_{\mathfrak s}
\bigr)
}.
\end{equation}

\smallskip

\noindent
\textit{(ii) Derivative on the cutoff.}
For $\chi\in\overline{\mathcal K}$, we define for $j\in\mathcal{I}_n$
\begin{multline}\label{e203-Deltachi}
\mathcal N^{\Lambda_m}_{\rho,l}
\bigl(
\Delta^{(j)}\chi\,T(\underline p_n)
\bigr)\\
:=
\Lambda_m^{-\rho+1}
\sup_{z_1\in\mathbb R}
\sup_{\substack{
\sigma\subseteq\mathfrak s\subseteq\mathcal I_n\\
(\tau_i,y_i)\in\mathbb R_+\times\mathbb R\\
q:=|\sigma|
}}~\mathbf{1}_{\{j\in\sigma\}}
\left(
Q_{q-1}^{\sigma\setminus\{j\}}
\bigl(
\tau_{\mathfrak s}^{\Lambda_m}
\bigr)
\right)^{-1}
\frac{
\left|
\left(
K^{\otimes\mathfrak s}
\star
\left(\Delta_{z_1}^{(j)}\chi^{\otimes\sigma}\right)T
\right)\bigl(
z_1;\underline y_{\mathfrak s};\underline p_n
\bigr)
\right|
}{
\hat{\mathcal A}_{l,n}^{\Lambda,\Lambda_0}\left(z_1;\underline{y}_{\mathfrak{s}};\chi^{\sigma}\right)
}.
\end{multline}
By convention,
\[
\Delta_{z_1}^{(j)}\chi^{\otimes\sigma}=0
\qquad\text{whenever }j\notin\sigma.
\]
Note that the denominators are strictly
positive, since they are given by sums of tree amplitudes, hence by sums
of products of heat kernels.\\
Finally, we set
\begin{align}
\mathcal N^{\Lambda_m}_{\rho,l;\infty}
\bigl(
T(\underline p_n)
\bigr)
&:=
\sup_{\chi\in\mathcal K}
\mathcal N^{\Lambda_m}_{\rho,l}
\bigl(
\chi T(\underline p_n)
\bigr),
\label{sup2}
\\
\mathcal N^{\Lambda_m}_{\rho,l;\infty}
\bigl(
\Delta T(\underline p_n)
\bigr)
&:=
\sup_{j\in\mathcal{I}_n}\left[\mathcal N^{\Lambda_m}_{\rho,l}
\bigl(
\Delta^{(j)} T(\underline p_n)
\bigr)
+
\sup_{\substack{\chi\in\overline{\mathcal K}}}
\mathcal N^{\Lambda_m}_{\rho,l}
\bigl(
\Delta^{(j)}\chi\,T(\underline p_n)
\bigr)\right].
\label{Ninf2}
\end{align}
\end{definition}
\begin{definition}[Power-counting space]
A distribution $T$ on $\mathbb M^n$ is said to belong to $
\mathcal D^{\Lambda_m}_{\rho,l}(\mathbb M^n)$ if the seminorm
\[
\mathcal N^{\Lambda_m}_{\rho,l;\infty}
\bigl(
T(\underline p_n)
\bigr)
\]
is finite, and if there exist polynomials
$\mathcal P_{l-1}$ and $\widetilde{\mathcal P}$, both with positive
coefficients, such that
\begin{equation}\label{DefDcal}
\mathcal N^{\Lambda_m}_{\rho,l;\infty}
\bigl(
T(\underline p_n)
\bigr)
\le
\mathcal P_{l-1}
\left(
\log\frac{\Lambda_m}{m}
\right)
\widetilde{\mathcal P}
\left(
\frac{\|\underline p_n\|}{\Lambda_m}
\right),
\qquad
\forall\,\underline p_n\in\mathbb R^{3n}.
\end{equation}
Here, the degree of $\mathcal P_{l-1}$ is at most $l-1$, while the
degree of $\widetilde{\mathcal P}$ depends only on $l$ and $n$.

Similarly, we write
\[
\Delta T
\in
\mathcal D^{\Lambda_m}_{\rho,l}(\mathbb M^n)
\]
whenever
\[
\mathcal N^{\Lambda_m}_{\rho,l;\infty}
\bigl(
\Delta T(\underline p_n)
\bigr)
\]
satisfies the bound \eqref{DefDcal}. The coefficients of the polynomials
$\mathcal P_{l-1}$ and $\widetilde{\mathcal P}$ may depend on
$\rho$, $l$, and $n$, but are independent of
$\Lambda$, $\Lambda_0$, $m$ and $\underline p_n$.
\end{definition}
The spaces
\(
\mathcal D^{\Lambda_m;(2)}_{\rho,l}
\)
are designed for the analysis of irrelevant two-point contributions.
Indeed, after multiplication by sufficiently high powers of the relative
coordinate
\(
(z_1-z_2),
\)
the resulting distributions acquire improved regularity in both
variables. Since this gain plays a crucial role in the inductive
estimates, we introduce seminorms in which the variables \(z_1\) and
\(z_2\) are kept explicit, while the remaining variables are treated as
in the ordinary power-counting spaces.
\begin{definition}[Second-order power-counting spaces]
\label{DefSecondOrderSpaces}
We define the space $
\mathcal D^{\Lambda_m;(2)}_{\rho,l}(\mathbb M^n)
$
exactly as in Definition~\ref{DefPowerCountingSeminorms}, except that
the seminorms are replaced by the following seminorms in which the first two variables are fixed.

\medskip

\noindent
For $n\ge4$ and $l\ge0$, we set
\begin{equation}\label{e203-second}
\mathcal N^{\Lambda_m;(2)}_{\rho,l}
\bigl(
T(\underline p_n)
\bigr)
:=
\Lambda_m^{-\rho}
\sup_{(z_1,z_2)\in\mathbb R^2}
\sup_{\substack{
\mathfrak s\subseteq\{3,\dots,n\}\\
(\tau_i,y_i)\in\mathbb R_+\times\mathbb R
}}
Q_1^{-1}
\bigl(
\tau_{\mathfrak s}^{\Lambda_m}
\bigr)
\frac{
\left|
\left(
K^{\otimes\mathfrak s}
\star T
\right)
\bigl(
z_1,z_2;\underline y_{\mathfrak s};\underline p_n
\bigr)
\right|
}{
\mathcal A_{l,n}^{\Lambda,\Lambda_0}
\bigl(
z_1,z_2;\underline y_{\mathfrak s}
\bigr)
}.
\end{equation}
For $n=2$ and $l\ge1$, we define
\begin{equation}\label{e203-second-n2}
\mathcal N^{\Lambda_m;(2)}_{\rho,l}
\bigl(
T(\underline{p})
\bigr)
:=
\Lambda_m^{-\rho}
\sup_{(z_1,z_2)\in\mathbb R^2}
\frac{
\left|
T(z_1,z_2;\underline{p})
\right|
}{
\mathcal A_{l,2}^{\Lambda,\Lambda_0}(z_1,z_2)
}.
\end{equation}
A distribution
$
T\in\mathcal S'(\mathbb M^n)
$ belongs to $
\mathcal D^{\Lambda_m;(2)}_{\rho,l}(\mathbb M^n)
$
if the seminorm $\mathcal N^{\Lambda_m;(2)}_{\rho,l}
\bigl(
T(\underline p_n)
\bigr)$ satisfies the bound \eqref{DefDcal}.
\end{definition}
For every $\Lambda\in[0,\Lambda_0]$, the seminorms
$\mathcal N^{\Lambda_m}_{\rho,l}$ define scale-dependent spaces
$\mathcal D^{\Lambda_m;(i)}_{\rho,l}$.
All statements below are understood pointwise in $\Lambda$.
In particular, whenever operators such as
$\mathbf L^\Lambda$ or $\mathbf B^\Lambda_{\pi}$
act on distributions
$T\in\mathcal D^{\Lambda_m}_{\rho,l}$,
the parameter $\Lambda$ entering the operator and the seminorms is the same.
\begin{definition}\label{FamilyT}[Uniform power-counting bounds]
Let $\Lambda_0\ge 0$ and
\[
\Lambda\in[0,\Lambda_0]\longmapsto T^\Lambda
\]
be a family of distributions on $\mathbb M^n$. We say that this family
satisfies uniform power-counting bounds of degree $\rho=\rho(T)$ and
loop order $l$ if, for every $\Lambda\in[0,\Lambda_0]$, the distribution
$T^\Lambda$ is measured in the space
\[
\mathcal D^{\Lambda_m}_{\rho,l}(\mathbb M^n),
\qquad
\Lambda_m=\Lambda+m,
\]
with the same value of $\Lambda$ as in $T^\Lambda$, and if there exist
polynomials $\mathcal P_{l-1}$ and $\widetilde{\mathcal P}$, whose
degrees and coefficients depend only on $l$ and $n$, such that
\begin{equation}
\mathcal N^{\Lambda_m}_{\rho,l;\infty}
\bigl(
T^\Lambda(\underline p_n)
\bigr)
\lesssim
\mathcal P_{l-1}
\left(
\log\frac{\Lambda_m}{m}
\right)
\widetilde{\mathcal P}
\left(
\frac{\|\underline p_n\|}{\Lambda_m}
\right),
\end{equation}
with an implicit constant independent of $\Lambda$ and $\Lambda_0$. The same definition applies to the spaces
\[
\mathcal D^{\Lambda_m;(2)}_{\rho,l}(\mathbb M^n).
\]
\end{definition}
The purpose of the preceding definitions is that
the operations appearing in the Polchinski flow
equation act continuously on these spaces.
The remainder of this section establishes
precisely these stability properties:
multiplication by momentum factors,
the linear operator,
the quadratic operator,
Taylor remainders,
and integration in the flow parameter.
\begin{proposition}\label{PropPoSp}
Let $\beta\in\mathbb N_0$ and let
$T\in\mathcal D^{\Lambda_m}_{\rho,l}(\mathbb M^n)$. Let
$g^{\Lambda_m}\in \mathcal{C}^\infty(\mathbb R^{3n})$ and assume that there exists
a polynomial $\widetilde P$ with positive coefficients, independent of
$m$ and $\Lambda$, such that
\begin{equation}\label{eq:gBound}
|g^{\Lambda_m}(\underline p_n)|
\le
\Lambda_m^\beta~
\widetilde P
\left(
\frac{\|\underline p_n\|}{\Lambda_m}
\right).
\end{equation}
Then
\begin{equation}\label{StabPo}
g^{\Lambda_m}(\underline p_n)
T(\underline z_n;\underline p_n)
\in
\mathcal D^{\Lambda_m}_{\rho+\beta,l}(\mathbb M^n).
\end{equation}
In particular, for every $i\in\{1,\dots,n\}$ and every momentum component
$\mu$,
\[
|p_{i,\mu}|^\beta T
\in
\mathcal D^{\Lambda_m}_{\rho+\beta,l}(\mathbb M^n).
\]
\end{proposition}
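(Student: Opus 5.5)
The argument is a direct inspection of the seminorms of Definition~\ref{DefPowerCountingSeminorms}. The key observation is that $g^{\Lambda_m}(\underline p_n)$ depends only on the momenta, not on the position variables $\vec z_{1,n}$. It therefore commutes with every operation performed in the numerators of the seminorms: convolution with $K^{\otimes\mathfrak s}$, multiplication by $\chi^{\otimes\sigma}$, the discrete derivatives $\Delta^{(j)}_{z_1}$ (on the kernel or on the cutoff), and integration over the variables $\vec z_{2,n}$ in the empty configuration.

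Concretely, for every admissible $z_1$, $\sigma\subseteq\mathfrak s$, $(\tau_i,y_i)$ and $\chi\in\mathcal K$, one has
\[
\bigl(K^{\otimes\mathfrak s}\star\chi^{\otimes\sigma}\,gT\bigr)(z_1;\underline y_{\mathfrak s};\underline p_n)
=g^{\Lambda_m}(\underline p_n)\,\bigl(K^{\otimes\mathfrak s}\star\chi^{\otimes\sigma}T\bigr)(z_1;\underline y_{\mathfrak s};\underline p_n),
\]
and the analogous identities hold for $\Delta^{(j)}_{z_1}$ and $\Delta^{(j)}_{z_1}\chi^{\otimes\sigma}$. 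The denominators $\langle\mathcal A,\hat{\mathcal A}\rangle_{l,n}$, $\mathcal A_{l,n}$ and $\hat{\mathcal A}_{l,n}$, as well as the weights $Q^\sigma_{|\sigma|}$, $Q_1$ and $Q^{\sigma\setminus\{j\}}_{q-1}$, are independent of $\underline p_n$ and of $T$, so they are left untouched.

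Taking absolute values and using \eqref{eq:gBound}, each ratio is multiplied by at most $\Lambda_m^\beta\widetilde P(\|\underline p_n\|/\Lambda_m)$. The prefactor $\Lambda_m^{-(\rho+\beta)}$ in the seminorm of degree $\rho+\beta$ (respectively $\Lambda_m^{-(\rho+\beta)+1}$ for the derivative seminorms) absorbs the $\Lambda_m^\beta$. Hence
\[
\mathcal N^{\Lambda_m}_{\rho+\beta,l;\infty}(gT)\le \widetilde P\Bigl(\tfrac{\|\underline p_n\|}{\Lambda_m}\Bigr)\,\mathcal N^{\Lambda_m}_{\rho,l;\infty}(T),
\]
and the same inequality holds for the $\Delta$-seminorms \eqref{Ninf2}, since the suprema over $\chi$ and $j$ are taken after this pointwise bound. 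Combining with \eqref{DefDcal} for $T$ gives
\[
\mathcal P_{l-1}\Bigl(\log\tfrac{\Lambda_m}{m}\Bigr)\,\bigl(\widetilde P\,\widetilde{\mathcal P}\bigr)\Bigl(\tfrac{\|\underline p_n\|}{\Lambda_m}\Bigr).
\]
The product $\widetilde P\,\widetilde{\mathcal P}$ is again a polynomial with positive coefficients, whose degree depends only on $l$, $n$ and $\beta$. The $\log$-polynomial keeps degree $l-1$, which is exactly membership in $\mathcal D^{\Lambda_m}_{\rho+\beta,l}$.

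The remaining cases need the same check.
\begin{itemize}
\item For $n=2$ with $\rho\ge0$, the weight $\sum_{i=0}^{\rho+1}(\tau^{-1/2}/\Lambda_m)^i$ in \eqref{e203-n2} depends on $\rho$. Passing from $\rho$ to $\rho+\beta$ only enlarges this polynomial, so the inverse weight decreases and the bound persists.
\item If $\rho<0\le\rho+\beta$, one uses \eqref{e203-n2Bis} for $T$ and \eqref{e203-n2} for $gT$. The weight $\ge1$ again only helps.
\item The tree-level seminorm \eqref{e203Bis} is handled identically.
\end{itemize}

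The particular case follows because $|p_{i,\mu}|^\beta\le\Lambda_m^\beta(\|\underline p_n\|/\Lambda_m)^\beta$. Smoothness of $g$ is not needed for this, only the bound \eqref{eq:gBound}.

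There is no real obstacle here. The only delicate point is bookkeeping: verifying that every seminorm in the family, namely \eqref{e203}, \eqref{e203Bis}, \eqref{e203-n2}, \eqref{e203-n2Bis}, \eqref{e203-DeltaK}, \eqref{e203-Deltachi}, and the second-order version if one wants it, treats $\underline p_n$ as a pure parameter, and that the $\rho$-dependent weights for $n=2$ are monotone in $\rho$ in the favourable direction.
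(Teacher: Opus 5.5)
Your proposal is correct and follows the paper's proof: $g^{\Lambda_m}$ depends only on the momenta, so it factors out of every heat-kernel convolution and cutoff insertion in the numerators, and the extra $\Lambda_m^\beta$ is absorbed by the shifted prefactor $\Lambda_m^{-(\rho+\beta)}$, after which the two momentum polynomials are multiplied. Your extra check that the $\rho$-dependent weight in \eqref{e203-n2}--\eqref{e203-n2Bis} only grows with $\rho$ is a step the paper's one-line inequality uses without comment; the one small slip is that the degree of $\widetilde P\,\widetilde{\mathcal P}$ depends on $\deg\widetilde P$, i.e.\ on $g$, and not only on $l,n,\beta$, a looseness the paper shares.
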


\begin{proof}
Since $g^{\Lambda_m}$ depends only on the momentum variables, it factors
out of all convolutions in the spatial variables. Hence, by the
definition of the seminorms,
\[
\mathcal N^{\Lambda_m}_{\rho+\beta,l;\infty}
\bigl(
g^{\Lambda_m}T(\underline p_n)
\bigr)
\le
\Lambda_m^{-(\rho+\beta)}
|g^{\Lambda_m}(\underline p_n)|
\,
\Lambda_m^\rho
\mathcal N^{\Lambda_m}_{\rho,l;\infty}
\bigl(
T(\underline p_n)
\bigr).
\]
Using \eqref{eq:gBound}, this gives
\[
\mathcal N^{\Lambda_m}_{\rho+\beta,l;\infty}
\bigl(
g^{\Lambda_m}T(\underline p_n)
\bigr)
\lesssim
\widetilde P
\left(
\frac{\|\underline p_n\|}{\Lambda_m}
\right)
\mathcal N^{\Lambda_m}_{\rho,l;\infty}
\bigl(
T(\underline p_n)
\bigr).
\]
Since $T\in\mathcal D^{\Lambda_m}_{\rho,l}(\mathbb M^n)$, the last
factor is bounded by
\[
\mathcal P_{l-1}
\left(
\log\frac{\Lambda_m}{m}
\right)
\widehat{\mathcal P}
\left(
\frac{\|\underline p_n\|}{\Lambda_m}
\right).
\]
Absorbing the product of the two momentum polynomials into a new
polynomial gives
\[
\mathcal N^{\Lambda_m}_{\rho+\beta,l;\infty}
\bigl(
g^{\Lambda_m}T(\underline p_n)
\bigr)
\lesssim
\mathcal P_{l-1}
\left(
\log\frac{\Lambda_m}{m}
\right)
\widehat{\mathcal P}
\left(
\frac{\|\underline p_n\|}{\Lambda_m}
\right).
\]
This proves \eqref{StabPo}.\\
The particular case follows by taking
\[
g^{\Lambda_m}(\underline p_n)=p_{i,\mu}^\beta,
\]
since
\[
|p_{i,\mu}|^\beta
\le
\Lambda_m^\beta
\left(
\frac{\|\underline p_n\|}{\Lambda_m}
\right)^\beta.
\]
\end{proof}
\subsection{Uniform power-counting bounds for the correlators}

The main objective of this section is to establish uniform
power-counting bounds for the families of correlators
\[
\Lambda\longmapsto
\mathcal L_{l,n}^{\Lambda,\Lambda_0}.
\]
As explained above, the power-counting spaces
\(
\mathcal D^{\Lambda_m}_{\rho,l}
\)
and
\(
\mathcal D^{\Lambda_m;(2)}_{\rho,l}
\)
are designed so that the renormalization problem can be formulated
entirely in terms of stability properties of these spaces under the
Polchinski flow.

The following theorem constitutes the main inductive statement of this
section. It asserts that the correlators satisfy the scaling behaviour
predicted by power counting, uniformly in the ultraviolet cutoff
\(\Lambda_0\).

\begin{theorem}[Uniform power-counting bounds]
\label{ThmPrinc}
Fix \(\Lambda_0\ge0\) and let \(\Lambda\in[0,\Lambda_0]\).
Let \(w\) be a multi-index and let \(r\in\mathbb N_0\).

Then, for every \(r,|w|\in\{0,\dots,3\}\), one has
\begin{align}
(z_1-z_i)^r
\partial_{p_i}^{\,w}
\mathcal L_{l,n}^{\Lambda,\Lambda_0}
\bigl(
\vec{z}_{1,n};
\underline{p}_{n}
\bigr)
&\in
\mathcal D^{\Lambda_m}_{4-n-|w|-r,\,l}
(\mathbb M^n),
\qquad
n\ge2,
\label{StaAssum1}
\\
(z_1-z_2)^r
\partial^{w}
\mathcal L_{l,n}^{\Lambda,\Lambda_0}
&\in
\mathcal D^{\Lambda_m;(2)}_{1-n,\,l}
(\mathbb M^n),
\qquad
r+|w|=3,
\quad
n\ge2.
\label{StaAssum2}\\
\Delta
\mathcal L_{l,n}^{\Lambda,\Lambda_0}
&\in
\mathcal D^{\Lambda_m}_{4-n,\,l}
(\mathbb M^n),
\qquad
n\ge4,
\label{StatAssum3}
\end{align}
\end{theorem}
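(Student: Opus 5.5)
The proof is an induction on $n+2l$ and, for fixed $n+2l$, on increasing $l$, following the scheme announced at the beginning of Section~\ref{SecProof}. At tree level the correlators are explicit. $\mathcal L_{0,4}$ is the local vertex $\lambda\prod_{i=2}^4\delta(z_1-z_i)$, and the higher $\mathcal L_{0,n}$ are trees of such vertices joined by the propagators $C^{\Lambda,\Lambda_0}$, whose heat-kernel representation \eqref{prop} is directly of the form \eqref{AmT}. For these, \eqref{StaAssum1}--\eqref{StatAssum3} are checked by hand. Heat-kernel convolution of a delta vertex gives a product of kernels attached at $z_1$, which is controlled by Lemma~\ref{MagnLemm}. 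Powers $(z_1-z_i)^r$ are absorbed by Lemma~\ref{LemDecPo}. Cutoff insertions are handled through \eqref{eq:borneamplitudechi} and the $\hat{\mathcal A}$-reduction of Corollary~\ref{CorTransport}.

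For the inductive step, assume the statement for all pairs $(l',n')$ that precede $(l,n)$ in this ordering. Multiply the compact flow equation by $(z_1-z_i)^r\partial^w$ and distribute the weight and the momentum derivatives. For the bilinear term $\mathbf B^\Lambda_\pi$, the momentum derivatives fall either on the factors or on $\dot C^\Lambda_{m/\sqrt2}(p_\pi)$; the latter are controlled by Proposition~\ref{PropPoSp}. The weight $z_1-z_i$ is split by telescoping through the two root variables $z,z'$. The heat-kernel convolutions of the two factors are then fused with \eqref{CorFuss1}--\eqref{eq:CorFus_2}, and the cross distances are absorbed with Lemma~\ref{Rlemma}. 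This shows that the bilinear term lies in $\mathcal D^{\Lambda_m}_{4-n-|w|-r-1,l}$ up to the factor $\Lambda_m^{-1}$ produced by $\partial_\Lambda$. For the linear term $\mathbf L^\Lambda$, the $k$-integral of $\dot C_m^\Lambda(k)$ contributes $\Lambda^{3}\cdot\Lambda^{-3}$, the $(u,u)$ contraction contributes $\Lambda$ through Corollary~\ref{CoRe}, and the $(l-1,n+2)$ hypothesis then yields degree $4-(n+2)+2$, which has the correct scaling. The $\chi$-decorated seminorms are treated by the same argument, with two additions. Differences $\Delta_{z_1}\chi$ created when a cutoff sits at a root are handled by \eqref{eq:CorFus_4}, which is why the hatted part $\Lambda_m^{-1}\hat{\mathcal A}$ appears in $\langle\mathcal A,\hat{\mathcal A}\rangle$. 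The extra powers of $\tau_i^{-1/2}/\Lambda_m$ are absorbed into $Q^\sigma_{|\sigma|}$ by Lemma~\ref{QLemma}(iii). The $\Delta$-seminorms \eqref{StatAssum3} come from \eqref{eq:CorFus_3} applied to the kernel difference, together with the analogous $\Delta\chi$ fusion.

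Next the differential bounds are integrated in $\Lambda$. When $\rho=4-n-|w|-r<0$, the term is irrelevant and is integrated from $\Lambda_0$, where \eqref{Bc1} makes it vanish, down to $\Lambda$. Because $\int_\Lambda^{\Lambda_0}\Lambda'^{\rho-1}\,d\Lambda'\lesssim\Lambda_m^{\rho}$, and because the tree amplitudes are monotone in the internal scales (the supremum over $\Lambda\le\Lambda_{ij}\le\Lambda_0$), the bound holds uniformly in $\Lambda_0$. When $\rho\ge0$, which only happens for $n=2$ with $r+|w|\le2$ and for $n=4$ with $r=|w|=0$, the correlator is Taylor expanded at zero momentum and in $z_i$ about $z_1$, that is, as a moment expansion. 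The relevant coefficients, the integrals $t(z_1;\cdot)$, are integrated upward from $\Lambda=0$ using \eqref{renoc1}--\eqref{renoc2}; by translation invariance in $z_1$ they are constants bounded by $\mathcal P_{l}(\log)$. The remainder carries $r+|w|=3$ total degree and is controlled through \eqref{StaAssum2}. This is the reason the second-order spaces $\mathcal D^{(2)}$ with $z_1,z_2$ fixed are introduced: \eqref{StaAssum2} is established first, at degree $1-n<0$, by the irrelevant argument, and the remainder is then reconstructed by integrating $z_2$ against the heat kernels with \eqref{ADRT}. Along the way the logarithm polynomial degree increases by one exactly when a marginal integral $\int d\Lambda'/\Lambda'_m$ appears.

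I expect the main obstacle to be the marginal and relevant terms in the presence of cutoff insertions. The Taylor expansion about $z_1$ now acts on $\prod_{i\in\sigma}\chi(z_i)$, and its remainder produces $\Delta^{(1)}\chi$ factors. These must be shown to be bounded by $\hat{\mathcal A}$ with only the polynomial loss $Q^\sigma_{|\sigma|}$ and with no new divergence. My first attempt here would be to write $\chi(z_i)-\chi(z_1)=(z_i-z_1)\Delta^{(1)}_{z_1,z_i}\chi$ and to use Lemma~\ref{LemDecPo} for the distance factor. The remaining integral $\int\Delta^{(1)}\chi$ would then be integrated by parts as in Lemma~\ref{Lemmachi'}/Corollary~\ref{CorTransport}, trading it for $\tau_i^{-1/2}$. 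This step relies essentially on $\chi'\ge0$, and it is where the degree counts of $Q^\sigma$ must be tracked carefully to keep the induction closed.
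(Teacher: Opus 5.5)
Your architecture matches the paper's. It consists of induction on $n+2l$, stability of $\mathbf L^\Lambda$ and $\mathbf B^\Lambda_\pi$ via the reduction and fusion lemmas (packaged as Propositions~\ref{PropL}, \ref{propQu}, \ref{Bilinearr}, \ref{prop:B-estimate}, \ref{prop:B-estimateInsertion}), integration of the irrelevant sector from $\Lambda_0$, and reconstruction of the two-point function from \eqref{renoc1} and the $\mathcal D^{\Lambda_m;(2)}$ remainder.

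The gap is in the marginal four-point sector with cutoffs, exactly where you place the main obstacle, and the route you sketch there does not close.

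\begin{itemize}
\item The seminorms of $\mathcal D^{\Lambda_m}_{\rho,l}$ control a correlator pointwise only in the root $z_1$. Every other variable is seen only through pairings with heat kernels (times powers of $\chi$), and these are not positivity statements.
\item Lemma~\ref{LemDecPo} and the integration by parts of Lemma~\ref{Lemmachi'} act on positive tree amplitudes whose two distinguished points are both roots. So they can absorb $(z_1-z_i)$ or $\chi(z_i)-\chi(z_1)$ only when $z_1$ and $z_i$ are both pointwise variables. That is the situation for the roots of the two factors of $\mathbf B^\Lambda_\pi$, or in the doubly rooted spaces $\mathcal D^{\Lambda_m;(2)}$.
\item For $\mathcal L^{\Lambda,\Lambda_0}_{l,4}$ itself, $z_i$ is smeared, and $\Delta^{(1)}_{z_1,z_i}\chi\,K(\tau_i;z_i,y_i)$ is not an admissible test function. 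Your route would need doubly rooted, $\chi$-decorated bounds on $(z_1-z_i)\mathcal L^{\Lambda,\Lambda_0}_{l,4}$, which neither \eqref{StaAssum1} nor \eqref{StaAssum2} provides. Estimating through $\|\chi_\varepsilon'\|_\infty$ instead would destroy uniformity in $\varepsilon$.
\item For $n=4$ the relevant part has order zero, so the remainder is first order, not of total degree $r+|w|=3$. The bound \eqref{StaAssum2} fixes only $z_1,z_2$, serves only the two-point reconstruction, and cannot control increments in $z_3,z_4$.
\end{itemize}

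What closes the argument is to use, as the four-point remainder bound, the $\Delta K$- and $\Delta\chi$-seminorms \eqref{Ninf2}, which you already propagate through the flow. Your distance-plus-integration-by-parts mechanism belongs inside their flow estimate. There $\chi(z_1)-\chi(z_j)$ sits between two genuine roots (Lemma~\ref{lem:Delta-chi-fusion}, \eqref{eq:CorFus_4}).

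You must also observe that $\Delta\mathcal L^{\Lambda,\Lambda_0}_{l,4}$ is irrelevant, even though its index is $\rho=0$. The normalization $\Lambda_m^{-\rho+1}$ in \eqref{e203-DeltaK}--\eqref{e203-Deltachi} makes its flow derivative $O(\Lambda_m^{-2})$. Moreover, $\Delta^{(j)}_{z_1}$ and $\Delta^{(j)}_{z_1}\chi$ annihilate the local bare vertex $c_l^{\Lambda_0}\prod_{i=2}^4\delta(z_1-z_i)$. Hence Proposition~\ref{integration} integrates it down from $\Lambda_0$. Your classification by $\rho$ would instead put it among the relevant terms, for which no renormalization condition exists.

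Proposition~\ref{prop:relevant4} then rebuilds $\mathcal L^{\Lambda,\Lambda_0}_{l,4}$ from the first-order telescoping identity \eqref{eq:GeneralTelescopicLocalizedChi}:
\begin{itemize}
\item the local term is bounded by $\|t\|_{L^\infty}$;
\item the $\Delta K$- and $\Delta\chi$-increments are bounded by $\mathcal N(\Delta T)$ together with Lemma~\ref{MagnLemm>ŀ}; this is where $\Lambda_m^{-1}\hat{\mathcal A}$ enters the $\chi$-seminorms;
\item the momentum remainder is bounded via $\partial_{\underline p}T\in\mathcal D^{\Lambda_m}_{-1,l}$.
\end{itemize}

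Finally, your logarithm count (degree $l$, raised by one at each marginal integral) is correct. The zero-momentum one-loop four-point coupling already grows like $\log(\Lambda_m/m)$. It is, however, incompatible with the degree $\le l-1$ built into the definition of $\mathcal D^{\Lambda_m}_{\rho,l}$. The paper's derivation of degree $l-1$ relies on \eqref{IrrePoly} in the marginal case $q=0$, where that estimate fails. You should therefore state explicitly that you prove the bounds with $\mathcal P_l$.
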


The remainder of this section is devoted to the proof of
Theorem~\ref{ThmPrinc}. We first establish the stability properties of
the power-counting spaces under the operations appearing in the
Polchinski flow equation. These results are then combined with the
renormalization conditions and the flow equation to perform the
induction on \(n+2l\).

The bounds of Theorems~\ref{BesovNorm} and \ref{BesovNorm2} are obtained
as consequences of Theorem~\ref{ThmPrinc}. More precisely, the
power-counting estimates imply uniform control of the corresponding
Besov seminorms. The bounds for the truncated correlators are obtained
by applying the same estimates to correlators multiplied by smooth
approximations of the half-space indicator and subsequently passing to
the limit in the approximation parameter. This construction is carried
out in subsection~\ref{BesovTrSec}.
\subsection{Stability of the power-counting spaces under the Polchinski operators}\label{Continuitt}
In this subsection, we establish the mapping properties of the
Polchinski operators on the power-counting spaces introduced above.
These estimates constitute the main analytic part in the inductive
bounds  of Theorem~\ref{ThmPrinc}.

The linear contribution to the Polchinski flow improves the
power-counting degree by one. This reflects the gain of one scale factor
coming from the reduction of the amplitudes. The following proposition
makes this statement precise.
\begin{proposition}\label{PropL}
Let $n\ge2$ and $l\ge1$. Then the linear Polchinski operator
$\mathbf L^\Lambda$ satisfies
\begin{equation}\label{L1}
\mathbf L^\Lambda :
\mathcal D^{\Lambda_m}_{\rho,l-1}(\mathbb M^{n+2})
\to
\mathcal D^{\Lambda_m}_{\rho+1,l}(\mathbb M^n),
\end{equation}
as well as
\begin{equation}\label{L2}
\mathbf L^\Lambda :
\mathcal D^{\Lambda_m;(2)}_{\rho,l-1}(\mathbb M^{n+2})
\to
\mathcal D^{\Lambda_m;(2)}_{\rho+1,l}(\mathbb M^n).
\end{equation}
Moreover, the following estimate holds
\begin{equation}\label{Lquant}
\mathcal N^{\Lambda_m}_{\rho+1,l;\infty}
\bigl(
\mathbf L^\Lambda(T)(\underline p_n)
\bigr)
\le
\int_k
\mathcal N^{\Lambda_m}_{\rho,l-1;\infty}
\bigl(
T(\underline p_n,k,-k)
\bigr)
\dot C_m^\Lambda(k)
\end{equation}
and
\begin{equation}\label{DeltaL}
\Delta T
\in
\mathcal D^{\Lambda_m}_{\rho,l-1}(\mathbb M^{n+2})\implies
\Delta \mathbf L^\Lambda(T)
\in
\mathcal D^{\Lambda_m}_{\rho+1,l}(\mathbb M^n).
\end{equation}
\end{proposition}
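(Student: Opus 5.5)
The plan is to verify the seminorm bound directly from the definition of $\mathbf L^\Lambda$, one piece of \eqref{Ninf2}/\eqref{sup2} at a time, and then read off \eqref{L1}, \eqref{L2} and \eqref{DeltaL} from the resulting estimate \eqref{Lquant}.

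Fix $\chi\in\mathcal K$, $\sigma\subseteq\mathfrak s\subseteq\mathcal I_n$, $z_1$ and $(\tau_i,y_i)_{i\in\mathfrak s}$. Since the external heat kernels act only on the variables $z_2,\dots,z_n$, and the cutoff factors $\chi(z_i)$, $i\in\sigma$, do not involve the contracted variables $(z,z')$, convolution commutes with $\mathbf L^\Lambda$. This gives
\[
\bigl(K^{\otimes\mathfrak s}\star\chi^{\otimes\sigma}\mathbf L^\Lambda T\bigr)(z_1;\underline y_{\mathfrak s};\underline p_n)
=\int_k\dot C^\Lambda_m(k)\int dz\,dz'\,K(\Lambda^{-2};z,z')\,\bigl(K^{\otimes\mathfrak s}\star\chi^{\otimes\sigma}T\bigr)(z_1;\underline y_{\mathfrak s};\underline p_n,(z,k),(z',-k)).
\]
We then apply the semigroup identity in reverse:
\[
K(\Lambda^{-2};z,z')=\int du\,K(\tfrac{1}{2\Lambda^2};z,u)\,K(\tfrac{1}{2\Lambda^2};z',u).
\]
This turns the integrand into $T$ tested against $|\mathfrak s|+2$ heat kernels, with two extra legs $(\tfrac{1}{2\Lambda^2},u)$. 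Neither new leg is in $\sigma$.

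Next we invoke the seminorm of $T$ in $\mathcal D^{\Lambda_m}_{\rho,l-1}(\mathbb M^{n+2})$ with index set $\mathfrak s\cup\{n+1,n+2\}$. This bounds the integrand by
\[
\Lambda_m^{\rho}\,\mathcal N^{\Lambda_m}_{\rho,l-1}\bigl(T(\underline p_n,k,-k)\bigr)\,Q^\sigma_{|\sigma|}\bigl(\tau^{\Lambda_m}_{\mathfrak s},\tfrac{\sqrt2\Lambda}{\Lambda_m},\tfrac{\sqrt2\Lambda}{\Lambda_m}\bigr)\,\langle\mathcal A,\hat{\mathcal A}\rangle_{l-1,n+2}\bigl(z_1;\underline y_{\mathfrak s},(\tfrac1{2\Lambda^2},u)^2;\chi^\sigma\bigr).
\]
The two new entries of $Q$ are $\le\sqrt2$. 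Property \eqref{Prop2QBis}, in its bounded-argument variant, therefore removes them at the cost of a constant. The $u$-integration of the amplitude is controlled by Corollary~\ref{CoRe}: \eqref{Redu1} handles $\mathcal A$ and \eqref{Redu2} handles $\hat{\mathcal A}$. Together they produce $\Lambda\,\langle\mathcal A,\hat{\mathcal A}\rangle_{l,n}(z_1;\underline y_{\mathfrak s};\chi^\sigma)$.

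There is one bookkeeping wrinkle. The prefactor $(1-\delta_{l,0})$ in $\langle\cdot\rangle_{l-1,n+2}$ may vanish when $l=1$, while it is present at order $l$. That case is harmless, since the right-hand side then contains only more terms. Finally, $\Lambda\le\Lambda_m$, so
\[
\Lambda_m^{-(\rho+1)}\cdot\Lambda_m^{\rho}\cdot\Lambda\le1.
\]
Taking suprema over $\sigma,\mathfrak s,z_1,\tau,y$ and then over $\chi\in\mathcal K$ gives \eqref{Lquant}.

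To deduce \eqref{L1}, we insert the bound
\[
\mathcal N^{\Lambda_m}_{\rho,l-1;\infty}\bigl(T(\underline p_n,k,-k)\bigr)\le\mathcal P_{l-2}\bigl(\log\tfrac{\Lambda_m}{m}\bigr)\,\widetilde{\mathcal P}\bigl(\max(\|\underline p_n\|,|k|)/\Lambda_m\bigr).
\]
We then integrate against $|\dot C^\Lambda_m(k)|=2\Lambda^{-3}e^{-(k^2+m^2)/\Lambda^2}$. The Gaussian makes $\int_k|\dot C_m^\Lambda(k)|(|k|/\Lambda_m)^j$ of order $\Lambda^{-3}\Lambda^3\lesssim1$. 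This keeps the polynomial structure, and the degree in $\log$ is at most $l-2\le l-1$, so \eqref{DefDcal} holds.

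For \eqref{L2} the same argument works with both roots $z_1,z_2$ fixed and the reduction \eqref{DouRed}; the case $n=2$ is a direct supremum. For \eqref{DeltaL}, the discrete-derivative seminorms \eqref{e203-DeltaK} and \eqref{e203-Deltachi} act only on the external test functions. Hence $\Delta^{(j)}_{z_1}$ commutes with $\mathbf L^\Lambda$, and the argument repeats verbatim, using \eqref{Redu1} for the $\mathcal A$-denominator and \eqref{Redu2} for the $\hat{\mathcal A}$-denominator. The polynomial weights $Q_1$ and $Q^{\sigma\setminus\{j\}}_{q-1}$ again absorb the bounded new entries.

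The main obstacle is justifying that the amplitudes appearing in the seminorm of $T$, which are evaluated at the internal scale $\delta_{l-1,n+2}$, are exactly of the form required by Corollary~\ref{CoRe}. One must check that the monotonicity $\delta_{l-1,n+2}\le\delta_{l,n}$ holds. It must also be checked that the polynomial weights are unaffected by the new legs. This last point is where the uniform-in-$\Lambda$ constant of \eqref{Prop2QBis} is needed, since the new arguments $\sqrt2\Lambda/\Lambda_m$ lie in $[0,\sqrt2]$ rather than $[0,1]$; bounding them by a constant suffices.
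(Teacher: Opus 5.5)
Your proposal is correct and follows the paper's proof essentially step for step. Both arguments split $K(\Lambda^{-2};z,z')$ by the semigroup property into two legs $\bigl((2\Lambda^2)^{-1},u\bigr)$, apply the seminorm of $T$ on $\mathbb M^{n+2}$ with the bounded entries $\sqrt2\,\Lambda/\Lambda_m$ absorbed into $Q^\sigma_{|\sigma|}$, and integrate in $u$ via Corollary~\ref{CoRe} (and \eqref{DouRed} for the second-order spaces). The Gaussian covariance integral then closes the polynomial bounds, and the $\Delta K$ and $\Delta\chi$ seminorms are handled the same way. Your use of $|\dot C^\Lambda_m|$ is slightly more careful than the paper: \eqref{Lquant} there is written with the signed $\dot C^\Lambda_m$ and, like your argument, holds only up to the implicit constants coming from Corollary~\ref{CoRe}.
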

\begin{proof}
The argument consists of three steps. First, we rewrite the linear term
using the semigroup property of the heat kernel, thereby expressing
\(\mathbf L^\Lambda(T)\) in terms of the seminorms of \(T\) with two
additional heat-kernel insertions at scale \(\Lambda^{-1}\). Second, we
combine the defining bounds of
\(\mathcal D^{\Lambda_m}_{\rho,l-1}\)
with the reduction estimate for the tree amplitudes, which yields the
gain of one power of \(\Lambda_m\) predicted by power counting. Finally, we repeat the same argument for the seminorms involving
discrete derivatives, thereby establishing the stability of the
$\Delta$-bounds.\\
\indent Throughout the proof, the symbols
\(
\mathcal P_{l-2}
\)
and
\(
\widetilde{\mathcal P}
\)
denote generic polynomials of the indicated degrees, with positive
coefficients independent of
\(\Lambda\), \(\Lambda_0\), \(m\), and the external momenta.
They may change from line to line.\\
\indent Let $T \in \mathcal{D}^{\Lambda_m}_{\rho,l-1}\bigl(\mathbb{M}^{n+2}\bigr)$. Recall that
\begin{equation}\label{E209}
\mathbf{L}^{\Lambda}(T)\bigl(\vec{z}_{1,n};\underline{p}_n\bigr) 
= \int_k \int_{\mathbb{R}} dz \int_{\mathbb{R}} dz'\,
T\bigl(\vec{z}_{1,n},z,z';\underline{p}_n,k,-k\bigr)\,
K\bigl(\Lambda^{-2};z,z'\bigr)\,\dot C^{\Lambda}_m(k).
\end{equation}
Fix $\mathfrak{s}\subseteq \mathcal{I}_n$, $\sigma\subseteq \mathfrak{s}$ and $\chi\in\mathcal{K}$. We use the notation
\[
\bar u:=\bigl((2\Lambda^2)^{-1},u\bigr)~.
\]
Using the semigroup property
\begin{equation}\label{SemiGroup}
K\bigl(\Lambda^{-2};z,z'\bigr)
=
\int_{\mathbb{R}} du~
K\bigl((2\Lambda^2)^{-1};z,u\bigr)\,
K\bigl((2\Lambda^2)^{-1};z',u\bigr),
\end{equation}
we rewrite \eqref{E209} as
\begin{equation*}
\bigl(K^{\otimes \mathfrak{s}} \star \chi^{\otimes \sigma}\mathbf{L}^{\Lambda}(T)\bigr)
\bigl(z_1;\underline{p}_n;\underline{y}_{\mathfrak{s}}\bigr)
=
\int_k \int_{\mathbb{R}} du\,
\bigl(K^{\otimes (\mathfrak{s}+2)} \star \chi^{\otimes \sigma} T\bigr)
\bigl(z_1;\underline{p}_n,k,-k;\underline{y}_{\mathfrak{s}},\bar u,\bar u\bigr)\,
\dot C^{\Lambda}_m(k).
\end{equation*}
Since \(T \in \mathcal{D}^{\Lambda_m}_{\rho,l-1}(\mathbb{M}^{n+2})\), by~\eqref{e203} we obtain for $l\ge 2$
\begin{multline*}
\Bigl|
\bigl(K^{\otimes (\mathfrak{s}+2)} \star \chi^{\otimes \sigma} T\bigr)
\bigl(z_1;\underline{p}_n,k,-k;\underline{y}_{\mathfrak{s}},\bar u,\bar u\bigr)
\Bigr|
\\
\le
\Lambda_m^\rho\,
\mathcal{N}^{\Lambda_m}_{\rho,l-1;\infty}\bigl(T(\underline{p}_n,k,-k)\bigr)\,
\langle
\mathcal{A},\hat{\mathcal{A}}\rangle_{l-1,n+2}
\bigl(z_1;\underline{y}_{\mathfrak{s}},\bar u,\bar u\bigr)~
{Q}^{\sigma}_{q}\!\left(
\tau_{\mathfrak{s}}^{\Lambda_m},
y^{\Lambda},y^{\Lambda}
\right),
\end{multline*}
where $q:=|\sigma|$. For $l=1$, we use instead \eqref{e203Bis}. Using Definition \eqref{DefQ} we have 
\[
{Q}^{\sigma}_{q}\!\left(
\tau_{\mathfrak{s}}^{\Lambda_m},
y^{\Lambda},y^{\Lambda}
\right)
\lesssim
{Q}^{\sigma}_{q}\!\left(\tau_{\mathfrak{s}}^{\Lambda_m}\right).
\]
We now integrate in $u$ and apply Corollary~\ref{CoRe} which yields
\begin{multline}\label{ici}
\bigl|
\bigl(K^{\otimes \mathfrak{s}} \star \chi^{\otimes \sigma}\mathbf{L}^{\Lambda}(T)\bigr)
\left(z_1;\underline{p}_n;\underline{y}_{\mathfrak{s}}\right)
\bigr|
\le
\Lambda_m^{\rho+1}\,
\int_k \mathcal{N}^{\Lambda_m}_{\rho,l-1;\infty}\bigl(T(\underline{p}_n,k,-k)\bigr)~\dot{C}^{\Lambda}_m(k)~
{Q}^{\sigma}_{q}\!\left(
\tau_{\mathfrak{s}}^{\Lambda}
\right)
\\
\times
\langle
\mathcal{A},\hat{\mathcal{A}}\rangle_{l,n}^{\Lambda,\Lambda_0}
\bigl(z_1;\underline{y}_{\mathfrak{s}};\chi^\sigma\bigr).
\end{multline}
Taking the supremum in~\eqref{ici} using the definition \eqref{e203}, we conclude that
\begin{equation}\label{b--1}
\mathcal{N}^{\Lambda_m}_{\rho+1,l;\infty}\bigl(\mathbf{L}^{\Lambda}(T)(\underline{p}_n)\bigr)
\le
\int_k \mathcal{N}^{\Lambda_m}_{\rho,l-1;\infty}\bigl(T(\underline{p}_n,k,-k)\bigr)~\dot{C}^{\Lambda}_m(k)~.
\end{equation}
Furthermore, since
\(
T\in \mathcal{D}^{\Lambda_m}_{\rho,l-1}(\mathbb{M}^{n+2}),
\)
there exist a polynomial $\mathcal{P}_{l-2}$ of degree at most $l-2$ and a polynomial $\widetilde{\mathcal{P}}$ such that
\[
\mathcal{N}^{\Lambda_m}_{\rho,l-1;\infty}\bigl(T(\underline{p}_n,k,-k)\bigr)
\le
\mathcal{P}_{l-2}\!\left(\log \frac{\Lambda_m}{m}\right)
\widetilde{\mathcal{P}}\!\left(\frac{\|\underline{p}_n\|+|k|}{\Lambda_m}\right).
\]
Combining this with the covariance bound
\begin{equation}\label{PolCovB}
\int_{\mathbb{R}^3}
\left(\frac{|k|}{\Lambda_m}\right)^\alpha
\dot C_m^\Lambda(k)\,d^3k
\le C_\alpha,
\qquad \alpha\in\mathbb{N},
\end{equation}
together with \eqref{b--1} we obtain
\[
\mathcal{N}^{\Lambda_m}_{\rho+1,l;\infty}\bigl(\mathbf{L}^{\Lambda}(T)(\underline{p}_n)\bigr)
\le
\mathcal{P}_{l-2}\!\left(\log \frac{\Lambda_m}{m}\right)
\widetilde{\mathcal{P}}\!\left(\frac{\|\underline{p}_n\|}{\Lambda_m}\right).
\]
The estimate 
\[
\mathcal{N}^{\Lambda_m;(2)}_{\rho+1,l;\infty}\bigl(\mathbf{L}^{\Lambda}(T)(\underline{p}_n)\bigr)
\le
\mathcal{P}_{l-2}\!\left(\log \frac{\Lambda_m}{m}\right)
\widetilde{\mathcal{P}}\!\left(\frac{\|\underline{p}_n\|}{\Lambda_m}\right).
\]
is obtained following similar steps. We instead use the reduction of the double rooted amplitude estimate \eqref{DouRed} from Corollary \ref{CorFus}.
We now treat the case with $\Delta K$ and $\Delta \chi$ insertions. Assume that $\Delta T\in\mathcal{D}^{\Lambda_m}_{\rho,l-1}\left(\mathbb{M}^{n+2}\right)$. This means that we have for $j\in\mathcal{I}_n$
\begin{multline}\label{Borne*}
\bigl(K^{\otimes \mathfrak{s}}\star \Delta_{z_1}^{(j)}\mathbf{L}^{\Lambda}(T)\bigr)
\bigl(z_1;\underline{p}_n;\underline{y}_{\mathfrak{s}}\bigr)\\
=
\int_k \int_{\mathbb{R}} du\,
\bigl(K^{\otimes (\mathfrak{s}\setminus\{j\}+2)}\otimes\Delta_{z_1}K_{\tau_j,y_j}  \star T\bigr)
\bigl(z_1;\underline{p}_n,k,-k;\underline{y}_{\mathfrak{s}},\bar u,\bar u\bigr)\,
\dot C^{\Lambda}(k).
\end{multline}
Since $\Delta T\in\mathcal{D}^{\Lambda_m}_{\rho,l-1}\left(\mathbb{M}^{n+2}\right)$, we deduce that \eqref{Borne*} is bounded for $l\ge 2$ by 
\begin{equation}
\tau_j^{-\frac{1}{2}}\Lambda_m^{\rho-1}\,
\mathcal{N}^{\Lambda_m}_{\rho,l-1}\bigl(\Delta^{(j)} T(\underline{p}_n,k,-k)\bigr)\,
\mathcal{A}_{l-1,n+2}^{\Lambda,\Lambda_0}
\bigl(z_1;\underline{y}_{\mathfrak{s}},\bar u,\bar u\bigr)
{Q}_{1}\!\left(
\tau^{\Lambda_m}_{\mathfrak{s}},
y^{\Lambda},
y^{\Lambda}
\right).
\end{equation}
For $l=1$, the polynomial $Q_1$ is reduced to a constant. Following similar steps as before, we deduce that there exist polynomials $\mathcal{P}_{l-2}$ of degree at most $l-2$ and $\widetilde{\mathcal{P}}$ of degree depending only on $n$ and $l$ such that 
\begin{equation}\label{b1love}
\mathcal{N}^{\Lambda_m}_{\rho+1,l}\bigl(\Delta^{(j)} \mathbf{L}^{\Lambda}(T)(\underline{p}_n)\bigr)
\le
\mathcal{P}_{l-2}\!\left(\log\frac{\Lambda_m}{m}\right)
\widetilde{\mathcal{P}}\!\left(\frac{\|\underline{p}_n\|}{\Lambda_m}\right).
\end{equation}
Next, bounding the following for a given $\chi\in\overline{\mathcal{K}}$,
\begin{multline}
\bigl(K^{\otimes \mathfrak{s}}\star \chi^{\otimes \sigma\setminus\{j\}}\otimes\Delta_{z_1}\chi\,\mathbf{L}^{\Lambda}(T)\bigr)
\bigl(z_1;\underline{p}_n;\underline{y}_{\mathfrak{s}}\bigr)\\
=
\int_k \int_{\mathbb{R}} du\,
\bigl(K^{\otimes \mathfrak{s}+2}\star \left(\Delta^{(j)}_{z_1}\chi^{\otimes\sigma}\right)T\bigr)
\bigl(z_1;\underline{p}_n,k,-k;\underline{y}_{\mathfrak{s}},\bar u,\bar u\bigr)\,
\dot C^{\Lambda}(k),
\end{multline}
we deduce that 
\begin{equation}\label{b2love}
\mathcal{N}^{\Lambda_m}_{\rho+1,l}\bigl(\Delta^{(j)}\chi\,\mathbf{L}^{\Lambda}(T)(\underline{p}_n)\bigr)
\le
\mathcal{P}_{l-2}\!\left(\log\frac{\Lambda_m}{m}\right)
\widetilde{\mathcal{P}}\!\left(\frac{\|\underline{p}_n\|}{\Lambda_m}\right).
\end{equation}
Combining \eqref{b1love} and \eqref{b2love} with \eqref{Ninf2} and taking the sup over $j\in\mathcal{I}_n$, we deduce that 
\begin{equation}
\mathcal{N}^{\Lambda_m}_{\rho+1,l;\infty}\bigl(\Delta \mathbf{L}^{\Lambda}(T)(\underline{p}_n)\bigr)
\le
\mathcal{P}_{l-2}\!\left(\log\frac{\Lambda_m}{m}\right)
\widetilde{\mathcal{P}}\!\left(\frac{\|\underline{p}_n\|}{\Lambda_m}\right),
\end{equation}
and this establishes that $\Delta \mathbf{L}^{\Lambda}(T)\in\mathcal{D}^{\Lambda_m}_{\rho+1,l}\left(\mathbb{M}^n\right)$.
\end{proof}
The bilinear contribution to the Polchinski flow preserves the
power-counting degree. This reflects the fact that the contraction
kernel is not bounded away and therefore produces no
additional gain of scale. The corresponding estimate is a consequence of
the fusion bounds established in Section~\ref{SecTreeAmp}.

We begin by introducing some notation that will be used throughout the
analysis of the bilinear operator: fix a partition $\pi=\{\pi_1,\pi_2\}$ of $\left\{1,\cdots,n\right\}$, and let
\[
\sigma\subseteq\mathfrak s\subseteq\mathcal I_n.
\]
Throughout this subsection, we write, for $r\in\{1,2\}$,
\[
\mathfrak s_r:=\mathfrak s\cap\pi_r,
\qquad
\sigma_r:=\sigma\cap\pi_r,
\qquad
q_r:=|\sigma_r|.
\]
If $\mathfrak a\subseteq\mathcal I_n$ and $j\in\mathfrak a$, we further
set
\[
\mathfrak a^{(j)}
:=
\mathfrak a\setminus\{j\}.
\]
We also define for $i\in\mathcal{I}_n$
\begin{equation}\label{Seti}
\mathfrak{a}(i):=
\begin{cases}
\mathfrak{a}^{(i)}, & \text{if } i\in\mathfrak{a},\\[0.3em]
\mathfrak{a}, & \text{otherwise}.
\end{cases}
\end{equation}
We also use the shorthand notation
\[
\underline y_{\mathfrak s_i}
:=
\bigl(
(\tau_j,y_j)
\bigr)_{j\in\mathfrak s_i},
\qquad
\underline p_{\pi_i}
:=
(p_j)_{j\in\pi_i},
\qquad
p_\pi
:=
\sum_{j\in\pi_1}p_j,
\]
as well as
\[
\bar u
:=
\bigl(
(2\Lambda^2)^{-1},u
\bigr),
\qquad
y^\Lambda
:=
\sqrt2\,\frac{\Lambda}{\Lambda_m}.
\]
In the next propositions, we analyze the bilinear Polchinski operator
$\mathbf B_\pi^\Lambda$, defined by
\begin{multline}\label{Bpi}
\mathbf B_\pi^\Lambda(T_1,T_2)
(\vec{z}_{1,n};\underline p_n)
:=
e^{-\frac{m^2}{2\Lambda^2}}
\int_{\mathbb R}dz
\int_{\mathbb R}dz'\,
T_1
\bigl(
\underline z_{\pi_1},z;
\underline p_{\pi_1},p_\pi
\bigr)
\\
\times
T_2
\bigl(
\underline z_{\pi_2},z';
\underline p_{\pi_2},-p_\pi
\bigr)
K\!\left(
\frac1{\Lambda^2};z,z'
\right).
\end{multline}
We repeatedly use the semigroup property of the heat kernel,
\begin{equation}\label{BSG}
K\!\left(
\frac1{\Lambda^2};z,z'
\right)
=
\int_{\mathbb R}du~
K\!\left(
\frac1{2\Lambda^2};z,u
\right)
K\!\left(
\frac1{2\Lambda^2};z',u
\right).
\end{equation}
\begin{proposition}[Stability of the bilinear Polchinski operator]\label{propQu}
Let $n_1,n_2\ge1$, $l_1,l_2\ge0$, and
$(\rho_1,\rho_2)\in\mathbb Z^2$. Set
\[
\rho:=\rho_1+\rho_2,
\qquad
l:=l_1+l_2,\qquad n:=n_1+n_2
\]
Then, for every partition $
\pi=\{\pi_1,\pi_2\}$
of $\left\{1,\cdots,n\right\}$
the bilinear operator $
\mathbf B_\pi^\Lambda
$
extends continuously to
\begin{equation}\label{Qu3}
\mathbf B_\pi^\Lambda :
\mathcal D^{\Lambda_m}_{\rho_1,l_1}
(\mathbb M^{n_1+1})
\times
\mathcal D^{\Lambda_m}_{\rho_2,l_2}
(\mathbb M^{n_2+1})
\longrightarrow
\mathcal D^{\Lambda_m}_{\rho,l}
(\mathbb M^{n}).
\end{equation}
The same statement remains valid if one of the spaces
$\mathcal D^{\Lambda_m}_{\rho_i,l_i}$ on the left-hand side and the
corresponding target space on the right-hand side are replaced by their
second-order counterparts
$\mathcal D^{\Lambda_m;(2)}_{\rho_i,l_i}$ and
$\mathcal D^{\Lambda_m;(2)}_{\rho,l}$. More precisely, the following bounds hold.

\medskip

\noindent
\textit{(i) Two-point case.}
For $
(T_1,T_2)
\in
\mathcal D^{\Lambda_m}_{\rho_1,l_1}(\mathbb M^2)
\times
\mathcal D^{\Lambda_m}_{\rho_2,l_2}(\mathbb M^2)$
one has
\begin{equation}
\mathcal N^{\Lambda_m}_{\rho,l}
\Bigl(
\mathbf B^\Lambda(T_1,T_2)(\underline{p})
\Bigr)
\le
\mathcal N^{\Lambda_m}_{\rho_1,l_1}
\bigl(
T_1(p)
\bigr)
\mathcal N^{\Lambda_m}_{\rho_2,l_2}
\bigl(
T_2(-p)
\bigr).
\end{equation}

\medskip

\noindent
\textit{(ii)}
For
$
(T_1,T_2)
\in
\mathcal D^{\Lambda_m}_{\rho_1,l_1}(\mathbb M^2)
\times
\mathcal D^{\Lambda_m}_{\rho_2,l_2}(\mathbb M^n)$,
one has
\begin{equation}
\mathcal N^{\Lambda_m}_{\rho,l}
\Bigl(
\mathbf B^\Lambda_{\pi}(T_1,T_2)(\underline p_n)
\Bigr)
\le
\mathcal 
 N^{\Lambda_m}_{\rho_1,l_1}
\bigl(
T_1(p_{\pi})
\bigr)\mathcal N^{\Lambda_m}_{\rho_2,l_2}
\left(
T_2\Bigl(\sum_{i\in\pi_2}^n p_i\Bigr)
\right).
\end{equation}
\medskip

\noindent
\textit{(iii) General case.}
If $n_1,n_2>1$, then
\[
\mathcal N^{\Lambda_m}_{\rho,l;\infty}
\Bigl(
\mathbf B_\pi^\Lambda(T_1,T_2)(\underline p_n)
\Bigr)
\le
\mathcal N^{\Lambda_m}_{\rho_1,l_1;\infty}
\bigl(
T_1(\underline p_{\pi_1})
\bigr)
\mathcal N^{\Lambda_m}_{\rho_2,l_2;\infty}
\bigl(
T_2(\underline p_{\pi_2})
\bigr).
\]
\end{proposition}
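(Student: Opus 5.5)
The argument mirrors the proof of Proposition~\ref{PropL}, with the reduction estimate replaced by the fusion estimates of Section~\ref{SecTreeAmp}. First I would fix $\mathfrak s\subseteq\mathcal I_n$, $\sigma\subseteq\mathfrak s$ and $\chi\in\mathcal K$. Then I would insert the semigroup identity \eqref{BSG} into \eqref{Bpi}. Since the test function $K^{\otimes\mathfrak s}\otimes\chi^{\otimes\sigma}$ factorises along the partition $\pi$, this gives
\[
\bigl(K^{\otimes\mathfrak s}\star\chi^{\otimes\sigma}\mathbf B^\Lambda_\pi(T_1,T_2)\bigr)(z_1;\underline y_{\mathfrak s})
=e^{-\frac{m^2}{2\Lambda^2}}\int_{\mathbb R}du\,
\bigl(K^{\otimes(\mathfrak s_1+1)}\star\chi^{\otimes\sigma_1}T_1\bigr)(z_1;\underline y_{\mathfrak s_1},\bar u)
\int_{\mathbb R}dz\,\bigl(K^{\otimes(\mathfrak s_2+1)}\star\chi^{\otimes\sigma_2}T_2\bigr)(z;\underline y_{\mathfrak s_2},\bar u).
\]
Here the root of $T_2$ is integrated, which the empty-configuration convention permits, and without loss of generality $1\in\pi_1$. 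Each factor is then bounded by the seminorm definitions \eqref{e203}, \eqref{e203Bis}, \eqref{e203-n2}. This produces $\Lambda_m^{\rho_1+\rho_2}$, the two seminorms, the amplitudes $\langle\mathcal A,\hat{\mathcal A}\rangle_{l_r,n_r+1}$, and the polynomial weights $Q^{\sigma_r}_{q_r}(\tau^{\Lambda_m}_{\mathfrak s_r},y^\Lambda)$.

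Second, I would collect the weights. Since $y^\Lambda=\sqrt2\Lambda/\Lambda_m\le\sqrt2$, \eqref{Prop2QBis} removes the fusion-line variable up to a constant. Then \eqref{Prop1Q}--\eqref{Prop2Q} combine $Q^{\sigma_1}_{q_1}Q^{\sigma_2}_{q_2}\lesssim Q^{\sigma}_{q}(\tau^{\Lambda_m}_{\mathfrak s})$ with $q=q_1+q_2$. The amplitude product, integrated against $du\,dz$, is exactly the $\otimes_\pi$-product \eqref{fusion-general}. By Corollary~\ref{CorFus} it is bounded by $\langle\mathcal A,\hat{\mathcal A}\rangle_{l,n}$:
\begin{itemize}
\item in case (i), via \eqref{157CorFus};
\item in case (ii), via \eqref{CorFuss};
\item in case (iii), via \eqref{eq:CorFus_2}.
\end{itemize}
In case (iii) there is an extra factor $1+\sum_{i\in\mathfrak s_2}\tau_i^{-1/2}/\Lambda_m$, which is absorbed by raising the polynomial degree using \eqref{Prop5Q}. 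The factor $e^{-m^2/2\Lambda^2}\le1$ is discarded. Taking suprema then gives the multiplicative seminorm bounds, and hence \eqref{Qu3}, because products of polynomials in $\log(\Lambda_m/m)$ of degrees $l_1-1$ and $l_2-1$ have degree at most $l-1$. The momentum polynomials combine since $|p_\pi|\le n\|\underline p_n\|$. For the second-order spaces, I would use the doubly rooted fusion bound \eqref{eq:FusionTight0} in place of \eqref{eq:FusionTight1}.

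Third, I would treat the $\Delta$-seminorms \eqref{e203-DeltaK}--\eqref{e203-Deltachi}. If $j\in\pi_1$, the discrete derivative sits on the $T_1$ factor, and the previous argument applies with $\Delta T_1$. If $j\in\pi_2$, I would write $\phi_j(z_j)-\phi_j(z_1)=(\phi_j(z_j)-\phi_j(z))+(\phi_j(z)-\phi_j(z_1))$. The first piece is a $\Delta$-seminorm of $T_2$ relative to its own root. The second piece is an insertion $\Delta_{z_1}K_{\tau_j,y_j}(z)$ or $\Delta_{z_1}\chi(z)$ between the two roots. These are controlled by \eqref{eq:CorFus_3} and \eqref{eq:CorFus_4}, which yield $\tau_j^{-1/2}\Lambda^{-1}$ and $\Lambda^{-1}$ respectively. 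These factors match the normalisation $\Lambda_m^{-\rho+1}\sqrt{\tau_j}$ of the $\Delta$-seminorms, with the hatted amplitude appearing in the cutoff case.

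The main obstacle is this last step. The factor $\Lambda^{-1}$ produced by the fusion lemmas must be converted into $\Lambda_m^{-1}$, which is problematic as $\Lambda\to0$. Moreover, in \eqref{eq:CorFus_4} the extra $\tau^{-1/2}/\Lambda_m$ must fit into $Q_{q-1}^{\sigma\setminus\{j\}}$. For the first issue, I would first try to use the $e^{-m^2/2\Lambda^2}$ prefactor, which dominates $(\Lambda_m/\Lambda)^k$ for any $k$. For the second, I would check the degree bookkeeping carefully.
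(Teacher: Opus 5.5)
Your architecture (the semigroup identity \eqref{BSG}, factorwise seminorm bounds, Corollary~\ref{CorFus}, Lemma~\ref{QLemma}) matches the paper's. However, case (iii) does not close as you describe it.

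The target seminorm \eqref{e203} is normalised by $Q^{\sigma}_{|\sigma|}$, so its degree is pinned to $|\sigma|$. If you measure $T_2$ with its full cutoff set $\sigma_2$, you get $Q^{\sigma_1}_{q_1}Q^{\sigma_2}_{q_2}\lesssim Q^{\sigma}_{q}$ with $q=|\sigma|$. Absorbing the factor $1+\sum_{i\in\mathfrak s_2}\tau_i^{-1/2}/\Lambda_m$ from \eqref{eq:CorFus_2} via \eqref{Prop5Q} then produces $Q^\sigma_{q+1}$. Since the $\tau_i$ range over all of $\mathbb R_+$, the ratio $Q_{q+1}/Q_q$ is unbounded, so this gives no bound on $\mathcal N^{\Lambda_m}_{\rho,l}$ at all, and the degree would grow at every fusion.

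Moreover, \eqref{eq:CorFus_2} is not available for an arbitrary rooting of the second factor. Its $\hat{\mathcal A}\otimes_\pi\hat{\mathcal A}$ part rests on Lemma~\ref{LemIm}, which requires the root of $T_2$'s trees to be attached to an external vertex $y_j$ with $j\in\sigma_2$. Otherwise Corollary~\ref{CorTransport} only yields \eqref{Nuevo2}, which trades the $\chi^\sigma$-decoration for $\|\chi\|_\infty^{|\sigma|}$. That is useless against the $\chi$-decorated denominator under $\sup_{\chi\in\mathcal K}$. Your displayed formula leaves the root of $T_2$ unspecified and attaches no kernel or cutoff to it, which is not even consistent when $\mathfrak s_2=\pi_2$.

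The missing idea is the choice of root made in \eqref{qtte}. If $\sigma_2\neq\varnothing$, fix $j\in\sigma_2$ and root the second factor at $z_j$. Keep $\chi(z_j)K(\tau_j;z_j,y_j)$ outside and measure $T_2$ with cutoff set $\sigma_2\setminus\{j\}$. Reattaching this leg at the root (Lemma~\ref{MagnLemm}) puts you in the setting of Lemma~\ref{LemIm}. Now $T_2$ contributes only $Q^{\sigma_2^{(j)}}_{q_2-1}$, so the weights have total degree $q-1$ (see \eqref{277-}), leaving exactly one unit to absorb the fusion factor (see \eqref{280-}). If $\sigma_2=\varnothing$, root at the fusion variable $z'$ as in \eqref{siVar}. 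Then the second bracket has no hatted part, and \eqref{CorFuss1}--\eqref{CorFuss2} produce no extra factor.

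Two smaller points. In case (ii) the hatted amplitude sits in the second, $n$-point factor, so the relevant bounds are \eqref{CorFuss1}--\eqref{CorFuss2} rather than \eqref{CorFuss}. Your third step is not part of this proposition, since $\mathcal N^{\Lambda_m}_{\rho,l;\infty}$ in \eqref{sup2} contains no $\Delta$-seminorms. Those are treated in Propositions~\ref{prop:B-estimate} and~\ref{prop:B-estimateInsertion}, where $\Lambda^{-1}$ is indeed converted via \eqref{CovEB}. There, rooting $T_2$ at $z_j$ also removes your piece requiring $\Delta T_2\in\mathcal D$, which is not a hypothesis.
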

\begin{proof}
The proof is divided into three cases. The first two correspond to
situations where one or both factors belong to the two-point sector and
can therefore be estimated directly using the special seminorms
introduced in \eqref{e203-n2}. All cases require the fusion
estimates of Section~\ref{SecTreeAmp}, which allow us to combine the
tree amplitudes associated with the two factors into a single amplitude
of total degree $
(\rho_1+\rho_2,l_1+l_2)$.\\
\emph{(i) Two-point Case:} using the definition \eqref{Bpi} together with the semigroup property
\eqref{BSG}, we write, for
$
T_1\in
\mathcal D^{\Lambda_m}_{\rho_1,l_1}(\mathbb M^2)$ and 
$T_2\in
\mathcal D^{\Lambda_m}_{\rho_2,l_2}(\mathbb M^2),$
that
\begin{multline*}
\bigl(
K\star\mathbf B^\Lambda(T_1,T_2)
\bigr)
\bigl(
z_1;\tau_2,y_2;\underline{p}
\bigr)
\\
=
\int_{\mathbb R}du\,
\bigl(
K\star T_1
\bigr)
\bigl(
z_1;\bar u;p
\bigr)
\int_{\mathbb R}dz_2\,
K(\tau_2;z_2,y_2)
\bigl(
K\star T_2
\bigr)
\bigl(
z_2;\bar u;-p
\bigr).
\end{multline*}
\eqref{e203-n2} implies that
\begin{multline}\label{11nn}
\left|
\bigl(
K\star\mathbf B^\Lambda(T_1,T_2)
\bigr)
\bigl(
z_1;\tau_2,y_2;\underline p
\bigr)
\right|
\le
\Lambda_m^{\rho_1+\rho_2}
\mathcal N^{\Lambda_m}_{\rho_1,l_1}
\bigl(
T_1(p)
\bigr)
\mathcal N^{\Lambda_m}_{\rho_2,l_2}
\bigl(
T_2(-p)
\bigr)
\left(
\sum_{i=0}^3
(y^\Lambda)^i
\right)
\\
\times
\mathcal A_{l_1,2}^{\Lambda,\Lambda_0}
\otimes
\mathcal A_{l_2,2}^{\Lambda,\Lambda_0}
\bigl(
z_1;\tau_2,y_2
\bigr).
\end{multline}
Using \eqref{157CorFus} from Corollary~\ref{CorFus}, together with the
bound $
y^\Lambda\le\sqrt2,
$
and taking the supremum in \eqref{11nn}, we obtain
\[
\mathcal N^{\Lambda_m}_{\rho_1+\rho_2,l_1+l_2}
\bigl(
\mathbf B^\Lambda(T_1,T_2)(\underline p)
\bigr)
\le
\mathcal N^{\Lambda_m}_{\rho_1,l_1}
\bigl(
T_1(p)
\bigr)
\mathcal N^{\Lambda_m}_{\rho_2,l_2}
\bigl(
T_2(-p)
\bigr).
\]

\medskip

\noindent
\emph{(ii)}
Proceeding as above, we write
\begin{multline*}
\bigl(
K^{\otimes\mathfrak s}
\star
\chi^{\otimes\sigma}
\mathbf B^\Lambda(T_1,T_2)
\bigr)
\bigl(
z_1;\underline y_{\mathfrak s};\underline p
\bigr)
=
\int_{\mathbb R}du\,
\bigl(
K\star T_1
\bigr)
\bigl(
z_1;\bar u;p_{\pi}
\bigr)
\\
\times
\int_{\mathbb R}dz'\,
K\!\left(
\frac1{2\Lambda^2};z',u
\right)
\bigl(
K^{\otimes\mathfrak s}
\star
\chi^{\otimes\sigma}T_2
\bigr)
\left(
z';
\underline y_{\mathfrak s};
\sum_{i\in\pi_2} p_i
\right).
\end{multline*}
Since $
T_i
\in
\mathcal D^{\Lambda_m}_{\rho_i,l_i}
(\mathbb M^{n_i+1})$, we obtain
\begin{multline}\label{12nn}
\left|
\bigl(
K^{\otimes\mathfrak s}
\star
\chi^{\otimes\sigma}
\mathbf B^\Lambda(T_1,T_2)
\bigr)
\bigl(
z_1;\underline y_{\mathfrak s};\underline p
\bigr)
\right|
\le
\Lambda_m^{\rho_1+\rho_2}
\mathcal N^{\Lambda_m}_{\rho_1,l_1}
\bigl(
T_1(p_{\pi})
\bigr)
\mathcal N^{\Lambda_m}_{\rho_2,l_2}
\left(
T_2\!\left(
\sum_{i\in\pi_2} p_i
\right)
\right)
\left(
\sum_{i=0}^3
(y^\Lambda)^i
\right)
\\
\times
Q_q^\sigma
\bigl(
\tau_{\mathfrak s}^{\Lambda_m}
\bigr)
\mathcal A_{l_1,2}^{\Lambda,\Lambda_0}
\otimes
\langle
\mathcal A,
\hat{\mathcal A}\rangle_{l_2,n}
\bigl(
z_1;\underline y_{\mathfrak s};\chi^\sigma
\bigr).
\end{multline}
Combining \eqref{CorFuss1} and \eqref{CorFuss2} from Corollary~\ref{CorFus}, and taking the
supremum as in \eqref{e203} and \eqref{sup2}, we deduce
\[
\mathcal N^{\Lambda_m}_{\rho_1+\rho_2,l_1+l_2}
\bigl(
\mathbf B^\Lambda(T_1,T_2)(\underline p_n)
\bigr)
\le
\mathcal N^{\Lambda_m}_{\rho_1,l_1}
\bigl(
T_1(p_{\pi})
\bigr)
\mathcal N^{\Lambda_m}_{\rho_2,l_2}
\left(
T_2\!\left(
\sum_{i\in\pi_2} p_i
\right)
\right).
\]

\medskip

\noindent
\emph{(iii) General case:} assume first that $\sigma_2\neq\varnothing$. Fix
$j\in\sigma_2$. Using \eqref{Bpi} together with \eqref{BSG}, we write
\begin{multline}\label{qtte}
\bigl(
K^{\otimes\mathfrak s}
\star
\chi^{\otimes\sigma}
\mathbf B_\pi^\Lambda(T_1,T_2)
\bigr)
\bigl(
z_1;\underline y_{\mathfrak s},\underline p_n
\bigr)
=
\int_{\mathbb R}du\,
\bigl(
K^{\otimes(\mathfrak s_1+1)}
\star
\chi^{\otimes\sigma_1}T_1
\bigr)
\bigl(
z_1;
\underline y_{\mathfrak s_1},
\bar u;
\underline p_{\pi_1},
p_\pi
\bigr)
\\
\times
\int_{\mathbb R}dz_j\,
\chi(z_j)\,
K(\tau_j;z_j,y_j)
\bigl(
K^{\otimes(\mathfrak s_2^{(j)}+1)}
\star
\chi^{\otimes\sigma_2^{(j)}}T_2
\bigr)
\bigl(
z_j;
\underline y_{\mathfrak s_2^{(j)}},
\bar u;
\underline p_{\pi_2},
-p_\pi
\bigr).
\end{multline}
Using the seminorm bounds for $T_1$ and $T_2$, we have that
\begin{multline}\label{IMPORTANT}
\Bigl|
\bigl(
K^{\otimes\mathfrak s}
\star
\chi^{\otimes\sigma}
\mathbf B_\pi^\Lambda(T_1,T_2)
\bigr)
\bigl(
z_1;\underline y_{\mathfrak s},\underline p_n
\bigr)
\Bigr|
\le
\Lambda_m^{\rho_1+\rho_2}
\mathcal N^{\Lambda_m}_{\rho_1,l_1;\infty}
\bigl(
T_1(\underline p_{\pi_1})
\bigr)
\mathcal N^{\Lambda_m}_{\rho_2,l_2;\infty}
\bigl(
T_2(\underline p_{\pi_2})
\bigr)\\\times
\left(
Q^\sigma
\otimes_\pi^{(j)}
Q^\sigma
\right)_{l_1,l_2}
\bigl(
\tau_{\mathfrak s^{(j)}}^{\Lambda_m},
y^\Lambda
\bigr)
\langle
\mathcal A,\hat{\mathcal A}
\rangle_{l_1,n_1+1}
\otimes_\pi
\langle
\mathcal A,\hat{\mathcal A}
\rangle_{l_2,n_2+1}
\bigl(
z_1;\underline y_{\mathfrak s};\chi^\sigma
\bigr),
\end{multline}
where
\begin{multline}\label{QqQ}
\left(
Q^\sigma
\otimes_\pi^{(j)}
Q^\sigma
\right)_{l_1,l_2}
\bigl(
\tau_{\mathfrak s^{(j)}}^\Lambda,
y^\Lambda
\bigr)
:=
\delta_{l_1,0}\delta_{l_2,0}
+
(1-\delta_{l_1,0})
Q_{q_1}^{\sigma_1}
\bigl(
\tau_{\mathfrak s_1}^\Lambda,
y^\Lambda
\bigr)
+
(1-\delta_{l_2,0})
Q_{q_2-1}^{\sigma_2^{(j)}}
\bigl(
\tau_{\mathfrak s_2^{(j)}}^\Lambda,
y^\Lambda
\bigr)
\\
+
(1-\delta_{l_1,0})
(1-\delta_{l_2,0})
\bigl(
Q^\sigma
\otimes_\pi
Q^\sigma
\bigr)
\bigl(
\tau_{\mathfrak s^{(j)}}^\Lambda,
y^\Lambda
\bigr).
\end{multline}
Observe that \eqref{QqQ} reduces to $1$ when $l_1=l_2=0$. 
The case \(\sigma_2\neq\varnothing\) allows us to exploit directly the
distinguished cutoff insertion carried by the second factor. The
remaining case \(\sigma_2=\varnothing\) is treated separately by rooting
the second factor at the integration variable \(z'\).
Assume now that $\sigma_2=\varnothing$. In this case, we rewrite
\eqref{qtte} so that the second factor is rooted at $z'$:
\begin{multline}\label{siVar}
\bigl(
K^{\otimes\mathfrak s}
\star
\chi^{\otimes\sigma}
\mathbf B_\pi^\Lambda(T_1,T_2)
\bigr)
\bigl(
z_1;\underline y_{\mathfrak s},\underline p_n
\bigr)
=
\int_{\mathbb R}du\,
\bigl(
K^{\otimes(\mathfrak s_1+1)}
\star
\chi^{\otimes\sigma}T_1
\bigr)
\bigl(
z_1;
\underline y_{\mathfrak s_1},
\bar u;
\underline p_{\pi_1},
p_\pi
\bigr)
\\
\times
\int_{\mathbb R}dz'\,
K\!\left(
\frac1{2\Lambda^2};z',u
\right)
\bigl(
K^{\otimes\mathfrak s_2}
\star
T_2
\bigr)
\bigl(
z';
\underline y_{\mathfrak s_2};
\underline p_{\pi_2},
-p_\pi
\bigr).
\end{multline}
Using again the seminorm bounds for $T_1$ and $T_2$, we obtain
\begin{multline}\label{IMPORTANT2}
\Bigl|
\bigl(
K^{\otimes\mathfrak s}
\star
\mathbf B_\pi^\Lambda(T_1,T_2)
\bigr)
\bigl(
z_1;\underline y_{\mathfrak s},\underline p_n
\bigr)
\Bigr|
\le
\Lambda_m^{\rho_1+\rho_2}
\mathcal N^{\Lambda_m}_{\rho_1,l_1;\infty}
\bigl(
T_1(\underline p_{\pi_1})
\bigr)
\mathcal N^{\Lambda_m}_{\rho_2,l_2;\infty}
\bigl(
T_2(\underline p_{\pi_2})
\bigr)
\\
\times
\Bigl(
\delta_{l_1,0}
+
(1-\delta_{l_1,0})
Q_{|\sigma|}^\sigma
(
\tau_{\mathfrak s_1}^{\Lambda_m},
y^\Lambda
)
\Bigr)
Q_1
\bigl(
\tau_{\mathfrak s_2}^{\Lambda_m}
\bigr)
\\
\times
\langle
\mathcal A,\hat{\mathcal A}
\rangle_{l_1,n_1+1}
\otimes_\pi
\mathcal A_{l_2,n_2+1}^{\Lambda,\Lambda_0}
\bigl(
z_1;\underline y_{\mathfrak s};\chi^\sigma
\bigr).
\end{multline}
For every $\sigma'\subseteq\sigma$, one has
\[
\mathbf 1_{\{\sigma'\neq\varnothing\}}
\le
\mathbf 1_{\{\sigma\neq\varnothing\}}.
\]
Combining this with
\eqref{Prop1Q}--\eqref{Prop6Q} and the bound
$
y^\Lambda\le\sqrt2$
we obtain, for every $j\in\sigma_2$,
\begin{equation}\label{277-}
\left(
Q^\sigma
\otimes_\pi^{(j)}
Q^\sigma
\right)_{l_1,l_2}
\bigl(
\tau_{\mathfrak s^{(j)}}^\Lambda,
y^\Lambda
\bigr)
\lesssim
Q_{|\sigma|-1}^{\sigma^{(j)}}
\bigl(
\tau_{\mathfrak s^{(j)}}^\Lambda
\bigr).
\end{equation}
Furthermore, since $
\mathfrak s_1\subseteq\mathfrak s$,
we also have
\begin{equation}\label{278-}
\delta_{l_1,0}
+
(1-\delta_{l_1,0})
Q_{|\sigma|}^\sigma
\bigl(
\tau_{\mathfrak s_1}^{\Lambda_m},
y^\Lambda
\bigr)
\lesssim
Q_{|\sigma|}^\sigma
\bigl(
\tau_{\mathfrak s}^{\Lambda_m}
\bigr).
\end{equation}
In the case $\sigma_2\neq\varnothing$, we use
\eqref{eq:CorFus_2} from Corollary~\ref{CorFus}. For
$n_1,n_2\ge3$, this yields
\begin{multline}\label{279-}
\langle
\mathcal A,\hat{\mathcal A}
\rangle_{l_1,n_1+1}
\otimes_\pi^{(j)}
\langle
\mathcal A,\hat{\mathcal A}
\rangle_{l_2,n_2+1}
\bigl(
z_1;\underline y_{\mathfrak s};\chi^\sigma
\bigr)
\lesssim
\left(
1
+
(1-\delta_{l_2,0})
\sum_{i\in\mathfrak s_2}
\tau_i^{\Lambda_m}
\right)
\\
\times
\left(
\mathcal A_{l,n}^{\Lambda,\Lambda_0}
+
(1-\delta_{l_1,0})
(1-\delta_{l_2,0})
\Lambda_m^{-1}
\hat{\mathcal A}_{l,n}^{\Lambda,\Lambda_0}
\right)
\bigl(
z_1;\underline y_{\mathfrak s};\chi^\sigma
\bigr).
\end{multline}
Using \eqref{Prop5Q} together with \eqref{277-}, we obtain
\begin{equation}\label{280-}
\left(
1+
\sum_{i\in\mathfrak s_2}
\tau_i^{\Lambda_m}
\right)
\left(
Q^\sigma
\otimes_\pi^{(j)}
Q^\sigma
\right)_{l_1,l_2}
\bigl(
\tau_{\mathfrak s^{(j)}}^\Lambda,
y^\Lambda
\bigr)
\lesssim
Q_{|\sigma|}^\sigma
\bigl(
\tau_{\mathfrak s}^\Lambda
\bigr).
\end{equation}
Combining \eqref{IMPORTANT}, \eqref{279-}, and \eqref{280-}, we deduce
that, for $n_1,n_2\ge3$ and $l_1,l_2\ge0$,
\begin{multline}\label{IMPORTANT0}
\Bigl|
\bigl(
K^{\otimes\mathfrak s}
\star
\chi^{\otimes\sigma}
\mathbf B_\pi^\Lambda(T_1,T_2)
\bigr)
\bigl(
z_1;\underline y_{\mathfrak s},\underline p_n
\bigr)
\Bigr|
\le
\Lambda_m^{\rho_1+\rho_2}
\mathcal N^{\Lambda_m}_{\rho_1,l_1;\infty}
\bigl(
T_1(\underline p_{\pi_1})
\bigr)
\mathcal N^{\Lambda_m}_{\rho_2,l_2;\infty}
\bigl(
T_2(\underline p_{\pi_2})
\bigr)
\\
\times
Q_{|\sigma|}^\sigma
\bigl(
\tau_{\mathfrak s}^{\Lambda_m}
\bigr)
\langle
\mathcal A,
\hat{\mathcal A}\rangle_{l,n}
\bigl(
z_1;\underline y_{\mathfrak s};\chi^\sigma
\bigr).
\end{multline}
In the case $\sigma_2=\varnothing$, the same estimate follows from
\eqref{IMPORTANT2}, \eqref{278-}, and \eqref{CorFuss1}-\eqref{CorFuss2} from
Corollary~\ref{CorFus}.\\
Taking the supremum finally yields
\[
\mathcal N^{\Lambda_m}_{\rho,l;\infty}
\bigl(
\mathbf B_\pi^\Lambda(T_1,T_2)(\underline p_n)
\bigr)
\le
\mathcal N^{\Lambda_m}_{\rho_1,l_1}
\bigl(
T_1(\underline p_{\pi_1})
\bigr)
\mathcal N^{\Lambda_m}_{\rho_2,l_2}
\bigl(
T_2(\underline p_{\pi_2})
\bigr).
\]
This concludes the proof.
\end{proof}
The next estimate quantifies the gain obtained from powers of the
distance between vertices belonging to different components of the
partition. The corresponding factors are controlled through the
geometric decay encoded in Lemma~\ref{Rlemma}.
\begin{proposition}\label{Bilinearr}
Let $n_1,n_2\ge1$ and $l_1,l_2\ge0$ be such that
$
l:=l_1+l_2\ge1,
$
and let $(\rho_1,\rho_2)\in\mathbb Z^2$. Set $
\rho:=\rho_1+\rho_2$.
Assume that
$
T_i
\in
\mathcal D^{\Lambda_m}_{\rho_i,l_i}
(\mathbb M^{n_i+1})$ with $i\in\{1,2\}$. Then, for every partition
$\pi=\left\{\pi_1\cup\pi_2\right\}$ of $\left\{1,\cdots,n\right\}$ and every index $j$ such that
$1$ and $j$ belong to different components of the partition $\pi$, one
has
\begin{equation}\label{popert}
\mathcal N^{\Lambda_m}_{\rho-r,l}
\Bigl(
\mathbb Z_{1,j}^{(r)}
\mathbf B_\pi^\Lambda(T_1,T_2)
(\underline p_n)
\Bigr)
\lesssim
\mathcal N^{\Lambda_m}_{\rho_1,l_1}
\bigl(
T_1(\underline p_{\pi_1})
\bigr)
\mathcal N^{\Lambda_m}_{\rho_2,l_2}
\bigl(
T_2(\underline p_{\pi_2})
\bigr),
\end{equation}
for every $r\in\{0,1,2,3\}$.
\end{proposition}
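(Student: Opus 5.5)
We interpret $\mathbb Z_{1,j}^{(r)}$ as multiplication by $(z_1-z_j)^r$. The plan is to redo the proof of Proposition~\ref{propQu}(iii) with this weight carried along. The gain $\Lambda^{-r}$ is then extracted from the fused amplitude through the distance estimate of Lemma~\ref{Rlemma}, or equivalently Lemma~\ref{LemDecPo}. Write $\mathbf B_\pi^\Lambda(T_1,T_2)$ with the semigroup identity \eqref{BSG}, so that the covariance becomes two external legs $\bar u$ attached to the two factors. Since $1\in\pi_1$ and $j\in\pi_2$, the variable $z_j$ is an argument of $T_2$. We then re-root the second factor at $z_j$, as in \eqref{qtte}, so that after convolution with $K^{\otimes\mathfrak s}$ the weight $(z_1-z_j)^r$ multiplies a quantity depending only on the two root variables $z_1$ and $z_j$. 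If $j\notin\mathfrak s$, the variable $z_j$ is integrated out. In that case we still keep it as the root of the second amplitude, which is allowed by \eqref{ADRT}.

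Next we insert the seminorm bounds of $T_1$ and $T_2$ exactly as in \eqref{IMPORTANT} and \eqref{IMPORTANT2}. This bounds the absolute value by
\[
\Lambda_m^{\rho}\,\mathcal N(T_1)\,\mathcal N(T_2)\,Q^\sigma_{|\sigma|}\bigl(\tau^{\Lambda_m}_{\mathfrak s}\bigr)\int dz_j\,|z_1-z_j|^r\,\bigl(\langle\mathcal A,\hat{\mathcal A}\rangle_{l_1,n_1+1}\otimes_\pi\langle\mathcal A,\hat{\mathcal A}\rangle_{l_2,n_2+1}\bigr)(z_1,z_j;\cdot).
\]
The polynomial weights are handled by \eqref{277-}--\eqref{280-} without change. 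For the unhatted part we apply Lemma~\ref{Rlemma} tree by tree. This gives $\Lambda^{-r}\mathcal A_{l,n}(\mathcal F_u(T_1,T_2))$, and Lemma~\ref{lem:fusion_pushforward_general2} then sums over trees. Here the hypothesis $l\ge1$ is used: it guarantees that the enlarged parameter $\delta_{l,n}$ strictly exceeds $\delta_{l_i,n_i+1}$, which leaves room to absorb the distance factor into the heat kernels.

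For the hatted contributions we first use \eqref{eq:FusionTight0} and its hatted analogues to place everything on the doubly rooted fused tree $\mathcal F_u(T_1,T_2)^{(2)}_{z_j}$. The factor $|z_1-z_j|^r$ touches only the heat kernels along the path from $z_1$ to $z_j$. The cutoff factors and the $\Delta^{(1)}\chi$ insertions ride along unchanged, so Lemma~\ref{LemDecPo} applies inside each hatted term, and the $\hat{\mathcal A}\otimes\hat{\mathcal A}$ term is then treated by Lemma~\ref{LemIm} as before. It remains to convert $\Lambda^{-r}$ into $\Lambda_m^{-r}$, as required for the index $\rho-r$.

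This conversion is the main obstacle. For $\Lambda\gtrsim m$ it is immediate. For $\Lambda\ll m$ the factor $\Lambda^{-r}$ blows up, and we cannot use Lemma~\ref{LemDecPo} as stated. Instead we use the mass factor $e^{-m^2/(2\Lambda^2)}$ in \eqref{Bpi}, or the mass decay of the covariance. The plan is to replace the distance absorption along the new edge by $|z-z'|^rK(\Lambda^{-2};z,z')\lesssim\Lambda^{-r}K$, and to note that $e^{-m^2/(2\Lambda^2)}\Lambda^{-r}\lesssim m^{-r}\lesssim\Lambda_m^{-r}$ when $\Lambda\le m$. Distance increments along internal edges of the original trees are handled the same way, since they come from earlier bilinear steps carrying the same exponential. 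If this bookkeeping fails, we fall back to proving the estimate only for $\Lambda\ge m$ and treating $\Lambda<m$ by the integration argument in $\Lambda$. Taking the supremum over $z_1$, $\mathfrak s$, $\sigma$ and $\chi\in\mathcal K$ then yields \eqref{popert}.
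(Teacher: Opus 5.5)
Your proposal is correct and follows the paper's proof: semigroup splitting, rooting the second factor at $z_j$, inserting the seminorm bounds, extracting $\Lambda^{-r}$ via Lemma~\ref{Rlemma}, and converting it with $\Lambda^{-r}e^{-m^2/(2\Lambda^2)}\lesssim\Lambda_m^{-r}$ from the prefactor in \eqref{Bpi}. The paper simply takes $\chi=\mathds 1$ (the stated seminorm carries no cutoff, and neither does the weight), so no hatted terms arise; moreover, Lemma~\ref{LemDecPo} already yields a single global factor $\Lambda^{-r}$, uniformly in the internal scales $\Lambda_{ij}\ge\Lambda$, so that one prefactor suffices, and your remarks about exponentials inherited from earlier bilinear steps and the fallback integration argument are not needed.
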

\begin{proof}
Fix
$
T_1\in
\mathcal D^{\Lambda_m}_{\rho_1,l_1}
(\mathbb M^{n_1+1})$ and $T_2\in
\mathcal D^{\Lambda_m}_{\rho_2,l_2}
(\mathbb M^{n_2+1})$,
and let $
\pi=\pi_1\cup\pi_2$
be a partition of $\left\{1,\cdots,n\right\}$. The proof is a direct consequence of the seminorm bounds for
\(T_1\) and \(T_2\) together with the geometric estimate of
Lemma~\ref{Rlemma}. The latter converts the factor
\((z_1-z_j)^r\) into a gain of order \(\Lambda^{-r}\), which is then
absorbed into the power-counting degree. Without loss of generality, we
assume that $
1\in\pi_1$ and $
j\in\pi_2$. Recall the definition \eqref{Bpi} of the bilinear Polchinski operator.
Applying \(\mathbb Z_{1,j}^{(r)}\) to \eqref{Bpi} and convolving with the
heat kernels defining the seminorms yields
\begin{multline}
\bigl(
K^{\otimes\mathfrak s}
\star
\mathbb Z_{1,j}^{(r)}
\mathbf B_\pi^\Lambda(T_1,T_2)
\bigr)
\bigl(
z_1;\underline y_{\mathfrak s};\underline p_n
\bigr)
=
e^{-\frac{m^2}{2\Lambda^2}}
\int_{\mathbb R}du
\int_{\mathbb R}dz_j\,
\bigl(
K^{\otimes(\mathfrak s_1+1)}
\star
T_1
\bigr)
\bigl(
z_1;
\underline y_{\mathfrak s_1},
\bar u;
\underline p_{\pi_1},
p_\pi
\bigr)
\\
\times
(z_1-z_j)^r
\phi_j(z_j)
\bigl(
K^{\otimes(\mathfrak s_2(j)+1)}
\star
T_2
\bigr)
\bigl(
z_j;
\underline y_{\mathfrak s_2(j)},
\bar u;
\underline p_{\pi_2},
-p_\pi
\bigr),
\end{multline}
where we use the notation \eqref{Seti} and 
\[
\phi_j:=
\begin{cases}
K_{\tau_j,y_j}, & \text{if } j\in\mathfrak{s},\\[0.3em]
1, & \text{otherwise}.
\end{cases}
\]
Since no cutoff decorations are involved in the
definition of \(\mathbb Z_{1,j}^{(r)}\),
it is sufficient to consider the seminorm with
\(\chi=\mathbf 1\). We obtain
\begin{multline}
\Bigl|
\bigl(
K^{\otimes\mathfrak s}
\star
\mathbb Z_{1,j}^{(r)}
\mathbf B_\pi^\Lambda(T_1,T_2)
\bigr)
\bigl(
z_1;\underline y_{\mathfrak s};\underline p_n
\bigr)
\Bigr|
\le
\Lambda_m^{\rho_1+\rho_2}
e^{-\frac{m^2}{2\Lambda^2}}
\mathcal N^{\Lambda_m}_{\rho_1,l_1}
\bigl(
T_1(\underline p_{\pi_1})
\bigr)
\mathcal N^{\Lambda_m}_{\rho_2,l_2}
\bigl(
T_2(\underline p_{\pi_2})
\bigr)
\\
\times
Q_1
\bigl(
\tau_{\mathfrak s_1}^{\Lambda_m},
y^\Lambda
\bigr)
Q_1
\bigl(
\tau_{\mathfrak s_2^{(j)}}^{\Lambda_m},
y^\Lambda
\bigr)
\int_{\mathbb R}dz_j\,
|z_1-z_j|^r
\left(
\mathcal A_{l_1,n_1+1}^{\Lambda,\Lambda_0}
\otimes_\pi
\mathcal A_{l_2,n_2+1}^{\Lambda,\Lambda_0}
\right)
\bigl(
z_1,z_j;\underline y_{\mathfrak s}
\bigr).
\end{multline}
By Lemma~\ref{Rlemma},
\[
\int_{\mathbb R} dz_j\,
|z_1-z_j|^r
\bigl(
\mathcal A^{\Lambda,\Lambda_0}_{l_1,n_1+1}
\otimes_\pi
\mathcal A^{\Lambda,\Lambda_0}_{l_2,n_2+1}
\bigr)\left(z_1,z_j;\underline{y}_{\mathfrak{s}}\right)
\lesssim
\Lambda^{-r}
\mathcal A^{\Lambda,\Lambda_0}_{l,n}\left(z_1;\underline{y}_{\mathfrak{s}}\right).
\]
This together with
\eqref{Prop2Q}--\eqref{Prop2QBis}, yields
\begin{multline*}
\Bigl|
\bigl(
K^{\otimes\mathfrak s}
\star
\mathbb Z_{1,j}^{(r)}
\mathbf B_\pi^\Lambda(T_1,T_2)
\bigr)
\bigl(
z_1;\underline y_{\mathfrak s};\underline p_n
\bigr)
\Bigr|
\lesssim
\Lambda_m^\rho~
\Lambda^{-r}~
e^{-\frac{m^2}{2\Lambda^2}}
\mathcal N^{\Lambda_m}_{\rho_1,l_1}
\bigl(
T_1(\underline p_{\pi_1})
\bigr)
\mathcal N^{\Lambda_m}_{\rho_2,l_2}
\bigl(
T_2(\underline p_{\pi_2})
\bigr)
\\
\times
Q_1
\bigl(
\tau_{\mathfrak s}^{\Lambda_m}
\bigr)
\mathcal A_{l,n}^{\Lambda,\Lambda_0}
\bigl(
z_1;\underline y_{\mathfrak s}
\bigr).
\end{multline*}
Using the bound
\[
\Lambda^{-r}
e^{-\frac{m^2}{2\Lambda^2}}
\lesssim
\Lambda_m^{-r},
\]
and taking the supremum over $
z_1$, $
\underline y_{\mathfrak s}$, and $
\mathfrak s\subseteq\mathcal I_n$
we conclude that
$$
\mathcal N^{\Lambda_m}_{\rho-r,l}
\bigl(
\mathbb Z_{1,j}^{(r)}
\mathbf B_\pi^\Lambda(T_1,T_2)
(\underline p_n)
\bigr)
\lesssim
\mathcal N^{\Lambda_m}_{\rho_1,l_1}
\bigl(
T_1(\underline p_{\pi_1})
\bigr)
\mathcal N^{\Lambda_m}_{\rho_2,l_2}
\bigl(
T_2(\underline p_{\pi_2})
\bigr).
$$
This concludes the proof.
\end{proof}
The next lemma provides the fusion estimates required for the analysis
of the discrete derivative operator
\(
\Delta \mathbf B_\pi^\Lambda.
\)
The main point is that the polynomial weights introduced in
Definition~\ref{DefPol} remain stable under the fusion procedures
associated with \(\Delta\chi\)- and \(\Delta K\)-insertions.\\
We begin by introducing some additional notation. For
$j\in\sigma_2$, we define the
$\otimes_\pi^{(j)}$-product of the polynomial weights by
\begin{equation}\label{OtimesQ}
\left(
Q^\sigma
\otimes_\pi^{(j)}
Q^\sigma
\right)
\bigl(
\tau_{\mathfrak s}^{\Lambda_m},
y^\Lambda
\bigr)
:=
Q_{q_1}^{\sigma_1}
\bigl(
\tau_{\mathfrak s_1}^{\Lambda_m},
y^\Lambda
\bigr)
\,
Q_{q_2-1}^{\sigma_2^{(j)}}
\bigl(
\tau_{\mathfrak s_2^{(j)}}^{\Lambda_m},
y^\Lambda
\bigr)
\end{equation}
where $q_i:=|\sigma_i|$. We also use the shorthand notation
\begin{equation}
(A\otimes B)
(\tau_{\mathfrak s}^{\Lambda_m},y^\Lambda)
:=
A(\tau_{\mathfrak s_1}^{\Lambda_m},y^\Lambda)
B(\tau_{\mathfrak s_2}^{\Lambda_m},y^\Lambda).
\end{equation}
Moreover, assume that $\sigma_2\neq\varnothing$, and fix
$j\in\sigma_2$. We also recall the shorthand notation introduced in
Section~\ref{SecTreeAmp}:
\begin{multline}
\langle\mathcal A,\hat{\mathcal A}\rangle_{l_1,n_1+1}
\otimes_\pi \Delta^{(j)}_{z_1}\chi~
\langle\mathcal A,\hat{\mathcal A}\rangle_{l_2,n_2+1}
\\
:=
\int_{\mathbb{R}}\int_{\mathbb{R}}dz_j~\left(
\mathcal A_{l_1,n_1+1}^{\Lambda,\Lambda_0}
+
\mathbf1_{\{\sigma_1\neq\varnothing\}}\left(1-\delta_{l_1,0}\right)
\Lambda_m^{-1}
\hat{\mathcal A}_{l_1,n_1+1}^{\Lambda,\Lambda_0}
\right)
\otimes_\pi
\Delta^{(j)}_{z_1}\chi
\\
\times
\left(
\mathcal A_{l_2,n_2+1}^{\Lambda,\Lambda_0}
+
\mathbf1_{\{\sigma_2^{(j)}\neq\varnothing\}}\left(1-\delta_{l_2,0}\right)
\Lambda_m^{-1}
\hat{\mathcal A}_{l_2,n_2+1}^{\Lambda,\Lambda_0}
\right).
\end{multline}
The next lemma establishes the stability of these polynomial weights
under the fusion estimates associated with the bilinear operator
$\mathbf B_\pi^\Lambda$.
\begin{lemma}[Stability under fusion]\label{LemmaFusionPoly}
Let \(l_1,l_2\ge0\), \(n_1,n_2\ge1\), and set
\[
l:=l_1+l_2\ge1,
\qquad
n:=n_1+n_2.
\]
Fix a partition $\{\pi_1,\pi_2\}$ of $\left\{1,\cdots,n\right\}$ and let
\(\sigma\subseteq\mathcal{I}_n\) be such that
\(\sigma_2\neq\varnothing\).
Then the following bounds hold.

\medskip

\noindent
\emph{(i) One \(\chi\)-insertion:}
\begin{multline}\label{AAchi}
\left(
\langle\mathcal{A},\hat{\mathcal{A}}\rangle_{l_1,n_1+1}
\otimes_{\pi}
\Delta_{z_1}\chi\,
\langle\mathcal{A},\hat{\mathcal{A}}\rangle_{l_2,n_2+1}
\right)
\left(
z_1;
\underline{y}_{\mathfrak{s}};
\chi^\sigma
\right)
\left(
Q^\sigma\otimes_{\pi}^{(j)}Q^\sigma
\right)
\left(
\tau_{\mathfrak{s}\setminus\{j\}}^{\Lambda_m},
y^\Lambda
\right)
\\
\lesssim
\Lambda^{-1}
\hat{\mathcal{A}}_{l,n}^{\Lambda,\Lambda_0}\left(z_1;\underline{y}_{\mathfrak{s}};\chi^{\sigma}\right)
\,
Q_{|\sigma|}^{\sigma^{(j)}}
\left(
\tau_{\mathfrak{s}}^{\Lambda_m}
\right).
\end{multline}

\medskip

\noindent
\emph{(ii) One \(\Delta_{z_1} K\)-insertion:}
for $j\in\mathfrak{s}$ the following bound holds
\begin{multline}\label{eq283}
\left(
\mathcal{A}^{\Lambda,\Lambda_0}_{l_1,n_1+1}
\otimes_{\pi}
\Delta_{z_1}K_{\tau_j,y_j}\,
\mathcal{A}^{\Lambda,\Lambda_0}_{l_2,n_2+1}
\right)
\left(
z_1;
(\tau_i,y_i)_{i\in\mathfrak{s}}
\right)~
\left(
Q_1\otimes_{\pi}Q_1
\right)
\left(
\tau_{\mathfrak{s}\setminus\{j\}}^{\Lambda_m},
y^\Lambda
\right)
\\
\lesssim
\tau_j^{-\frac{1}{2}}~\Lambda^{-1}~
\mathcal{A}_{l,n}^{\Lambda,\Lambda_0}
\left(
z_1;
\underline{y}_{\mathfrak{s}}
\right)
Q_1\left(
\tau_{\mathfrak{s}\setminus\{j\}}^{\Lambda_m}
\right).
\end{multline}

\end{lemma}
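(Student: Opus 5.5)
The lemma splits naturally into two independent estimates: the amplitude part is a direct appeal to the global fusion bounds of Corollary~\ref{CorFus}, and the polynomial weights are handled by Lemma~\ref{QLemma}. I will treat the amplitude factor and the polynomial factor separately and multiply the resulting bounds.

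For (i), the amplitude factor is exactly the left-hand side of \eqref{eq:CorFus_4}. That estimate bounds it by
\[
\Lambda^{-1}\Bigl(1+(1-\delta_{l_2,0})\mathbf 1_{\{\sigma_2^{(j)}\neq\varnothing\}}\sum_{i\in\mathfrak s_2}\tfrac{\tau_i^{-1/2}}{\Lambda_m}\Bigr)\hat{\mathcal A}_{l,n}^{\Lambda,\Lambda_0}.
\]
It then remains to show
\[
\Bigl(1+\sum_{i\in\mathfrak s_2}\tau_i^{\Lambda_m}\Bigr)Q^{\sigma_1}_{q_1}\bigl(\tau^{\Lambda_m}_{\mathfrak s_1},y^\Lambda\bigr)\,Q^{\sigma_2^{(j)}}_{q_2-1}\bigl(\tau^{\Lambda_m}_{\mathfrak s_2^{(j)}},y^\Lambda\bigr)\lesssim Q^{\sigma^{(j)}}_{|\sigma|}\bigl(\tau^{\Lambda_m}_{\mathfrak s}\bigr),
\]
which I will check in four steps.
\begin{enumerate}
\item Since $y^\Lambda\le\sqrt2$, the extra $y^\Lambda$ arguments are removed at the cost of a constant, as in \eqref{Prop2QBis}.
\item The degrees $q_1$ and $q_2-1$ are both raised to $|\sigma|-1$ by \eqref{Prop3Q}.
\item The two factors are merged into $Q_{|\sigma|-1}$ over $\mathfrak s\setminus\{j\}$ using \eqref{Prop2Q}, and then extended to $\mathfrak s$ by \eqref{Prop6Q}.
\item Multiplication by $1+\sum_{i\in\mathfrak s_2}\tau_i^{\Lambda_m}$ raises the degree by one via \eqref{Prop5Q}, giving degree $|\sigma|$.
\end{enumerate}

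The bookkeeping of the superscripts $\sigma$ versus $\sigma^{(j)}$ in $Q^\sigma_q$ needs care. I will use $\mathbf 1_{\{\sigma'\neq\varnothing\}}\le\mathbf 1_{\{\sigma\neq\varnothing\}}$ together with $Q_1\le Q_q$. When $\sigma^{(j)}=\varnothing$, the target polynomial is $Q_1$. In that case $\sigma_2^{(j)}=\varnothing$, so the indicator in \eqref{eq:CorFus_4} kills the $\tau$-sum. Moreover $q_1=0$ and $q_2=1$, so both weights are constants or $Q_1$, and the bound still closes with $Q_1$. This degenerate case is the one spot where I expect a small mismatch to check.

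For (ii), the amplitude factor is controlled by \eqref{eq:CorFus_3}, which comes from Lemma~\ref{LemmaDeltaFusion} combined with the pushforward bound Lemma~\ref{lem:fusion_pushforward_general2}. It gives
\[
\tau_j^{-1/2}\Lambda^{-1}\mathcal A_{l,n}^{\Lambda,\Lambda_0}\bigl(z_1;\underline y_{\mathfrak s}\bigr),
\]
where the $\Delta K$ leg becomes the external leg $(\tau_j,y_j)$. For the weights, $(Q_1\otimes_\pi Q_1)(\tau^{\Lambda_m}_{\mathfrak s\setminus\{j\}},y^\Lambda)\lesssim Q_1(\tau^{\Lambda_m}_{\mathfrak s\setminus\{j\}})$ follows from \eqref{Prop2Q} and \eqref{Prop2QBis}.

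The main obstacle is verifying the hypotheses of the invoked corollary cases. Corollary~\ref{CorFus} requires $n_1,n_2\ge2$ for \eqref{eq:CorFus_4}, and $l_1<l$ in Lemma~\ref{Lemme16}, whereas the present lemma allows $n_i\ge1$ and $l_i=0$. The cases with $n_i=1$ or $l_i=0$ should be handled separately. There the second factor is a tree-level two-point amplitude, or $\langle\mathcal A,\hat{\mathcal A}\rangle$ reduces to $\mathcal A$. For these I would redo the argument directly: write $\chi(z_1)-\chi(z_j)=(z_1-z_j)\Delta^{(1)}_{z_1,z_j}\chi$ and use Lemma~\ref{LemDecPo} for (i), and use Case~2 of Lemma~\ref{LemmaDeltaFusion}, which needs no scale hypothesis, for (ii).
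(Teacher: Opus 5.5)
Your argument for the generic case is the paper's own. In (i) you bound the amplitude factor by \eqref{eq:CorFus_4} and the weights by Lemma~\ref{QLemma}, and you let the indicator $\mathbf 1_{\{\sigma_2^{(j)}\neq\varnothing\}}$ dispose of the case $\sigma^{(j)}=\varnothing$. This is more careful than the paper's intermediate bound \eqref{eq:PolyFusionBound2}, which omits the indicator and fails when $\sigma^{(j)}=\varnothing$ and $\mathfrak s_2^{(j)}\neq\varnothing$. In (ii) you combine \eqref{eq:CorFus_3} with \eqref{Prop2Q}--\eqref{Prop2QBis}, exactly as the paper does.

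What needs correcting is your patch for the cases $n_i=1$ or $l_i=0$. The paper does not treat these cases; it simply invokes Corollary~\ref{CorFus}.

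\emph{The patch for (ii).} Case~2 of Lemma~\ref{LemmaDeltaFusion} is not free of a scale hypothesis. The bound $|\Delta_{z_1}K_{\tau_j,y_j}|\lesssim\tau_j^{-1/2}\Lambda^{-1}\bigl(K(\tau_j;z_1,y_j)+K(\tau_j;z_j,y_j)\bigr)$ holds only when $\tau_j^{-1/2}\gtrsim\Lambda$. For $\tau_j\gg\Lambda^{-2}$ the prefactor is small, so you need genuine cancellation. That means the Case~1 telescoping, in which $|v_a-v_{a+1}|$ is absorbed into the tree kernels. Its constants $b_{l_1,l}$ and $b_{l_2,l}$ require $l_1,l_2<l$, i.e.\ both $l_1,l_2\ge1$.

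\emph{The patch for (i).} Lemma~\ref{LemDecPo} has the same problem, because it needs a strict gain in the diffusion parameter. After the common-scale fusion bound of Lemma~\ref{lem:FusionBound_tight}, one has $\delta_{l_1\vee l_2,\hat n}=\delta_{l,n}$ whenever $\min(n_1,n_2)=1$ and $\min(l_1,l_2)=0$. An example is $\mathbf B^\Lambda_\pi(\mathcal L_{l,2},\mathcal L_{0,n})$, which does occur in the flow equation.

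To close these cases, you would have to keep the individual edge scales $\delta_{l_i,n_i+1}<\delta_{l,n}$ along the path from $z_1$ to $z_j$, rather than first enlarging every kernel to the common scale.
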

\begin{proof}
The proof uses Corollary~\ref{CorFus} together with the polynomial stability
estimates of Lemma~\ref{QLemma}. All implicit constants depend only on the fixed parameters of the
lemma. We also use that $0\le y^\Lambda \le \sqrt{2}$
so that the polynomial estimates of Lemma~\ref{QLemma} may be applied to
$y^\Lambda$, up to numerical constants.\\
Fix $j\in\sigma_2$ and set
\[
q:=|\sigma|,\qquad q_1:=|\sigma_1|,\qquad q_2:=|\sigma_2|,
\qquad |\sigma^{(j)}|=q-1.
\]
By definition,
\begin{equation}\label{eq:TensorQDefDetailed}
\left(Q^\sigma\otimes_\pi^{(j)}Q^\sigma\right)
\left(\tau_{\mathfrak{s}^{(j)}}^{\Lambda_m},y^\Lambda\right)
=
Q_{q_1}^{\sigma_1}
\left(\tau_{\mathfrak{s}_1}^{\Lambda_m},y^\Lambda\right)
Q_{q_2-1}^{\sigma_2^{(j)}}
\left(\tau_{\mathfrak{s}_2^{(j)}}^{\Lambda_m},y^\Lambda\right).
\end{equation}
Using Definition~\ref{DefPol}, we write
\begin{multline}\label{eq:TensorQDecompositionDetailed}
\left(Q^\sigma\otimes_\pi^{(j)}Q^\sigma\right)
\left(\tau_{\mathfrak{s}^{(j)}}^{\Lambda_m},y^\Lambda\right)
\le
\mathbf{1}_{\{\sigma^{(j)}=\varnothing\}}
\left(Q_1\otimes_{\pi}Q_1\right)
\left(\tau_{\mathfrak{s}^{(j)}}^{\Lambda_m},y^\Lambda\right)\\
+
\Big[
\mathbf{1}_{\{\sigma_1\neq\varnothing,\,
\sigma_2^{(j)}\neq\varnothing\}}
\left(Q_{q_1}\otimes_{\pi}Q_{q_2-1}\right)
\\
\hspace{2.5cm}
+
\mathbf{1}_{\{\sigma_1=\varnothing,\,
\sigma_2^{(j)}\neq\varnothing\}}
\left(Q_1\otimes_{\pi}Q_{q_2-1}\right)
+
\mathbf{1}_{\{\sigma_1\neq\varnothing,\,
\sigma_2^{(j)}=\varnothing\}}
\left(Q_{q_1}\otimes_{\pi}Q_1\right)
\Big]
\left(\tau_{\mathfrak{s}^{(j)}}^{\Lambda_m},y^\Lambda\right).
\end{multline}
Hence, by 
Lemma~\ref{QLemma} we obtain
\begin{equation}\label{eq:PolyFusionBound}
\bigl(
Q^\sigma\otimes_\pi^{(j)}Q^\sigma
\bigr)
\bigl(
\tau_{\mathfrak s^{(j)}}^{\Lambda_m},
y^\Lambda
\bigr)
\lesssim
Q_{q-1}^{\sigma^{(j)}}
\bigl(
\tau_{\mathfrak s^{(j)}}^{\Lambda_m}
\bigr).
\end{equation}
Furthermore, another application of Lemma~\ref{QLemma} gives
\begin{equation}\label{eq:PolyFusionBound2}
\left(
1+
\sum_{i\in\mathfrak s_2}
\tau_i^{\Lambda_m}
\right)
\bigl(
Q^\sigma\otimes_\pi^{(j)}Q^\sigma
\bigr)
\bigl(
\tau_{\mathfrak s^{(j)}}^{\Lambda_m},
y^\Lambda
\bigr)
\lesssim
Q_q^{\sigma^{(j)}}
\bigl(
\tau_{\mathfrak s}^{\Lambda_m}
\bigr).
\end{equation}
Combining \eqref{eq:PolyFusionBound} and \eqref{eq:PolyFusionBound2}, we obtain the bound
\begin{equation}\label{eq:CommonPolynomialBoundDetailed}
\left(
1+
(1-\delta_{l_2,0})
\mathbf{1}_{\{\sigma_2^{(j)}\neq\varnothing\}}
\sum_{i\in\mathfrak{s}_2}\tau_i^{\Lambda_m}
\right)
\left(Q^\sigma\otimes_\pi^{(j)}Q^\sigma\right)
\left(\tau_{\mathfrak{s}^{(j)}}^{\Lambda_m},y^\Lambda\right)
\lesssim
Q_{q}^{\sigma^{(j)}}
\left(\tau_{\mathfrak{s}}^{\Lambda_m}\right).
\end{equation}
We now prove the two estimates. The first estimate corresponds to a distinguished
\(\Delta\chi\)-insertion, while the second one treats a
\(\Delta K\)-insertion. In both cases, the proof consists of combining
the corresponding fusion estimate with the polynomial stability bound
\eqref{eq:CommonPolynomialBoundDetailed}. For (i), Corollary~\ref{CorFus} gives
\begin{multline}\label{eq:FusionChiDetailed}
\langle\mathcal{A},\hat{\mathcal{A}}\rangle_{l_1,n_1+1}
\otimes_{\pi}
\Delta^{(j)}_{z_1}\chi\,
\langle\mathcal{A},\hat{\mathcal{A}}\rangle_{l_2,n_2+1}
\lesssim
\Lambda^{-1}
\left(
1+
(1-\delta_{l_2,0})
\mathbf{1}_{\{\sigma_2^{(j)}\neq\varnothing\}}
\sum_{i\in\mathfrak{s}_2}\tau_i^{\Lambda_m}
\right)
\hat{\mathcal{A}}_{l,n}^{\Lambda,\Lambda_0}.
\end{multline}
Combining \eqref{eq:FusionChiDetailed} and \eqref{eq:CommonPolynomialBoundDetailed}, we obtain
\[
\langle\mathcal{A},\hat{\mathcal{A}}\rangle_{l_1,n_1+1}
\otimes_{\pi}
\Delta^{(j)}_{z_1}\chi\,
\langle\mathcal{A},\hat{\mathcal{A}}\rangle_{l_2,n_2+1}
\left(
Q^\sigma\otimes_\pi^{(j)}Q^\sigma
\right)
\lesssim
\Lambda^{-1}
\hat{\mathcal{A}}_{l,n}^{\Lambda,\Lambda_0}
Q_q^{\sigma^{(j)}}.
\]
This is the desired estimate in the case of one $\chi$-insertion.\\
Next we prove (ii). By Corollary~\ref{CorFus},
\begin{equation}\label{eq:FusionKDetailed}
\mathcal{A}^{\Lambda,\Lambda_0}_{l_1,n_1+1}
\otimes_{\pi}
\Delta_{z_1}K_{\tau_j,y_j}\,
\mathcal{A}^{\Lambda,\Lambda_0}_{l_2,n_2+1}
\lesssim
\tau_j^{-1/2}~\Lambda^{-1}~
\mathcal{A}_{l,n}^{\Lambda,\Lambda_0}~.
\end{equation}
Using \eqref{Prop2Q} from Lemma~\ref{QLemma}, we obtain
\begin{equation}\label{eq:TaujAbsorptionDetailed}
\left(Q_1\otimes_\pi Q_1\right)
\left(\tau_{\mathfrak{s}^{(j)}}^{\Lambda_m},y^\Lambda\right)
\lesssim
Q_{1}
\left(\tau_{\mathfrak{s}^{(j)}}^{\Lambda_m}\right).
\end{equation}
Multiplying \eqref{eq:FusionKDetailed} by the polynomial factor and
using \eqref{eq:TaujAbsorptionDetailed}, we obtain
\begin{multline}\label{eq:KAfterTaujDetailed}
\left(\mathcal{A}^{\Lambda,\Lambda_0}_{l_1,n_1+1}
\otimes_{\pi}
\Delta_{z_1}K_{\tau_j,y_j}\,
\mathcal{A}^{\Lambda,\Lambda_0}_{l_2,n_2+1}\right)\left(z_1;(\tau_i,y_i)_{i\in\mathfrak{s}}\right)~
\left(Q_1\otimes_\pi Q_1
\right)\left(\tau_{\mathfrak{s}^{(j)}}^{\Lambda_m},y^{\Lambda}\right)\\
\lesssim
\tau_j^{-\frac{1}{2}}~\Lambda^{-1}~
\mathcal{A}_{l,n}^{\Lambda,\Lambda_0}\left(z_1;(\tau_i,y_i)_{i\in\mathfrak{s}}\right)~Q_1\left(\tau_{\mathfrak s^{(j)}}^{\Lambda_m}\right)~.
\end{multline}
This proves \eqref{eq283}.
\end{proof}
\begin{proposition}\label{prop:B-estimate}
Let $n\ge4$ and write $
n=n_1+n_2$ with $n_1,n_2\ge1$. Let $l_1,l_2\ge0$ and $(\rho_1,\rho_2)\in\mathbb Z^2$, and set
$$
l:=l_1+l_2\ge1,
\qquad
\rho:=\rho_1+\rho_2.
$$
We use the notation of Proposition~\ref{propQu}.

\medskip

\noindent
\textit{(i)}
Let
$
(T_1,T_2)
\in
\mathcal D^{\Lambda_m}_{\rho_1,l_1}(\mathbb M^2)
\times
\mathcal D^{\Lambda_m}_{\rho_2,l_2}(\mathbb M^n)
$. Then
\begin{equation}
\mathcal N^{\Lambda_m}_{\rho,l}
\Bigl(
\Delta
\mathbf B^\Lambda(T_1,T_2)
(\underline p_n)
\Bigr)
\lesssim
\mathcal N^{\Lambda_m}_{\rho_1,l_1}
\bigl(
T_1(p_1)
\bigr)
\mathcal N^{\Lambda_m}_{\rho_2,l_2}
\left(
T_2\Bigl(
\sum_{i=2}^n p_i
\Bigr)
\right).
\end{equation}

\medskip

\noindent
\textit{(ii)}
Assume that $n_1,n_2>1$, and let $
\pi=\pi_1\cup\pi_2$
be a partition of $\left\{1,\cdots,n\right\}$. Let $
(T_1,T_2)
\in
\mathcal D^{\Lambda_m}_{\rho_1,l_1}
(\mathbb M^{n_1+1})
\times
\mathcal D^{\Lambda_m}_{\rho_2,l_2}
(\mathbb M^{n_2+1})$,
and fix $j\in\mathcal I_n$.

\begin{itemize}

\item[(a)]
Assume that $
\Delta^{(j)}T_1
\in
\mathcal D^{\Lambda_m}_{\rho_1,l_1}
(\mathbb M^{n_1+1})$. If $
\{1,j\}\subseteq\pi_1$,
then
\begin{equation}
\mathcal N^{\Lambda_m}_{\rho,l}
\Bigl(
\Delta^{(j)}
\mathbf B_\pi^\Lambda(T_1,T_2)
(\underline p_n)
\Bigr)
\lesssim
\mathcal N^{\Lambda_m}_{\rho_1,l_1}
\bigl(
\Delta^{(j)}T_1(\underline p_{\pi_1})
\bigr)
\mathcal N^{\Lambda_m}_{\rho_2,l_2}
\bigl(
T_2(\underline p_{\pi_2})
\bigr).
\end{equation}

\item[(b)]
Assume that $
\Delta^{(j)}T_2
\in
\mathcal D^{\Lambda_m}_{\rho_2,l_2}
(\mathbb M^{n_2+1})$.
If $
\{1,j\}\subseteq\pi_2$,
then
\begin{equation}
\mathcal N^{\Lambda_m}_{\rho,l}
\Bigl(
\Delta^{(j)}
\mathbf B_\pi^\Lambda(T_1,T_2)
(\underline p_n)
\Bigr)
\lesssim
\mathcal N^{\Lambda_m}_{\rho_1,l_1}
\bigl(
T_1(\underline p_{\pi_1})
\bigr)
\mathcal N^{\Lambda_m}_{\rho_2,l_2}
\bigl(
\Delta^{(j)}T_2(\underline p_{\pi_2})
\bigr).
\end{equation}

\item[(c)]
If $1$ and $j$ belong to different components of the partition $\pi$,
then
\begin{equation}
\mathcal N^{\Lambda_m}_{\rho,l}
\Bigl(
\Delta^{(j)}
\mathbf B_\pi^\Lambda(T_1,T_2)
(\underline p_n)
\Bigr)
\lesssim
\mathcal N^{\Lambda_m}_{\rho_1,l_1}
\bigl(
T_1(\underline p_{\pi_1})
\bigr)
\mathcal N^{\Lambda_m}_{\rho_2,l_2}
\bigl(
T_2(\underline p_{\pi_2})
\bigr).
\end{equation}

\end{itemize}
\end{proposition}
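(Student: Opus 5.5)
The proof follows the template of Proposition~\ref{propQu}. We write $K^{\otimes\mathfrak s}\star\Delta^{(j)}_{z_1}\mathbf B^\Lambda_\pi(T_1,T_2)$ using \eqref{Bpi} and the semigroup identity \eqref{BSG}, which splits the fusion kernel into two half-kernels $K((2\Lambda^2)^{-1};\cdot,u)$ attached to the two factors. We then treat separately the two parts of the seminorm \eqref{Ninf2}: the kernel part \eqref{e203-DeltaK} and the cutoff part \eqref{e203-Deltachi}. In every case the discrete difference $\phi_j(z_j)-\phi_j(z_1)$ (with $\phi_j=K_{\tau_j,y_j}$ or $\phi_j=\chi$) is routed onto the factor containing the variable $z_j$. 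The fused amplitudes are then controlled by the global fusion bounds of Corollary~\ref{CorFus}, and the polynomial weights by Lemma~\ref{QLemma} and Lemma~\ref{LemmaFusionPoly}.

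\emph{Cases (i), (a), (b).} When $1$ and $j$ lie in the same block, the difference operator acts only on that factor. In case (a) the first factor is rooted at $z_1$ and contains $z_j$, so the convolution factorises as $\int du\,(K^{\otimes}\star\Delta^{(j)}_{z_1}T_1)(z_1;\dots,\bar u)\int dz'\,K(\tfrac1{2\Lambda^2};z',u)(K^{\otimes}\star T_2)(z';\dots)$. The $\Delta$-seminorm of $T_1$ supplies the factor $\tau_j^{-1/2}\Lambda_m^{\rho_1-1}\mathcal A_{l_1,n_1+1}$ (respectively $\Lambda_m^{\rho_1-1}\hat{\mathcal A}$ for the $\chi$-part), while the ordinary seminorm of $T_2$ supplies $\Lambda_m^{\rho_2}\langle\mathcal A,\hat{\mathcal A}\rangle$. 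Fusing with \eqref{CorFuss1}--\eqref{CorFuss2}, or with \eqref{eq:FusionTight5} for hatted first factors, and merging the polynomials via \eqref{Prop1Q}--\eqref{Prop2QBis} (using $y^\Lambda\le\sqrt2$) gives the claimed bound with prefactor $\Lambda_m^{\rho-1}$, which is the normalisation required by \eqref{e203-DeltaK}. Case (b) is symmetric: we re-root $T_2$ at $z_1$ and $T_1$ at its fusion variable $z'$, exactly as in \eqref{siVar}. Case (i) is the special case in which $T_1$ is a two-point function, so $\pi_1=\{1\}$ and every $j$ lies in the other block; it is therefore really an instance of (c) with the two-point seminorm \eqref{e203-n2}, handled using \eqref{CorFuss}.

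\emph{Case (c).} This is the main step. Neither $T_1$ nor $T_2$ is assumed to have a controlled $\Delta$-seminorm, so the gain of $\Lambda_m^{-1}$ has to be extracted geometrically from the fusion itself. For the kernel part we root $T_2$ at $z_j$ and bound the integrand by $|K(\tau_j;z_1,y_j)-K(\tau_j;z_j,y_j)|$ times the product of the two ordinary amplitudes. Lemma~\ref{LemmaDeltaFusion}, in its global form \eqref{eq:CorFus_3}, bounds this by $\tau_j^{-1/2}\Lambda^{-1}\mathcal A_{l,n}$ with the leg $(\tau_j,y_j)$ reattached. The polynomial bookkeeping is \eqref{eq283}, which gives $Q_1(\tau^{\Lambda_m}_{\mathfrak s\setminus\{j\}})$ as needed in \eqref{e203-DeltaK}. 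For the cutoff part we take $j\in\sigma_2$ and write $\chi(z_j)-\chi(z_1)=(z_j-z_1)\Delta^{(1)}_{z_1,z_j}\chi$; this is exactly the $\otimes^{(j)}_\pi$ structure. Lemma~\ref{LemmaFusionPoly}(i), together with \eqref{eq:CorFus_4}, yields $\Lambda^{-1}\hat{\mathcal A}_{l,n}\,Q^{\sigma^{(j)}}_{|\sigma|}$. One point still needs checking: \eqref{e203-Deltachi} requires $Q_{q-1}^{\sigma\setminus\{j\}}$, whereas the lemma delivers degree $q$. I expect to absorb the extra power of $\tau_i^{-1/2}/\Lambda_m$ into the hatted amplitude using Lemma~\ref{lem:Linfty_forest}, or otherwise to accept the weaker degree, which the later Besov argument tolerates.

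Finally, in both parts of case (c) the leftover $\Lambda^{-1}$ must become $\Lambda_m^{-1}$. This is done with the factor $e^{-m^2/(2\Lambda^2)}$ in \eqref{Bpi}, as in Proposition~\ref{Bilinearr}, via $\Lambda^{-1}e^{-m^2/2\Lambda^2}\lesssim\Lambda_m^{-1}$. I expect case (c), and in particular the bookkeeping of the $\chi$-decorations and the requirement $n_i\ge3$ in \eqref{eq:CorFus_4}, to be the main obstacle. For small blocks ($n_i+1\le3$) I would fall back on Lemma~\ref{lem:Delta-chi-fusion}, whose hypotheses only require $n_i\ge2$.
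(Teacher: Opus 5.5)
Your argument for the kernel seminorm \eqref{e203-DeltaK} is essentially the paper's proof. In (a) and (b) the difference is absorbed into the $\Delta^{(j)}$-seminorm of the factor containing both $z_1$ and $z_j$, followed by the ordinary fusion bound; (i) and (c) go through the cross-edge estimate \eqref{eq283} (i.e.\ Lemma~\ref{LemmaDeltaFusion}) together with $\Lambda^{-1}e^{-m^2/(2\Lambda^2)}\lesssim\Lambda_m^{-1}$. Note that the statement concerns only this kernel part: the $\Delta\chi$-part you also treat, including the $Q_{q-1}$ versus $Q_q$ mismatch you flag, is the content of Proposition~\ref{prop:B-estimateInsertion}, and Lemma~\ref{lem:Linfty_forest} would not remove that mismatch since it raises rather than lowers the polynomial degree.
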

\begin{proof}
The proof follows the location of the discrete derivative insertion.
When the insertion belongs to one of the factors, it is absorbed into
the corresponding seminorm and the estimate follows directly from the
fusion bounds. When the distinguished variable (j) belongs to a
different component of the partition than the root variable (1), the
discrete derivative acts across the fusion edge and the estimate
requires the $\Delta K$-fusion bound established in
Lemma~\ref{LemmaFusionPoly}. We therefore treat these situations
separately.

We first establish \emph{(i)}. Let
$
(T_1,T_2)
\in
\mathcal D^{\Lambda_m}_{\rho_1,l_1}(\mathbb M^2)
\times
\mathcal D^{\Lambda_m}_{\rho_2,l_2}(\mathbb M^n)$. Fix
$
\sigma\subseteq\mathfrak s\subseteq\mathcal I_n$
and $j\in\mathfrak s$. Using the definition of $\mathbf B^\Lambda$ together with the semigroup
property of the heat kernel, we write
\begin{multline}
\left(
K^{\otimes\mathfrak s^{(j)}}
\otimes
\Delta_{z_1}K_{\tau_j,y_j}
\right)
\star
\mathbf B^\Lambda(T_1,T_2)
\bigl(
z_1;\underline y_{\mathfrak s};\underline p_n
\bigr)
=
e^{-\frac{m^2}{2\Lambda^2}}
\int_{\mathbb R}du\,
\bigl(
K\star T_1
\bigr)
\bigl(
z_1;\bar u;p_1
\bigr)
\,
\Delta_{z_1}K(\tau_j;z_j,y_j)
\\
\times
\left(
K^{\otimes\mathfrak s^{(j)}}
\otimes
K
\star
T_2
\right)
\left(
z_j;
\underline y_{\mathfrak s^{(j)}},
\bar u;
\sum_{i=2}^n p_i
\right).
\end{multline}
Using the seminorm bounds for $T_1$ and $T_2$, we obtain
\begin{multline}\label{est:i:1-final-corr}
\Bigl|
\left(
K^{\otimes\mathfrak s^{(j)}}
\otimes
\Delta_{z_1}K_{\tau_j,y_j}
\right)
\star
\mathbf B^\Lambda(T_1,T_2)
\bigl(
z_1;\underline y_{\mathfrak s};\underline p_n
\bigr)
\Bigr|
\le
e^{-\frac{m^2}{2\Lambda^2}}~
\Lambda_m^\rho~
\mathcal N^{\Lambda_m}_{\rho_1,l_1}
\bigl(
T_1(p_1)
\bigr)
\mathcal N^{\Lambda_m}_{\rho_2,l_2}
\left(
T_2\Bigl(
\sum_{i=2}^n p_i
\Bigr)
\right)
\\
\times
\left(
\sum_{r=0}^3
(y^\Lambda)^r
\right)
Q_1
\bigl(
\tau_{\mathfrak s^{(j)}}^{\Lambda_m},
y^\Lambda
\bigr)
\mathcal A_{l_1,2}^{\Lambda,\Lambda_0}
\otimes_\pi
\Delta_{z_1}K_{\tau_j,y_j}\,
\mathcal A_{l_2,n}^{\Lambda,\Lambda_0}
\bigl(
z_1;\underline y_{\mathfrak s}
\bigr).
\end{multline}
Applying the fusion estimate \eqref{eq283} together with
\eqref{Prop2QBis} and the elementary bound
\begin{equation}\label{CovEB}
\Lambda^{-r}e^{-m^2/(2\Lambda^2)}
\lesssim
\Lambda_m^{-r},~~~\forall r\in\mathbb{N}
\end{equation}
we deduce that
\begin{multline}\label{est:i:2-final-corr}
\Bigl|
\left(
K^{\otimes\mathfrak s^{(j)}}
\otimes
\Delta_{z_1}K_{\tau_j,y_j}
\right)
\star
\mathbf B^\Lambda(T_1,T_2)
\bigl(
z_1;\underline y_{\mathfrak s};\underline p_n
\bigr)
\Bigr|
\lesssim
\Lambda_m^{\rho-1}
\tau_j^{-1/2}
\mathcal N^{\Lambda_m}_{\rho_1,l_1}
\bigl(
T_1(p_1)
\bigr)
\mathcal N^{\Lambda_m}_{\rho_2,l_2}
\left(
T_2\Bigl(
\sum_{i=2}^n p_i
\Bigr)
\right)
\\
\times
Q_1
\bigl(
\tau_{\mathfrak s^{(j)}}^{\Lambda_m}
\bigr)
~\mathcal A_{l,n}^{\Lambda,\Lambda_0}
\bigl(
z_1;\underline y_{\mathfrak s}
\bigr).
\end{multline}
Taking the supremum over all variables entering the seminorm yields
\[
\mathcal N^{\Lambda_m}_{\rho,l}
\Bigl(
\Delta^{(j)}
\mathbf B^\Lambda(T_1,T_2)
(\underline p_n)
\Bigr)
\lesssim
\mathcal N^{\Lambda_m}_{\rho_1,l_1}
\bigl(
T_1(p_1)
\bigr)
\mathcal N^{\Lambda_m}_{\rho_2,l_2}
\left(
T_2\Bigl(
\sum_{i=2}^n p_i
\Bigr)
\right).
\]

\medskip

\noindent
We now establish \emph{(ii)}:

\medskip

\noindent
\emph{Case a): $\{1,j\}\in\pi_1$:}
proceeding as above, and using that $
\Delta^{(j)}T_1
\in
\mathcal D^{\Lambda_m}_{\rho_1,l_1}$
we obtain
\begin{multline}
\Bigl|
\left(
K^{\otimes\mathfrak s^{(j)}}
\otimes
\Delta_{z_1}K_{\tau_j,y_j}
\right)
\star
\mathbf B_\pi^\Lambda(T_1,T_2)
\bigl(
z_1;\underline y_{\mathfrak s};\underline p_n
\bigr)
\Bigr|
\le
\Lambda_m^{\rho-1}
\mathcal N^{\Lambda_m}_{\rho_1,l_1}
\bigl(
\Delta^{(j)}T_1(\underline p_{\pi_1})
\bigr)
\mathcal N^{\Lambda_m}_{\rho_2,l_2}
\bigl(
T_2(\underline p_{\pi_2})
\bigr)
\\
\times
\tau_j^{-1/2}
Q_1
\bigl(
\tau_{\mathfrak s_1^{(j)}}^{\Lambda_m},
y^\Lambda
\bigr)
Q_1
\bigl(
\tau_{\mathfrak s_2}^{\Lambda_m}
\bigr)
\left(
\mathcal A_{l_1,n_1+1}^{\Lambda,\Lambda_0}
\otimes_\pi
\mathcal A_{l_2,n_2+1}^{\Lambda,\Lambda_0}
\right)
\bigl(
z_1;\underline y_{\mathfrak s}
\bigr).
\end{multline}
The crucial observation is that the kernel
\(
\Delta_{z_1}K_{\tau_j,y_j}
\)
is already accounted for in the seminorm defining
\(
\Delta^{(j)}T_1
\).
Consequently, the fusion estimate does not generate any additional
derivative insertion. Consequently, no additional
$\tau_j^{-1/2}$ factor arises from the fusion estimate itself. Applying
\eqref{Prop1Q}, we obtain
\begin{multline}\label{case1pi1-corr}
\Bigl|
\left(
K^{\otimes\mathfrak s^{(j)}}
\otimes
\Delta_{z_1}K_{\tau_j,y_j}
\right)
\star
\mathbf B_\pi^\Lambda(T_1,T_2)
\bigl(
z_1;\underline y_{\mathfrak s};\underline p_n
\bigr)
\Bigr|
\lesssim
\Lambda_m^{\rho-1}
\mathcal N^{\Lambda_m}_{\rho_1,l_1}
\bigl(
\Delta^{(j)}T_1(\underline p_{\pi_1})
\bigr)
\mathcal N^{\Lambda_m}_{\rho_2,l_2}
\bigl(
T_2(\underline p_{\pi_2})
\bigr)
\\
\times
\tau_j^{-1/2}
Q_1
\bigl(
\tau_{\mathfrak s^{(j)}}^{\Lambda_m}
\bigr)
\left(
\mathcal A_{l_1,n_1+1}^{\Lambda,\Lambda_0}
\otimes_\pi
\mathcal A_{l_2,n_2+1}^{\Lambda,\Lambda_0}
\right)
\bigl(
z_1;\underline y_{\mathfrak s}
\bigr).
\end{multline}
The proof of case~(b) is identical after interchanging the roles of
\(T_1\) and \(T_2\). We therefore omit the details.
\medskip

\noindent
\emph{Case c): $1$ and $j$ belong to different components of $\pi$.} Without loss of generality we take $1\in\pi_1$ and $j\in\pi_2$.
In contrast to Cases~(a) and~(b), the discrete derivative now acts
across the fusion edge connecting the two components of the partition.
The estimate therefore relies on the $\Delta K$-fusion bound of
Lemma~\ref{LemmaFusionPoly}. We have
\begin{multline}
\Bigl|
\left(
K^{\otimes\mathfrak s^{(j)}}
\otimes
\Delta_{z_1}K_{\tau_j,y_j}
\right)
\star
\mathbf B_\pi^\Lambda(T_1,T_2)
\bigl(
z_1;\underline y_{\mathfrak s};\underline p_n
\bigr)
\Bigr|
\le
e^{-\frac{m^2}{2\Lambda^2}}
\Lambda_m^\rho~
\mathcal N^{\Lambda_m}_{\rho_1,l_1}
\bigl(
T_1(\underline p_{\pi_1})
\bigr)
\mathcal N^{\Lambda_m}_{\rho_2,l_2}
\bigl(
T_2(\underline p_{\pi_2})
\bigr)
\\
\times
Q_1
\bigl(
\tau_{\mathfrak s_1}^{\Lambda_m},
y^\Lambda
\bigr)
Q_1
\bigl(
\tau_{\mathfrak s_2^{(j)}}^{\Lambda_m}
\bigr)
\langle
\mathcal A,\hat{\mathcal A}
\rangle_{l_1,n_1+1}
\otimes_\pi
\Delta_{z_1}K_{\tau_j,y_j}\,
\langle
\mathcal A,\hat{\mathcal A}
\rangle_{l_2,n_2+1}
\bigl(
z_1;\underline y_{\mathfrak s}
\bigr).
\end{multline}
Using \eqref{eq283} from Lemma~\ref{LemmaFusionPoly}, together with
\eqref{Prop1Q} and \eqref{CovEB} we deduce that
\begin{multline}\label{case2pi2-corr}
\Bigl|
\left(
K^{\otimes\mathfrak s^{(j)}}
\otimes
\Delta_{z_1}K_{\tau_j,y_j}
\right)
\star
\mathbf B_\pi^\Lambda(T_1,T_2)
\bigl(
z_1;\underline y_{\mathfrak s};\underline p_n
\bigr)
\Bigr|
\lesssim
\Lambda_m^{\rho-1}
\mathcal N^{\Lambda_m}_{\rho_1,l_1}
\bigl(
T_1(\underline p_{\pi_1})
\bigr)
\mathcal N^{\Lambda_m}_{\rho_2,l_2}
\bigl(
T_2(\underline p_{\pi_2})
\bigr)
\\
\times
\tau_j^{-1/2}
Q_1
\bigl(
\tau_{\mathfrak s^{(j)}}^{\Lambda_m}
\bigr)
\mathcal A_{l,n}^{\Lambda,\Lambda_0}
\bigl(
z_1;\underline y_{\mathfrak s}
\bigr).
\end{multline}
Taking the supremum over the variables entering the seminorms in
\eqref{case1pi1-corr} and \eqref{case2pi2-corr} yields the claimed
bounds.
\end{proof}
\begin{proposition}\label{prop:B-estimateInsertion}
Fix $n\ge4$ and write
$n=n_1+n_2$ and $n_1,n_2\ge1$.
Let $l_1,l_2\ge0$ and $\rho_1,\rho_2\in\mathbb{Z}$, and define
\[
l:=l_1+l_2\ge 1,
\qquad
\rho:=\rho_1+\rho_2.
\]
We use the notation of Proposition~\ref{propQu} for momentum variables and partitions.

\medskip

\noindent
\textit{(i)}
Assume that $n_1,n_2>1$ and let $\pi=\pi_1\cup\pi_2$ be a partition.
Let $
(T_1,T_2)\in
\mathcal{D}^{\Lambda_m}_{\rho_1,l_1}\bigl(\mathbb{M}^{n_1+1}\bigr)
\times
\mathcal{D}^{\Lambda_m}_{\rho_2,l_2}\bigl(\mathbb{M}^{n_2+1}\bigr)$
and assume furthermore that for every $\chi\in\overline{\mathcal K}$,
\[
\mathcal N^{\Lambda_m}_{\rho_1,l_1;\infty}
\bigl(
\Delta\chi T_1(\underline p_{\pi_1})
\bigr)
<\infty.
\]
Then, for every $\chi\in\overline{\mathcal K}$,
\begin{align}
&
\mathcal N^{\Lambda_m}_{\rho,l}
\Bigl(
\Delta\chi
\mathbf B^\Lambda_\pi(T_1,T_2)(\underline p_n)
\Bigr)
\nonumber\lesssim
\mathcal N^{\Lambda_m}_{\rho_1,l_1;\infty}
\bigl(
\Delta\chi T_1(\underline p_{\pi_1})
\bigr)\,
\mathcal N^{\Lambda_m}_{\rho_2,l_2;\infty}
\bigl(
T_2(\underline p_{\pi_2})
\bigr)
\nonumber\\
&\hspace{1.7cm}
+
\mathcal N^{\Lambda_m}_{\rho_1,l_1;\infty}
\bigl(
T_1(\underline p_{\pi_1})
\bigr)\,
\mathcal N^{\Lambda_m}_{\rho_2,l_2;\infty}
\bigl(
T_2(\underline p_{\pi_2})
\bigr).
\label{eq:BInsertionEstimate}
\end{align}

\medskip

\noindent
\textit{(ii)}
Let $
(T_1,T_2)\in
\mathcal{D}^{\Lambda_m}_{\rho_1,l_1}(\mathbb{M}^{2})
\times
\mathcal{D}^{\Lambda_m}_{\rho_2,l_2}(\mathbb{M}^{n})$.
Then, for every $\chi\in\overline{\mathcal K}$ and $j\in\mathcal{I}_{2,n}$
\begin{equation}
\mathcal N^{\Lambda_m}_{\rho,l}
\Bigl(
\Delta^{(j)}\chi
\mathbf B^\Lambda(T_1,T_2)(\underline p_n)
\Bigr)
\lesssim
\mathcal N^{\Lambda_m}_{\rho_1,l_1}
\bigl(
T_1(p_1)
\bigr)\,
\mathcal N^{\Lambda_m}_{\rho_2,l_2}
\Bigl(
T_2\Bigl(\sum_{i=2}^n p_i\Bigr)
\Bigr).
\end{equation}

\end{proposition}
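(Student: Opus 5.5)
The plan mirrors the proof of Proposition~\ref{prop:B-estimate}. We locate the index $j\in\sigma$ whose cutoff factor carries the difference $\Delta^{(j)}_{z_1}\chi^{\otimes\sigma}$, and split into cases according to the components of $\pi$ containing $1$ and $j$.

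\textbf{Step 1: the convolved expression.} Fix $\sigma\subseteq\mathfrak s\subseteq\mathcal I_n$ with $j\in\sigma$ and $\chi\in\overline{\mathcal K}$. Using \eqref{Bpi} and the semigroup identity \eqref{BSG}, we write
\[
\bigl(K^{\otimes\mathfrak s}\star(\Delta^{(j)}_{z_1}\chi^{\otimes\sigma})\mathbf B^\Lambda_\pi(T_1,T_2)\bigr)
\]
as a $u$-integral of the product of the convolved factors $T_1$ and $T_2$, exactly as in \eqref{qtte}. Note that $\Delta^{(j)}_{z_1}\chi^{\otimes\sigma}$ differs from $\chi^{\otimes\sigma}$ only in that the factor $\chi(z_j)$ is replaced by $\chi(z_j)-\chi(z_1)$.

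\textbf{Step 2: case analysis for (i).}

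\emph{Case $\{1,j\}\subseteq\pi_1$.} The difference acts inside the first factor. We bound it by the seminorm \eqref{e203-Deltachi} of $\Delta\chi T_1$, which produces $\Lambda_m^{\rho_1-1}Q^{\sigma_1\setminus\{j\}}_{q_1-1}\,\hat{\mathcal A}_{l_1,n_1+1}$. The second factor is bounded by the ordinary seminorm of $T_2$. We then fuse using \eqref{CorFuss1}--\eqref{CorFuss3}: the hatted amplitude is carried through the fusion by \eqref{eq:FusionTight5}. The polynomial weights are combined by Lemma~\ref{QLemma}, in particular \eqref{Prop1Q} and \eqref{Prop2QBis} with $y^\Lambda\le\sqrt2$. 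This yields the first term on the right of \eqref{eq:BInsertionEstimate}.

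\emph{Case $1\in\pi_1$, $j\in\pi_2$.} The difference $\chi(z_j)-\chi(z_1)$ crosses the fusion edge. Here we do not differentiate $T_2$. Instead, we bound both convolved factors by their ordinary seminorms $\mathcal N_{\rho_i,l_i;\infty}$, with the $j$-th localisation factor extracted from $T_2$. This gives exactly the product
\[
\langle\mathcal A,\hat{\mathcal A}\rangle_{l_1,n_1+1}\otimes_\pi\Delta_{z_1}\chi\,\langle\mathcal A,\hat{\mathcal A}\rangle_{l_2,n_2+1}
\]
multiplied by $(Q^\sigma\otimes^{(j)}_\pi Q^\sigma)$. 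Lemma~\ref{LemmaFusionPoly}(i), i.e. \eqref{AAchi}, bounds this by $\Lambda^{-1}\hat{\mathcal A}_{l,n}Q^{\sigma^{(j)}}_{|\sigma|}$. The factor $e^{-m^2/2\Lambda^2}\Lambda^{-1}\lesssim\Lambda_m^{-1}$ from \eqref{CovEB} then supplies the shift $\Lambda_m^{\rho-1}$ required by the prefactor $\Lambda_m^{-\rho+1}$ in \eqref{e203-Deltachi}. This gives the second term on the right of \eqref{eq:BInsertionEstimate}.

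\emph{Case $\{1,j\}\subseteq\pi_2$.} This cannot occur with root $1\in\pi_1$. By symmetry of the labelling, whichever factor contains $1$ is taken as $T_1$.

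\textbf{Step 3: part (ii).} Here $T_1$ is a two-point function, so $j$ necessarily lies in the second component. This is the cross case of Step~2: we use the special two-point seminorm \eqref{e203-n2} for $T_1$ and \eqref{AAchi} for the fusion. The polynomial $\sum_{i\le3}(y^\Lambda)^i$ coming from $T_1$ is bounded by a constant.

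\textbf{Main obstacle.} The delicate point is the polynomial bookkeeping in the cross case. The hatted-hatted fusion \eqref{eq:CorFus_4} produces an extra factor $1+\sum_{i\in\mathfrak s_2}\tau_i^{-1/2}/\Lambda_m$, which must be absorbed by raising $Q_{q-1}$ to $Q_q$. This is \eqref{Prop5Q}, and the absorption must be compatible with the denominator weight $Q^{\sigma\setminus\{j\}}_{q-1}$ appearing in \eqref{e203-Deltachi}. I would first try to show that \eqref{eq:CommonPolynomialBoundDetailed} suffices. If the degree does not match, I would instead exploit the assumption that the root of $T_2$ is attached to $y_j$ with $j\in\sigma_2$, as in Lemma~\ref{LemIm}, to reduce the hatted amplitude before fusing.
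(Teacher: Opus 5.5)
Your decomposition is the paper's: for $j$ on the root side you use the $\Delta\chi$-seminorm of $T_1$ together with \eqref{CorFuss1}--\eqref{CorFuss3}; for $j$ across the fusion edge you use \eqref{AAchi} and \eqref{CovEB}; and you treat part (ii) as the cross case with \eqref{e203-n2}. The step you leave open, however, is a genuine gap, and neither of your two options closes it. For $q:=|\sigma|\ge2$, the seminorm \eqref{e203-Deltachi} divides by $Q^{\sigma\setminus\{j\}}_{q-1}(\tau^{\Lambda_m}_{\mathfrak s})$. But \eqref{AAchi}, equivalently \eqref{eq:CommonPolynomialBoundDetailed}, only yields $Q^{\sigma^{(j)}}_{q}(\tau^{\Lambda_m}_{\mathfrak s})$. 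Since $Q_q/Q_{q-1}$ is unbounded as $\tau_i\to0$, this does not give finiteness of the seminorm. The paper's own Case~2 also stops at $Q^{\sigma^{(j)}}_q$ and passes over this point. Your fallback goes the wrong way. Reducing the hatted factor of $T_2$ via Lemma~\ref{LemIm} and Corollary~\ref{CorTransport} is exactly what creates the factor $\sum_{i\in\mathfrak s_2}\tau_i^{-1/2}+\Lambda$, so it reproduces the extra degree instead of removing it.

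What closes the step is a variable-by-variable degree count. Since $Q_q$ is a product of one-variable factors, weights on disjoint blocks satisfy $Q_a(x_{\mathfrak s_1})\,Q_b(x_{\mathfrak s_2})\le Q_{\max(a,b)}(x_{\mathfrak s_1\cup\mathfrak s_2})$. By contrast, \eqref{Prop1Q} adds the degrees and wastes one. In the cross case, the extra factor $1+\sum_{i\in\mathfrak s_2}\tau_i^{-1/2}/\Lambda_m$ comes only from the $\hat{\mathcal A}\otimes\Delta_{z_1}\chi\,\hat{\mathcal A}$ term \eqref{a-111}. That term is present only when $\sigma_1\neq\varnothing$ and $\sigma_2^{(j)}\neq\varnothing$, i.e.\ $q_1\ge1$ and $q_2\ge2$. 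The incoming weights are then $Q_{q_1}$ on $\mathfrak s_1$ and $Q_{q_2-1}$ on $\mathfrak s_2^{(j)}$, both of degree at most $q-2$. Multiplying by a single $\tau_i^{-1/2}/\Lambda_m$ therefore stays within $Q^{\sigma^{(j)}}_{q-1}(\tau^{\Lambda_m}_{\mathfrak s})$; for $i=j$ it is a new variable at first power. In the terms \eqref{a-108}--\eqref{a-110} there is no extra factor, and the weights already have degree at most $q-1$ (or reduce to $Q_1$ when $q=1$). The same count is needed in your first case, where $\hat{\mathcal A}_{l_1}\otimes\Lambda_m^{-1}\hat{\mathcal A}_{l_2}$ from \eqref{CorFuss3} also carries the extra factor. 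There you need \eqref{Prop5Q}, not just \eqref{Prop1Q} and \eqref{Prop2QBis}. You must also root $T_2$ at some $z_i$ with $i\in\sigma_2$, as the paper does. This is needed both to apply \eqref{CorFuss3} and to lower the weight of $T_2$ to degree $q_2-1$, which leaves room for the extra power. Part (ii) has $\sigma_1=\varnothing$, so no extra factor arises there and your argument is fine as it stands.
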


\begin{proof}
The proof follows the location of the insertion. If the distinguished
vertex (j) belongs to the first component of the partition, the
factor $\Delta^{(j)}_{z_1}\chi$ can be absorbed directly into the seminorm
of $T_1$, and the estimate reduces to the standard bilinear bound.
When $j$ belongs to the second component, the insertion acts across
the fusion edge and the estimate follows from the $\Delta\chi$-fusion
bound established in Lemma~\ref{LemmaFusionPoly}. We treat these two
situations separately.

\medskip

\noindent
\textit{Case 1: $j\in\pi_1$.}
\medskip

Since \(j\in\pi_1\), the insertion
\(\Delta^{(j)}_{z_1}\chi\)
acts entirely on the first factor and can therefore be absorbed into
the seminorm of \(T_1\).
Assume first that $\sigma_2\neq\varnothing$ and pick
$i\in\sigma_2$. By definition of the seminorms,
\begin{multline}\label{eq:ProofBInsertionA}
\bigl|
K^{\otimes\mathfrak s}
\star
\chi^{\otimes\sigma\setminus\{j\}}
\mathbf B^\Lambda_\pi
\bigl(
\Delta^{(j)}_{z_1}\chi\,T_1,
T_2
\bigr)
(z_1;\underline y_{\mathfrak s};\underline p_n)
\bigr|
\\
\lesssim
\Lambda_m^{\rho+1}
\,
\mathcal N^{\Lambda_m}_{\rho_1,l_1}
\bigl(
\Delta\chi\,T_1(\underline p_{\pi_1})
\bigr)
\,
\mathcal N^{\Lambda_m}_{\rho_2,l_2;\infty}
\bigl(
T_2(\underline p_{\pi_2})
\bigr)
\\
\times
\mathcal Q^{\sigma_1^{(j)}}_{q_1}
\bigl(
\tau^{\Lambda_m}_{\mathfrak s_1},
y^\Lambda
\bigr)
\,
\mathcal Q^{\sigma_2^{(i)}}_{q_2-1}
\bigl(
\tau^{\Lambda_m}_{\mathfrak s_2^{(i)}},
y^\Lambda
\bigr)
\hat{\mathcal A}^{\Lambda,\Lambda_0}_{l_1,n_1+1}
\otimes_\pi
\langle
\mathcal A,
\hat{\mathcal A}
\rangle_{l_2,n_2+1}
\bigl(
z_1;
\underline y_{\mathfrak s};
\chi^\sigma
\bigr).
\end{multline}
Using Corollary~\eqref{CorFus}, we obtain
\begin{equation}\label{eq:ProofBInsertionB}
\hat{\mathcal A}^{\Lambda,\Lambda_0}_{l_1,n_1+1}
\otimes_\pi
\langle
\mathcal A,
\hat{\mathcal A}
\rangle_{l_2,n_2+1}
\lesssim
\Bigl(
1+\sum_{i\in\mathfrak s_2}
\tau_i^{\Lambda_m}
\Bigr)
\hat{\mathcal A}^{\Lambda,\Lambda_0}_{l,n}.
\end{equation}
Furthermore, the polynomial estimates
\eqref{Prop2Q}--\eqref{Prop2QBis} together with
\eqref{Prop5Q} from Lemma~\ref{QLemma} imply
\begin{equation}\label{eq:ProofBInsertionC}
\Bigl(
1+\sum_{i\in\mathfrak s_2}
\tau_i^{\Lambda_m}
\Bigr)
\,
\mathcal Q^{\sigma_1^{(j)}}_{q_1-1}
\bigl(
\tau^{\Lambda_m}_{\mathfrak s_1},
y^\Lambda
\bigr)
\,
\mathcal Q^{\sigma_2^{(i)}}_{q_2-1}
\bigl(
\tau^{\Lambda_m}_{\mathfrak s_2^{(i)}},
y^\Lambda
\bigr)
\lesssim
\mathcal Q^{\sigma^{(j)}}_{q-1}
\bigl(
\tau^{\Lambda_m}_{\mathfrak s}
\bigr).
\end{equation}
Combining
\eqref{eq:ProofBInsertionA}--\eqref{eq:ProofBInsertionC}
yields
\begin{multline}\label{eq:ProofBInsertionD}
\bigl|
K^{\otimes\mathfrak s}
\star
\chi^{\otimes\sigma\setminus\{j\}}
\mathbf B^\Lambda_\pi
\bigl(
\Delta^{(j)}_{z_1}\chi\,T_1,
T_2
\bigr)
(z_1;\underline y_{\mathfrak s};\underline p_n)
\bigr|
\\
\lesssim
\Lambda_m^{\rho-1}
\,
\mathcal N^{\Lambda_m}_{\rho_1,l_1}
\bigl(
\Delta\chi\,T_1(\underline p_{\pi_1})
\bigr)
\,
\mathcal N^{\Lambda_m}_{\rho_2,l_2;\infty}
\bigl(
T_2(\underline p_{\pi_2})
\bigr)
\\
\times
\mathcal Q^{\sigma^{(j)}}_{q-1}
\bigl(
\tau^{\Lambda_m}_{\mathfrak s}
\bigr)
\,
\hat{\mathcal A}^{\Lambda,\Lambda_0}_{l,n}
\bigl(
z_1;
\underline y_{\mathfrak s};
\chi^\sigma
\bigr).
\end{multline}
If instead $\sigma_2=\varnothing$, we use the simpler fusion bound
\[
\hat{\mathcal A}^{\Lambda,\Lambda_0}_{l_1,n_1+1}
\otimes_\pi
\mathcal A^{\Lambda,\Lambda_0}_{l_2,n_2+1}
\lesssim
\hat{\mathcal A}^{\Lambda,\Lambda_0}_{l,n},
\]
together with
\[
\mathcal Q^{\sigma_1^{(j)}}_{q_1-1}
\bigl(
\tau^{\Lambda_m}_{\mathfrak s_1},
y^\Lambda
\bigr)
\,
\mathcal Q_{1}
\bigl(
\tau^{\Lambda_m}_{\mathfrak s_2},
y^\Lambda
\bigr)
\lesssim
\mathcal Q^{\sigma^{(j)}}_{q-1}
\bigl(
\tau^{\Lambda_m}_{\mathfrak s}
\bigr).
\]
This again yields the bound
\eqref{eq:ProofBInsertionD}.

\medskip

\noindent
\textit{Case 2: \(j\in\pi_2\).}
In contrast to Case~1, the insertion now acts across the fusion edge.
The estimate therefore relies on the \(\Delta\chi\)-fusion bound
\eqref{AAchi}. We write 
\begin{multline}
\bigl|
K^{\otimes\mathfrak s}
\star
\chi^{\otimes\sigma\setminus\{j\}}
\Delta^{(j)}_{z_1}\chi\,
\mathbf B^\Lambda_\pi(T_1,T_2)
(z_1;\underline y_{\mathfrak s};\underline p_n)
\bigr|
\lesssim
e^{-m^2/\Lambda^2}
\Lambda_m^\rho
\,
\mathcal N^{\Lambda_m}_{\rho_1,l_1;\infty}
\bigl(
T_1(\underline p_{\pi_1})
\bigr)
\,
\mathcal N^{\Lambda_m}_{\rho_2,l_2;\infty}
\bigl(
T_2(\underline p_{\pi_2})
\bigr)
\\\times
\bigl(
Q^\sigma\otimes_\pi^{(j)}Q^\sigma
\bigr)
\bigl(
\tau^{\Lambda_m}_{\mathfrak s},
y^\Lambda
\bigr)~
\langle
\mathcal A,
\hat{\mathcal A}
\rangle_{l_1,n_1+1}
\otimes_\pi^{(j)}
\Delta_{z_1}\chi\,
\langle
\mathcal A,
\hat{\mathcal A}
\rangle_{l_2,n_2+1}
\bigl(
z_1;
\underline y_{\mathfrak s};
\chi^\sigma
\bigr).
\end{multline}
Using \eqref{AAchi}, we infer that
\[
\langle
\mathcal A,
\hat{\mathcal A}
\rangle_{l_1,n_1+1}
\otimes_\pi^{(j)}
\Delta_{z_1}\chi\,
\langle
\mathcal A,
\hat{\mathcal A}
\rangle_{l_2,n_2+1}
\,
\bigl(
Q^\sigma\otimes_\pi^{(j)}Q^\sigma
\bigr)
\lesssim
\Lambda^{-1}
\hat{\mathcal A}^{\Lambda,\Lambda_0}_{l,n}
Q^{\sigma^{(j)}}_q.
\]
This together with the covariance estimate \eqref{CovEB} yields
\begin{multline}
\bigl|
K^{\otimes\mathfrak s}
\star
\chi^{\otimes\sigma\setminus\{j\}}
\Delta^{(j)}_{z_1}\chi\,
\mathbf B^\Lambda_\pi(T_1,T_2)
(z_1;\underline y_{\mathfrak s};\underline p_n)
\bigr|
\\
\lesssim
\Lambda_m^{\rho-1}
\,
\mathcal N^{\Lambda_m}_{\rho_1,l_1;\infty}
\bigl(
T_1(\underline p_{\pi_1})
\bigr)
\,
\mathcal N^{\Lambda_m}_{\rho_2,l_2;\infty}
\bigl(
T_2(\underline p_{\pi_2})
\bigr)
\\
\times
Q_q^{\sigma^{(j)}}
\bigl(
\tau^{\Lambda_m}_{\mathfrak s}
\bigr)
\,
\hat{\mathcal A}^{\Lambda,\Lambda_0}_{l,n}
\bigl(
z_1;
\underline y_{\mathfrak s};
\chi^\sigma
\bigr).
\end{multline}
Taking the supremum over all admissible parameters in the definition of
$\mathcal N^{\Lambda_m}_{\rho,l}$ concludes the proof of~\textit{(i)}.
The proof of part~(ii) is identical. The only difference is that the
first factor belongs to the two-point sector, so that one uses the
special form of the seminorms introduced in
\eqref{e203-n2}. We omit the details.
\end{proof}
\subsection{Integration and Taylor expansion of distributions in $\mathcal{D}^{\Lambda_m}_{\rho,l}$}
The irrelevant sector is controlled by integrating the flow equation
from the ultraviolet scale \(\Lambda_0\) down to \(\Lambda\). The
stability estimates established above imply bounds on the scale
derivative \(\partial_\Lambda T^\Lambda\). The following proposition
shows that integration in the scale variable improves the
power-counting degree by one and therefore provides the basic induction
step for irrelevant distributions.
\begin{proposition}[Integration in the scale for irrelevant distributions]
\label{integration}
Let $\rho<0$, $l\in\mathbb N_0$, and let
\[
\Lambda\in[0,\Lambda_0]
\longmapsto
T^\Lambda
\]
be a family of distributions on $\mathbb M^n$. Assume that there exists
$c_l^{\Lambda_0}\in\mathbb R$ such that
\begin{equation}\label{eq:UVboundaryIrrelevant}
T^{\Lambda_0}
\left(\underline{z}_n;\underline{p}_{n}\right)
=
c_l^{\Lambda_0}(z_1)
\prod_{i=2}^n
\delta(z_1-z_i).
\end{equation}
Assume furthermore that
$
\Delta\left(\partial_\Lambda T^\Lambda\right)\in
\mathcal D^{\Lambda_m}_{\rho,l}
(\mathbb M^n)
$ in the sense of Definition~\ref{FamilyT}. Then
\begin{equation}\label{eq:irrelevantIntegratedBound}
\Delta T^\Lambda
\in
\mathcal D^{\Lambda_m}_{\rho+1,l}
(\mathbb M^n).
\end{equation}
Moreover, if $\rho<-1$, $
T^{\Lambda_0}=0$,
and the distribution $
\partial_\Lambda T^\Lambda
\in
\mathcal D^{\Lambda_m}_{\rho,l}
(\mathbb M^n)$ then
\begin{equation}\label{eq:irrelevantIntegratedBound2}
T^\Lambda
\in
\mathcal D^{\Lambda_m}_{\rho+1,l}
(\mathbb M^n).
\end{equation}
\end{proposition}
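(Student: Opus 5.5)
The plan is to integrate the flow equation from the ultraviolet scale downwards. I write
\[
T^\Lambda=T^{\Lambda_0}-\int_\Lambda^{\Lambda_0}\partial_{\lambda}T^{\lambda}\,d\lambda ,
\]
apply the discrete derivative $\Delta$ (respectively the plain heat-kernel convolution) to both sides, and estimate the two contributions separately in the seminorms of Definition~\ref{DefPowerCountingSeminorms}. All test data $z_1$, $\mathfrak s$, $\sigma$, $(\tau_i,y_i)$ and $\chi$ are fixed, and I take suprema only at the end.

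\textbf{Boundary term.} Because of \eqref{eq:UVboundaryIrrelevant}, $T^{\Lambda_0}$ is supported on the full diagonal $z_1=\dots=z_n$. Hence every discrete derivative $\Delta^{(j)}_{z_1}$ annihilates it: against test functions it produces $\phi_j(z_1)-\phi_j(z_1)=0$. The same holds for $\Delta^{(j)}_{z_1}\chi^{\otimes\sigma}$, since it reduces to $\chi(z_1)-\chi(z_1)$. So only the integral term survives in $\Delta T^\Lambda$. For the second statement we have $T^{\Lambda_0}=0$ by hypothesis, so again only the integral term remains.

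\textbf{Integral term.} By the hypothesis (Definition~\ref{FamilyT}), for each $\lambda\in[\Lambda,\Lambda_0]$ the convolved quantity $|K^{\otimes\mathfrak s}\star\Delta^{(j)}_{z_1}\partial_\lambda T^\lambda|$ is bounded by
\[
\lambda_m^{\rho-1}\,\tau_j^{-1/2}\,Q_1(\tau^{\lambda_m}_{\mathfrak s\setminus\{j\}})\,\mathcal A^{\lambda,\Lambda_0}_{l,n}\,\mathcal P_{l-1}\!\left(\log\tfrac{\lambda_m}{m}\right)\widetilde{\mathcal P}\!\left(\tfrac{\|\underline p_n\|}{\lambda_m}\right),
\]
and analogously with $\hat{\mathcal A}$ and $Q^{\sigma\setminus\{j\}}_{q-1}$ for the $\Delta\chi$-seminorm. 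To integrate in $\lambda$, I need to replace every $\lambda$-dependent quantity by its value at $\Lambda$.

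\begin{itemize}
\item The amplitudes are monotone in the lower scale: $\mathcal A^{\lambda,\Lambda_0}_{l,n}\le\mathcal A^{\Lambda,\Lambda_0}_{l,n}$ for $\lambda\ge\Lambda$, because the supremum over internal scales $\Lambda_{ij}\in[\lambda,\Lambda_0]$ is taken over a smaller set. The same holds for $\hat{\mathcal A}$.
\item The polynomial weights in $\tau_i^{-1/2}/\lambda_m$ are decreasing in $\lambda$, so they are bounded by their values at $\Lambda$.
\item $\widetilde{\mathcal P}(\|\underline p_n\|/\lambda_m)\le\widetilde{\mathcal P}(\|\underline p_n\|/\Lambda_m)$.
\end{itemize}

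What remains is the scalar integral $\int_\Lambda^{\infty}\lambda_m^{\rho-1}\mathcal P_{l-1}(\log\lambda_m/m)\,d\lambda$. For $\rho<0$ it is $\lesssim\Lambda_m^{\rho}\mathcal P_{l-1}(\log\Lambda_m/m)$, since negative powers dominate the logarithms. Multiplying by the prefactor $\Lambda_m^{-(\rho+1)+1}=\Lambda_m^{-\rho}$ of the $\Delta$-seminorm at degree $\rho+1$ cancels this, which gives \eqref{eq:irrelevantIntegratedBound}. For \eqref{eq:irrelevantIntegratedBound2}, the seminorm \eqref{e203} has $\lambda_m^{\rho}$ in place of $\lambda_m^{\rho-1}$. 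Integrating then gives $\Lambda_m^{\rho+1}$, which requires $\rho+1<0$; this is precisely why the statement assumes $\rho<-1$.

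\textbf{Main obstacle.} The delicate point is commuting the $\lambda$-integral with the heat-kernel convolutions while keeping constants uniform in $\Lambda_0$. There is also a subtlety with the $\chi$-weights: $Q^\sigma_{|\sigma|}$ and the mixed denominator $\langle\mathcal A,\hat{\mathcal A}\rangle_{l,n}$ carry the factor $\lambda_m^{-1}$ in front of $\hat{\mathcal A}$, which is also decreasing in $\lambda$, so monotonicity still applies. I will check the monotonicity claims carefully, in particular for $\hat{\mathcal A}$, whose definition involves doubly rooted amplitudes sharing the same supremum structure. Finally, the case $\Lambda=0$ is handled by the same estimate with $\Lambda_m=m$.
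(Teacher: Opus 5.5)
Your proposal is correct and follows essentially the same route as the paper's proof. The boundary term at $\Lambda_0$ is annihilated by the discrete derivatives because of its diagonal support, or vanishes by hypothesis in the second statement; the $\lambda$-dependent amplitudes, polynomial weights and momentum factors are bounded by their values at $\Lambda$ via monotonicity; and the remaining scalar integral is controlled by \eqref{irrePol}, with $\rho<-1$ needed exactly so that $\int_\Lambda^{\Lambda_0}\lambda_m^{\rho}\,d\lambda\lesssim\Lambda_m^{\rho+1}$, as you note.
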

\begin{proof}
   The proposition expresses the fact that integration in the scale
variable improves the power-counting degree by one. The argument is a
direct consequence of the defining seminorm bounds together with the
elementary estimate \eqref{irrePol}.

We first prove \eqref{eq:irrelevantIntegratedBound}. Let
$
\Lambda\longmapsto T^\Lambda
$ satisfy the assumptions of the proposition. By
Definition~\ref{FamilyT}, for every $j\in\mathcal I_n$ and every
$\chi\in\overline{\mathcal K}$, one has
    \begin{equation}\label{321-}
        \mathcal N^{\Lambda_m}_{\rho,l}
\Bigl(
\Delta^{(j)}
\partial_\Lambda
T^\Lambda
(\underline p_n)
\Bigr)+\mathcal N^{\Lambda_m}_{\rho,l}
\Bigl(
\Delta^{(j)}\chi
\partial_\Lambda
T^\Lambda
(\underline p_n)
\Bigr)
\lesssim
\mathcal P_{l-1}
\left(
\log\frac{\Lambda_m}{m}
\right)
\widetilde{\mathcal P}
\left(
\frac{\|\underline p_n\|}{\Lambda_m}
\right)~.
    \end{equation}
    In particular, for all $\sigma\subseteq\mathfrak{s}\subseteq\mathcal{I}_{n}$ we have that 
    \begin{multline}\label{322-}
\bigl|
K^{\otimes\mathfrak s^{(j)}}\otimes\Delta^{(j)}_{z_1}K
\star
\partial_{\Lambda}T^{\Lambda}
(z_1;\underline y_{\mathfrak s};\underline p_n)
\bigr|\\
\lesssim
\Lambda_m^{\rho-1}
\,
\mathcal N^{\Lambda_m}_{\rho,l}
\bigl(
\Delta^{(j)}
\partial_\Lambda
T^\Lambda\left(\underline{p}_n\right)
\bigr)~
\tau_{j}^{-\frac{1}{2}}Q_1
\bigl(
\tau^{\Lambda_m}_{\mathfrak s}
\bigr)
\,
{\mathcal A}^{\Lambda,\Lambda_0}_{l,n}
\bigl(
z_1;
\underline y_{\mathfrak s}
\bigr).
\end{multline}
Similarly,
\begin{multline}\label{323-}
\bigl|
K^{\otimes\mathfrak s}
\star
\chi^{\otimes\sigma\setminus\{j\}}
\Delta^{(j)}_{z_1}\chi\,
\mathbf \partial_{\Lambda}T^{\Lambda}
(z_1;\underline y_{\mathfrak s};\underline p_n)
\bigr|
\lesssim
\Lambda_m^{\rho-1}
\,
\mathcal N^{\Lambda_m}_{\rho,l}
\bigl(
T(\underline p_n)
\bigr)\\\times
Q_{|\sigma|}^{\sigma^{(j)}}
\bigl(
\tau^{\Lambda_m}_{\mathfrak s}
\bigr)
\,
\left(\mathcal{A}_{l,n}^{\Lambda,\Lambda_0}+\Lambda_m^{-1}\hat{\mathcal A}^{\Lambda,\Lambda_0}_{l,n}\right)
\bigl(
z_1;
\underline y_{\mathfrak s};
\chi^\sigma
\bigr).
\end{multline}
Combining \eqref{321-} with \eqref{322-} and \eqref{323-}, we obtain
\begin{multline}\label{322-}
\bigl|
K^{\otimes\mathfrak s^{(j)}}\otimes\Delta^{(j)}_{z_1}K
\star
\partial_{\Lambda}T^{\Lambda}
(z_1;\underline y_{\mathfrak s};\underline p_n)
\bigr|
\lesssim
\Lambda_m^{\rho-1}~
\tau_{j}^{-\frac{1}{2}}\\\times \mathcal P_{l-1}
\left(
\log\frac{\Lambda_m}{m}
\right)
\widetilde{\mathcal P}
\left(
\frac{\|\underline p_n\|}{\Lambda_m}
\right) Q_1
\bigl(
\tau^{\Lambda_m}_{\mathfrak s}
\bigr)
\,
{\mathcal A}^{\Lambda,\Lambda_0}_{l,n}
\bigl(
z_1;
\underline y_{\mathfrak s}
\bigr)
\end{multline}
and 
\begin{multline}\label{323-}
\bigl|
K^{\otimes\mathfrak s}
\star
\chi^{\otimes\sigma\setminus\{j\}}
\Delta^{(j)}_{z_1}\chi\,
\mathbf \partial_{\Lambda}T^{\Lambda}
(z_1;\underline y_{\mathfrak s};\underline p_n)
\bigr|
\lesssim
\Lambda_m^{\rho-1}\mathcal P_{l-1}
\left(
\log\frac{\Lambda_m}{m}
\right)
\\\times
Q_{|\sigma|}^{\sigma^{(j)}}
\bigl(
\tau^{\Lambda_m}_{\mathfrak s}
\bigr)\widetilde{\mathcal P}
\left(
\frac{\|\underline p_n\|}{\Lambda_m}
\right)
\,
\left(\mathcal{A}_{l,n}^{\Lambda,\Lambda_0}+\Lambda_m^{-1}\hat{\mathcal A}^{\Lambda,\Lambda_0}_{l,n}\right)
\bigl(
z_1;
\underline y_{\mathfrak s};
\chi^\sigma
\bigr).
\end{multline}
Observe now that, for every $
\lambda\in[\Lambda,\Lambda_0]$ one has
\begin{equation}
    \mathcal{A}_{l,n}^{\lambda,\Lambda_0}\le \mathcal{A}_{l,n}^{\lambda,\Lambda_0},~~~~\hat{\mathcal{A}}_{l,n}^{\lambda,\Lambda_0}\le \hat{\mathcal{A}}_{l,n}^{\lambda,\Lambda_0},~~~~\tau_{\mathfrak{s}}^{\lambda_m}\le \tau_{\mathfrak{s}}^{\Lambda_m}~,~~~\frac{\|\underline{p}_n\|}{\lambda_m}\le \frac{\|\underline{p}_n\|}{\Lambda_m}~.
\end{equation}
The amplitudes estimates follow directly from the definition of the amplitudes.
Indeed, in $\mathcal{A}_{l,n}^{\lambda,\Lambda_0}$ the dependence on the scale $\lambda$ enters only through the supremum over the parameters $\Lambda_{ij}$, which represent the weights of the kernels associated with the internal lines of the trees or forests.
Since these parameters are constrained to lie in the interval $[\lambda,\Lambda_0]$, decreasing $\lambda$ enlarges the domain of the supremum and hence can only increase the value of the amplitude. Next, the boundary condition \eqref{eq:UVboundaryIrrelevant} implies that 
\begin{equation}
    \Delta^{(j)}_{z_1}T^{\Lambda_0}= \Delta^{(j)}_{z_1}\chi T^{\Lambda_0}
=0~.
\end{equation}
We may therefore integrate \eqref{322-} and \eqref{323-} from
$\Lambda$ to $\Lambda_0$. Using the polynomial estimate
\eqref{irrePol}, together with the assumption $
\rho-1\le-2$ we obtain
\begin{multline}\label{322bis}
\bigl|
K^{\otimes\mathfrak s^{(j)}}\otimes\Delta^{(j)}_{z_1}K
\star
T^{\Lambda}
(z_1;\underline y_{\mathfrak s};\underline p_n)
\bigr|
\lesssim
\Lambda_m^{\rho}~
\tau_{j}^{-\frac{1}{2}}\\\times \mathcal P_{l-1}
\left(
\log\frac{\Lambda_m}{m}
\right)
\widetilde{\mathcal P}
\left(
\frac{\|\underline p_n\|}{\Lambda_m}
\right) Q_1
\bigl(
\tau^{\Lambda_m}_{\mathfrak s}
\bigr)
\,
{\mathcal A}^{\Lambda,\Lambda_0}_{l,n}
\bigl(
z_1;
\underline y_{\mathfrak s}
\bigr)
\end{multline}
and 
\begin{multline}\label{323bis}
\bigl|
K^{\otimes\mathfrak s}
\star
\chi^{\otimes\sigma\setminus\{j\}}
\Delta^{(j)}_{z_1}\chi\,
T^{\Lambda}
(z_1;\underline y_{\mathfrak s};\underline p_n)
\bigr|
\lesssim
\Lambda_m^{\rho}\mathcal P_{l-1}
\left(
\log\frac{\Lambda_m}{m}
\right)
\\\times
Q_q^{\sigma^{(j)}}
\bigl(
\tau^{\Lambda_m}_{\mathfrak s}
\bigr)\widetilde{\mathcal P}
\left(
\frac{\|\underline p_n\|}{\Lambda_m}
\right)
\,
\left(\mathcal{A}_{l,n}^{\Lambda,\Lambda_0}+\Lambda_m^{-1}\hat{\mathcal A}^{\Lambda,\Lambda_0}_{l,n}\right)
\bigl(
z_1;
\underline y_{\mathfrak s};
\chi^\sigma
\bigr).
\end{multline}
The bounds \eqref{322bis} and \eqref{323bis} are precisely the defining
estimates for
\[
\Delta T^\Lambda
\in
\mathcal D^{\Lambda_m}_{\rho+1,l}
(\mathbb M^n).
\]
The proof of \eqref{eq:irrelevantIntegratedBound2} is identical. One
uses the seminorm bounds for
\(
\partial_\Lambda T^\Lambda
\)
instead of those for
\(
\Delta(\partial_\Lambda T^\Lambda)
\),
integrates from \(\Lambda\) to \(\Lambda_0\), and invokes
\(T^{\Lambda_0}=0\) together with \eqref{irrePol}. This yields the
defining seminorm bounds for
\[
T^\Lambda
\in
\mathcal D^{\Lambda_m}_{\rho+1,l}
(\mathbb M^n),
\]
and completes the proof.
\end{proof}
The two-point sector contains both relevant and irrelevant
contributions. Unlike the irrelevant sector, the flow equation alone is
not sufficient to recover the required power-counting bounds. The
relevant information is encoded in finitely many Taylor coefficients,
while the remaining part is controlled by irrelevant power counting.

From the perspective of the induction, the irrelevant contributions are
treated first by integrating the flow equation and applying the
stability estimates established above. This yields control of the
Taylor remainders. The relevant Taylor coefficients are then estimated
separately through the renormalisation conditions. The following
proposition shows that these two ingredients are sufficient to
reconstruct the full two-point distribution. More precisely, it proves
that uniform bounds on the Taylor coefficients together with suitable
control of the Taylor remainders imply the expected power-counting bound
for the complete distribution.

\begin{proposition}[$2$-point relevant distributions]\label{prop:relevant}
Fix $\rho\in\mathbb{N}$ and let $T\in\mathcal{S}'(\mathbb{M}^2)$. Assume that for every multi-index $w$ such that $|w|=\rho+1$ we have 
\begin{equation}\label{Asum1}
\partial_{\underline p}^{\,w}T
\in
\mathcal{D}^{\Lambda_m}_{-1,l}(\mathbb{M}^2)~
\end{equation}
and for $|w|\le \rho$
\begin{equation}\label{Est1*}
(z_1-z_2)^{\rho-|w|+1}
\partial_{\underline p}^{\,w}T
\in
\mathcal{D}^{\Lambda_m;(2)}_{-1,l}\!\left(\mathbb{M}^2\right)~.
\end{equation}
Furthermore, if there exists a polynomial $\mathcal{P}_{l-1}$ of order $l-1$ and positive coefficients such that, for all $r+|w|\le \rho$ we have 
\begin{equation}\label{Asum3}
\left\|\partial^{w}_{p} t_{r}\left(\cdot;0\right)\right\|_{L^{\infty}(\mathbb{R})}
\le
\Lambda_m^{\rho-r-|w|}
\mathcal{P}_{l-1}\!\left(\log\frac{\Lambda_m}{m}\right),
\end{equation}
where
\[
t_{r}\left(z_1;p,-p\right)
:=
\int_{\mathbb{R}}
(z_1-z_2)^r
T\bigl((z_1,p),(z_2,-p)\bigr)\,d{z}_{2}.
\]
Then
\begin{equation}\label{2190*}
   T\in\mathcal{D}^{\Lambda_m}_{\rho,l}(\mathbb{M}^2).
\end{equation}
In particular, we have 
\begin{multline}\label{274~-}
\mathcal{N}^{\Lambda_m}_{\rho,l;\infty}\left(T\left(p\right)\right)\lesssim~\sum_{|w|=0}^{\rho}|p|^{|w|}~\left(\sum_{r=0}^{\rho-|w|}\left\|\partial_p^{w}t_{r}\left(\cdot;0\right)\right\|_{L^{\infty}\left(\mathbb{R}\right)}+\mathcal{N}^{\Lambda_m;(2)}_{-1,l;\infty}\left((z_1-z_2)^{\rho-|w|+1}\partial_p^{w} T(0)\right)\right)\\+\|p\|^{\rho+1}\sum_{|w|=\rho+1}\int_0^1dt~\mathcal{N}^{\Lambda_m}_{-1,l;\infty}\left(\partial_p^{w} T\left(tp\right)\right)~.
\end{multline} 
\end{proposition}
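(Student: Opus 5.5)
The plan is a double Taylor expansion: first in the external momentum $p$ around $0$, then, for each momentum coefficient, in the heat-kernel variable $z_2$ around the root $z_1$. Throughout, $\tau$ denotes the scale of the single external leg $(\tau,y)$ in the seminorm \eqref{e203-n2}, evaluated at $z_1$.

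\textbf{Step 1: momentum expansion.} By Taylor's formula with integral remainder,
\[
T(p)=\sum_{|w|\le\rho}\frac{p^w}{w!}\,\partial_p^wT(0)
+\sum_{|w|=\rho+1}\frac{(\rho+1)\,p^w}{w!}\int_0^1(1-t)^{\rho}\,\partial_p^wT(tp)\,dt .
\]
For the remainder we use \eqref{Asum1}. Each term is bounded by $|p|^{\rho+1}\Lambda_m^{-1}\mathcal A_{l,2}$ times the $\mathcal N_{-1,l}$ norm. Rewriting $|p|^{\rho+1}\Lambda_m^{-1}=\Lambda_m^{\rho}(|p|/\Lambda_m)^{\rho+1}$ gives the last term of \eqref{274~-}, and the constant weight $1$ is dominated by $\sum_{i=0}^{\rho+1}(\tau^{-1/2}/\Lambda_m)^i$.

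\textbf{Step 2: expansion of each coefficient $\partial_p^wT(0)$, $|w|\le\rho$.} Set $k:=\rho-|w|$. Expand the test kernel $K(\tau;z_2,y)$ around $z_2=z_1$ to order $k$:
\[
K(\tau;z_2,y)=\sum_{r=0}^{k}\frac{(z_2-z_1)^r}{r!}\,\partial^r K(\tau;z_1,y)
+\frac{(z_2-z_1)^{k+1}}{k!}\int_0^1(1-s)^k\,\partial^{k+1}K(\tau;z_1+s(z_2-z_1),y)\,ds .
\]
The local terms produce $(-1)^r\,\partial_p^wt_r(z_1;0)\,\partial^rK(\tau;z_1,y)/r!$. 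We use \eqref{Asum3} to bound $\partial_p^wt_r$ and the Gaussian derivative bound $|\partial^rK(\tau;z_1,y)|\lesssim\tau^{-r/2}K(c\tau;z_1,y)$. This gives
\[
\Lambda_m^{\rho-|w|}\Bigl(\frac{\tau^{-1/2}}{\Lambda_m}\Bigr)^r K(c\tau;z_1,y),
\]
and the factor $K(c\tau;z_1,y)$ is dominated by the single-edge amplitude contained in $\mathcal A_{l,2}(z_1;\tau,y)$, which is the trivial tree with the root attached to $y$. Multiplying by $|p|^{|w|}$ converts $\Lambda_m^{\rho-|w|}|p|^{|w|}$ into $\Lambda_m^{\rho}(|p|/\Lambda_m)^{|w|}$, a polynomial in $|p|/\Lambda_m$. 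Since $r\le\rho$, the powers $(\tau^{-1/2}/\Lambda_m)^r$ are accommodated by the weight in \eqref{e203-n2}.

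\textbf{Step 3: the Taylor remainder (main obstacle).} This is the hardest step. We use \eqref{Est1*} pointwise in $(z_1,z_2)$: the distribution $(z_1-z_2)^{k+1}\partial_p^wT(0)$ is bounded by $\Lambda_m^{-1}\mathcal A_{l,2}(z_1,z_2)$, which after integration in $z_2$ should behave like a heat kernel at scale at most $\Lambda^{-1}$ around $z_1$. The remaining factor is $\tau^{-(k+1)/2}$ times a Gaussian at scale $c\tau$ centred at an interpolation point between $z_1$ and $z_2$.

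To recentre that Gaussian at $z_1$, we split into two regimes, as in the two cases of Lemma~\ref{LemmaDeltaFusion}.
\begin{itemize}
\item When $\tau\gtrsim\Lambda^{-2}$, completion of squares (as in \eqref{equ76}) combines the internal kernel and the interpolated Gaussian into $K(c'\tau;z_1,y)$.
\item When $\tau\lesssim\Lambda^{-2}$, we bound the interpolated Gaussian in $L^\infty$ and use $1\lesssim\tau^{-1/2}\Lambda^{-1}$. This produces $\tau^{-(k+2)/2}$ paired with a gained $\Lambda^{-1}$.
\end{itemize}
In both regimes the result is $\Lambda_m^{\rho-|w|}\sum_{i\le\rho+1}(\tau^{-1/2}/\Lambda_m)^i$ times the amplitude, which is exactly why the polynomial in \eqref{e203-n2} is taken up to degree $\rho+1$ rather than $\rho$. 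Matching the exponents on $\Lambda$ versus $\Lambda_m$ in the second regime is delicate and may need the mass factor $e^{-m^2/2\Lambda^2}$ or the bound $\Lambda\le\Lambda_m$.

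\textbf{Step 4: conclusion.} The empty configuration in \eqref{e203-n2} is directly $t_0(z_1;p)$, handled by the same momentum expansion together with \eqref{Asum3} at $r=0$. Summing the contributions of Steps 1–3 yields \eqref{274~-}. The bounds on the right-hand side are of the form $\mathcal P_{l-1}(\log\Lambda_m/m)\,\widetilde{\mathcal P}(|p|/\Lambda_m)$, which gives \eqref{2190*}.
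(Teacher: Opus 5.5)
Your Steps 1, 2 and 4 coincide with the paper's proof. These are the momentum Taylor expansion with remainder controlled by \eqref{Asum1}, and, for each coefficient $\partial_p^wT(0)$, the expansion of the test kernel around $z_2=z_1$ with the local terms bounded through \eqref{Asum3} and \eqref{Kr}.

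The gap is the second regime of Step 3. Bounding the interpolated Gaussian in $L^\infty$ discards all decay in $z_1-y$. After integrating $\mathcal A^{\Lambda,\Lambda_0}_{l,2}(z_1,z_2)$ over $z_2$ you are left with a constant multiple of $\tau^{-(k+2)/2}$. No constant is dominated by $\mathcal A^{\Lambda,\Lambda_0}_{l,2}(z_1;(\tau,y))$, which decays like a Gaussian in $|z_1-y|$, yet \eqref{e203-n2} requires exactly this pointwise domination. The bookkeeping also fails independently:
\begin{itemize}
\item At $|w|=0$ the extra $\tau^{-1/2}$ produces $(\tau^{-1/2}/\Lambda_m)^{\rho+2}$, one degree beyond the weight in \eqref{e203-n2}.
\item The factor $\Lambda^{-1}$ generated by $1\lesssim\tau^{-1/2}\Lambda^{-1}$ cannot be traded for $\Lambda_m^{-1}$ as $\Lambda\to0$. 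At $\Lambda=0$ your second regime covers every $\tau>0$.
\item No factor $e^{-m^2/2\Lambda^2}$ is available to rescue this. The proposition concerns an arbitrary $T$ and no covariance enters.
\end{itemize}

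The missing idea is that no case distinction is needed, because the $z_2$-integration against the interpolated kernel is an exact Gaussian convolution. Writing $z_2=z_1+u$, the interpolation point $tz_1+(1-t)z_2$ becomes $z_1+(1-t)u$. Hence, for an internal chain of total scale $a$,
\[
\int_{\mathbb R}dz_2\,K(a;z_1,z_2)\,K\bigl(c\tau;tz_1+(1-t)z_2,y\bigr)
=K\bigl(c\tau+(1-t)^2a;z_1,y\bigr).
\]
This is a heat kernel centred at the root $z_1$, so the spatial decay in $z_1-y$ is kept. Its scale lies between $c\tau$ and $c\tau+a$, where $a$ ranges over the same internal scales as the chain trees making up $\mathcal A^{\Lambda,\Lambda_0}_{l,2}(z_1;(\tau,y))$. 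This is the content of the paper's estimate \eqref{Est2}, which absorbs the kernel into that amplitude uniformly in $\tau$.

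Consequently the remainder carries exactly one factor $\tau^{-(\rho-|w|+1)/2}$. It is bounded by $\Lambda_m^{\rho-|w|}(\tau^{-1/2}/\Lambda_m)^{\rho-|w|+1}$ times the amplitude. It is this term at $|w|=0$, not an extra factor from a second regime, that accounts for the degree $\rho+1$ of the weight in \eqref{e203-n2}.
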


\begin{proof}
The argument proceeds in three steps. We first perform a Taylor
expansion in the momentum variable around \(p=0\) and show that the
corresponding momentum remainder is controlled by the assumptions on
the highest-order momentum derivatives. We then analyse the Taylor coefficients at vanishing momentum. For each
coefficient, a second Taylor expansion around the diagonal
\(z_2=z_1\) separates the relevant contributions, encoded in the moments
\(t_r\), from a remainder controlled by the second-order estimate
\eqref{Est1*}. Finally, combining the bounds for the relevant Taylor
coefficients, the spatial remainder, and the momentum remainder, we
reconstruct the full distribution and obtain the claimed
power-counting estimate.

We first perform a Taylor expansion in the momentum variable around
\(p=0\) to order \(\rho\):
\begin{equation}\label{TaylorPolynomial}
T\left((z_1,p),(z_2,-p)\right)
=
\sum_{|w|\le \rho}\frac{p^w}{w!}\,
\partial_{\underline p}^{\,w}T\left((z_1,0),(z_2,0)\right)
+
\mathcal{R}^{(\rho+1)}_p\bigl(T(p)\bigr)\left(z_1,z_2\right),
\end{equation}
where
\[
\partial_{\underline p}^{\,w}T\left((z_1,0),(z_2,0)\right)
:=
\partial_{p}^{\,w}
T\bigl((z_1,p),(z_2,-p)\bigr)\left.\right|_{p\equiv 0},
\]
and where the remainder is given by
\begin{equation}\label{RpDef}
\mathcal{R}^{(\rho+1)}_p\bigl(T(p)\bigr)\left(z_1,z_2\right)
:=
\sum_{|w|=\rho+1}\frac{p^w}{w!}
\int_0^1 (1-t)^{\rho}
\partial^{\,w}
T\left((z_1,tp),(z_2,-tp)\right)~dt.
\end{equation}
\eqref{Asum1} gives for $|w|=\rho+1$ that
\[
\partial_{\underline p}^{\,w}T
\in
\mathcal{D}^{\Lambda_m}_{-1,l}(\mathbb{M}^2)~.
\]
Equivalently, there exist polynomials $\mathcal{P}_{l-1}$ and $\widetilde{\mathcal P}$ such that
\[
\mathcal{N}^{\Lambda_m}_{-1,l}
\Bigl(
\partial_{\underline p}^{\,w}T(tp)
\Bigr)
\le
\mathcal{P}_{l-1}\!\left(\log\frac{\Lambda_m}{m}\right)
\widetilde{\mathcal P}\!\left(\frac{t|p|}{\Lambda_m}\right).
\]
Using \eqref{RpDef}, we deduce that
\begin{align}
\mathcal{N}^{\Lambda_m}_{-1,l}
\Bigl(
\mathcal{R}^{(\rho+1)}_p(T(p))
\Bigr)
&\lesssim~~
|p|^{\rho+1}~\sum_{|w|=\rho+1}\int_0^1dt~\mathcal{N}^{\Lambda_m}_{-1,l;\infty}\left(\partial^w_{\underline{p}}T\left(tp\right)\right)~,\label{BorSem}\\
&\lesssim \Lambda_m^{\rho+1}~
\left(\frac{|p|}{\Lambda_m}\right)^{\rho+1}~\mathcal{P}_{l-1}\!\left(\log\frac{\Lambda_m}{m}\right)
\widetilde{\mathcal P}\!\left(\frac{|p|}{\Lambda_m}\right).\nonumber
\end{align}
Since
\[
\Lambda_m^{\rho+1}~
\mathcal{N}^{\Lambda_m}_{-1,l}
\Bigl(
\mathcal{R}^{(\rho+1)}_p(T(p))
\Bigr)
=
\mathcal{N}^{\Lambda_m}_{\rho,l}
\Bigl(
\mathcal{R}^{(\rho+1)}_p(T(p))
\Bigr),
\]
it follows that
\[
\mathcal{R}^{(\rho+1)}_p(T(p))
\in
\mathcal{D}^{\Lambda_m}_{\rho,l}(\mathbb{M}^2).
\]
We are thus left with the Taylor coefficients $
\partial_{\underline p}^{\,w}T\left((z_1,0),(z_2,0)\right)$ with $|w|\le \rho$. For such a multi-index $w$, we now perform a Taylor expansion in the second position variable around the diagonal $z_2=z_1$, to order $\rho-|w|$. This gives
\begin{multline}\label{e222}
\partial_{\underline p}^{\,w}T(0)
=
\sum_{r=0}^{\rho-|w|}
\frac{(-1)^r}{r!}\,
\partial_{\underline p}^{\,w}t_r(z_1;0)\,\delta^{(r)}(z_1-z_2)
\\
+(-1)^{\rho-|w|+1}
\int_0^1 \frac{(1-t)^{\rho-|w|}}{(\rho-|w|)!}\,
(z_1-z_2)^{\rho-|w|+1}\,
\partial_{z_2}^{\,\rho-|w|+1}
\partial_{\underline p}^{\,w}
T\bigl((z_1,0),(z_1+t(z_2-z_1),0)\bigr)\,dt,
\end{multline}
where
\begin{equation}\label{deftr2}
t_r(z_1;0)
:=
\int_{\mathbb{R}}
(z_1-z_2)^r\,
T\bigl((z_1,0),(z_2,0)\bigr)\,dz_2,
\qquad r=0,1,2.
\end{equation}
We now convolve \eqref{e222} with the heat kernel in the second variable. This yields
\begin{multline}\label{ConvTaylor}
\bigl(K\star \partial_{\underline p}^{\,w}T\bigr)\bigl(z_1;(\tau,y);0\bigr)
=
\sum_{r=0}^{\rho-|w|}
\frac{(-1)^r}{r!}\,
\partial_{\underline p}^{\,w}t_r(z_1;0)\,
K^{(r)}(\tau;z_1,y)
\\
+
\int_{\mathbb{R}}\int_0^1
\frac{(1-t)^{\rho-|w|}}{(\rho-|w|)!}\,
(z_1-z_2)^{\rho-|w|+1}\,
\partial_{\underline p}^{\,w}T\left((z_1,0),(z_2,0)\right)\,
\partial_Z^{\,\rho-|w|+1}K(\tau;Z,y)\big|_{Z=t z_1+(1-t)z_2}\,dt\,dz_2.
\end{multline}
We first estimate the relevant terms. By \eqref{Asum3}, we have for all $0\le r+|w|\le \rho$
\[
\bigl|
\partial_{\underline p}^{\,w}t_r(z_1;0)
\bigr|
\le
\Lambda_m^{\rho-r-|w|}
\mathcal{P}_{l-1}\!\left(\log\frac{\Lambda_m}{m}\right)~.
\]
Furthermore, the heat-kernel derivative bound implies that, for every $0<\delta<1$,
\begin{equation}\label{Kr}
|K^{(r)}(\tau;z,y)|
\lesssim
\tau^{-r/2}K((1+\delta)\tau;z,y),
\end{equation}
with an implicit constant depending only on $r$ and $\delta$. Choosing $\delta=\delta_l$, we obtain
\[
|K^{(r)}(\tau;z,y)|
\lesssim
\tau^{-r/2}\mathcal{A}_{l,2}^{\Lambda,\Lambda_0}\bigl(z;(\tau,y)\bigr).
\]
Therefore
\[
\bigl|
\partial_{\underline p}^{\,w}t_r(z_1;0)\,
K^{(r)}(\tau;z_1,y)
\bigr|
\lesssim
\Lambda_m^{\rho-|w|}
\left(\frac{\tau^{-1/2}}{\Lambda_m}\right)^r
\mathcal{P}_{l-1}\!\left(\log\frac{\Lambda_m}{m}\right)
\mathcal{A}_{l,2}^{\Lambda,\Lambda_0}\bigl(z_1;(\tau,y)\bigr).
\]
The preceding bound is precisely the defining estimate of
\(
\mathcal D_{\rho-|w|,l}^{\Lambda_m}
\)
for the distribution
$$
\partial_{\underline p}^{\,w}t_r(z_1;0)\delta^{(r)}(z_1-z_2)
$$ which shows that 
\[
\partial_{\underline p}^{\,w}t_r(z_1;0)\,\delta^{(r)}(z_1-z_2)\in
\mathcal{D}^{\Lambda_m}_{\rho-|w|,l}(\mathbb{R}^2).
\]
We now turn to the remainder in \eqref{e222}. Set
\begin{multline}\label{RzDef}
\mathcal{R}^{(\rho-|w|)}_z
\left(
\partial_{\underline p}^{\,w}T\right)\bigl(z_1,z_2\bigr)
:=
\int_0^1 \frac{(1-t)^{\rho-|w|}}{(\rho-|w|)!}\,
\\
\times
(z_1-z_2)^{\rho-|w|+1}\,
\partial_{z_2}^{\,\rho-|w|+1}
\partial_{\underline p}^{\,w}
T\bigl((z_1,0),(z_1+t(z_2-z_1),0)\bigr)\,dt.
\end{multline}
Recall that
\[
\mathcal{A}_{l,2}^{\Lambda,\Lambda_0}(z_1,z_2)
=
\sup_{\Lambda\le \Lambda_i\le \Lambda_0}
\sum_{v=1}^l
K\!\left(
\sum_{i=1}^v \frac{1+\delta_l}{\Lambda_i^2};z_1,z_2
\right).
\]
On the other hand, we have using \eqref{Kr}
\[
\bigl|
\partial_Z^{\,\rho-|w|+1}K(\tau;Z,y)
\bigr|
\lesssim
\tau^{-\left(\rho-|w|+1\right)/2}K\left((1+\delta_l)\tau;Z,y\right).
\]
Hence, we have 
\begin{multline}\label{Est2}
\int_{\mathbb{R}}dz_2\,
\mathcal{A}_{l,2}^{\Lambda,\Lambda_0}(z_1,z_2)\,
\bigl|
\partial_Z^{\,\rho-|w|+1}K(\tau;Z,y)\bigr|_{Z=t z_1+(1-t)z_2}
\\
\lesssim
\tau^{-(\rho-|w|+1)/2}
\sup_{\Lambda\le \Lambda_i\le \Lambda_0}
\sum_{v=1}^l
K\!\left(
\sum_{i=1}^v \frac{1+\delta_l}{\Lambda_i^2}+(1+\delta_l)\tau;z_1,y
\right).
\end{multline}
By the semigroup property, the right-hand side is bounded by
\[
\tau^{-(\rho-|w|+1)/2}
\mathcal{A}_{l,2}^{\Lambda,\Lambda_0}\bigl(z_1;(\tau,y)\bigr).
\]
Combining \eqref{Est1*} and \eqref{Est2}, we obtain that
\begin{multline}\label{Bouti}
\bigl|
K\star
\mathcal{R}^{(\rho-|w|)}_z
\left(
\partial_{\underline p}^{\,w}T(0)\right)
\left(z_1;\tau,y\right)
\bigr|
\\
\lesssim
\Lambda_m^{\rho-|w|}
\left(\frac{\tau^{-1/2}}{\Lambda_m}\right)^{\rho-|w|+1}
\mathcal{A}_{l,2}^{\Lambda,\Lambda_0}\bigl(z_1;(\tau,y)\bigr)~\mathcal{N}^{\Lambda_m;(2)}_{-1,l;\infty}\left((z_1-z_2)^{\rho-|w|+1}\partial^w_{\underline{p}}T(0)\right).
\end{multline}
This shows that
\[
\mathcal{R}^{(\rho-|w|)}_z
\left(
\partial_{\underline p}^{\,w}T(0)\right)
\in
\mathcal{D}^{\Lambda_m}_{\rho-|w|,l}(\mathbb{M}^2)
\]
and in particular \eqref{Bouti} implies that 
\begin{equation}\label{286-~}
    \mathcal{N}^{\Lambda_m;(2)}_{\rho-|w|,l;\infty}\left(\mathcal{R}^{(\rho-|w|)}_z
\left(
\partial_{\underline p}^{\,w}T(0)\right)\right)\lesssim \mathcal{N}^{\Lambda_m;(2)}_{-1,l;\infty}\left((z_1-z_2)^{\rho-|w|+1}\partial^w_{\underline{p}}T(0)\right)
\end{equation}
We have therefore shown that every term in \eqref{e222} belongs to $
\mathcal{D}^{\Lambda_m}_{\rho-|w|,l}(\mathbb{M}^2)$. In particular, we obtain combining  \eqref{e222} and \eqref{286-~} 
\begin{multline}\label{287~-}
\mathcal{N}^{\Lambda_m}_{\rho-|w|,l;\infty}\left(\partial^w_{\underline{p}}T(0)\right)\lesssim \sum_{r=0}^{\rho-|w|}\left\|\partial^{w}_{\underline{p}}t_r\left(0\right)\right\|_{\infty}+\int_0^1~\mathcal{N}^{\Lambda_m;(2)}_{\rho-|w|+1,l;\infty}\left(\left(z_1-z_2\right)^{\rho-|w|+1}\partial^w_{\underline{p}}T(0)\right)~.
\end{multline}
Applying Proposition~\ref{PropPoSp} to the previous estimate yields
\[
p^w\partial_{\underline p}^{\,w}T(0)
\in
\mathcal D_{\rho,l}^{\Lambda_m}(\mathbb M^2),
\qquad
|w|\le\rho.
\]
Returning to the momentum expansion \eqref{TaylorPolynomial}, we have already proved that
\[
\mathcal{R}^{(\rho+1)}_p(T(p))
\in
\mathcal{D}^{\Lambda_m}_{\rho,l}(\mathbb{M}^2).
\]
It follows that $T\in \mathcal{D}^{\Lambda_m}_{\rho,l}(\mathbb{M}^2)$. Furthermore, using \eqref{TaylorPolynomial} and \eqref{BorSem} we obtain
\begin{equation}
    \mathcal{N}^{\Lambda_m}_{\rho,l;\infty}\left(T(p)\right)\lesssim \sum_{|w|\le\rho}~p^w~\mathcal{N}^{\Lambda_m}_{\rho-|w|,l;\infty}\left(\partial^w_{\underline{p}}T(0)\right)+|p|^{\rho+1}\sum_{|w|=\rho+1}\int_0^1dt~\mathcal{N}^{\Lambda_m}_{-1,l;\infty}\left(\partial^{w}_{\underline{p}}T\left(tp\right)\right)~,
\end{equation}
which together with \eqref{287~-} gives \eqref{274~-}.
\end{proof}
The four-point sector contains a relevant contribution which is local in
the spatial variables and evaluated at vanishing external momenta. After
the momentum Taylor expansion, the zero-momentum part is still a
distribution in the position variables and must be localized around the
root \(z_1\). This localization is achieved by a first-order telescopic
expansion of the kernel and cutoff factors. The fully localized term is
encoded in the scalar coefficient \(t(z_1)\), while the spatial
remainder is expressed through \(\Delta K\)- and
\(\Delta\chi\)-insertions and is controlled by the assumed
\(\Delta\)-bounds. The following proposition shows that these
ingredients are sufficient to recover the full power-counting estimate
for the four-point distribution.
\begin{proposition}[$4$-point relevant distributions]\label{prop:relevant4}
Fix $\rho\le0$ and let $T\in\mathcal{S}'(\mathbb{M}^4)$. Assume that
\begin{align}
\partial_{\underline{p}} T
&\in
\mathcal{D}^{\Lambda_m}_{\rho-1,l}(\mathbb{M}^4),\label{Asum41}\\
\Delta T
&\in
\mathcal{D}^{\Lambda_m}_{\rho,l}(\mathbb{M}^4)~.\label{Asum42}
\end{align}
Furthermore, assume that 
\begin{equation}\label{Asum43}
\left\| t\right\|_{L^{\infty}(\mathbb{R})}
\le
\mathcal{P}_{l-1}\!\left(\log\frac{\Lambda_m}{m}\right),
\end{equation}
where
\[
t(z_1)
:=
\int_{\mathbb{R}^{3}}
T\bigl((z_1,0),\dots,(z_4,0)\bigr)\,d\vec{z}_{2,4}.
\]
Then
\begin{equation}\label{2190}
T\in\mathcal{D}^{\Lambda_m}_{\rho,l}(\mathbb{M}^4).
\end{equation}
More precisely, we have 
\begin{multline}\label{SemNorm4pt}
\mathcal{N}^{\Lambda_m}_{\rho,l;\infty}\left(T\left(\underline{p}_4\right)\right)\lesssim~\left\|t\right\|_{L^{\infty}(\mathbb{R})}+\mathcal{N}^{\Lambda_m}_{\rho,l;\infty}\left(\Delta T(\vec{0})\right)\\+\sum_{i=1}^4\sum_{\mu=1}^4~p_{i,\mu}~\int_0^1dt\mathcal{N}^{\Lambda_m}_{\rho-1,l;\infty}\left(\left(\partial_{p_{i,\mu}}T\right)\left(\underline{p}_{4}^{(i)}(t)\right)\right),
\end{multline}
where $\underline{p}^{(i)}_{4}(t):=\left(p_1,\cdots,tp_i,\cdots,p_4\right)$.
\end{proposition}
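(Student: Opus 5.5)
The plan mirrors Proposition~\ref{prop:relevant}. The one structural change is that the spatial Taylor expansion around the diagonal is replaced by a first-order telescopic localisation of the test functions at the root $z_1$.

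\textbf{Step 1: momentum remainder.} I expand in the momenta to first order around $\underline p_4=0$, one leg at a time:
\[
T(\underline p_4)=T(\vec 0)+\sum_{i=1}^4\sum_{\mu}p_{i,\mu}\int_0^1 dt\,(\partial_{p_{i,\mu}}T)\bigl(\underline p^{(i)}_4(t)\bigr),
\]
where the successive legs are switched off telescopically. By \eqref{Asum41}, each summand of the remainder lies in $\mathcal D^{\Lambda_m}_{\rho-1,l}$. Proposition~\ref{PropPoSp} with $\beta=1$ then moves it to $\mathcal D^{\Lambda_m}_{\rho,l}$, using $|p_{i,\mu}|\le\Lambda_m\|\underline p_4\|/\Lambda_m$. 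This produces the last line of \eqref{SemNorm4pt}.

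\textbf{Step 2: zero-momentum part.} Fix $\sigma\subseteq\mathfrak s\subseteq\mathcal I_4$ and $\chi\in\mathcal K$. In the test function $\prod_{i\in\mathfrak s}K(\tau_i;z_i,y_i)\prod_{i\in\sigma}\chi(z_i)$, I replace each factor evaluated at $z_i$ by the same factor at $z_1$, one index at a time:
\[
\phi_i(z_i)=\phi_i(z_1)+\bigl(\phi_i(z_i)-\phi_i(z_1)\bigr).
\]
The fully localised term equals $t(z_1)\prod_{i\in\mathfrak s}K(\tau_i;z_1,y_i)\prod_{i\in\sigma}\chi(z_1)$. By \eqref{Asum43}, this term is bounded by $\|t\|_\infty$ times the product of root-attached kernels. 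Lemma~\ref{MagnLemm}, with all legs attached at the root, bounds that product by $\mathcal A^{\Lambda,\Lambda_0}_{l,4}(z_1;\underline y_{\mathfrak s};\chi^\sigma)$. Since $\rho\le0$ we have $\Lambda_m^{-\rho}\ge m^{-\rho}$, so the factor $\Lambda_m^{-\rho}$ in the seminorm is harmless for this term. The $Q$-weights are at least $1$ and can be discarded.

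Each difference term contains at least one factor of the form $\Delta_{z_1}K_{\tau_j,y_j}$ or $\Delta_{z_1}\chi$, multiplied by factors partly localised at $z_1$ and partly still at their own variables. For these terms I use \eqref{Asum42}. A $\Delta K$ term is controlled by $\mathcal N^{\Lambda_m}_{\rho,l}(\Delta^{(j)}T)$, which yields $\Lambda_m^{\rho-1}\tau_j^{-1/2}Q_1\mathcal A_{l,4}$. A $\Delta\chi$ term is controlled by \eqref{e203-Deltachi}, which yields $\Lambda_m^{\rho-1}Q^{\sigma^{(j)}}_{q-1}\hat{\mathcal A}_{l,4}$.

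\textbf{Step 3: matching the weights.} I then show that these bounds imply the defining bound \eqref{e203} of degree $\rho$. For the $\Delta K$ term, $\Lambda_m^{-1}\tau_j^{-1/2}$ is exactly $\tau_j^{\Lambda_m}$, which \eqref{Prop4Q} absorbs into $Q_1$, and into $Q^\sigma_{|\sigma|}$ via \eqref{Prop3Q}. For the $\Delta\chi$ term, the extra $\Lambda_m^{-1}$ pairs with $\hat{\mathcal A}$ to give the hatted part of $\langle\mathcal A,\hat{\mathcal A}\rangle_{l,4}$, and $Q_{q-1}\le Q_q$. The factors already relocated to $z_1$ appear as extra root-attached legs, which Lemma~\ref{MagnLemm} re-absorbs into the amplitude.

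\textbf{Main obstacle.} The step I expect to be hardest is the mixed terms of the telescoping. In these, some legs have been moved to $z_1$ while a $\Delta$ insertion sits on another leg. The $\Delta$-seminorms of \eqref{Asum42} are defined with every heat kernel at its own $z_i$, not with some already attached at the root. My first approach is to order the telescoping so that the $\Delta$-insertion acts on the original $T$, and to treat each root-localised factor as a test function attached to the root, i.e. as an external leg $(\tau_i,y_i)$ at vertex $z_1$. Lemma~\ref{MagnLemm} then keeps the bound inside the same amplitude class. If this fails, I will instead iterate the telescoping at most three times, which needs $\Delta$-bounds with several insertions, and check whether these follow from \eqref{Asum42} by the same tree argument.

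Summing the contributions from Steps 1–3 and taking suprema gives \eqref{SemNorm4pt}. Since the right-hand side is polynomially bounded, this yields \eqref{2190}.
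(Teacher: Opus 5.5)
Your plan follows the paper's proof. It uses a first-order momentum expansion handled by Proposition~\ref{PropPoSp}, then the telescopic localisation \eqref{eq:GeneralTelescopicLocalizedChi} at the root. The local term is bounded through $\|t\|_{L^\infty}$, the increments through the two $\Delta$-seminorms in \eqref{Asum42}, and the factors left at $z_1$ are re-attached via Lemma~\ref{MagnLemm}.

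Your first approach to the mixed terms is exactly what the paper does. No multi-insertion bounds are needed once the telescoping is ordered as in the paper:
\begin{itemize}
\item First move all cutoffs to $z_1$ while every kernel stays at its own $z_i$. Each cutoff increment is then literally the object in \eqref{e203-Deltachi}, with $\sigma$ replaced by $\{i\in\sigma:i\ge j\}$.
\item Then, with $\chi^{|\sigma|}(z_1)$ already outside the integral, move the kernels. Each kernel increment is then the undecorated object in \eqref{e203-DeltaK}, with $\mathfrak s$ replaced by $\{i\in\mathfrak s:i\ge j\}$, which the supremum over $\mathfrak s$ allows.
\end{itemize}
Relocating $\chi(z_i)K(\tau_i;z_i,y_i)$ leg by leg instead would leave cutoffs $\chi(z_k)$, $k>i$, inside the $\Delta K$ increments, and \eqref{e203-DeltaK} carries no cutoffs.

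One caveat: Lemma~\ref{MagnLemm} only absorbs a factor $\chi(z_1)$ that is paired with a root-attached kernel of the same index. The relocated cutoffs whose kernels are still at their own $z_i$ are therefore not literally covered by it, and the paper passes over this point in the same way.

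The step that genuinely fails is your treatment of $\Lambda_m^{-\rho}$ in the local term. Taking $\mathfrak s=\sigma=\varnothing$ in \eqref{e203}, this term contributes $\Lambda_m^{-\rho}\|t\|_{L^\infty}$ to $\mathcal N^{\Lambda_m}_{\rho,l}(T)$. The lower bound $\Lambda_m^{-\rho}\ge m^{-\rho}$ is of no use here. What you need is $\Lambda_m^{-\rho}\le m^{-\rho}$, which holds only for $\rho\ge0$.

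For $\rho<0$ the statement is in fact false. Take $T=\lambda\prod_{i=2}^4\delta(z_1-z_i)$. It satisfies all hypotheses, since $\partial_{\underline p}T=0$, $\Delta T=0$ and $t\equiv\lambda$. Yet $\mathcal N^{\Lambda_m}_{\rho,l}(T)\gtrsim|\lambda|\,\Lambda_m^{|\rho|}$, which is not bounded by a polynomial in $\log(\Lambda_m/m)$ uniformly in $\Lambda$.

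The paper's proof silently drops this factor in \eqref{rigolo}. So your argument, like the paper's, proves the proposition only for $\rho=0$. That is the case needed for $\mathcal L_{l,4}^{\Lambda,\Lambda_0}$ in Theorem~\ref{ThmPrinc}. More generally the argument works for $\rho\ge0$, and the paper also applies the estimate with $\rho=1$ in Proposition~\ref{PropConv}. You should state this restriction rather than justify the factor as you did.
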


\begin{remark}
The estimate \eqref{SemNorm4pt} remains meaningful whenever the seminorms appearing on the right-hand side are finite. In particular, the assumptions of Proposition~\ref{prop:relevant4} may be relaxed accordingly: it is not necessary to assume \emph{a priori} that $T$ or $\Delta T$ belong exactly to the spaces appearing in \eqref{Asum41}--\eqref{Asum42}, nor that the bound \eqref{Asum43} holds in precisely the stated form.
\end{remark}
\begin{proof}
The proof consists of two steps. We first control the momentum
remainder by expanding the distribution around vanishing external
momenta. It therefore remains to estimate the fully localised
zero-momentum contribution. This is achieved by a telescopic expansion
around the root variable $z_1$, which separates the local term $t$
from kernel and cutoff increments controlled by the $\Delta$-bounds.
Combining the resulting estimates yields the claim. Given $\chi\in\mathcal{K}$ and $\sigma\subseteq\mathfrak{s}\subseteq\{2,3,4\}$, we analyze 
    \begin{multline}
        \left(K^{\otimes \mathfrak{s}}\star \chi^{\otimes \sigma}T\right)\left(z_1;\left(\tau_i,y_i\right)_{i\in\mathfrak{s}};\underline{p}_{4}\right)\\
=
\int_{\mathbb{R}^3}
T\left(\left(z_1,p_1\right),\cdots,\left(z_4,p_4\right)\right)
\prod_{i\in\sigma}\chi(z_i)\prod_{i\in\mathfrak{s}}K\left(\tau_i;z_i,y_i\right)~d\vec{z}_{2,4}~.
    \end{multline}
    First we perform a Taylor expansion of momenta around $0$ of $T$ and we write
    \begin{multline}\label{2322}
    T\left((z_1,p_1),\cdots,(z_4,p_4)\right)=T\left((z_1,0),\cdots,(z_4,0)\right)\\+\sum_{i=1}^4\sum_{\mu=1}^4\int_0^1dt~p_{i,\mu}\left(\partial_{p_{i,\mu}}T\right)\left((z_1,p_1),\cdots,(z_i,tp_i),\cdots,(z_4,p_4)\right)
    \end{multline}
   Using \eqref{Asum41}, we have for all $i$ and $\mu$ in $\left\{1,\cdots,4\right\}$
\[
\partial_{p_{i,\mu}}T
\in
\mathcal D^{\Lambda_m}_{\rho-1,l}(\mathbb M^4)
\] and this implies that there exists polynomials $\mathcal{P}_{l-1}$ and $\tilde{\mathcal{P}}$ such that
    \begin{equation}
        \mathcal N_{\rho-1,l}
\left(
(\partial_{p_{i,\mu}}T)
(\underline p_4^{(i)}(t))
\right)
\leq \mathcal{P}_{l-1}\left(\frac{\log\Lambda_m}{m}\right)\tilde{\mathcal{P}}\left(\frac{\|\underline{p}\|}{\Lambda_m}\right)
    \end{equation}
    since \(0\le t\le1\). This also implies that \begin{equation}\label{deux}
    p_{i,\mu}\partial_{p_{i,\mu}}T\in \mathcal{D}^{\Lambda_m}_{\rho,l}\left(\mathbb{M}^4\right)
    \end{equation}
    by Proposition \ref{PropPoSp}.
    \\
   It therefore remains to control the zero-momentum contribution
\[
T\bigl((z_1,0),\ldots,(z_4,0)\bigr).
\]
We show that this distribution belongs to
\(
\mathcal D^{\Lambda_m}_{\rho,l}(\mathbb M^4)
\). Fix $\chi\in\mathcal{K}$. The strategy consists in deriving suitable estimates for
\begin{equation}
\left(
K^{\otimes\mathfrak{s}}
\star
\chi^{\otimes\sigma}T
\right)
\left(
z_1;(\tau_i,y_i)_{i\in\mathfrak{s}};\vec{0}
\right),
\end{equation}
for fixed subsets
\(
\sigma\subseteq\mathfrak{s}\subseteq\{2,3,4\}.
\)
To estimate the localised zero-momentum contribution, we expand all
kernel and cutoff factors around the distinguished variable \(z_1\).
This separates the fully localised term, which is controlled by the
assumption on \(t\), from increment terms involving either
\(\Delta K\) or \(\Delta\chi\), which are estimated through the
assumption on \(\Delta T\).
We use the general
telescopic identity
\begin{multline}\label{eq:GeneralTelescopicLocalizedChi}
\prod_{i\in\sigma}
\chi(z_i)\prod_{i\in\mathfrak{s}}\phi_i(z_i)
=
\chi^{|\sigma|}(z_1)
\prod_{i\in\mathfrak{s}}\phi_i(z_1)
+
\chi^{|\sigma|}(z_1)
\sum_{i\in\mathfrak{s}}
\left(
\prod_{\substack{j\in\mathfrak{s}\\ j<i}}
\phi_j(z_1)
\right)
\Delta_{z_1}\phi_i(z_i)
\left(
\prod_{\substack{j\in\mathfrak{s}\\ j>i}}
\phi_j(z_j)
\right)
\\
+
\sum_{i\in\sigma}
\left(
\prod_{\substack{j\in\sigma\\ j<i}}
\chi(z_1)
\right)
\Delta_{z_1}\chi(z_i)
\left(
\prod_{\substack{j\in\sigma\\ j>i}}
\chi(z_j)
\right)
\prod_{k\in\mathfrak{s}}\phi_k(z_k).
\end{multline}
The localisation procedure separates the relevant contribution,
represented by the coefficient \(t(z_1)\), from increment terms
involving either \(\Delta K\) or \(\Delta\chi\). The latter are
precisely the objects controlled by the assumption
\eqref{Asum42}. We write 
\begin{multline}\label{eq:SigmaDoubleExpansion}
\int_{\mathbb{R}^3}
T\bigl((z_1,0),\ldots,(z_4,0)\bigr)
\prod_{i\in\mathfrak{s}}
K_{\tau_i,y_i}(z_i)
\prod_{i\in\sigma}\chi(z_i)\,d\vec z_{2,4}
=
t(z_1)\chi^{|\sigma|}(z_1)
\prod_{i\in\mathfrak{s}}
K_{\tau_i,y_i}(z_1)
\\
+\chi^{|\sigma|}(z_1)
\sum_{j\in\mathfrak{s}}
\int_{\mathbb{R}^3}
\Delta_{z_1}K_{\tau_j,y_j}(z_j)
T\bigl((z_1,0),\ldots,(z_4,0)\bigr)
\prod_{\substack{i\in\mathfrak{s}\\ i<j}}
K_{\tau_i,y_i}(z_1)
\prod_{\substack{i\in\mathfrak{s}\\ i>j}}
K_{\tau_i,y_i}(z_i)
\,d\vec z_{2,4}\\
+\sum_{j\in\sigma}
\int_{\mathbb{R}^3}
\Delta_{z_1}\chi(z_j)
T\bigl((z_1,0),\ldots,(z_4,0)\bigr)
\prod_{\substack{i\in\sigma\\ i<j}}
\chi(z_1)
\prod_{\substack{i\in\sigma\\ i>j}}
\chi(z_i)
\prod_{i\in\mathfrak{s}}
K_{\tau_i,y_i}(z_i)
\,d\vec z_{2,4}~.
\end{multline}
 The first one is the fully localised contribution
\begin{equation}\label{I1}
\chi^{|\sigma|}(z_1)t(z_1)
\prod_{i\in\mathfrak{s}}
K_{\tau_i,y_i}(z_1).
\end{equation}
The second type consists of kernel increment terms of the form
\begin{multline}\label{I2}
\chi^{|\sigma|}(z_1)
\sum_{j\in\mathfrak{s}}
\int_{\mathbb{R}^3}
\Delta_{z_1}K_{\tau_j,y_j}(z_j)
T\bigl((z_1,0),\ldots,(z_4,0)\bigr)
\prod_{\substack{i\in\mathfrak{s}\\ i<j}}
K_{\tau_i,y_i}(z_1)
\prod_{\substack{i\in\mathfrak{s}\\ i>j}}
K_{\tau_i,y_i}(z_i)
\,d\vec z_{2,4}.
\end{multline}
Finally, we obtain cut-off increment terms of the form
\begin{multline}\label{I3}
\sum_{j\in\sigma}
\int_{\mathbb{R}^3}
\Delta_{z_1}\chi(z_j)
T\bigl((z_1,0),\ldots,(z_4,0)\bigr)
\prod_{\substack{i\in\sigma\\ i<j}}
\chi(z_1)
\prod_{\substack{i\in\sigma\\ i>j}}
\chi(z_i)
\prod_{i\in\mathfrak{s}}
K_{\tau_i,y_i}(z_i)
\,d\vec z_{2,4}.
\end{multline}
Using \eqref{Asum43} we obtain that the term \eqref{I1} is bounded as follows 
\begin{equation}\label{uno}
    \left|\chi^{|\sigma|}(z_1)~t(z_1)\prod_{i\in\mathfrak{s}}K_{\tau_i,y_i}(z_1)\right|\le \|t\|_{L^{\infty}(\mathbb{R})}~\mathcal{A}_{l,4}^{\Lambda,\Lambda_0}\left(z_1;(\tau_i,y_i)_{i\in\mathfrak{s}};\chi^{\sigma}\right)~.
\end{equation}
Since 
\begin{equation}\label{bornesanssigma}
\Delta T\in\mathcal{D}^{\Lambda_m}_{\rho,l}\left(\mathbb{M}^4\right)
\end{equation}
we deduce that for all $j\in\mathfrak{s}$ we have 
\begin{multline}\label{bornesansigma}
    \left|\left(K^{\otimes\mathfrak{s}\setminus\{j\}}\otimes\Delta_{z_1}K_{\tau_j,y_j}\right)\star T\left(z_1;(\tau_i,y_i)_{i\in\mathfrak{s}};\vec{0}\right)\right|\leq \tau_j^{-\frac{1}{2}}\Lambda_m^{\rho-1}~\mathcal{A}_{l,4}^{\Lambda,\Lambda_0}\left(z_1;(\tau_i,y_i)_{i\in\mathfrak{s}}\right)\\\times \mathcal{N}^{\Lambda_m}_{\rho,l;\infty}\left(\Delta T\left(\vec{0}\right)\right)~Q_1\left(\tau_{\mathfrak{s}\setminus\{j\}}^{\Lambda_m}\right).
\end{multline}
Using \eqref{bornesanssigma} together with \eqref{Magnetic} from Lemma \ref{MagnLemm} we obtain that 
\begin{multline}\label{dos}
\left|\chi^{|\sigma|}(z_1)
\sum_{j\in\mathfrak{s}}
\prod_{\substack{i\in\mathfrak{s}\\ i<j}}
K_{\tau_i,y_i}(z_1)~\int_{\mathbb{R}^3}
\Delta_{z_1}K_{\tau_j,y_j}(z_j)
T\bigl((z_1,0),\ldots,(z_4,0)\bigr)~
\prod_{\substack{i\in\mathfrak{s}\\ i>j}}
K_{\tau_i,y_i}(z_i)
\,d\vec z_{2,4}\right|\\\leq ~\mathcal{A}_{l,4}^{\Lambda,\Lambda_0}\left(z_1;(\tau_i,y_i)_{i\in\mathfrak{s}};\chi^{\sigma}\right)\mathcal{N}^{\Lambda_m}_{\rho,l;\infty}\left(\Delta T(\vec{0})\right)~Q_1\left(\tau_{\mathfrak{s}}^{\Lambda_m}\right).
\end{multline}
Using again the assumption \eqref{Asum42}, this time through the
seminorm involving \(\Delta\chi\)-insertions, we obtain for all $\sigma\subseteq\mathfrak{s}\subseteq \left\{2,3,4\right\}$ and  $j\in\sigma$ that 
\begin{multline}\label{bornesigma}
    \left|\left(K^{\otimes\mathfrak{s}}\star \chi^{\otimes\sigma\setminus\{j\}}\otimes\Delta_{z_1}\chi T\right)\left(z_1;(\tau_i,y_i)_{i\in\mathfrak{s}};\vec{0}\right)\right|\leq \Lambda_m^{\rho-1}~\hat{\mathcal{A}}_{l,4}^{\Lambda,\Lambda_0}\left(z_1;(\tau_i,y_i)_{i\in\mathfrak{s}};\chi^{\sigma}\right)\\\times \mathcal{N}^{\Lambda_m}_{\rho,l;\infty}\left(\Delta T\left(\vec{0}\right)\right)~Q^{\sigma^{(j)}}_{q-1}\left(\tau_{\mathfrak{s}}^{\Lambda_m}\right)
\end{multline}
Using \eqref{bornesigma} and again \eqref{Magnetic} with $\mathcal{A}_{l,n}^{\Lambda,\Lambda_0}$ replaced by $\hat{\mathcal{A}}_{l,n}^{\Lambda,\Lambda_0}$ we deduce that 
\begin{multline}\label{tres}
\sum_{j\in\sigma}
\int_{\mathbb{R}^3}
\Delta_{z_1}\chi(z_j)
T\bigl((z_1,0),\ldots,(z_4,0)\bigr)
\prod_{\substack{i\in\sigma\\ i<j}}
\chi(z_1)
\prod_{\substack{i\in\sigma\\ i>j}}
\chi(z_i)
\prod_{i\in\mathfrak{s}}
K_{\tau_i,y_i}(z_i)
\,d\vec z_{2,4}\\\leq \Lambda_m^{\rho-1}~\hat{\mathcal{A}}_{l,4}^{\Lambda,\Lambda_0}\left(z_1;(\tau_i,y_i)_{i\in\mathfrak{s}};\chi^{\sigma}\right)\mathcal{N}^{\Lambda_m}_{\rho,l;\infty}\left(\Delta T\left(\vec{0}\right)\right)~Q^{\sigma^{(j)}}_{q-1}\left(\tau_{\mathfrak{s}}^{\Lambda_m}\right)~.
\end{multline}
Combining \eqref{uno}, \eqref{dos} and \eqref{tres} we deduce that for all $\sigma\subseteq\mathfrak{s}\subseteq\left\{2,3,4\right\}$ the following bound holds 

\begin{multline}
    \left|\left(K^{\otimes\mathfrak{s}}\star \chi^{\otimes\sigma}T\right)\left(z_1;(\tau_i,y_i)_{i\in\mathfrak{s}};\vec{0}\right)\right|~\le \left(\|t\|_{L^{\infty}(\mathbb{R})}+\mathcal{N}^{\Lambda_m}_{\rho,l;\infty}\left(\Delta T(\vec{0})\right)\right)\\\times \left(\mathcal{A}_{l,4}^{\Lambda,\Lambda_0}+\Lambda_m^{-1}~\hat{\mathcal{A}}_{l,4}^{\Lambda,\Lambda_0}\right)\left(z_1;(\tau_i,y_i)_{i\in\mathfrak{s}};\chi^{\sigma}\right)~Q_1\left(\tau_{\mathfrak{s}}^{\Lambda_m}\right)~.
\end{multline}
Taking the supremum over $z_1$, $\sigma\subseteq\mathfrak{s}\subseteq\{2,3,4\}$ and $\chi\in\mathcal{K}$ we deduce that 
\begin{equation}\label{rigolo}
    \mathcal{N}^{\Lambda_m}_{\rho,l;\infty}\left(T(\vec{0})\right)\lesssim \|t\|_{L^{\infty}(\mathbb{R})}+\mathcal{N}^{\Lambda_m}_{\rho,l;\infty}\left(\Delta T(\vec{0})\right)\le~\mathcal{P}_{l-1}\left(\log\frac{\Lambda_m}{m}\right).
\end{equation}
Combining this with \eqref{2322} we deduce that 
\begin{multline}
\mathcal{N}^{\Lambda_m}_{\rho,l;\infty}\left(T\left(\underline{p}_4\right)\right)\lesssim~\left\|t\right\|_{L^{\infty}(\mathbb{R})}+\mathcal{N}^{\Lambda_m}_{\rho,l;\infty}\left(\Delta T(\vec{0})\right)\\+\sum_{i=1}^4\sum_{\mu=1}^4~p_{i,\mu}~\int_0^1dt\mathcal{N}^{\Lambda_m}_{\rho-1,l;\infty}\left(\left(\partial_{p_{i,\mu}}T\right)\left(\underline{p}_{4}^{(i)}(t)\right)\right),
\end{multline}
This combined with \eqref{deux} and \eqref{rigolo} imply that 
\begin{equation}
    \mathcal{N}^{\Lambda_m}_{\rho,l;\infty}\left(T\left(\underline{p}_4\right)\right)\le \mathcal{P}_{l-1}\left(\log\frac{\Lambda_m}{m}\right)\widetilde{P}\left(\frac{\|\underline{p}_4\|}{\Lambda_m}\right)
\end{equation}
which concludes that $T\in\mathcal{D}^{\Lambda_m}_{\rho,l}\left(\mathbb{M}^4\right)$.

\end{proof}
\subsection{Proof of renormalizability}
We now prove Theorem~\ref{ThmPrinc}. The proof is the synthesis of the
stability estimates established above. We first show that the
\(\Lambda\)-derivative of each quantity appearing in
\eqref{StaAssum1}--\eqref{StaAssum2} satisfies the corresponding
power-counting estimate with the degree lowered by one. We then
integrate in the scale parameter. The irrelevant sector is integrated
from the ultraviolet scale \(\Lambda_0\), while the relevant two- and
four-point correlators are reconstructed from the renormalization
conditions at \(\Lambda=0\).

\begin{proof}
The proof is by induction on the total degree \(n+2l\).
For fixed \(n+2l\), the induction proceeds by increasing loop order
\(l\). The linear term in the flow equation involves the pair
\((l-1,n+2)\), while the quadratic term involves strictly smaller
degrees \((l_1,n_1+1)\) and \((l_2,n_2+1)\), so that all terms on the
right-hand side are covered by the induction hypothesis. Recall that the flow equation
can be written in the form
\begin{equation}\label{FE-main}
\partial_\Lambda
\partial^w
\mathcal L_{l,n}^{\Lambda,\Lambda_0}
=
\frac12
\mathbf L^\Lambda
\bigl(
\partial^w
\mathcal L_{l-1,n+2}^{\Lambda,\Lambda_0}
\bigr)
-
\frac12
\sum_{\substack{\pi,l,w}}'
e^{-\frac{m^2}{2\Lambda^2}}
\mathbf B^\Lambda_\pi
\Bigl(
\partial^{w_1}
\mathcal L_{l_1,n_1+1}^{\Lambda,\Lambda_0},
\partial^{w_2}
\mathcal L_{l_2,n_2+1}^{\Lambda,\Lambda_0}
\Bigr)
\partial^{w_3}\dot C_0^\Lambda(p_\pi),
\end{equation}
where the prime denotes the summation over partitions
\(\pi=\{\pi_1,\pi_2\}\) of $\{1,\cdots,n\}$, loop splittings \(l_1+l_2=l\), and
decompositions \(w=w_1+w_2+w_3\).

For the weighted quantities one obtains the corresponding identity
\begin{align}\label{FEr}
\partial_\Lambda
\Bigl(
&(z_1-z_i)^r
\partial^w
\mathcal L_{l,n}^{\Lambda,\Lambda_0}
\Bigr)
=
\frac12
\mathbf L^\Lambda
\Bigl(
(z_1-z_i)^r
\partial^w
\mathcal L_{l-1,n+2}^{\Lambda,\Lambda_0}
\Bigr)
\nonumber
\\
&\quad
-\frac12
\sum_{\substack{\pi=\{\pi_1,\pi_2\}\\ \{1,i\}\subset\pi_1}}'
e^{-\frac{m^2}{2\Lambda^2}}
\mathbf B^\Lambda_\pi
\Bigl(
(z_1-z_i)^r
\partial^{w_1}
\mathcal L_{l_1,n_1+1}^{\Lambda,\Lambda_0},
\partial^{w_2}
\mathcal L_{l_2,n_2+1}^{\Lambda,\Lambda_0}
\Bigr)
\partial^{w_3}\dot C_0^\Lambda(p_\pi)
\nonumber
\\
&\quad
-\frac12
\sum_{\substack{\pi=\{\pi_1,\pi_2\}\\ \{1,i\}\subset\pi_2}}'
e^{-\frac{m^2}{2\Lambda^2}}
\mathbf B^\Lambda_\pi
\Bigl(
\partial^{w_1}
\mathcal L_{l_1,n_1+1}^{\Lambda,\Lambda_0},
(z_1-z_i)^r
\partial^{w_2}
\mathcal L_{l_2,n_2+1}^{\Lambda,\Lambda_0}
\Bigr)
\partial^{w_3}\dot C_0^\Lambda(p_\pi)
\nonumber
\\
&\quad
-\frac12
\sum_{\substack{\pi=\{\pi_1,\pi_2\}\\ 1\in\pi_k,\ i\in\pi_{k'},\ k\neq k'}}'
e^{-\frac{m^2}{2\Lambda^2}}
(z_1-z_i)^r
\mathbf B^\Lambda_\pi
\Bigl(
\partial^{w_1}
\mathcal L_{l_1,n_1+1}^{\Lambda,\Lambda_0},
\partial^{w_2}
\mathcal L_{l_2,n_2+1}^{\Lambda,\Lambda_0}
\Bigr)
\partial^{w_3}\dot C_0^\Lambda(p_\pi).
\end{align}

By the induction hypothesis, the arguments of the linear and bilinear
operators in \eqref{FEr} satisfy the corresponding bounds at lower
total degree. Proposition~\ref{PropL} therefore gives
\[
\mathbf L^\Lambda
\Bigl(
(z_1-z_i)^r
\partial^w
\mathcal L_{l-1,n+2}^{\Lambda,\Lambda_0}
\Bigr)
\in
\mathcal D^{\Lambda_m}_{3-n-r-|w|,l}(\mathbb M^n).
\]
Similarly, Propositions~\ref{propQu} and \ref{Bilinearr} control the
three bilinear contributions in \eqref{FEr}. After multiplication by
\[
e^{-m^2/(2\Lambda^2)}
\partial^{w_3}\dot C_0^\Lambda(p_\pi),
\]
the covariance estimate \eqref{CovEB} and Proposition~\ref{PropPoSp} show that each
bilinear term belongs to $
\mathcal D^{\Lambda_m}_{3-n-r-|w|,l}(\mathbb M^n)$. Hence
\begin{equation}\label{Boobo0}
\partial_\Lambda
\Bigl(
(z_1-z_i)^r
\partial^w
\mathcal L_{l,n}^{\Lambda,\Lambda_0}
\Bigr)
\in
\mathcal D^{\Lambda_m}_{3-n-r-|w|,l}(\mathbb M^n).
\end{equation}

The same argument, using the second-order stability estimates in
Propositions~\ref{PropL} and \ref{propQu}, yields for \(r+|w|=3\)
\begin{equation}\label{Boobo1}
\partial_\Lambda
\Bigl(
(z_1-z_2)^r
\partial^w
\mathcal L_{l,n}^{\Lambda,\Lambda_0}
\Bigr)
\in
\mathcal D^{\Lambda_m;(2)}_{-n,l}(\mathbb M^n).
\end{equation}

It remains to treat the discrete derivative estimates. Applying
\(\Delta_{z_1}^{(i)}\) to the flow equation, the linear term is
controlled by Proposition~\ref{PropL}, while the bilinear term is
controlled by Proposition~\ref{prop:B-estimate}. Thus
\begin{equation}\label{Boobo2}
\mathcal N^{\Lambda_m}_{3-n,l}
\Bigl(
\partial_\Lambda
\bigl(
\Delta_{z_1}^{(i)}
\mathcal L_{l,n}^{\Lambda,\Lambda_0}
\bigr)
\Bigr)
\lesssim
\mathcal P_{l-1}
\left(
\log\frac{\Lambda_m}{m}
\right)
\widetilde{\mathcal P}
\left(
\frac{\|\underline p_n\|}{\Lambda_m}
\right).
\end{equation}
The same reasoning, using Proposition~\ref{prop:B-estimateInsertion}
for the \(\Delta\chi\)-insertions, gives
\begin{equation}\label{Boobo3}
\mathcal N^{\Lambda_m}_{3-n,l}
\Bigl(
\partial_\Lambda
\bigl(
\Delta_{z_1}^{(i)}\chi\,
\mathcal L_{l,n}^{\Lambda,\Lambda_0}
\bigr)
\Bigr)
\lesssim
\mathcal P_{l-1}
\left(
\log\frac{\Lambda_m}{m}
\right)
\widetilde{\mathcal P}
\left(
\frac{\|\underline p_n\|}{\Lambda_m}
\right).
\end{equation}
Combining~\eqref{Ninf2}, \eqref{Boobo2} and \eqref{Boobo3} we obtain the
desired estimate for
\(\partial_\Lambda(\Delta\mathcal L_{l,n}^{\Lambda,\Lambda_0})\).\\
We now integrate in the scale. If $
n+r+|w|\ge5$
then the ultraviolet boundary condition gives
\[
(z_1-z_i)^r
\partial^w
\mathcal L_{l,n}^{\Lambda_0,\Lambda_0}
=0.
\]
Combining this with \eqref{Boobo0} and Proposition~\ref{integration}
yields
\begin{equation}\label{I385}
(z_1-z_i)^r
\partial^w
\mathcal L_{l,n}^{\Lambda,\Lambda_0}
\in
\mathcal D^{\Lambda_m}_{4-n-r-|w|,l}(\mathbb M^n).
\end{equation}
Likewise, \eqref{Boobo1} implies
\[
(z_1-z_2)^r
\partial^w
\mathcal L_{l,n}^{\Lambda,\Lambda_0}
\in
\mathcal D^{\Lambda_m;(2)}_{1-n,l}(\mathbb M^n),
\qquad r+|w|=3.
\]
The bounds \eqref{Boobo2} and \eqref{Boobo3}, together with the
ultraviolet boundary conditions for the discrete derivative terms,
yield
\[
\Delta
\mathcal L_{l,n}^{\Lambda,\Lambda_0}
\in
\mathcal D^{\Lambda_m}_{4-n,l}(\mathbb M^n),
\qquad n\ge4.
\]
The previous argument applies only to the irrelevant sector, where the
ultraviolet boundary condition at \(\Lambda=\Lambda_0\) is available.
The remaining relevant contributions are reconstructed from the
renormalization conditions imposed at \(\Lambda=0\) using
propositions~\ref{prop:relevant} and \ref{prop:relevant4}. It is enough to prove the bounds
\begin{equation}\label{389}
\left|
\partial^w
l^{\Lambda,\Lambda_0}_{l,2;r}
(z_1;\vec 0)
\right|
\lesssim
\Lambda_m^{2-r-|w|}
\mathcal P_{l-1}
\left(
\log\frac{\Lambda_m}{m}
\right),
\qquad
\left|
l^{\Lambda,\Lambda_0}_{l,4}
(z_1;\vec 0)
\right|
\lesssim
\mathcal P_{l-1}
\left(
\log\frac{\Lambda_m}{m}
\right).
\end{equation}
Here
\[
\partial^w
l^{\Lambda,\Lambda_0}_{l,2;r}
(z_1;\vec 0)
:=
\int_{\mathbb R}
(z_1-z_2)^r
\partial^w
\mathcal L_{l,2}^{\Lambda,\Lambda_0}
\bigl((z_1,0),(z_2,0)\bigr)
\,dz_2,
\]
and
\[
l^{\Lambda,\Lambda_0}_{l,4}
(z_1;\vec 0)
:=
\int_{\mathbb R^3}
\mathcal L_{l,4}^{\Lambda,\Lambda_0}
\bigl((z_1,0),\cdots,(z_4,0)\bigr)
\,d\vec z_{2,4}.
\]
To apply Propositions~\ref{prop:relevant} and
\ref{prop:relevant4}, it remains to control the corresponding Taylor
coefficients. These coefficients are precisely the quantities obtained
from the seminorms in the case \(\sigma=\mathfrak s=\varnothing\).\\
The derivative estimates already obtained imply
\[
\left|
\partial_\lambda
l_{l,4}^{\lambda,\Lambda_0}
(z_1;\vec 0)
\right|
\lesssim
\mathcal P_{l-1}
\left(
\log\frac{\lambda_m}{m}
\right),
\]
and
\[
\left|
\partial_\lambda
\partial^w
l_{l,2;r}^{\lambda,\Lambda_0}
(z_1;\vec 0)
\right|
\lesssim
\lambda_m^{1-r-|w|}
\mathcal P_{l-1}
\left(
\log\frac{\lambda_m}{m}
\right).
\]
The renormalization conditions give
\[
\partial^w
l_{l,2;r}^{0,\Lambda_0}
(z_1;\vec 0)
=0,
\qquad r+|w|\le2,
\qquad
l_{l,4}^{0,\Lambda_0}(z_1;\vec 0)=0.
\]
Integrating from \(0\) to \(\Lambda\), and using the elementary
polynomial estimate \eqref{IrrePoly}, gives \eqref{389}. Therefore
Propositions~\ref{prop:relevant} and \ref{prop:relevant4} yield the
remaining relevant bounds.

Combining the bounds obtained for the irrelevant sector with the
reconstruction of the relevant two- and four-point contributions yields
\eqref{StaAssum1}, \eqref{StatAssum3}, and \eqref{StaAssum2}. This
closes the induction and completes the proof of
Theorem~\ref{ThmPrinc}.
\end{proof}
\begin{remark}
The renormalization conditions are in fact compatible with the
symmetries of the theory. In particular, reflection symmetry implies
\[
\int_{\mathbb{R}}\left(z_1-z_2\right)^r 
\partial^w\mathcal{L}_{l,2}^{\Lambda,\Lambda_0}
\left((z_1,p),(z_2,-p)\right.\left|\right._{p\equiv 0}=0,~~
\]
for all pairs $(r,|w|)\in\{(1,0),(0,1),(1,1)\}$. Indeed, the two-point function
$\mathcal{L}_{l,2}^{\Lambda,\Lambda_0}(p,-p)$ is an even function of the external
momentum $p$ by invariance under $p\mapsto -p$. Consequently, all odd momentum
derivatives vanish at $p=0$. Moreover, the operator
$\mathbb{Z}_{1,2}^{(r)}$ with odd $r$ produces integrands that are odd under the
spatial reflection $z_2-z_1\mapsto -(z_2-z_1)$. The integration over $z_2$ therefore vanishes identically.
\end{remark}
\subsection{Proof of Theorem \ref{BesovNorm}}
\noindent The proof consists in translating the power-counting estimates of
Theorem~\ref{ThmPrinc} into bounds on the heat-kernel regularisations
appearing in the definition of the Besov norms. The seminorms defining
the spaces
\(
\mathcal D^{\Lambda_m}_{\rho,l}
\)
are precisely formulated in terms of such regularisations. Combining
these estimates with the uniform bounds on the tree amplitudes then
yields the claimed Besov regularity.\\
By Definition~\ref{DefPowerCountingSeminorms}, the inclusion
\(
\mathcal L_{l,n}^{\Lambda,\Lambda_0}
\in
\mathcal D^{\Lambda_m}_{4-n,l}
\)
immediately yields the heat-kernel estimates below: for $l\ge 1$ and $n \ge 4$
\begin{multline}\label{323-}
\Big|
\big(
K^{\otimes(n-1)}
\star
\mathcal L_{l,n}^{\Lambda,\Lambda_0}
\big)
\big(
z_1;
(\tau,y_i)_{i\in\mathcal I_n};
\underline{p}_n 
\big)
\Big|
\le
\Lambda_m^{4-n}\mathcal{N}^{\Lambda_m}_{4-n,l;\infty}\left(\mathcal{L}_{l,n}^{\Lambda,\Lambda_0}\left(\underline{p}_n\right)\right)\\\times
\sum_{i=0}^{n-1}\left(\frac{\tau^{-1/2}}{\Lambda_m}\right)^i~
\mathcal A_{l,n}^{\Lambda,\Lambda_0}
\big(
z_1;
(\tau,y_i)_{i\in\mathcal I_n}
\big)~.
\end{multline}
For $l=0$ and $n\ge4$, we obtain that
\begin{multline}
\Big|
\big(
K^{\otimes(n-1)}
\star
\mathcal L_{0,n}^{\Lambda,\Lambda_0}
\big)
\big(
z_1;
(\tau,y_i)_{i\in\mathcal I_n};
\underline{p}_{n}
\big)
\Big|\\
\le~\mathcal{N}^{\Lambda_m}_{4-n,0;\infty}\left(\mathcal{L}_{0,n}^{\Lambda,\Lambda_0}\left(\underline{p}_n\right)\right)
\Lambda_m^{4-n}
\mathcal A_{0,n}^{\Lambda,\Lambda_0}
\big(
z_1;
(\tau,y_i)_{i\in\mathcal I_n}
\big).
\end{multline}
For $l\ge2$ one has
\begin{multline}\label{ind4}
\Big|
\big(
K_{\tau,y_2}
\star
\mathcal L_{l,2}^{\Lambda,\Lambda_0}
\big)
(z_1;p,-p)
\Big|
\le
\Lambda_m^{2}~\mathcal{N}^{\Lambda_m}_{2,l;\infty}\left(\mathcal{L}_{l,2}^{\Lambda,\Lambda_0}\left(\underline{p}\right)\right)~
\sum_{j=0}^{3}
\left(
\frac{\tau^{-1/2}}{\Lambda_m}
\right)^j
\mathcal A_{l,2}^{\Lambda,\Lambda_0}
\big(
z_1;(\tau,y_2)
\big).
\end{multline}
For $l=1$, we have using the flow equations and the renormalization condition \eqref{renoc1} that 
\begin{equation}
    \mathcal{L}_{1,2}^{\Lambda,\Lambda_0}\left((z_1,p),(z_2,-p)\right)=\lambda~C^{\Lambda}(m)~\delta\left(z_1-z_2\right)~,
\end{equation}
where $$C^{\Lambda}(m):=\int_0^{\Lambda}t~e^{-\frac{m^2}{t^2}}dt~\int_{\mathbb{R}^3}e^{-k^2}d^3k~$$ and $\lambda$ is the coupling constant. Hence we obtain that
\begin{equation}\label{326-}
\Big|
\big(
K_{\tau,y_2}
\star
\mathcal L_{1,2}^{\Lambda,\Lambda_0}
\big)
(z_1;p,-p)
\Big|
\lesssim
\Lambda_m^{2}~
K\left(\tau;z_1,y_2\right).
\end{equation}
We now estimate
\begin{equation}
\left(K^{\otimes n}\star
\mathcal L_{l,n}^{\Lambda,\Lambda_0}\right)
(\tau,\vec y_{1,n},\underline{p}_{n})
=
\int_{\mathbb R}
K(\tau;z_1,y_1)
\left(
K^{\otimes(n-1)}\star
\mathcal L_{l,n}^{\Lambda,\Lambda_0}
\right)
\left(
z_1;(\tau,y_i)_{i\in\mathcal I_n};
\underline{p}_n
\right)~dz_1 
\end{equation}
for $l\ge0$ and $n\ge2$. Combining \eqref{323-}--\eqref{326-} with the defining bounds of the
spaces
\(
\mathcal D^{\Lambda_m}_{\rho,l}
\),
we obtain
\begin{align}
\left|
\left(K_\tau^{\otimes n}\star
\mathcal L_{l,n}^{\Lambda,\Lambda_0}\right)
(\vec y_{1,n};\underline{p}_{n})
\right|
&\le
\Lambda_m^{4-n}
\mathcal P_{l-1}\!\left(\log\frac{\Lambda_m}{m}\right)
\mathcal P\!\left(\frac{\|\underline{p}_n\|}{\Lambda_m}\right)
\sum_{i=0}^{n-1}
\left(\frac{\tau^{-1/2}}{\Lambda_m}\right)^i
\\
&\quad\times
\mathcal A_{l,n}^{\Lambda,\Lambda_0}
\big(
(\tau,y_1),(\tau,y_i)_{i\in\mathcal I_n}
\big),
\qquad l\ge1,\; n\ge4 ,\nonumber\\
\Big|
\big(
K^{\otimes n}
\star
\mathcal L_{0,n}^{\Lambda,\Lambda_0}
\big)
\big(
\vec{y}_{1,n};
\underline{p}_{n}
\big)
\Big|
&\le
\Lambda_m^{4-n}
\mathcal A_{0,n}^{\Lambda,\Lambda_0}
\big((\tau,y_1),
(\tau,y_i)_{i\in\mathcal I_n}
\big),\\
\left|
\left(K_\tau^{\otimes2}\star
\mathcal L_{1,2}^{\Lambda,\Lambda_0}\right)
(\vec y_{1,2};\underline{p})
\right|
&\le
\Lambda_m^{2}~K\left(\tau;y_1,y_2\right)
~,
\\
\left|
\left(K_\tau^{\otimes2}\star
\mathcal L_{l,2}^{\Lambda,\Lambda_0}\right)
(\vec y_{1,2};\underline{p})
\right|
&\le
\Lambda_m^{2}~
\mathcal P_{l-1}\!\left(\log\frac{\Lambda_m}{m}\right)~\mathcal P\!\left(\frac{|p|}{\Lambda_m}\right)
\sum_{i=0}^3
\left(\frac{\tau^{-1/2}}{\Lambda_m}\right)^i~\\~~~&\times\mathcal{A}_{l,2}^{\Lambda,\Lambda_0}\left((\tau,y_1),(\tau,y_2)\right),
~~~l\ge2 ~.
\end{align}
Since
\[
\sup_{(y_1,\ldots,y_n)}
\mathcal A_{l,n}^{\Lambda,\Lambda_0}
((\tau,y_i))
\lesssim
\tau^{-(n-1)/2}~,
\]
the most singular contribution in the preceding estimate is of order $
\tau^{-(n-1)}$. At tree level, the factor
\(
\sum_{i=0}^{n-1}
(\tau^{-1/2}/\Lambda_m)^i
\)
is absent, so that the leading singularity is only
 $
\tau^{-(n-1)/2}
$,
yielding the improved regularity exponent $
-n+1$.
By the definition \eqref{BesovCorr}, this corresponds to the Besov
regularity exponent $
\alpha=-2n+2$. Hence we obtain 
\begin{align}
\|\mathcal L_{l,n}^{\Lambda,\Lambda_0}\left(\underline{p}_n \right)\|_{\mathcal C^{-2n+2}(\mathbb R^n)}
&\le
\Big(\sum_{i=0}^{n-1}\Lambda_m^{4-n-i}\Big)
\mathcal P_l\!\left(\log\frac{\Lambda_m}{m}\right)
\mathcal P\!\left(\frac{\|\vec p_n\|}{\Lambda_m}\right),
\\
\|\mathcal L_{0,n}^{\Lambda,\Lambda_0}\left(\underline{p}_{n}\right)\|_{\mathcal C^{-n+1}(\mathbb R^n)}
&\le
\sum_{i=0}^{n-1}\Lambda_m^{4-n-i},
\qquad n\ge4 ,
\\
\|\mathcal L_{l,2}^{\Lambda,\Lambda_0}\left(\underline{p}\right)\|_{\mathcal C^{-4}(\mathbb R^2)}
&\le
\Big(\sum_{i=0}^{3}\Lambda_m^{2-i}\Big)
\mathcal P_{l-1}\!\left(\log\frac{\Lambda_m}{m}\right)
\mathcal P\!\left(\frac{|p|}{\Lambda_m}\right),
\qquad l\ge2 ,
\\
\|\mathcal L_{1,2}^{\Lambda,\Lambda_0}\left(\underline{p}\right)\|_{\mathcal C^{-1}(\mathbb R^2)}
&\le
\Big(\sum_{i=0}^{3}\Lambda_m^{2-i}\Big)~.
\end{align}
Combining the preceding estimates with the definition of the
Besov--Hölder seminorms \eqref{BesovCorr} yields the claimed bounds.
This completes the proof of Theorem~\ref{BesovNorm}. Note that the estimates are uniform in the ultraviolet cutoff
\(\Lambda_0\), since all constants appearing in
Theorem~\ref{ThmPrinc} are independent of \(\Lambda_0\).
\subsection{Besov control of the truncated correlators}\label{BesovTrSec}
We now turn to the construction of the truncated correlators and the
proof of Theorem~\ref{BesovNorm2}. Let
\(
(\chi_\varepsilon)_{\varepsilon>0}
\subset C_b^\infty(\mathbb R)
\)
be a family of smooth monotone cutoffs such that $
\chi_\varepsilon \longrightarrow \mathds 1^+$ pointwise, and
\[
\sup_{\varepsilon>0}
\|\chi_\varepsilon\|_{L^\infty}
<\infty.
\]
Our goal is to show that the distributions
\[
\chi_\varepsilon^{\otimes\sigma}
\mathcal L_{l,n}^{\Lambda,\Lambda_0}
\]
admit a limit as \(\varepsilon\to0\). The proof proceeds in three
steps. First, we establish uniform bounds in suitable Besov spaces.
These bounds imply relative compactness in slightly rougher Besov
spaces by a standard compact embedding argument. We then identify the
limit through the heat-kernel regularisations appearing in the
definition of the Besov norms. The combination of compactness and
uniqueness yields the existence of the truncated correlators.

\begin{proposition}\label{prop:BesovUniform}
Let
$0\le \Lambda\le \Lambda_0$,
$n\ge4$,
$\sigma\subseteq\mathcal I_n$ and let $\mathcal{L}_{l,n}^{\Lambda,\Lambda_0}$ verify the flow boundary conditions \eqref{Bc1} and \eqref{renoc1}-\eqref{renoc2}. 
Then the following bounds hold uniformly in $\epsilon$ and $\Lambda_0$:
\begin{align}
\|\chi_{\epsilon}^{\otimes\sigma}\mathcal{L}_{l,n}^{\Lambda,\Lambda_0}\left(\underline{p}_n\right)\|_{\mathcal{C}^{-|\sigma|n+|\sigma|-n}(\mathbb{R}^n)}
&\leq
\Lambda_m^{4-n}\,
\mathcal{P}_{l-1}\!\left(\log\frac{\Lambda_m}{m}\right)
\mathcal{P}\!\left(\frac{\|\underline{p}_{n}\|}{\Lambda_m}\right),
\qquad l\ge1,
\label{unif1}
\\
\|\chi_{\epsilon}^{\otimes\sigma}\mathcal{L}_{0,n}^{\Lambda,\Lambda_0}\left(\underline{p}_n\right)\|_{\mathcal{C}^{-n+1}(\mathbb{R}^n)}
&\leq
\Lambda_m^{4-n}.
\label{unif2}
\end{align}
The coefficients of the polynomials $\mathcal P_{l-1}$ and $\mathcal P$ depend only on $l$ and $n$, and are independent of $\epsilon$, $\Lambda_0$, and the momenta.
\end{proposition}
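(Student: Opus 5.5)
The plan is to follow the proof of Theorem~\ref{BesovNorm} verbatim, now keeping the cutoff decorations. Each $\chi_\epsilon$ belongs to $\mathcal K$, and the seminorm $\mathcal N^{\Lambda_m}_{4-n,l;\infty}$ in \eqref{sup2} is a supremum over all $\chi\in\mathcal K$. So Theorem~\ref{ThmPrinc} (inclusion \eqref{StaAssum1} with $r=|w|=0$) already controls the heat-kernel regularisations of $\chi_\epsilon^{\otimes\sigma}\mathcal L_{l,n}^{\Lambda,\Lambda_0}$. The resulting constants are independent of $\epsilon$ and $\Lambda_0$.

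First step. Take $\mathfrak s=\mathcal I_n$ and all $\tau_i=\tau$ in \eqref{e203}. For $l\ge1$ this gives
\[
\bigl|(K^{\otimes(n-1)}\star\chi_\epsilon^{\otimes\sigma}\mathcal L_{l,n}^{\Lambda,\Lambda_0})(z_1;(\tau,y_i);\underline p_n)\bigr|
\le \Lambda_m^{4-n}\,\mathcal P_{l-1}\,\widetilde{\mathcal P}\;
Q^{\sigma}_{|\sigma|}\bigl(\tau^{\Lambda_m}_{\mathcal I_n}\bigr)\,
\langle\mathcal A,\hat{\mathcal A}\rangle_{l,n}\bigl(z_1;(\tau,y_i);\chi_\epsilon^{\sigma}\bigr).
\]
The next step is to convolve in $z_1$ against $K(\tau;z_1,y_1)$ and take $\sup_{y}$.

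Second step. The amplitude factor is handled with Corollary~\ref{UniLinfty}, estimate \eqref{eq:A-bound-cor}, using $\|\chi_\epsilon\|_\infty\lesssim1$. This gives a bound $\tau^{-(n-1)/2}(1+\tau^{-1/2}/m)$, uniformly in $\Lambda_0$ and $\epsilon$; if needed, Lemma~\ref{lem:Linfty_forest} first converts $\hat{\mathcal A}$ into $\mathcal A_{l+1,n}$. The polynomial factor $Q^\sigma_{|\sigma|}$ is a product over the $n-1$ entries of degree $|\sigma|$ each, so its worst term is $(\tau^{-1/2}/\Lambda_m)^{|\sigma|(n-1)}\le (\tau^{-1/2}/m)^{|\sigma|(n-1)}$, with the remaining powers of $\Lambda_m$ kept in the prefactor. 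For $\tau\in(0,1]$, collecting exponents gives
\[
\tau^{-\frac{n-1}{2}}\cdot\tau^{-\frac12}\cdot\tau^{-\frac{|\sigma|(n-1)}{2}}
=\tau^{-\frac{|\sigma|n-|\sigma|+n}{2}}.
\]
By \eqref{BesovCorr} this is exactly regularity $-|\sigma|n+|\sigma|-n$, which gives \eqref{unif1}. I need to check that the extra $\tau^{-1/2}$ coming from $\hat{\mathcal A}$ is what accounts for the shift from $-n+1$ to $-n$. If $\sigma=\varnothing$, the $\hat{\mathcal A}$ term is absent by \eqref{eq:MixedFusionNotation} and one recovers Theorem~\ref{BesovNorm}.

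Tree level. For $l=0$, use \eqref{e203Bis}. It involves no $Q$-weight and no $\hat{\mathcal A}$, so only $\mathcal A_{0,n}(\cdot;\chi_\epsilon^\sigma)\le\|\chi_\epsilon\|_\infty^{|\sigma|}\mathcal A_{0,n}$ by \eqref{eq:borneamplitudechi}. This gives $\tau^{-(n-1)/2}$ and hence \eqref{unif2}.

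The main obstacle is bookkeeping rather than analysis. First, I must verify that every constant in Theorem~\ref{ThmPrinc} is truly independent of the choice of $\chi\in\mathcal K$, and not merely dependent through $\|\chi\|_\infty$. Second, I must track precisely which powers of $\Lambda_m^{-1}$ versus $m^{-1}$ are absorbed, so that the prefactor stays $\Lambda_m^{4-n}$ times polynomials in $\log(\Lambda_m/m)$ and $\|\underline p_n\|/\Lambda_m$, possibly at the cost of $m$-dependent constants.
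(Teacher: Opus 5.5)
This is essentially the paper's proof. Both arguments apply Theorem~\ref{ThmPrinc} with $\chi=\chi_\epsilon$, convolve the root against $K(\tau;z_1,y_1)$, and take one extra factor $\tau^{-1/2}$ from the $\hat{\mathcal A}$ term (the paper applies \eqref{equationNuevo} from Corollary~\ref{CorTransport} directly, while you use \eqref{eq:A-bound-cor}, which rests on the same estimate); both then use the uniform $\tau^{-(n-1)/2}$ amplitude bound, count to $\tau^{-((|\sigma|+1)(n-1)+1)/2}$, and treat the tree level identically.

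One caveat, which the paper shares: at $\sigma=\varnothing$ your own bookkeeping (weight $Q^{\varnothing}=Q_1$ by \eqref{DefQsigma}, no $\hat{\mathcal A}$) gives only $\tau^{-(n-1)}$. That is the $\mathcal C^{-2n+2}$ of Theorem~\ref{BesovNorm}, weaker than the stated $\mathcal C^{-n}$, so the exponent formula should be read for $\sigma\neq\varnothing$. The paper's proof hides this by writing the weight as $(\sum_{i=0}^{|\sigma|}\cdots)^{n-1}$, which becomes $1$ at $|\sigma|=0$.
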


\begin{proof}
Applying Theorem~\ref{ThmPrinc} with $\chi=\chi_\epsilon$ we obtain for $n\ge4$ and $l\ge1$
\begin{multline}
\Big|
\left(K^{\otimes(n-1)} \star \chi_{\epsilon}^{\otimes\sigma}
\mathcal L_{l,n}^{\Lambda,\Lambda_0}\right)
\big(
z_1;(\tau,y_i)_{i\in\mathcal{I}_n};
\underline{p}_{n}
\big)
\Big|
\\
\le
\Lambda_m^{4-n}
\left(
\mathcal A_{l,n}^{\Lambda,\Lambda_0}
+
\Lambda_m^{-1}\hat{\mathcal A}_{l,n}^{\Lambda,\Lambda_0}
\right)
\left(
z_1;(\tau,y_i)_{i\in\mathcal{I}_n};\chi_\epsilon^\sigma
\right)
\\
\times
\mathcal P_{l-1}\!\left(\log\frac{\Lambda_m}{m}\right)
\mathcal P\!\left(\frac{\|\underline{p}_{n}\|}{\Lambda_m}\right)
\left(\sum_{i=0}^{|\sigma|}\left(
\frac{\tau^{-1/2}}{\Lambda_m}
\right)^i\right)^{n-1} .
\end{multline}
To estimate the Besov seminorm, we convolve once more with the heat
kernel in the root variable. Using Corollary~\ref{CorTransport} and
\eqref{equationNuevo}, we obtain
\begin{equation}
\int_{\mathbb R}
K(\tau;z_1,y_1)
\hat{\mathcal A}_{l,n}^{\Lambda,\Lambda_0}
\big(
z_1;(\tau,y_i)_{i\in\mathcal I_n};\chi_\epsilon^\sigma
\big)
\,dz_1
\lesssim
\left(\frac{\tau^{-1/2}}{\Lambda_m}\right)
\mathcal A_{l,n}^{\Lambda,\Lambda_0}
\big((\tau,y_1),(\tau,y_i)_{i\in\mathcal I_n}\big),
\end{equation}
where we also used the uniform bound
$\sup_{\epsilon>0}\|\chi_\epsilon\|_\infty<\infty$. Consequently, for $l\ge1$ and $n\ge4$,
\begin{multline}\label{325--}
\Big|
(K^{\otimes n}\star \chi_\epsilon^{\otimes\sigma}
\mathcal L_{l,n}^{\Lambda,\Lambda_0})
\big(
(\tau,y_1),(\tau,y_i)_{i\in\mathcal I_n};
\underline{p}_{n}
\big)
\Big|
\\
\lesssim
\Lambda_m^{4-n}
\mathcal A_{l,n}^{\Lambda,\Lambda_0}
\big(
(\tau,y_1),(\tau,y_i)_{i\in\mathcal I_n}
\big)
\mathcal P_{l-1}\!\left(\log\frac{\Lambda_m}{m}\right)
\mathcal P\!\left(\frac{\|\underline{p}_{n}\|}{\Lambda_m}\right)
\sum_{i=0}^{|\sigma|\left(n-1\right)+1}
\left(\frac{\tau^{-1/2}}{\Lambda_m}\right)^i .
\end{multline}
We use the uniform amplitude bound from Corollary~\ref{UniLinfty},
\[
\sup_{(y_1,\dots,y_n)}
\mathcal A_{l,n}^{\Lambda,\Lambda_0}
\big((\tau,y_1),(\tau,y_i)_{i\in\mathcal I_n}\big)
\lesssim
\tau^{-(n-1)/2}.
\]
Since
\[
\sum_{i=0}^{|\sigma|(n-1)+1}
\left(
\frac{\tau^{-1/2}}{\Lambda_m}
\right)^i
\lesssim
\tau^{-\frac{|\sigma|(n-1)+1}{2}}
\]
for \(0<\tau\le1\), the preceding estimate yields
\[
(K^{\otimes n}\star
\chi_\varepsilon^{\otimes\sigma}
\mathcal L_{l,n}^{\Lambda,\Lambda_0})
=
O\!\left(
\tau^{-\frac{(|\sigma|+1)(n-1)+1}{2}}
\right).
\]
Comparing with the definition \eqref{BesovCorr} gives the regularity
exponent
$
\alpha
=
-|\sigma|n+|\sigma|-n$ and the estimate \eqref{unif1} follows.

At tree level the polynomial factor in
\(
\tau^{-1/2}/\Lambda_m
\)
is absent, so that the leading singularity is
\(
\tau^{-(n-1)/2}
\),
which corresponds to the regularity exponent
\(
-n+1.
\)
\end{proof}
\begin{corollary}\label{cor:precompact}
The family
\[
\Bigl(
\chi_\varepsilon^{\otimes\sigma}
\mathcal L_{l,n}^{\Lambda,\Lambda_0}
\Bigr)_{\varepsilon>0}
\]
is bounded in
$
\mathcal C^{-|\sigma|n+|\sigma|-n}
(\mathbb R^n)$ uniformly in $\epsilon$. Consequently, it is relatively compact in $
\mathcal C^{-|\sigma|n+|\sigma|-n-\kappa}_{\mathrm{loc}}
(\mathbb R^n)$ for $\kappa>0$. In particular, 
there exist a subsequence
\(
\varepsilon_j
\),
and a distribution
\[
\mathds 1_+^\sigma
\mathcal L_{l,n}^{\Lambda,\Lambda_0}
\in
\mathcal C^{-|\sigma|n+|\sigma|-n}_{\mathrm{loc}}
(\mathbb R^n)
\]
such that
\[
\chi_{\varepsilon_j}^{\otimes\sigma}
\mathcal L_{l,n}^{\Lambda,\Lambda_0}
\longrightarrow
\mathds 1_+^\sigma
\mathcal L_{l,n}^{\Lambda,\Lambda_0}
\]
in
\[
\mathcal C^{-|\sigma|n+|\sigma|-n-\kappa}_{\mathrm{loc}}
(\mathbb R^n)\qquad\mathrm{for~all~}\kappa>0~.
\]
The analogous statement holds for \(l=0\), with
\(
\mathcal C^{-n+1}
\)
and
\(
\mathcal C^{-n+1}_{\mathrm{loc}}
\)
in place of the spaces above.
\end{corollary}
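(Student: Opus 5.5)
The plan is to deduce Corollary~\ref{cor:precompact} from Proposition~\ref{prop:BesovUniform} with a standard compact embedding and a diagonal extraction. Write $\alpha:=-|\sigma|n+|\sigma|-n$. Proposition~\ref{prop:BesovUniform} gives
\[
\sup_{\varepsilon>0}\bigl\|\chi_\varepsilon^{\otimes\sigma}\mathcal L_{l,n}^{\Lambda,\Lambda_0}(\underline p_n)\bigr\|_{\mathcal C^{\alpha}(\mathbb R^n)}\le C,
\]
with $C$ independent of $\varepsilon$. This holds because the right-hand sides of \eqref{unif1}--\eqref{unif2} depend only on $\Lambda_m,\underline p_n,l,n$ and on $\sup_\varepsilon\|\chi_\varepsilon\|_\infty$. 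Each $\chi_\varepsilon$ is smooth, bounded and monotone, so it lies in $\mathcal K$ and Theorem~\ref{ThmPrinc} applies uniformly. This is the first claim.

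For relative compactness, fix a cutoff $\psi\in C_c^\infty(\mathbb R^n)$ and consider $\psi\,\chi_\varepsilon^{\otimes\sigma}\mathcal L_{l,n}^{\Lambda,\Lambda_0}$. Multiplication by $\psi$ is bounded on $\mathcal C^\alpha$. I would use the heat-kernel characterisation \eqref{BesovCorr}, which is equivalent to the Littlewood--Paley Besov $B^\alpha_{\infty,\infty}$ norm for negative $\alpha$ (see \cite{BahouriCheminDanchin}). With that, the embedding $\mathcal C^{\alpha}\hookrightarrow\mathcal C^{\alpha-\kappa}$ is compact for distributions supported in a fixed compact set.

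The argument for the embedding is direct. For a bounded family, the mollifications $K_\tau^{\otimes n}\star(\cdot)$ at fixed $\tau$ are uniformly bounded and equicontinuous on compacts, since their derivatives are controlled by $\tau^{\alpha/2-1/2}$. Arzel\`a--Ascoli therefore gives uniform convergence on compacts of a subsequence for each fixed $\tau$. The tail $\tau\le\tau_0$ is handled by interpolation:
\[
\tau^{-(\alpha-\kappa)/2}\,|K_\tau\star(f-g)|\le 2C\,\tau^{\kappa/2}\quad\text{for }\tau\le\tau_0,
\]
and this is small once $\tau_0$ is small. The range $\tau\in[\tau_0,1]$ is controlled by the uniform convergence just obtained. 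A diagonal argument over countably many $\tau$ values and over an exhaustion by compacts $\psi_k$ then yields a single subsequence $\varepsilon_j$ that is Cauchy in $\mathcal C^{\alpha-\kappa}_{\mathrm{loc}}$ for every $\kappa>0$. To cover all $\kappa$ at once, take $\kappa=1/k$ diagonally, noting that convergence in a rougher space follows from convergence in a finer one.

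The limit, denoted $\mathds 1_+^\sigma\mathcal L_{l,n}^{\Lambda,\Lambda_0}$, is a distribution, because $\mathcal C^{\alpha-\kappa}_{\mathrm{loc}}$-convergence implies convergence in $\mathcal D'$. To show the limit lies in $\mathcal C^{\alpha}_{\mathrm{loc}}$ and not only in $\mathcal C^{\alpha-\kappa}_{\mathrm{loc}}$, I use lower semicontinuity. For fixed $\tau$ and $y$, $K_\tau^{\otimes n}\star(\psi\,\cdot)$ converges pointwise, so
\[
\tau^{-\alpha/2}\,|K_\tau^{\otimes n}\star(\psi\,\mathds 1_+^\sigma\mathcal L)(y)|\le\liminf_j \tau^{-\alpha/2}\,|K_\tau^{\otimes n}\star(\psi\,\chi_{\varepsilon_j}^{\otimes\sigma}\mathcal L)(y)|\le C_\psi .
\]
The tree-level case is identical with $\alpha$ replaced by $-n+1$, using \eqref{unif2}.

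The main obstacle I expect is justifying the compact embedding within the paper's heat-kernel definition of the norm rather than the standard Littlewood--Paley one. I would first try the direct Arzel\`a--Ascoli plus small-$\tau$ interpolation argument sketched above, which avoids invoking the norm equivalence. The only technical point there is that the seminorm restricts to $\tau\in(0,1]$ and so does not control low frequencies globally. This is exactly why the conclusion is local: the low-frequency part is handled on compacts by the $\tau=1$ mollification together with the $\psi$ localisation.
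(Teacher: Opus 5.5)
Your proposal follows the paper's proof. The $\varepsilon$-uniform bound of Proposition~\ref{prop:BesovUniform} is combined with the compact embedding $\mathcal C^{\alpha}(\mathbb R^n)\hookrightarrow\mathcal C^{\alpha-\kappa}_{\mathrm{loc}}(\mathbb R^n)$. The paper simply invokes that embedding; you prove it via Arzel\`a--Ascoli at fixed $\tau$ plus the $\tau^{\kappa/2}$ interpolation. You also add the lower-semicontinuity step placing the limit in $\mathcal C^{\alpha}_{\mathrm{loc}}$, which the paper asserts without argument.

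Two things need tidying. Localisation is needed because compactness fails at spatial infinity (translations), not because of low frequencies, which the $\tau=1$ term already controls globally. So after multiplying by $\psi$ you should note that $K_\tau^{\otimes n}\star(\psi\,\cdot)$ decays uniformly as $|y|\to\infty$. You should also note that it is equicontinuous in $\tau\in[\tau_0,1]$. Together these upgrade convergence on compacts to convergence in the global supremum that the norm requires.
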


\begin{proof}
Proposition~\ref{prop:BesovUniform} yields an $\epsilon$-uniform bound
\[
\sup_{\varepsilon>0}
\Bigl\|
\chi_\varepsilon^{\otimes\sigma}
\mathcal L_{l,n}^{\Lambda,\Lambda_0}
\Bigr\|_{
\mathcal C^{-|\sigma|n+|\sigma|-n}
}
<\infty.
\]
The compact embedding
\[
\mathcal C^\alpha(\mathbb R^n)
\hookrightarrow
\mathcal C^{\alpha-\kappa}_{\mathrm{loc}}(\mathbb R^n),
\qquad
,
\]
therefore implies relative compactness in
\(
\mathcal C^{-|\sigma|n+|\sigma|-n-\kappa}_{\mathrm{loc}}
\) for all $\kappa>0$.
Hence every sequence admits a convergent subsequence.
\end{proof}
The previous corollary provides compactness of the family
\(
(\chi_\varepsilon^{\otimes\sigma}
\mathcal L_{l,n}^{\Lambda,\Lambda_0})_{\varepsilon>0}
\).
To conclude the existence of the truncated correlators, it remains to
identify the limit and to show that it is independent of the chosen
subsequence. This is achieved below by studying the corresponding
heat-kernel regularisations.

\begin{proposition}[Existence of the heat-regularised truncation limit]\label{Existence}
Fix \(\tau_i>0\), \(1\le i\le n\).
Then for every
\((\vec y_{1,n},\underline p_n)\) the limit
\[
\lim_{\epsilon\to0}
\bigl(K^{\otimes n}\star
\chi_{\epsilon}^{\otimes\sigma}
\mathcal L_{l,n}^{\Lambda,\Lambda_0}\bigr)
\left(\vec{\tau}_{1,n};\vec y_{1,n},\underline{p}_n\right)
\]
exists for all $n\ge 4$ and does not depend on the particular choice of the admissible
family $(\chi_\epsilon)_\epsilon\in\mathcal{K}$.
\end{proposition}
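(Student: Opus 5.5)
Fix $\tau_i>0$, $\vec y_{1,n}$ and $\underline p_n$, and write the regularised quantity as a pairing
\[
\bigl(K^{\otimes n}\star\chi_\epsilon^{\otimes\sigma}\mathcal L_{l,n}^{\Lambda,\Lambda_0}\bigr)(\vec\tau_{1,n};\vec y_{1,n},\underline p_n)
=\Bigl\langle \mathcal L_{l,n}^{\Lambda,\Lambda_0}(\underline p_n),\ \prod_{i\in\sigma}\chi_\epsilon\,K(\tau_i;\cdot,y_i)\prod_{i\notin\sigma}K(\tau_i;\cdot,y_i)\Bigr\rangle .
\]
The plan is to show that this is a Cauchy family in $\epsilon$, with a limit independent of the admissible family. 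We will not try to pass to the limit inside the distributional pairing directly, since $\chi_\epsilon K$ does not converge in a smooth topology. Instead we exploit the fact that $\mathcal L_{l,n}^{\Lambda,\Lambda_0}$ is built from the flow equation, whose right-hand side involves only heat kernels at scales $\lambda\in[\Lambda,\Lambda_0]$. For finite $\Lambda_0$ and $\Lambda>0$, each flowing correlator is, apart from the local $\delta$-type pieces coming from the bare action at $\Lambda_0$, a finite sum of integrals of products of Gaussians. In other words it is a measure or function on $\mathbb R^n$, possibly with $\delta(z_1-z_i)$ factors, and its total variation against heat kernels is controlled by the tree amplitudes $\mathcal A_{l,n}^{\Lambda,\Lambda_0}$.

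\textbf{Step 1 (domination).} By Theorem~\ref{ThmPrinc} applied with $\chi=\chi_\epsilon$, together with positivity of the amplitudes, the integrand of the pairing is bounded uniformly in $\epsilon$. The bound is $\|\chi_\epsilon\|_\infty^{|\sigma|}$ times a tree amplitude, which is integrable in $z_1$ because it is a product of heat kernels with mass conservation (the Markov identity in the proof of Lemma~\ref{lem:uniform_L0_Hairer_style}). For this we need the bound at the level of the absolute kernel rather than just the pairing. We obtain it by running the same induction (Propositions~\ref{PropL}, \ref{propQu}, \ref{integration}) for $|\mathcal L|$ represented through the tree expansion, since the flow only produces positive heat kernel structures times bounded momentum factors.

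\textbf{Step 2 (pointwise convergence off the boundary).} $\chi_\epsilon\to\mathds 1^+$ pointwise on $\mathbb R\setminus\{0\}$ with uniform bound. The remaining task is therefore to show that the hyperplanes $\{z_i=0\}$, $i\in\sigma$, carry no mass for the measure $|\mathcal L_{l,n}^{\Lambda,\Lambda_0}|\prod_i K(\tau_i;z_i,y_i)$. For the absolutely continuous part this is immediate. For the $\delta$-pieces, which concentrate on diagonals $z_1=z_i$, the $z_1$-marginal is still absolutely continuous, because $z_1$ is integrated against a Gaussian. Hence $\{z_i=0\}$ is null in every term.

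\textbf{Step 3 (conclusion).} Dominated convergence then gives existence of the limit, and the limit equals the pairing with $\mathds 1^+$ inserted. This expression does not involve the family $(\chi_\epsilon)$, which yields independence from the admissible family in $\mathcal K$. Note that $\chi(0)$ is irrelevant precisely because of Step 2.

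The main obstacle is Step 1. The paper's seminorms control only the signed pairing, not $|\mathcal L|$, so the domination must be extracted from the tree representation of the flow. If that proves too heavy, the fallback is a Cauchy estimate. We would write $\chi_\epsilon-\chi_{\epsilon'}$ as a difference insertion, bound it via the $\Delta\chi$-seminorm and $\hat{\mathcal A}$ with Lemma~\ref{LemmeDeriv}, and use $\int|\chi'_\epsilon-\chi'_{\epsilon'}|\,K\to0$ for fixed $\tau$.
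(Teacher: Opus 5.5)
Your route differs from the paper's and can be made to work, but Step 1 is planned with the wrong tool and is the piece still missing.

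\textbf{How the two routes compare.} The paper never treats $\mathcal L_{l,n}^{\Lambda,\Lambda_0}$ as a measure. It inducts on $n+2l$ through the flow equation. Splitting $K(\Lambda^{-2};z,z')$ through the auxiliary variable $u$ makes the fully smeared objects $K^{\otimes n}\star\chi_\epsilon^{\otimes\sigma}\mathcal L$ a closed hierarchy. Pointwise convergence of the lower-order smeared correlators is then propagated to $\partial_\Lambda$ of the higher ones by dominated convergence in the internal variables $(u,k,\lambda)$. The $\epsilon$-uniform majorants come from Theorem~\ref{ThmPrinc} and the fusion/reduction bounds. The only explicit $\epsilon\to0$ computations are for the local pieces $\lambda\prod_i\delta(z_1-z_i)$ and $c_l^{\Lambda_0}\prod_i\delta(z_1-z_i)$, and independence is argued from the homogeneous limiting flow equation.

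You instead do a single dominated convergence in the external variables. This gives the limit explicitly as $\int\prod_{i\in\sigma}\mathds 1^+(z_i)\prod_i K(\tau_i;z_i,y_i)\,d\mathcal L_{l,n}^{\Lambda,\Lambda_0}(\underline p_n)$, so independence is immediate. It uses only a uniform bound on $\chi_\epsilon$ and pointwise convergence off $0$, with no smoothness or monotonicity. It also avoids the question of $\epsilon$-uniform control of the $\hat{\mathcal A}$-terms, which contain $\chi_\epsilon'$. The price is that it is a fixed-$\Lambda_0$ argument resting on the structure of the regularised theory, not on the paper's $\Lambda_0$-uniform bounds; that is enough for this proposition.

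\textbf{What Step 1 actually needs.} At fixed $(\Lambda,\Lambda_0,\underline p_n,\vec\tau,\vec y)$ you need no tree-amplitude bound at all. The majorant is $\sup_\epsilon\|\chi_\epsilon\|_\infty^{|\sigma|}\prod_iK(\tau_i;z_i,y_i)\,d|\mathcal L_{l,n}^{\Lambda,\Lambda_0}(\underline p_n)|$, and you only need this to have finite mass.

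The paper's machinery cannot give you this. Theorem~\ref{ThmPrinc} controls signed smeared pairings and relies on cancellations, so, as you note, it does not bound the integrand. Rerunning the induction for $|\mathcal L|$ with $\Lambda_0$-uniform constants is in fact impossible already at $l=1$. Indeed, $|\mathcal L_{1,4}^{\Lambda,\Lambda_0}|$ contains $|c_1^{\Lambda_0}|\prod_i\delta(z_1-z_i)$, and $c_1^{\Lambda_0}$ diverges logarithmically.

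The correct input is the structure of the regularised theory. For finite $\Lambda_0$ and $n\ge4$, $\mathcal L_{l,n}^{\Lambda,\Lambda_0}(\underline p_n)$ is a finite sum of bounded continuous densities times Lebesgue measure on partial diagonals. You can prove this inductively from the flow equation, or from the Feynman expansion with the smooth propagator $C^{\Lambda,\Lambda_0}$. The only non-measure objects are the two-point functions, which carry $\delta''(z_1-z_2)$ from the wave-function counterterm; so your blanket claim that every flowing correlator is a measure is false for $n=2$. In $\mathbf B^\lambda_\pi(\mathcal L_{l_1,2},\cdot)$, however, those derivatives fall on the smooth kernel $K(\lambda^{-2};z,z')$, so the $n\ge4$ structure survives.

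\textbf{Step 2 and the fallback.} Given that structure, Step 2 is correct, but not because $z_1$ is integrated against a Gaussian; a positive smooth weight does not change null sets. The real reason is that $\{z_i=0\}$ meets each partial diagonal in a proper hyperplane of it, on which the density is absolutely continuous.

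Drop the fallback. $\int|\chi_\epsilon'-\chi_{\epsilon'}'|\,K$ does not tend to $0$. For $\chi_\epsilon=\Phi(\cdot/\epsilon)$ and $\epsilon'=\epsilon/2$ it converges to $K(\tau;0,y)\,\|\Phi'-2\Phi'(2\,\cdot)\|_{L^1}\neq0$, because both derivatives converge to $\delta_0$ only weakly.
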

\begin{proof}
We argue by induction on the quantity
\(
n+2l
\).
For fixed \(n+2l\), the induction proceeds by increasing loop order. For fixed $n$ we first treat the tree order $l=0$. For $n=4$ one has
\[
\mathcal{L}_{0,4}^{\Lambda,\Lambda_0}
\bigl((z_1,p_1),\ldots,(z_4,p_4)\bigr)
=
\lambda \prod_{i=2}^4 \delta(z_1-z_i).
\]
After heat-kernel regularisation, one obtains
\begin{equation}\label{eq:IH-object}
K^{\otimes 4}\star\chi_{\epsilon}^{\otimes\sigma}
\mathcal{L}_{0,4}^{\Lambda,\Lambda_0}
\left(\vec{y}_{1,4};\underline{p}_{4}\right)
=
\int_{\mathbb{R}}
\prod_{i=1}^4 K(\tau_i;z,y_i)\,
\chi_\epsilon^{|\sigma|}(z)\,dz .
\end{equation}
Since $\chi_\epsilon\to\mathds{1}_{+}$ pointwise and
$\sup_{\epsilon>0}\|\chi_\epsilon\|_\infty<\infty$,
the dominated convergence theorem yields
\[
K^{\otimes 4}\star\chi_{\epsilon}^{\otimes\sigma}
\mathcal{L}_{0,4}^{\Lambda,\Lambda_0}\left(\vec{y}_{1,4};\underline{p}_{4}\right)
\longrightarrow
\int_{\mathbb{R}_+}
\prod_{i=1}^4 K(\tau_i;z,y_i)\,dz .
\]
For general $n$ the flow equation at tree level gives
\begin{multline}
K^{\otimes n}\star
\left(\chi_{\epsilon}^{\otimes\sigma}
\partial_{\Lambda}\mathcal{L}_{0,n}^{\Lambda,\Lambda_0}\right)\\
=
-\frac12
\sum_{\substack{n_1+n_2=n\\ n_1,n_2\ge4}}
\sum_{\pi\in\mathcal{P}_n}
\left(K^{\otimes (n_1+1)}\star
\chi_{\epsilon}^{\otimes\sigma_1}
\mathcal{L}_{0,n_1+1}^{\Lambda,\Lambda_0}\right)\otimes_{\pi}
\left(K^{\otimes (n_2+1)}\star
\chi_{\epsilon}^{\otimes\sigma_2}
\mathcal{L}_{0,n_2+1}^{\Lambda,\Lambda_0}\right)
\dot C^\Lambda(p_\pi),
\end{multline}
where $\mathcal{L}_{0,2}^{\Lambda,\Lambda_0}=0$.
Theorem~\ref{ThmPrinc} implies
\begin{multline}
\Bigl|
\left(K^{\otimes (n_i+1)}\star
\chi_{\epsilon}^{\otimes\sigma_i}
\mathcal{L}_{0,n_i+1}^{\Lambda,\Lambda_0}\right)
\left((\tau_j,y_j)_{j\in\pi_i},\left(\frac{1}{2\Lambda^2},u\right);\underline{p}_{\pi_j},p_{\pi}\right)
\Bigr|
\\
\le
\Lambda_m^{3-n_i}
\mathcal A_{0,n_i+1}^{\Lambda,\Lambda_0}
\left((\tau_j,y_j)_{j\in\pi_i},(\tfrac{1}{2\Lambda^2},u);\chi_\epsilon^{\sigma_i}\right).
\end{multline}
Since $\sup_\epsilon\|\chi_\epsilon\|_\infty<\infty$,
\begin{equation}\label{UniEps}
\mathcal A_{0,n_i+1}^{\Lambda,\Lambda_0}
\left((\tau_j,y_j)_{j\in\pi_i};\chi_\epsilon^{\sigma_i}\right)
\lesssim
\mathcal A_{0,n_i+1}^{\Lambda,\Lambda_0}
\left((\tau_j,y_j)_{j\in\pi_i}\right).
\end{equation}
Using \eqref{UniEps} together with the fusion estimate
\eqref{eq:FusionTight1} yields
\[
\mathcal A_{0,n_1+1}^{\Lambda,\Lambda_0}\otimes_{\pi}
\mathcal A_{0,n_2+1}^{\Lambda,\Lambda_0}
\lesssim
\mathcal A_{0,n}^{\Lambda,\Lambda_0}\left((\tau_1,y_1),(\tau_i,y_i)_{i\in\mathcal{I}_n}\right).
\]
Hence the right-hand side of the differentiated flow equation is
dominated by an \(\epsilon\)-independent integrable majorant in the
integration variables \((u,\Lambda)\). Dominated convergence implies
the existence of the pointwise limit of
\[
K^{\otimes n}\star
\left(\chi_{\epsilon}^{\otimes\sigma}
\partial_{\Lambda}\mathcal{L}_{0,n}^{\Lambda,\Lambda_0}\right).
\]
Using the boundary condition
\[
\mathcal L_{0,n}^{\Lambda_0,\Lambda_0}=0,
\qquad n\ge5,
\]
and the monotonicity
\[
\mathcal A_{0,n}^{\lambda,\Lambda_0}
\lesssim
\mathcal A_{0,n}^{\Lambda,\Lambda_0},
\qquad \lambda\le\Lambda,
\]
integration of the flow equation in $\Lambda$ yields the existence of
\[
\lim_{\epsilon\to0}
K^{\otimes n}\star
\left(\chi_{\epsilon}^{\otimes\sigma}
\mathcal{L}_{0,n}^{\Lambda,\Lambda_0}\right).
\]
For $l\ge1$ we differentiate in $\Lambda$ and use the Polchinski flow
equation. Theorem~\ref{ThmPrinc} together with
\eqref{bBo} and \eqref{UniEps} gives
\begin{multline}\label{eq:ind-bound}
\Bigl|
\bigl(K^{\otimes (n+2)}\!\star \chi_{\varepsilon}^{\otimes\sigma}\,
\mathcal{L}_{l-1,n+2}^{\Lambda,\Lambda_0}\bigr)
\Bigl(\cdot,\bigl(\tfrac{1}{2\Lambda^{2}},u\bigr)^{\otimes2},k,-k\Bigr)
\Bigr|
\\
\le
\Lambda_m^{2-n}\,
\mathcal P_{l-1}\!\left(\log \frac{\Lambda_m}{m}\right)
\mathcal P\!\left(\frac{\|\vec p_n\|}{\Lambda_m},\frac{|k|}{\Lambda_m}\right)
\mathcal A_{l-1,n+2}^{\Lambda,\Lambda_0}
\Bigl((\vec\tau_{1,n},\vec y_{1,n}),
(\tfrac{1}{2\Lambda^2},u)^{\otimes2}\Bigr).
\end{multline}
Combining the reduction estimate \eqref{Redu1},
the fusion estimate \eqref{eq:FusionTight1},
and the covariance estimate,
\[
\int_{\mathbb R^3}|k|^\alpha \dot C^\Lambda(k)\,dk\lesssim1
\]
shows that the integrands in the flow equation admit an
$\epsilon$–uniform $L^1$ bound.
Dominated convergence therefore yields the existence of the pointwise
limit of the differentiated flow equation.
The quadratic term in case $l>0$ is treated similarly as the tree order case.
Integrating the flow equation in $\Lambda$ then gives
\[
\lim_{\epsilon\to0}
K^{\otimes n}\!\star
\chi_{\epsilon}^{\otimes\sigma}\mathcal L_{l,n}^{\Lambda,\Lambda_0}
=\int_\Lambda^{\Lambda_0}\lim_{\epsilon\to0}
K^{\otimes n}\!\star
\partial_\lambda \left(\chi_{\epsilon}^{\otimes\sigma}\mathcal L_{l,n}^{\lambda,\Lambda_0}\right)\,d\lambda
+
\lim_{\epsilon\to0}
K^{\otimes n}\!\star
\chi_{\epsilon}^{\otimes\sigma}\mathcal L_{l,n}^{\Lambda_0,\Lambda_0}.
\]
For $n\ge5$ the boundary term vanishes.
For $n=4$ one has
\[
K^{\otimes4}\!\star
\chi_{\varepsilon}^{\otimes\sigma}
\mathcal L_{l,4}^{\Lambda_0,\Lambda_0}
=
c_l^{\Lambda_0}
\int_{\mathbb R}\chi_\varepsilon^{|\sigma|}(z)
\prod_{i=1}^4K(\tau_i;z,y_i)\,dz
\to
c_l^{\Lambda_0}
\int_{\mathbb R_+}
\prod_{i=1}^4K(\tau_i;z,y_i)\,dz .
\]
This proves the existence of the limit.
To prove independence of the approximation, let
\((\chi_\varepsilon)_\varepsilon\)
and
\((\widetilde\chi_\varepsilon)_\varepsilon\)
be two admissible families and denote by
\(U_{l,n}^{\Lambda,\Lambda_0}\)
the difference of the corresponding limits.

The induction hypothesis implies that all lower-order terms appearing in
the limiting flow equation coincide. Consequently,
\(U_{l,n}^{\Lambda,\Lambda_0}\)
satisfies the homogeneous limiting flow equation.

For \(n\ge5\), the ultraviolet boundary contribution vanishes.
For \(n=4\), the boundary term is proportional to
\[
\chi_\varepsilon^{|\sigma|}
-
\widetilde\chi_\varepsilon^{|\sigma|}.
\]
Since both admissible families converge pointwise to
\(\mathds 1_+\) and are uniformly bounded, dominated convergence implies
that the corresponding boundary contribution converges to zero.

Hence
\(U_{l,n}^{\Lambda,\Lambda_0}\)
has vanishing boundary data and satisfies the homogeneous flow
equation. It follows that
\[
U_{l,n}^{\Lambda,\Lambda_0}=0,
\]
which proves uniqueness of the limit.
\end{proof}
We may now define the truncated correlators as the limits of the smeared quantities.
\begin{definition}[Truncated correlators]
\label{def:truncated-correlators}

For \(l\ge0\), \(n\ge4\), and
\(\sigma\subseteq\mathcal I_n\), we define the truncated
\(n\)-point correlator by
\[
\mathds 1_+^\sigma
\mathcal L_{l,n}^{\Lambda,\Lambda_0}
:=
\lim_{\varepsilon\to0}
\chi_\varepsilon^{\otimes\sigma}
\mathcal L_{l,n}^{\Lambda,\Lambda_0},
\]
where the limit is taken in
\[
\mathcal C^{-|\sigma|n+|\sigma|-n-\kappa}_{\mathrm{loc}}
(\mathbb R^n),~~~\forall\kappa>0
\]
and where
\((\chi_\varepsilon)_\varepsilon\subset\mathcal K\)
is any admissible family.
\end{definition}
\begin{theorem}[Existence and approximation-independence]
\label{thm:existence-correlators}
Let $l\ge0$, $n\ge4$ and $\sigma\subseteq\mathcal I_n$.
Then the truncated correlators
\[
\mathds{1}_+^{\sigma}\mathcal L_{l,n}^{\Lambda,\Lambda_0}
\]
exist as distributions in
$\mathcal C^{-|\sigma|n+|\sigma|-n}_{\mathrm{loc}}(\mathbb R^n)$.

Moreover, the resulting distribution is independent of the particular
choice of the family $(\chi_\epsilon)_\epsilon$ within the admissible
class of cutoffs $\mathcal{K}$.
\end{theorem}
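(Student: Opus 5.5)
The plan is to combine the three ingredients already in place: the $\varepsilon$-uniform Besov bound of Proposition~\ref{prop:BesovUniform}, the compactness statement of Corollary~\ref{cor:precompact}, and the identification of heat-kernel regularisations in Proposition~\ref{Existence}. The argument is a standard ``compactness plus uniqueness of limit points'' scheme. Fix $l,n,\sigma$ and an admissible family $(\chi_\varepsilon)_\varepsilon\subset\mathcal K$.

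\emph{Step 1: boundedness and subsequential limits.} By Proposition~\ref{prop:BesovUniform}, the family $\chi_\varepsilon^{\otimes\sigma}\mathcal L_{l,n}^{\Lambda,\Lambda_0}$ is bounded in $\mathcal C^{\alpha}(\mathbb R^n)$, where $\alpha=-|\sigma|n+|\sigma|-n$ (and $\alpha=-n+1$ when $l=0$). Corollary~\ref{cor:precompact} then gives, for every sequence $\varepsilon_j\to0$, a further subsequence converging in $\mathcal C^{\alpha-\kappa}_{\mathrm{loc}}$ for every $\kappa>0$. The limit $U$ lies in $\mathcal C^\alpha_{\mathrm{loc}}$ itself, and not merely in $\mathcal C^{\alpha-\kappa}_{\mathrm{loc}}$. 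To see this, note that for fixed $\tau$ and fixed $\vec y$ the regularisation $K^{\otimes n}\star T$ is continuous in $T$ for this topology, since $K^{\otimes n}(\tau;\cdot,\vec y)$ is a Schwartz test function. Passing to the limit in $\tau^{-\alpha/2}|K^{\otimes n}\star(\chi_{\varepsilon_j}^{\otimes\sigma}\mathcal L)(\vec y)|\le C$, pointwise in $(\tau,\vec y)$, then gives $\|U\|_{\mathcal C^\alpha}\le C$, which is a lower-semicontinuity argument.

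\emph{Step 2: identification of limit points.} Let $U$ and $U'$ be two subsequential limits. They may arise from two different sequences, or even from two different admissible families $\chi_\varepsilon$ and $\widetilde\chi_\varepsilon$. For every $\tau\in(0,1]$ and every $\vec y$, the continuity noted above gives
\[
(K^{\otimes n}\star U)(\tau;\vec y)=\lim_{j}(K^{\otimes n}\star\chi_{\varepsilon_j}^{\otimes\sigma}\mathcal L_{l,n}^{\Lambda,\Lambda_0})(\tau;\vec y).
\]
By Proposition~\ref{Existence}, the full limit $\varepsilon\to0$ exists and is independent of the choice of family within $\mathcal K$. Hence $K^{\otimes n}\star U=K^{\otimes n}\star U'$ for all $\tau\in(0,1]$ and all $\vec y$.

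\emph{Step 3: concluding $U=U'$.} The heat kernels $K^{\otimes n}(\tau;\cdot,\vec y)$ converge to $\delta_{\vec y}$ as $\tau\to0$, and for any test function $\phi$ we have $\langle U,K_\tau^{\otimes n}\star\phi\rangle\to\langle U,\phi\rangle$. Equivalently, $\langle U,\phi\rangle=\lim_{\tau\to0}\int (K^{\otimes n}\star U)(\tau;\vec y)\phi(\vec y)\,d\vec y$. Applying this to $U-U'$ gives $U=U'$ as distributions.

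Since every subsequence has a further subsequence converging to the same limit, the whole family converges in $\mathcal C^{\alpha-\kappa}_{\mathrm{loc}}$ as $\varepsilon\to0$. The limit is the distribution $\mathds 1_+^\sigma\mathcal L_{l,n}^{\Lambda,\Lambda_0}$ of Definition~\ref{def:truncated-correlators}, it lies in $\mathcal C^\alpha_{\mathrm{loc}}$, and by Step 2 it is independent of the chosen family. The same argument applies verbatim at $l=0$ with exponent $-n+1$.

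\emph{Main obstacle.} The delicate point is Step 2, namely that the pointwise heat-regularised limits actually determine the limiting distribution. I would handle this directly through Step 3, by approximating a test function $\phi$ with $K_\tau^{\otimes n}\star\phi$ and controlling the error uniformly using the $\mathcal C^{\alpha-\kappa}$ bound. A secondary issue is checking that pairing against the heat kernel is continuous for the $\mathcal C^{\alpha-\kappa}_{\mathrm{loc}}$ topology. Since $K$ has Gaussian tails, I expect this to follow from the local Besov seminorms together with the uniform global bound of Step 1, which handles the tails.
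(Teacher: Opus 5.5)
Your proposal is correct and follows essentially the same route as the paper: the $\varepsilon$-uniform Besov bound of Proposition~\ref{prop:BesovUniform} gives relative compactness via Corollary~\ref{cor:precompact}, and all subsequential limits are identified through the approximation-independent heat-kernel limits of Proposition~\ref{Existence}, so the whole family converges. The paper only asserts that the subsequential limits coincide, whereas you justify it. You show that pairing with the non-compactly supported heat kernel passes to the limit, because the uniform global bound controls the Gaussian tails, and that the heat regularisations determine the limiting distribution; this tightens the argument rather than changing it.
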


\begin{proof}
By Proposition~\ref{prop:BesovUniform}, the family
\[
\bigl(\chi_{\epsilon}^{\otimes\sigma}
\mathcal L_{l,n}^{\Lambda,\Lambda_0}\bigr)_{\epsilon>0}
\]
is uniformly bounded in $\mathcal C^{-|\sigma|n+|\sigma|-n}(\mathbb R^n)$.
Hence, by Corollary~\ref{cor:precompact}, it is relatively compact in
$\mathcal C^{-|\sigma|n+|\sigma|-n-\kappa}_{\mathrm{loc}}(\mathbb R^n)$ for $\kappa>0$.
On the other hand, Proposition~\ref{Existence} shows that 
\[
K^{\otimes n}\star\left(
\chi_{\epsilon}^{\otimes\sigma}
\mathcal L_{l,n}^{\Lambda,\Lambda_0}
\right)
\]
admit a pointwise limit as $\epsilon\to0$, and that this limit does not
depend on the particular choice of the family $(\chi_\epsilon)_\epsilon$ in $\mathcal{K}$.
Therefore all subsequential limits coincide, and the whole family
converges in
$\mathcal C^{-|\sigma|n+|\sigma|-n-\kappa}_{\mathrm{loc}}(\mathbb R^n)$ for $\kappa>0$. This proves existence and universality of the truncated correlators.
\end{proof}
\begin{remark}\label{rem:minus-limit}
Recall that the correlators are invariant under reflection of the
normal coordinate, namely
\begin{equation}\label{eq:reflection}
\mathcal{L}_{l,n}^{\Lambda,\Lambda_0}
\bigl((z_1,p_1),\dots,(z_n,p_n)\bigr)
=
\mathcal{L}_{l,n}^{\Lambda,\Lambda_0}
\bigl((-z_1,p_1),\dots,(-z_n,p_n)\bigr)
\end{equation}
as distributions in \((z_1,\dots,z_n)\).
Let
\[
R(z_1,\dots,z_n):=(-z_1,\dots,-z_n)
\]
denote the reflection map. Then
\[
\Bigl(\prod_{i\in\sigma}\chi_\varepsilon(-z_i)\Bigr)
\mathcal L_{l,n}^{\Lambda,\Lambda_0}
=
R^*\!\left[
\Bigl(\prod_{i\in\sigma}\chi_\varepsilon(z_i)\Bigr)
\mathcal L_{l,n}^{\Lambda,\Lambda_0}
\right].
\]

Since pullback by the smooth diffeomorphism \(R\) acts continuously on
local Besov spaces, the convergence established in
Theorem~\ref{thm:existence-correlators} implies
\[
\Bigl(\prod_{i\in\sigma}\chi_\varepsilon(-z_i)\Bigr)
\mathcal L_{l,n}^{\Lambda,\Lambda_0}
\longrightarrow
R^*
\Bigl(
\mathds1_+^\sigma
\mathcal L_{l,n}^{\Lambda,\Lambda_0}
\Bigr).
\]

This motivates the definition
\[
\mathds1_-^\sigma
\mathcal L_{l,n}^{\Lambda,\Lambda_0}
:=
R^*
\Bigl(
\mathds1_+^\sigma
\mathcal L_{l,n}^{\Lambda,\Lambda_0}
\Bigr),
\]
which may be interpreted as the correlator truncated to the negative
half-space.
\end{remark}
\begin{remark}\label{rem:other-truncated-limits}

The proof of Proposition~\ref{Existence} immediately extends to the
weighted and partially regularised correlators appearing in the
power-counting analysis. More precisely, for every
\(
\mathfrak s\subseteq\mathcal I_n
\),
\(
\sigma\subseteq\mathfrak s
\),
every \(r\in\{0,1,2,3\}\),
and every pair of distinct indices
\(i,j\in\mathcal I_n\),
the limit
\[
\lim_{\varepsilon\to0}
\Bigl(
K^{\otimes\mathfrak s}
\star
\mathbb Z^{(r)}_{i,j}
\chi_\varepsilon^{\otimes\sigma}
\mathcal L_{l,n}^{\Lambda,\Lambda_0}
\Bigr)
\Bigl(
z_1;
(\tau_k,y_k)_{k\in\mathfrak s};
\underline p_n
\Bigr)
\]
exists.

We denote the resulting quantity by
\[
\Bigl(
K^{\otimes\mathfrak s}
\star
\mathbb Z^{(r)}_{i,j}
\mathds1_+^\sigma
\mathcal L_{l,n}^{\Lambda,\Lambda_0}
\Bigr)
\Bigl(
z_1;
(\tau_k,y_k)_{k\in\mathfrak s};
\underline p_n
\Bigr).
\]

Passing to the limit in the flow equations by dominated convergence
yields the corresponding flow equations for the truncated correlators.
These identities provide the natural starting point for the analysis of
renormalisation problems involving interactions supported on a
half-space.
\end{remark}

For $n=2$, the power-counting degree is positive and the resulting
Besov regularity differs from the case $n\ge4$. We therefore analyse
the two-point sector separately. The estimate obtained below is a direct
consequence of the power-counting spaces and semi-norms and their connection to the $n$-point correlators established in
Theorem~\ref{ThmPrinc} together with the uniform control of the tree
amplitudes. For the two-point correlators, Theorem~\ref{ThmPrinc} yields
\begin{equation}\label{eq:n2-bound}
\left|
\bigl(
K_{\tau,y_2}
\star
\mathcal L_{l,2}^{\Lambda,\Lambda_0}
\bigr)
\bigl(z_1;p,-p\bigr)
\right|
\le
\Lambda_m^2
\mathcal P_{l-1}
\!\left(
\log\frac{\Lambda_m}{m}
\right)
\mathcal P
\!\left(
\frac{|p|}{\Lambda_m}
\right)
\sum_{j=0}^{3}
\left(
\frac{\tau^{-1/2}}{\Lambda_m}
\right)^j
\mathcal A_{l,2}^{\Lambda,\Lambda_0}
\bigl(
z_1;(\tau,y_2)
\bigr).
\end{equation}

Multiplying by \(K(\tau;z_1,y_1)\) and integrating over \(z_1\), we obtain
\begin{multline}
\left|
\int_{\mathbb R_+}\!\!\int_{\mathbb R}
\mathcal L_{l,2}^{\Lambda,\Lambda_0}
\bigl(
(z_1,p),(z_2,-p)
\bigr)
\prod_{i=1}^{2}
K(\tau;z_i,y_i)
\,dz_1\,dz_2
\right|
\\
\le
\Lambda_m^2
\mathcal P_{l-1}
\!\left(
\log\frac{\Lambda_m}{m}
\right)
\mathcal P
\!\left(
\frac{|p|}{\Lambda_m}
\right)
\sum_{j=0}^{3}
\left(
\frac{\tau^{-1/2}}{\Lambda_m}
\right)^j
\mathcal A_{l,2}^{\Lambda,\Lambda_0}
\bigl(
(\tau,y_1),(\tau,y_2)
\bigr).
\end{multline}
Here we used 
\[
\int_{\mathbb R_+}
K(\tau;z_1,y_1)\,
\mathcal A_{l,2}^{\Lambda,\Lambda_0}
\bigl(
z_1;(\tau,y_2)
\bigr)
\,dz_1\lesssim
\mathcal A_{l,2}^{\Lambda,\Lambda_0}
\bigl(
(\tau,y_1),(\tau,y_2)
\bigr),
\]
together with the uniform bound
\[
\sup_{y_1,y_2}
\mathcal A_{l,2}^{\Lambda,\Lambda_0}
\bigl(
(\tau,y_1),(\tau,y_2)
\bigr)
\lesssim \tau^{-\frac{1}{2}}.
\]
Consequently, for every \(l\ge2\),
\begin{align}
\|
\mathds 1_+
\mathcal L_{l,2}^{\Lambda,\Lambda_0}
(\underline{p})
\|_{\mathcal C^{-4}(\mathbb R^2)}
&\lesssim
\Bigl(
\sum_{j=0}^{3}
\Lambda_m^{2-j}
\Bigr)
\mathcal P_{l-1}
\!\left(
\log\frac{\Lambda_m}{m}
\right)
\mathcal P
\!\left(
\frac{|p|}{\Lambda_m}
\right),
\\
\|
\mathds 1_-
\mathcal L_{l,2}^{\Lambda,\Lambda_0}
(\underline{p})
\|_{\mathcal C^{-4}(\mathbb R^2)}
&\lesssim
\Bigl(
\sum_{j=0}^{3}
\Lambda_m^{2-j}
\Bigr)
\mathcal P_{l-1}
\!\left(
\log\frac{\Lambda_m}{m}
\right)
\mathcal P
\!\left(
\frac{|p|}{\Lambda_m}
\right).
\end{align}
The case \(l=1\) is exceptional. Indeed,
\[
\mathcal L_{1,2}^{\Lambda,\Lambda_0}
\bigl(
(z_1,p),(z_2,-p)
\bigr)
=
\ell_{1,2}^{\Lambda,\Lambda_0}(z_1)
\,
\delta(z_1-z_2),
\]
where
\[
\ell_{1,2}^{\Lambda,\Lambda_0}(z_1)
:=
\int_{\mathbb R}
\mathcal L_{1,2}^{\Lambda,\Lambda_0}
\bigl(
(z_1,0),(z,0)
\bigr)
\,dz .
\]
Hence
\[
\mathds 1_\pm(z_1)
\mathcal L_{1,2}^{\Lambda,\Lambda_0}
\bigl(
(z_1,p),(z_2,-p)
\bigr)
=
\ell_{1,2}^{\Lambda,\Lambda_0}(z_1)
\,
\mathds 1_\pm(z_1)
\delta(z_1-z_2),
\]
which belongs to
\(
\mathcal C^{-1}(\mathbb R^2).
\)
\newpage

\newpage
\section{The ultraviolet limit of the correlators and the half-space truncated correlators}\label{ConvSection}
In this section we establish convergence of the renormalised
correlators in the ultraviolet limit
\(
\Lambda_0\to\infty
\).
More precisely, we show that the families $
\mathcal L_{l,n}^{0,\Lambda_0}$
and their half-space truncations form Cauchy sequences in suitable
Besov spaces. As a consequence, the continuum limit exists and defines
renormalised correlators independent of the ultraviolet cutoff.
\begin{theorem}[Ultraviolet convergence]\label{Convergence}
    For $0<\Lambda_0'<\Lambda_0$ we have for all $n\ge 4$ and $l\ge 1$ that 
    \begin{equation}
\left\|\mathcal{L}_{l,n}^{0,\Lambda_0}(\underline{p_n})-\mathcal{L}_{l,n}^{0,\Lambda'_0}(\underline{p}_n)\right\|_{\mathcal{C}^{-2n+2}\left(\mathbb{R}^n\right)}\leq \Lambda_0^{-1}\Big(\sum_{i=0}^{n-1}m^{5-n}\Big)
\left(\log\frac{\Lambda_0+m}{m}\right)^{\nu}~
\mathcal{P}\!\left(\frac{\|\underline{p}_n\|}{\Lambda_m}\right)~.
    \end{equation}
Furthermore, every
$\sigma\subseteq\mathcal I_n$ and every compact $K\Subset \mathbb{R}^n$ there exists $C_K>0$ such that 
    \begin{multline}\label{337---}
\|\mathds{1}_+^{\sigma}\mathcal{L}_{l,n}^{0,\Lambda_0}(\underline{p}_n)-\mathds{1}_+^{\sigma}\mathcal{L}_{l,n}^{0,\Lambda'_0}(\underline{p}_n)\|_{\mathcal{C}^{-|\sigma|n+|\sigma|-n}(K)}
\\\leq~C_K~
\Lambda_0^{-1}\left(\sum_{i=1}^{\left(|\sigma|+1\right)(n-1)}m^{5-n-i}\right)~\left(\log\frac{\Lambda_0+m}{m}\right)^{\tilde{\nu}}
\mathcal{P}\!\left(\frac{\|\underline{p}_{n}\|}{\Lambda_m}\right),
\qquad l\ge1,
    \end{multline}
At tree level, one has for all $n\ge4$ 
\begin{align}\label{BesovnTree}
\left\|\mathcal{L}_{0,n}^{0,\Lambda_0}(\underline{p}_n)-\mathcal{L}_{0,n}^{0,\Lambda_0'}(\underline{p}_{n})\right\|_{\mathcal{C}^{-n+1}(\mathbb{R}^n)}
\le~\Lambda_0^{-1}~
\left(\sum_{i=0}^{n-1}m^{5-n-i}\right).
\end{align}
The same estimate holds for every compact $K\Subset \mathbb{R}^n$
\begin{equation}
\left\|\mathds{1}^{\sigma}_+\mathcal{L}_{0,n}^{0,\Lambda_0}(\underline{p}_n)-\mathds{1}^{\sigma}_+\mathcal{L}_{0,n}^{0,\Lambda_0'}(\underline{p}_{n})\right\|_{\mathcal{C}^{-n+1}(K)}.
\end{equation}
For the two–point functions and $l\ge2$,
\begin{align}\label{2PTBesov}
\left\|\mathcal{L}_{l,2}^{0,\Lambda_0}(\underline{p})-\mathcal{L}_{l,2}^{0,\Lambda_0'}(\underline{p})\right\|_{\mathcal{C}^{-4}(\mathbb{R}^2)}
&\le~\Lambda_0^{-1}~
\Big(\sum_{i=0}^{3}m^{2-i}\Big)
\left(\log\frac{\Lambda_0+m}{m}\right)^{\nu'}
\mathcal{P}\!\left(\frac{|p|}{m}\right).
\end{align}
Finally,
\begin{align}\label{2PTBesov2}
\left\|\mathcal{L}_{1,2}^{0,\Lambda_0}\left(\underline{p}\right)-\mathcal{L}_{1,2}^{0,\Lambda_0'}\left(
\underline{p}\right)\right\|_{\mathcal{C}^{-1}(\mathbb{R}^2)}
&\le~\Lambda_0^{-1}~
\Big(\sum_{i=0}^{3}m^{3-i}\Big)
\mathcal{P}\!\left(\frac{|p|}{m}\right).
\end{align}
The same estimates holds for the truncated correlators
\(
\mathds1_+
\mathcal L_{l,2}^{0,\Lambda_0}
\)
and
\(
\mathds1_+
\mathcal L_{l,2}^{0,\Lambda_0'}
\),
with the same right-hand side in \eqref{2PTBesov} and \eqref{2PTBesov2}.

As a consequence, the limits
\[
\lim_{\Lambda_0\to\infty}
\mathcal L_{l,n}^{0,\Lambda_0}\quad\mathrm{and}\qquad \lim_{\Lambda_0\to\infty}
\mathds1_+^\sigma
\mathcal L_{l,n}^{0,\Lambda_0}
\]
exist in the corresponding Besov spaces. In particular, the
renormalised correlators and their half-space truncations admit a
well-defined ultraviolet limit.
\end{theorem}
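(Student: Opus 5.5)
The plan is the standard flow-equation argument for $\Lambda_0$-dependence, carried out inside the power-counting spaces of Section~\ref{SecProof}. I would differentiate the correlators with respect to the ultraviolet cutoff. The quantity to control is $\partial_{\Lambda_0}\mathcal L_{l,n}^{\Lambda,\Lambda_0}$. The identity $\mathcal L^{0,\Lambda_0}-\mathcal L^{0,\Lambda_0'}=\int_{\Lambda_0'}^{\Lambda_0}\partial_{\lambda_0}\mathcal L^{0,\lambda_0}\,d\lambda_0$ then reduces every claim to a bound of the form
\[
\partial_{\Lambda_0}\bigl((z_1-z_i)^r\partial^w\mathcal L_{l,n}^{\Lambda,\Lambda_0}\bigr)\in \Lambda_0^{-2}\,\Lambda_m\,\mathcal D^{\Lambda_m}_{4-n-r-|w|,l}(\mathbb M^n),
\]
with a logarithm $\log^{\nu}\frac{\Lambda_0+m}{m}$ allowed in place of $\mathcal P_{l-1}(\log\frac{\Lambda_m}{m})$. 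Analogous statements would be proved for the second-order spaces $\mathcal D^{(2)}$ and for the $\Delta$-seminorms. Integrating $\Lambda_0^{-2}$ over $[\Lambda_0',\Lambda_0]$ produces the factor $\Lambda_0^{-1}$ in all displayed estimates.

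\textbf{Step 1: the flow equation for $\partial_{\Lambda_0}\mathcal L$.} Differentiating \eqref{FEL4} in $\Lambda_0$ gives a linear hierarchy. Its linear term is $\mathbf L^\Lambda(\partial_{\Lambda_0}\mathcal L_{l-1,n+2})$. Its bilinear terms are $\mathbf B^\Lambda_\pi(\partial_{\Lambda_0}\mathcal L_{l_1,n_1+1},\mathcal L_{l_2,n_2+1})$ together with the symmetric counterpart. Propositions~\ref{PropL}, \ref{propQu}, \ref{Bilinearr}, \ref{prop:B-estimate} and \ref{prop:B-estimateInsertion} apply verbatim, since they are linear in each argument. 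Theorem~\ref{ThmPrinc} supplies the bounds for the undifferentiated factor. Hence the induction on $n+2l$ of the preceding section propagates the weight $\Lambda_m\Lambda_0^{-2}$, provided the boundary data carry it.

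\textbf{Step 2: boundary data.} For irrelevant terms, with $n+r+|w|\ge5$, the condition \eqref{Bc1} gives $\mathcal L^{\Lambda_0,\Lambda_0}=0$ for all $\Lambda_0$. Differentiating in $\Lambda_0$ then yields $\partial_{\Lambda_0}\mathcal L^{\Lambda,\Lambda_0}|_{\Lambda=\Lambda_0}=-\partial_\Lambda\mathcal L^{\Lambda,\Lambda_0}|_{\Lambda=\Lambda_0}$. This derivative is bounded by \eqref{Boobo0} at $\Lambda=\Lambda_0$, namely by $\Lambda_0^{3-n-r-|w|}$, which is the required $\Lambda_0^{-2}\Lambda_m\,\Lambda_m^{4-n-r-|w|}$ evaluated at $\Lambda=\Lambda_0$. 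Integrating from $\Lambda$ to $\Lambda_0$ as in Proposition~\ref{integration} then gives the bound, since $\rho<0$ keeps the integral dominated by its lower end up to logarithms. For relevant terms the renormalisation conditions \eqref{renoc1}--\eqref{renoc2} are $\Lambda_0$-independent. Their $\Lambda_0$-derivatives at $\Lambda=0$ therefore vanish, and I would integrate upward from $0$ and reconstruct through Propositions~\ref{prop:relevant} and \ref{prop:relevant4}. \emph{This is the main obstacle.} Integrating $\lambda_m^{1-r-|w|}\Lambda_0^{-2}$ for the relevant two-point moments from $0$ to $\Lambda$ must not lose the $\Lambda_0^{-1}$. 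I expect it to close only because $\Lambda\le\Lambda_0$, at the cost of powers of $\log\frac{\Lambda_0+m}{m}$; this is the origin of the exponents $\nu,\nu',\tilde\nu$.

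\textbf{Step 3: translation to Besov norms.} At $\Lambda=0$ one has $\Lambda_m=m$. The argument of the proof of Theorem~\ref{BesovNorm} then converts the seminorm bound into the $\mathcal C^{-2n+2}$, $\mathcal C^{-n+1}$, $\mathcal C^{-4}$ and $\mathcal C^{-1}$ estimates, with prefactor $m^{5-n-i}$ coming from $m\cdot m^{4-n-i}$. For the truncated differences I would apply Proposition~\ref{prop:BesovUniform} with $\chi_\varepsilon$ to the difference $\mathcal L^{0,\Lambda_0}-\mathcal L^{0,\Lambda_0'}$; the bound is uniform in $\varepsilon$. I would then pass to $\varepsilon\to0$ using Theorem~\ref{thm:existence-correlators}. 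Because the limit is taken in $\mathcal C_{\mathrm{loc}}$ norms that are lower semicontinuous, the estimate survives on each compact $K$, which accounts for the constant $C_K$. The two-point truncations follow from the explicit discussion after Remark~\ref{rem:other-truncated-limits}. Finally, the Cauchy property in $\Lambda_0$ gives the existence of the limits in the respective Besov spaces, by completeness.
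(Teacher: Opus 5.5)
Your proposal follows the paper's proof essentially step by step. Proposition~\ref{PropConv} proves exactly your bound $\partial_{\Lambda_0}\bigl((z_1-z_i)^r\partial^w\mathcal L_{l,n}^{\Lambda,\Lambda_0}\bigr)\in(\Lambda_0+m)^{-2}\mathcal D^{\Lambda_m}_{5-n-r-|w|,l}$. Your boundary term $-\partial_\Lambda\mathcal L^{\Lambda,\Lambda_0}_{l,n}|_{\Lambda=\Lambda_0}$ appears there, up to sign, as $\mathcal R^{\Lambda_0,\Lambda_0}_{l,n}$. The paper likewise integrates irrelevant terms down from $\Lambda_0$ and relevant coefficients up from $0$, reconstructs them via Propositions~\ref{prop:relevant} and \ref{prop:relevant4}, and then uses the same Besov conversion and lower-semicontinuity passage $\varepsilon\to0$ on compacts.

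Two small corrections. First, $\int_{\Lambda_0'}^{\Lambda_0}\lambda_0^{-2}\,d\lambda_0\le(\Lambda_0')^{-1}$, so the integration yields the smaller cutoff $(\Lambda_0')^{-1}$, not $\Lambda_0^{-1}$. This slip is shared with the paper's statement and proof, and it is harmless for the Cauchy property. Second, the step you flag as the main obstacle is not one. The relevant integrands are $\lambda_m^{4-n-r-|w|}(\Lambda_0+m)^{-2}$ with nonnegative exponent; for $n=2$ this is $\lambda_m^{2-r-|w|}$, not $\lambda_m^{1-r-|w|}$. The $\log\frac{\Lambda_0+m}{m}$ factors come instead from the boundary term at $\Lambda=\Lambda_0$ and from the marginal irrelevant case $n+r+|w|=5$, where $\int_\Lambda^{\Lambda_0}(\lambda+m)^{-1}\,d\lambda=\log\frac{\Lambda_0+m}{\Lambda_m}$.
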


\noindent The proof of Theorem~\ref{Convergence} is based on power-counting bounds for the ultraviolet derivatives
\(
\partial_{\Lambda_0}\mathcal L_{l,n}^{\Lambda,\Lambda_0}\).
These bounds are the analogue of Theorem~\ref{ThmPrinc} for the cutoff derivatives and reflect the additional decay generated by differentiation with respect to the ultraviolet cutoff. Once established, they imply the convergence estimates of Theorem~\ref{Convergence} by integration in \(\Lambda_0\). The precise statement is given in the following proposition.
\begin{proposition}\label{PropConv}[Power-counting bounds for ultraviolet derivatives]
  Under the assumptions of Theorem~\ref{ThmPrinc}, one has for every
\(\Lambda\in[0,\Lambda_0]\)
\begin{multline}\label{Premiere}
\mathcal{N}^{\Lambda_m}_{5-n-r-|w|,l;\infty}\left((z_1-z_i)^r~
\partial^w
\partial_{\Lambda_0}\mathcal L_{l,n}^{\Lambda,\Lambda_0}
\bigl(
\underline{p}_{n}
\bigr)\right)\\\le \left(\Lambda_0+m\right)^{-2}\mathcal{P}_{l-1}\left(\log \frac{\Lambda_0+m}{m}\right)\widetilde{\mathcal{P}}\left(\frac{\|\underline{p}_n\|}{\Lambda_m}\right)\qquad
\forall n\ge2,
\end{multline}
\begin{multline}\label{deuxieme}
\mathcal{N}^{\Lambda_m}_{5-n,l;\infty}\left(\Delta\left(
\partial_{\Lambda_0}\mathcal L_{l,n}^{\Lambda,\Lambda_0}\right)
\bigl(
\underline{p}_n
\bigr)\right)\\\le \left(\Lambda_0+m\right)^{-2}\mathcal{P}_{l-1}\left(\log \frac{\Lambda_0+m}{m}\right)\widetilde{\mathcal{P}}\left(\frac{\|\underline{p}_n\|}{\Lambda_m}\right)\qquad
\forall n\ge4,
\end{multline}
and 
\begin{multline}\label{troisieme}
~\mathcal{N}^{\Lambda_m;(2)}_{2-n,l;\infty}\left((z_1-z_2)^r
\partial^w\partial_{\Lambda_0}\mathcal L_{l,n}^{\Lambda,\Lambda_0}
\bigl(
\underline{p}_{n}
\bigr)\right)\\\le \left(\Lambda_0+m\right)^{-2}\mathcal{P}_{l-1}\left(\log \frac{\Lambda_0+m}{m}\right)\widetilde{\mathcal{P}}\left(\frac{\|\underline{p}_n\|}{\Lambda_m}\right)\qquad
\forall n\ge2,~~\forall r+|w|=3
\end{multline} 
\end{proposition}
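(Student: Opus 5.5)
We argue by the same induction as in the proof of Theorem~\ref{ThmPrinc}: induction on $n+2l$, and for fixed $n+2l$ by increasing $l$. The tree-level four-point function $\mathcal L_{0,4}=\lambda\prod\delta(z_1-z_i)$ does not depend on $\Lambda_0$, and neither does any $\mathcal L_{0,n}$ once the boundary condition \eqref{Bc1} is imposed. The base cases therefore vanish, apart from the boundary terms treated below.

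Differentiating the flow equation \eqref{FE-main} with respect to $\Lambda_0$ gives the linear hierarchy
\[
\partial_\Lambda\partial_{\Lambda_0}\mathcal L_{l,n}
=\tfrac12\mathbf L^\Lambda\bigl(\partial_{\Lambda_0}\mathcal L_{l-1,n+2}\bigr)
-\sum'\mathbf B^\Lambda_\pi\bigl(\partial_{\Lambda_0}\mathcal L_{l_1,n_1+1},\mathcal L_{l_2,n_2+1}\bigr)\dot C(p_\pi).
\]
In each bilinear term exactly one factor carries $\partial_{\Lambda_0}$, and the other factor is controlled by Theorem~\ref{ThmPrinc}. For the weighted quantities $(z_1-z_i)^r\partial^w$ we use the decomposition \eqref{FEr}. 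Propositions~\ref{PropL}, \ref{propQu}, \ref{Bilinearr}, \ref{prop:B-estimate} and \ref{prop:B-estimateInsertion}, together with Proposition~\ref{PropPoSp}, then show that $\partial_\Lambda$ of each quantity in \eqref{Premiere}--\eqref{troisieme} lies in the space of degree one lower, namely $4-n-r-|w|$ (respectively $4-n$, or $1-n$ in the second-order spaces). The prefactor $(\Lambda_0+m)^{-2}\mathcal P_{l-1}(\log\frac{\Lambda_0+m}{m})$ is inherited multiplicatively from the induction hypothesis, since all these stability estimates are bilinear or linear in the seminorms.

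For the irrelevant terms we integrate from $\Lambda$ to $\Lambda_0$. Here $\partial_{\Lambda_0}$ does not commute with the boundary condition. Differentiating $\mathcal L^{\Lambda_0,\Lambda_0}=0$ gives
\[
\partial_{\Lambda_0}\mathcal L^{\Lambda,\Lambda_0}\big|_{\Lambda=\Lambda_0}=-\partial_\Lambda\mathcal L^{\Lambda,\Lambda_0}\big|_{\Lambda=\Lambda_0},
\]
and the right-hand side is bounded via the flow equation and Theorem~\ref{ThmPrinc} by $\Lambda_0^{3-n-r-|w|}$ times the amplitude. This equals $\Lambda_m^{5-n-r-|w|}\Lambda_0^{-2}$ evaluated at $\Lambda=\Lambda_0$.

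The integration step is the analogue of Proposition~\ref{integration}, now with the nonzero boundary value just described. Since the degree after differentiation is $4-n-r-|w|\le-1$, we have
\[
\int_\Lambda^{\Lambda_0}\lambda_m^{3-n-r-|w|}\,d\lambda\;\lesssim\;\Lambda_m^{5-n-r-|w|}\Lambda_0^{-2}\cdot\frac{\Lambda_0^{2}}{\Lambda_m^{2}}\Lambda_m^{-?}.
\]
The exponent bookkeeping here is the delicate point. The standard trick is to write $\Lambda_m^{4-n-\dots}=\Lambda_m^{5-n-\dots}\Lambda_m^{-1}$ and to use the induction bound in the form $\lambda_m^{4-n-\dots}(\Lambda_0+m)^{-2}\cdot\lambda_m$. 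This is where one extra power is gained, and it is the source of the $\Lambda_0^{-1}$ in Theorem~\ref{Convergence}. The hierarchy of amplitudes must be allowed to increase $l$, through Lemma~\ref{lem:monotone_l} and the monotonicity $\mathcal A^{\lambda}\le\mathcal A^{\Lambda}$ for $\lambda\ge\Lambda$.

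For the relevant terms ($n=2$ with $r+|w|\le2$, and $n=4$ with $r=w=0$) we integrate upward from $\Lambda=0$. The renormalization conditions \eqref{renoc1}--\eqref{renoc2} do not depend on $\Lambda_0$, so $\partial_{\Lambda_0}$ of the moments $l_{l,2;r}$ and $l_{l,4}$ vanishes at $\Lambda=0$. Integrating their $\Lambda$-derivative bounds from $0$ to $\Lambda$ gives the required $L^\infty$ control of the Taylor coefficients. Propositions~\ref{prop:relevant} and \ref{prop:relevant4}, applied to $T=\partial_{\Lambda_0}\mathcal L$, then reconstruct the full seminorms of degree $5-n$. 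The $\Delta$-bound \eqref{deuxieme} is used as the input \eqref{Asum42}, so we prove it first at each induction step, from the irrelevant integration with its $\Delta$-boundary term.

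The main obstacle I expect is keeping the powers of $\log\frac{\Lambda_0+m}{m}$ and the factor $(\Lambda_0+m)^{-2}$ uniform across the relevant reconstruction. Integrating from $0$ to $\Lambda$ a bound of the form $\lambda_m^{1-r-|w|}(\Lambda_0+m)^{-2}$ produces logarithms, and these must be absorbed into $\mathcal P_{l-1}(\log\frac{\Lambda_0+m}{m})$ using $\Lambda\le\Lambda_0$. I would handle this with the elementary estimate \eqref{IrrePoly}, at the cost of enlarging the degree of the logarithmic polynomial, which is why Theorem~\ref{Convergence} only carries an unspecified exponent $\nu$.
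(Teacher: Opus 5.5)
Your architecture is the paper's: the same induction on $n+2l$, and the same split of $\partial_{\Lambda_0}$ of the integrated flow into the boundary value $\partial_{\Lambda_0}\mathcal L^{\Lambda,\Lambda_0}|_{\Lambda=\Lambda_0}=-\mathcal R^{\Lambda_0,\Lambda_0}_{l,n}$ plus the integral of the linear $\partial_{\Lambda_0}$-hierarchy. The relevant moments are also treated as in the paper: via the $\Lambda_0$-independence of \eqref{renoc1}--\eqref{renoc2}, followed by Propositions~\ref{prop:relevant} and \ref{prop:relevant4}.

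However, the step you flag as delicate is not carried out, and what you write there does not work.
\begin{itemize}
\item Your display integrates $\lambda_m^{3-n-r-|w|}$. That is the degree of $\partial_\lambda\mathcal L$, not of $\partial_\lambda\partial_{\Lambda_0}\mathcal L$.
\item The display ends with an undefined exponent.
\item The proposed trick of multiplying the integrand by an extra $\lambda_m$ changes the degree and has no justification.
\end{itemize}
Nothing needs to be gained here. By your own inductive step, $\partial_\lambda\partial_{\Lambda_0}\bigl((z_1-z_i)^r\partial^w\mathcal L_{l,n}\bigr)$ has degree $4-n-r-|w|$ with prefactor $(\Lambda_0+m)^{-2}\mathcal P(\log\frac{\Lambda_0+m}{m})$, and this prefactor does not depend on $\lambda$. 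First move everything to scale $\Lambda$ using, for $\lambda\ge\Lambda$,
\begin{itemize}
\item $\mathcal A^{\lambda,\Lambda_0}\le\mathcal A^{\Lambda,\Lambda_0}$,
\item $Q(\tau^{\lambda_m}_{\mathfrak s})\le Q(\tau^{\Lambda_m}_{\mathfrak s})$,
\item $\widetilde{\mathcal P}(\|\underline p_n\|/\lambda_m)\le\widetilde{\mathcal P}(\|\underline p_n\|/\Lambda_m)$.
\end{itemize}
Then the prefactor pulls out of the $\lambda$-integral, and $\int_\Lambda^{\Lambda_0}\lambda_m^{4-n-r-|w|}\,d\lambda\lesssim\Lambda_m^{5-n-r-|w|}$ when $n+r+|w|\ge 6$.

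In the marginal case $n+r+|w|=5$ the integral is $\log\frac{\Lambda_0+m}{\Lambda_m}\le\log\frac{\Lambda_0+m}{m}$. This raises the degree of the logarithmic polynomial by one, and it is where the logarithms in $\Lambda_0$ actually come from. The paper's proof asserts the power bound for all exponents $\le -1$ and so also overlooks this case.

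The boundary term is handled by
\[
(\Lambda_0+m)^{3-n-r-|w|}=(\Lambda_0+m)^{5-n-r-|w|}(\Lambda_0+m)^{-2}\le\Lambda_m^{5-n-r-|w|}(\Lambda_0+m)^{-2},
\]
using $5-n-r-|w|\le 0$. You need $(\Lambda_0+m)$ here, not $\Lambda_0$, which would blow up as $\Lambda_0\to0$. The factor $\Lambda_0^{-1}$ in Theorem~\ref{Convergence} does not originate in this step at all. It comes from integrating $(\Lambda_0+m)^{-2}$ over the cutoff interval $[\Lambda_0',\Lambda_0]$.

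Two further corrections.
\begin{itemize}
\item In the relevant sector the $\lambda$-derivatives of the moments have degree one below the target $5-n-r-|w|$. That is $\lambda_m^{2-r-|w|}$ for $n=2$ and $\lambda_m^{0}$ for $n=4$, not $\lambda_m^{1-r-|w|}$, which is the exponent for the undifferentiated $\mathcal L_{l,2}$. Integrating from $0$ with \eqref{IrrePoly} then gives $\Lambda_m^{3-r-|w|}(\Lambda_0+m)^{-2}$ and $\Lambda_m(\Lambda_0+m)^{-2}$. Only nonnegative powers appear, so no new logarithms arise there; the logarithms you worry about come from the marginal irrelevant case above.
\item The tree-level correlators $\mathcal L_{0,n}$ with $n\ge 6$ do depend on $\Lambda_0$, through $C^{\Lambda,\Lambda_0}$. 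For example, $\partial_{\Lambda_0}\mathcal L_{0,6}^{\Lambda,\Lambda_0}$ equals the nonzero boundary term built from $\mathbf B^{\Lambda_0}_\pi(\mathcal L_{0,4},\mathcal L_{0,4})$. So only $(l,n)=(0,2),(0,4)$ are trivial base cases. The others must go through the same inductive step, with bilinear terms only.
\end{itemize}
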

\begin{proof}
 The proof follows the same strategy as that of Theorem~\ref{ThmPrinc}. The key observation is that differentiating the flow equation with respect to the ultraviolet cutoff $\Lambda_0$
 yields a closed flow equation for $\partial_{\Lambda_0}\mathcal{L}_{l,n}^{\Lambda,\Lambda_0}$. The resulting equation has exactly the same structure as the original flow equation, while the ultraviolet boundary conditions provide an additional factor $\left(\Lambda_0+m\right)^{-2}$
. Consequently, the stability estimates established in the subsection \ref{Continuitt} may be applied  to the $\Lambda_0$-derivatives. The proof therefore reduces to repeating the renormalization argument with shifted power counting.
  \begin{itemize}
      \item \underline{Case 1: $n+r+|w|\ge5$:} recall that for these terms the boundary condition is fixed at $\Lambda=\Lambda_0$ such that
      \begin{equation}\label{bcdt}
          \left(z_1-z_i\right)^r\partial^w\mathcal{L}_{l,n}^{\Lambda_0,\Lambda_0}\left((z_1,p_1),\cdots,(z_n,p_n)\right)=0~.
      \end{equation}
      As a consequence we also have 
      \begin{equation}\label{bcdt2}
          \Delta\mathcal{L}_{l,n}^{\Lambda_0,\Lambda_0}=0,~~~\forall n\ge 4
      \end{equation}
      and 
       \begin{equation}\label{bcdt3}
\left(z_1-z_2\right)^r\partial^w\mathcal{L}_{l,n}^{\Lambda_0,\Lambda_0}\left((z_1,p_1),\cdots,(z_n,p_n)\right)=0,~~~\forall n\ge 2,~~r+|w|=3~.
      \end{equation}
      We establish \eqref{Premiere}. The proofs of \eqref{deuxieme} and \eqref{troisieme}
are identical and will therefore be omitted.
    Hence we integrate the FEs from $\Lambda$ to $\Lambda_0$ and use \eqref{bcdt} to write
    \begin{equation}\label{408}
       \left(z_1-z_i\right)^r \partial^w\mathcal{L}_{l,n}^{\Lambda,\Lambda_0}\left(\vec{z}_{1,n};\underline{p}_n\right)=\int_{\Lambda}^{\Lambda_0}d\lambda~\left(z_1-z_i\right)^r\partial^w\mathcal{R}_{l,n}^{\lambda,\Lambda_0}\left(\vec{z}_{1,n};\underline{p}_n\right)
    \end{equation}
    where $\mathcal{R}_{l,n}^{\Lambda,\Lambda_0}$ denotes the r.h.s. of the FEs. Now we differentiate w.r.t. $\Lambda_0$ and we obtain 
    \begin{multline}
       \left(z_1-z_i\right)^r \partial^w\partial_{\Lambda_0}\mathcal{L}_{l,n}^{\Lambda,\Lambda_0}\left(\underline{z}_n;\underline{p}_n\right)=\left(z_1-z_i\right)^r\partial^w\mathcal{R}_{l,n}^{\Lambda_0,\Lambda_0}\left(\vec{z}_{1,n};\underline{p}_n\right)+\\\int_{\Lambda}^{\Lambda_0}d\lambda~\left(z_1-z_i\right)^r\partial^w\partial_{\Lambda_0}\mathcal{R}_{l,n}^{\lambda,\Lambda_0}\left(\vec{z}_{1,n};\underline{p}_n\right)~.
    \end{multline}
    The r.h.s. of the FEs is given in terms of the Polchinski operators $\mathbf{L}^{\Lambda}$ and $\mathbf{B}^{\Lambda}_{\pi}$ by 
    \begin{equation}
        \mathcal{R}^{\Lambda,\Lambda_0}_{l,n}=\mathbf{L}^{\Lambda}\left(\mathcal{L}_{l-1,n+2}^{\Lambda,\Lambda_0}\right)-\sum_{\substack{\{\pi_1,\pi_2\}\in\mathcal{P}_n\\l_1+l_2=l}}\mathbf{B}^{\Lambda}_{\pi}\left(\mathcal{L}_{l_1,n_1+1}^{\Lambda,\Lambda_0},\mathcal{L}_{l_2,n_2+1}^{\Lambda,\Lambda_0}\right)
    \end{equation}
    In the proof of Theorem \ref{ThmPrinc} we have exactly established using the continuity maps of the Polchinski operators $\mathbf{L}^{\Lambda}$ and $\mathbf{B}^{\Lambda}_{\pi}$ that 
    \begin{equation}
        (z_1-z_i)^r\partial^w\mathcal{R}_{l,n}^{\Lambda,\Lambda_0}\in\mathcal{D}^{\Lambda_m}_{3-n-r,l}\left(\mathbb{M}^n\right)~.
    \end{equation}
    This in particular implies for $\Lambda=\Lambda_0$ and for all $\mathfrak{s}\subseteq \sigma \subseteq \mathcal{I}_n$ that 
    \begin{multline}\label{351--}
\left|\left(K^{\otimes\mathfrak{s}}\otimes \chi^{\otimes\sigma}\mathbb{Z}_{1,i}^r\partial^{w}\mathcal{R}_{l,n}^{\Lambda_0,\Lambda_0}\right)\left(z_1;\underline{y}_{\mathfrak{s}};\underline{p}_n\right)\right|\le \left(\Lambda_0+m\right)^{3-n-r-|w|}\mathcal{P}_{l-1}\left(\log\frac{\Lambda_0+m}{m}\right)\\\times\mathcal{Q}^{\sigma}\left(\left\{\frac{\tau_i^{-\frac{1}{2}}}{\Lambda_0+m}\right\}_{i\in\mathfrak{s}}\right)\widetilde{\mathcal{P}}\left(\frac{\|\vec{p}_n\|}{\Lambda_0+m}\right)\left(\mathcal{A}_{l,n}^{\Lambda_0,\Lambda_0}+\Lambda_m^{-1}\hat{\mathcal{A}}_{l,n}^{\Lambda_0,\Lambda_0}\right)\left(z_1;\underline{y}_{\mathfrak{s}};\chi^{\sigma}\right)
    \end{multline}
   Recall that $\Lambda\le \Lambda_0$. Since $5-n-r-|w|\ge 0$ the l.h.s. of \eqref{351--} is bounded by 
    \begin{multline} \frac{\left(\Lambda+m\right)^{5-n-r-|w|}}{\left(\Lambda_0+m\right)^2}\mathcal{P}_{l-1}\left(\log\frac{\Lambda_0+m}{m}\right)\mathcal{Q}^{\sigma}\left(\left\{\frac{\tau_i^{-\frac{1}{2}}}{\Lambda+m}\right\}_{i\in\mathfrak{s}}\right)\widetilde{\mathcal{P}}\left(\frac{\|\underline{p}_n\|}{\Lambda_m}\right)\\\times\left(\mathcal{A}_{l,n}^{\Lambda,\Lambda_0}+\Lambda_m^{-1}\hat{\mathcal{A}}_{l,n}^{\Lambda,\Lambda_0}\right)\left(z_1;\underline{y}_{\mathfrak{s}};\chi^{\sigma}\right)~.
    \end{multline}
    This implies that
    \begin{multline}\label{413}
\mathcal{N}^{\Lambda_m}_{5-n-r-|w|,l;\infty}\left((z_1-z_i)^r~
\partial^w
\mathcal R_{l,n}^{\Lambda_0,\Lambda_0}
\bigl(
\underline{p}_n
\bigr)\right)\\\le \left(\Lambda_0+m\right)^{-2}\mathcal{P}_{l-1}\left(\log \frac{\Lambda_0+m}{m}\right)\widetilde{\mathcal{P}}\left(\frac{\|\underline{p}_n\|}{\Lambda_m}\right)\qquad
\end{multline}
Differentiating \eqref{408} w.r.t. $\Lambda_0$ we obtain that 
\begin{multline}
    \partial_{\Lambda_0}\partial^w\mathcal{R}_{l,n}^{\Lambda,\Lambda_0}=\mathbf{L}^{\Lambda}\left(\partial_{\Lambda_0}\partial^w\mathcal{L}_{l-1,n+2}^{\Lambda,\Lambda_0}\right)\\-\sum_{\substack{\{\pi_1,\pi_2\}\in\mathcal{P}_n\\l_1+l_2=l}}\left(\mathbf{B}^{\Lambda}_{\pi}\left(\partial_{\Lambda_0}\mathcal{L}_{l_1,n_1+1}^{\Lambda,\Lambda_0},\mathcal{L}_{l_2,n_2+1}^{\Lambda,\Lambda_0}\right)+\mathbf{B}^{\Lambda}_{\pi}\left(\mathcal{L}_{l_1,n_1+1}^{\Lambda,\Lambda_0},\partial_{\Lambda_0}\mathcal{L}_{l_2,n_2+1}^{\Lambda,\Lambda_0}\right)\right)
\end{multline}
We present the argument for $r=0$. The case $r>0$ is obtained by differentiating the weighted flow equation \eqref{FEr} and applying the continuity estimates established for weighted insertions. Since no new ingredient is required, the details are omitted. The induction hypothesis, together with Propositions \ref{PropL} and \ref{propQu} gives the following
\begin{multline}
    \mathcal{N}^{\Lambda_m}_{4-n-|w|,l;\infty}\left(\partial_{\Lambda_0}\partial^{w}\mathcal{R}^{\Lambda,\Lambda_0}_{l,n}\left(\underline{p}_{n}\right)\right)\\\lesssim ~\Lambda_m~\left(\int_k\mathcal{N}^{\Lambda_m}_{3-n-|w|,l-1;\infty}\left(\partial_{\Lambda_0}\partial^{w}\mathcal{L}^{\Lambda,\Lambda_0}_{l-1,n+2}\left(\underline{p}_{n},k,-k\right)\right)\dot{C}^{\Lambda}_m(k)\right)\\+\sum_{\substack{l_1+l_2=l\\n_1+n_2=n}}\mathcal{N}^{\Lambda_m}_{4-n_1-|w_1|,l_1;\infty}\left(\partial_{\Lambda_0}\partial^{w_1}\mathcal{L}^{\Lambda,\Lambda_0}_{l_1,n_1+1}\left(\underline{p}_{\pi_1},p_{\pi}\right)\right)\\\times e^{-\frac{m^2}{2\Lambda^2}}~\dot{C}^{\Lambda}_0\left(p_{\pi}\right)\mathcal{N}^{\Lambda_m}_{3-n_2-|w_2|,l_2;\infty}\left(\partial^{w_2}\mathcal{L}^{\Lambda,\Lambda_0}_{l_2,n_2+1}\left(\underline{p}_{\pi_2},-p_{\pi}\right)\right)
\end{multline}
Applying the induction hypothesis, Theorem~\ref{ThmPrinc}, and the continuity estimates of Propositions~\ref{PropL} and \ref{propQu}, we obtain
\begin{equation}
    \mathcal{N}^{\Lambda_m}_{4-n-|w|,l;\infty}\left(\partial_{\Lambda_0}\partial^{w}\mathcal{R}^{\Lambda,\Lambda_0}_{l,n}\left(\underline{p}_{n}\right)\right)\lesssim \left(\Lambda_0+m\right)^{-2}\mathcal{P}_{l-1}\left(\log\frac{\Lambda_0+m}{m}\right)\widetilde{\mathcal{P}}\left(\frac{\|\underline{p}_n\|}{\Lambda_m}\right)~.
\end{equation}
This implies that 
\begin{multline}\label{premiereEToile}
    \left|\left(K^{\otimes\mathfrak{s}}\star \chi^{\otimes\sigma}\partial_{\Lambda_0}\mathcal{R}_{l,n}^{\Lambda,\Lambda_0}\right)\left(z_1;\underline{y}_{\mathfrak{s}},\underline{p}_n\right)\right|\le~\frac{\left(\Lambda+m\right)^{4-n-|w|}}{\left(\Lambda_0+m\right)^2}\mathcal{P}_{l-1}\left(\log\frac{\Lambda_0+m}{m}\right)\widetilde{\mathcal{P}}\left(\frac{\|\underline{p}_n\|}{\Lambda_m}\right)\\\times \mathcal{Q}^{\sigma}\left(\tau^{\Lambda_m}_{\mathfrak{s}}\right)\left(\mathcal{A}_{l,n}^{\Lambda,\Lambda_0}+\Lambda_m^{-1}\hat{\mathcal{A}}_{l,n}^{\Lambda,\Lambda_0}\right)\left(z_1;\underline{y}_{\mathfrak{s}};\chi^{\sigma}\right)~.
\end{multline}
Since $4-n-|w|\le -1$ we have that 
$$\int_{\Lambda}^{\Lambda_0}\left(\lambda+m\right)^{4-n-|w|}d\lambda\lesssim \left(\Lambda+m\right)^{5-n-|w|}~.$$
This combined with \eqref{premiereEToile} implies that 
\begin{equation}\label{417}
\mathcal{N}^{\Lambda_m}_{5-n-|w|,l;\infty}\left(\int_{\Lambda}^{\Lambda_0}\partial_{\Lambda_0}\partial^{w}\mathcal{R}^{\lambda,\Lambda_0}_{l,n}\left(\underline{p}_{n}\right)d\lambda\right)\lesssim \left(\Lambda_0+m\right)^{-2}\mathcal{P}_{l-1}\left(\log\frac{\Lambda_0+m}{m}\right)\widetilde{\mathcal{P}}\left(\frac{\|\underline{p}_n\|}{\Lambda_m}\right)~.
\end{equation}
Combining \eqref{417} for general $r$, \eqref{413} and \eqref{408} gives the bound \eqref{Premiere} for $n+r+|w|\ge 5$. 
\item \underline{Case 2: $n+r+|w|\le 4$} It remains to treat the relevant components. As in the proof of Theorem~\ref{ThmPrinc}, the power-counting estimates for the full distributions follow from estimates on the corresponding Taylor coefficients together with the bounds \eqref{deuxieme} and \eqref{troisieme}. We therefore analyse the four-point and two-point sectors separately. We have
\begin{multline}
\partial_{\Lambda}\partial_{\Lambda_0}l_{l,4}^{\Lambda,\Lambda_0}\left(z_1;\underline{p}_{4}\right)=\int_{\mathbb{R}^3}\left(\mathbf{L}^{\Lambda}\left(\partial_{\Lambda_0}\mathcal{L}_{l-1,6}^{\Lambda,\Lambda_0}\right)\left(z_1,\vec{z}_{2,4};\underline{p}_4\right)\right.
\\-\sum_{\substack{\pi\in\mathcal{P}_n\\l_1+l_2=l}}~e^{-\frac{m^2}{2\Lambda^2}}\left[\mathbf{B}^{\Lambda}_{\pi}\left(\partial_{\Lambda_0}\mathcal{L}_{l_1,2}^{\Lambda,\Lambda_0},\mathcal{L}_{l_2,4}^{\Lambda,\Lambda_0}\right)+\mathbf{B}^{\Lambda}_{\pi}\left(\mathcal{L}_{l_1,2}^{\Lambda,\Lambda_0},\partial_{\Lambda_0}\mathcal{L}_{l_2,4}^{\Lambda,\Lambda_0}\right)\right]\left(z_1,\vec{z}_{2,4};\underline{p}_4\right)\big)~\dot{C}^{\Lambda}_0(p_{\pi})d\vec{z}_{2,4}~.
\end{multline}
Using the induction hypothesis together with  Theorem \ref{ThmPrinc} we deduce that 
\begin{equation}
    \left|\partial_{\Lambda}\partial_{\Lambda_0}l_{l,4}^{\Lambda,\Lambda_0}\left(z_1;\underline{p}_{4}\right)\right|\le \left(\Lambda_0+m\right)^{-2}\mathcal{P}_{l-1}\left(\log\frac{\Lambda_m}{m}\right)\widetilde{\mathcal{P}}\left(\frac{{\|\underline{p}_4\|}}{\Lambda_m}\right)
\end{equation}
Integrating from $0$ to $\Lambda$ using the renormalization condition $l_{l,4}^{0,\Lambda_0}\left(z_1,0\right)=0$ gives that 
\begin{equation}
    \left|\partial_{\Lambda_0}l_{l,4}^{\Lambda,\Lambda_0}\left(z_1;\vec{0}\right)\right|\le \frac{\Lambda+m}{\left(\Lambda_0+m\right)^{2}}\mathcal{P}_{l-1}\left(\log\frac{\Lambda_m}{m}\right)~.
\end{equation}
This together with the bound \eqref{deuxieme} that was established in the first part of this proof
combined with \eqref{SemNorm4pt} leads to the following bound 
\begin{multline}
\mathcal{N}^{\Lambda_m}_{1,l;\infty}\left(\partial_{\Lambda_0}\mathcal{L}_{l,4}^{\Lambda,\Lambda_0}\left(\underline{p}_4\right)\right)\lesssim~\left\|\partial_{\Lambda_0}l_{l,4}^{\Lambda,\Lambda_0}\left(\vec{0}\right)\right\|_{L^{\infty}(\mathbb{R})}\\+|\underline{p}_4|\mathcal{N}^{\Lambda_m}_{0,l;\infty}\left(\partial_{\underline{p}}\partial_{\Lambda_0}\mathcal{L}_{l,4}^{\Lambda,\Lambda_0}\left(\underline{p}_4\right)\right)+\mathcal{N}^{\Lambda_m}_{1,l;\infty}\left(\Delta \left(\partial_{\Lambda_0}\mathcal{L}_{l,4}^{\Lambda,\Lambda_0}\right)(\vec{0})\right)~
\end{multline}
which leads directly to \eqref{Premiere} in the case $(n,r,w)=(4,0,0)$. 
For $n=2$, we proceed in a similar way to bound inductively $\partial_{\Lambda_0}l_{l,2;r}^{\Lambda,\Lambda_0}(z_1;p,-p)$  using the induction hypothesis and Theorem \ref{ThmPrinc} together with the semi-norm continuity estimates of the Polchinski operators.  Combining this with \eqref{274~-} and the bound \eqref{troisieme} implies the bound \eqref{Premiere} for $(n,r,|w|)\in\left\{(2,0,0),(2,0,2)\right\}$.
  \end{itemize}
\end{proof}
\subsection*{Proof of Theorem \ref{Convergence}}
\begin{proof}
   We prove only the estimate \eqref{337---}, since all remaining bounds
are obtained by the same argument.

The starting point is Proposition~\ref{PropConv}, which provides
uniform bounds on the ultraviolet derivative
\(
\partial_{\Lambda_0}
\mathcal L_{l,n}^{\Lambda,\Lambda_0}
\)
in the power-counting spaces. Applying the same heat-kernel
regularisation argument as in the proofs of
Theorems~\ref{BesovNorm} and \ref{BesovNorm2},
we obtain corresponding Besov estimates for the ultraviolet
derivative of the truncated correlators. Using the bound \eqref{Premiere} and proceeding as in the proof of Theorems \ref{BesovNorm} and \ref{BesovNorm2} after integrating
the $z_1$ variable with the heat kernel, we obtain for all $\chi\in\mathcal{K}$ and $\sigma\subseteq\mathcal{I}_n$ the following bound
    \begin{multline*}
       \left|\left(K^{\otimes n}\star\partial_{\Lambda_0}\chi^{\otimes\sigma}\mathcal{L}_{l,n}^{\Lambda,\Lambda_0}\right)\left(\underline{p}_n;\tau,\vec{y}_{1,n}\right)\right|\\
       \le \frac{\Lambda_m^{5-n}}{\left(\Lambda_0+m\right)^2}~\mathcal{P}_{l-1}\left(\log\frac{\Lambda_0+m}{m}\right)\widetilde{\mathcal{P}}\left(\frac{\|\underline{p}_n\|}{\Lambda_m}\right)\left(\sum_{i=0}^{|\sigma|\left(n-1\right)+1}\tau^{-\frac{n-1+i}{2}}m^{-i}~\right)~.
    \end{multline*}
   We now take \(\chi=\chi_\varepsilon\), where \((\chi_\varepsilon)_\varepsilon\) is an admissible approximation of the half-space indicator. Integrating the previous estimate with respect to the ultraviolet
cutoff parameter over the interval
\(
[\Lambda_0',\Lambda_0]
\),
and using \eqref{irrePol},
we obtain
     \begin{multline*}
       \left|K^{\otimes n}\star\chi_{\epsilon}^{\otimes\sigma}\left(\mathcal{L}_{l,n}^{\Lambda,\Lambda_0}-\mathcal{L}_{l,n}^{\Lambda,\Lambda'_0}\right)\left(\underline{p}_n;\tau,\vec{y}_{1,n}\right)\right|\\
       \le \frac{\Lambda_m^{5-n}}{\left(\Lambda_0+m\right)^2}~\mathcal{P}_{l-1}\left(\log\frac{\Lambda_0+m}{m}\right)\widetilde{\mathcal{P}}\left(\frac{\|\underline{p}_n\|}{\Lambda_m}\right)\sum_{i=0}^{|\sigma|\left(n-1\right)+1}\tau^{-\frac{n-1+i}{2}}m^{-i}~.
    \end{multline*}
     Setting $\Lambda=0$ and using the elementary estimate 
\begin{equation*}
\mathcal{P}_{l-1}\left(\log\frac{\Lambda_0+m}{m}\right)\lesssim~\left(\log\frac{\Lambda_0+m}{m}\right)^{\nu}
\end{equation*}
where $\nu$ is the polynomial degree of $\mathcal{P}_{l-1}$ we obtain
\begin{multline}\label{364--}
    \left|K^{\otimes n}\star\chi_{\epsilon}^{\otimes\sigma}\left(\mathcal{L}_{l,n}^{0,\Lambda_0}-\mathcal{L}_{l,n}^{0,\Lambda'_0}\right)\left(\underline{p}_n;\tau,\vec{y}_{1,n}\right)\right|\\
       \le \frac{m^{5-n}}{\Lambda_0+m}~\left(\log\frac{\Lambda_0+m}{m}\right)^{\nu}\widetilde{\mathcal{P}}\left(\frac{\|\underline{p}_n\|}{\Lambda_m}\right)\left(\sum_{i=0}^{|\sigma|\left(n-1\right)+1}\tau^{-\frac{n-1+i}{2}}m^{-i}~\right)~.
    \end{multline}
    Let $\alpha := -\bigl((|\sigma|+1)(n-1)+1\bigr) = -|\sigma|n+|\sigma|-n$.
    Recalling the definition of the Besov seminorm, the previous estimate
implies
    \begin{multline}
       \left\|\left(\chi_{\epsilon}^{\otimes\sigma}\mathcal{L}_{l,n}^{0,\Lambda_0}-\chi_{\epsilon}^{\otimes\sigma}\mathcal{L}_{l,n}^{0,\Lambda'_0}\right)\left(\underline{p}_n\right)\right\|_{\mathcal{C}^{\alpha}\left(\mathbb{R}^n\right)}\\
       \le \left(\sum_{i=0}^{|\sigma|\left(n-1\right)+1}m^{5-n-i}\right)~\left(\Lambda_0+m\right)^{-1}~\left(\log\frac{\Lambda_0+m}{m}\right)^{\nu}\widetilde{\mathcal{P}}\left(\frac{\|\underline{p}_n\|}{\Lambda_m}\right)~. 
    \end{multline}
For every compact set \(K\Subset\mathbb R^n\), the definition of the
truncated correlators and the uniform Besov bound in
\(\varepsilon\) imply the lower-semicontinuity estimate
\[
\left\|
\bigl(\mathds 1_+^\sigma
\mathcal L_{l,n}^{0,\Lambda_0}
-\mathds 1_+^\sigma
\mathcal L_{l,n}^{0,\Lambda_0'}
\bigr)
(\underline p_n)
\right\|_{\mathcal C^\alpha(K)}
\le
C_K
\liminf_{\varepsilon\to0}
\left\|
\chi_\varepsilon^{\otimes\sigma}
\bigl(
\mathcal L_{l,n}^{0,\Lambda_0}
-
\mathcal L_{l,n}^{0,\Lambda_0'}
\bigr)
(\underline p_n)
\right\|_{\mathcal C^\alpha(\mathbb R^n)} .
\]
Using the estimate obtained above for the right-hand side, we deduce
\[
\left\|
\bigl(\mathds 1_+^\sigma
\mathcal L_{l,n}^{0,\Lambda_0}
-\mathds 1_+^\sigma
\mathcal L_{l,n}^{0,\Lambda_0'}
\bigr)
(\underline p_n)
\right\|_{\mathcal C^\alpha(K)}
\le
C_K
\frac{m^{5-n}}{\Lambda_0+m}
\left(
\log\frac{\Lambda_0+m}{m}
\right)^\nu
\widetilde{\mathcal P}
\left(
\frac{\|\underline p_n\|}{m}
\right).
\]
    Since the right-hand side converges to zero as
\(
\Lambda_0\to\infty
\),
uniformly in
\(
\Lambda_0'\ge\Lambda_0
\),
we conclude that, for every compact set
\(
K\Subset\mathbb R^n
\),
the family
$
\Bigl(
\mathds1_+^\sigma
\mathcal L_{l,n}^{0,\Lambda_0}
\Bigr)_{\Lambda_0\ge1}
$
is Cauchy in
\(
\mathcal C^\alpha(K)
\).
Since
\(
\mathcal C^\alpha(K)
\)
is complete, there exists $
\mathds1_+^\sigma
\mathcal L_{l,n}
\in
\mathcal C^\alpha(K)$
such that
\[
\mathds1_+^\sigma
\mathcal L_{l,n}^{0,\Lambda_0}
\longrightarrow
\mathds1_+^\sigma
\mathcal L_{l,n}
\qquad
\text{in }
\mathcal C^\alpha(K).
\]
Because \(K\Subset\mathbb R^n\) is arbitrary, the convergence holds in
\(
\mathcal C^\alpha_{\mathrm{loc}}(\mathbb R^n)
\).
\end{proof}

\newpage
\appendix

\section{Elementary estimates}

We collect here a number of elementary integral bounds that are repeatedly used in the derivation of inductive estimates for the $n$-point correlation functions.

\begin{itemize}

\item[i)] \emph{Irrelevant contributions.}  
The integrals to be estimated are of the form
\begin{equation}
\int_a^b dx\, x^{-q-1}(\log x)^s,
\qquad
1 \le a \le b,\quad q \in \mathbb{N},\quad s \in \mathbb{N}_0 .
\end{equation}
Define the polynomial
\begin{equation}
P_{q,s}(\log x)
:= \frac{1}{q}
\left(
(\log x)^s
+ \frac{s}{q}(\log x)^{s-1}
+ \frac{s(s-1)}{q^2}(\log x)^{s-2}
+ \cdots
+ \frac{s!}{q^s}
\right),
\end{equation}
and set $f_{q,s}(x) := x^{-q} P_{q,s}(\log x)$.  
A direct computation shows that
\[
f'_{q,s}(x) = -x^{-q-1}(\log x)^s < 0,
\qquad
f_{q,s}(x) > 0 \quad \text{for } x>1 .
\]
Consequently,
\[
\int_a^b dx\, x^{-q-1}(\log x)^s
= f_{q,s}(a) - f_{q,s}(b)
\le f_{q,s}(a).
\]

In the sequel, this estimate is applied in the form
\begin{equation}\label{irrePol}
\int_{\Lambda}^{\Lambda_0} d\lambda\, (\lambda+m)^{-q-1}
\left(\log\frac{\lambda+m}{m}\right)^{q'}
\;\lesssim\;
\Lambda_m^{-q}\,
P_{q,q'}\!\left(\log\frac{\Lambda_m}{m}\right),
\end{equation}
where $\Lambda_m := \Lambda + m$.  
Note that in the irrelevant case one has $q>1$.

\item[ii)] \emph{Relevant contributions.}  
Here the integrals take the form
\[
\int_1^b dx\, x^{q-1}(\log x)^s,
\qquad
b \ge 1,\quad q,s \in \mathbb{N}_0 .
\]

For $q=0$ the integral can be evaluated explicitly.  
For $q>0$, define
\[
g_{q,s}(x)
:= \frac{1}{q} x^q
\left(
(\log x)^s
- \frac{s}{q}(\log x)^{s-1}
+ \frac{s(s-1)}{q^2}(\log x)^{s-2}
+ \cdots
+ (-1)^s \frac{s!}{q^s}
\right).
\]
Then
\[
g'_{q,s}(x) = x^{q-1}(\log x)^s,
\]
and therefore
\[
\int_1^b dx\, x^{q-1}(\log x)^s
= g_{q,s}(b) - g_{q,s}(1)
\le \frac{s!}{q^s} + |g_{q,s}(b)|.
\]

In the applications below, this estimate is used in the form
\begin{align}\label{BornePolyRe}
\int_0^{\Lambda} d\lambda\, (\lambda+m)^{q-1}
\left(\log\frac{\lambda+m}{m}\right)^s
&= m^q \int_1^{\frac{\Lambda+m}{m}} d\lambda\,
\lambda^{q-1}(\log\lambda)^s\\
&\le \Lambda_m^q
\left(
\frac{s!}{q^s}
+ \left| g_{q,s}\!\left(\frac{\Lambda_m}{m}\right) \right|
\right)\nonumber\\
&\leq \Lambda_m^q~\mathcal{P}_s\!\left(\log\frac{\Lambda_m}{m}\right)~,\label{IrrePoly}
\end{align}
where $\mathcal{P}_s$ denotes a polynomial of degree $s$ with positive coefficients depending on $q$ and $s$.
We suppress the dependence on $q$, since no control over these coefficients is required in the sequel, provided they remain independent of the mass as well as of the flow and ultraviolet scales.
Later, the parameter $s$ will be fixed to $s=l-1$, where $l$ denotes the loop order, and hence tracks the degree of the corresponding polynomial.

\end{itemize}
\section*{Acknowledgements}

Parts of this work were completed while the author was supported by the Max Planck Institute for Mathematics in the Sciences. The hospitality and support of the Institute are gratefully acknowledged. The author would particularly like to thank Stefan Hollands, Leonard Ferdinand, Robert Schlesier, Nikolay Barashkov and Rhys Steele for many helpful discussions.
\newpage
\section*{References}
\bibliographystyle{abbrv}
\bibliography{aipsamp}

\end{document}